\titlespacing{\paragraph}{0pt}{0.2cm}{1em}
\newtheorem{theorem}{Theorem}[section]
\newtheorem{corollary}[theorem]{Corollary}
\newtheorem{fact}[theorem]{Fact}
\newtheorem{lemma}[theorem]{Lemma}
\newtheorem{definition}[theorem]{Definition}
\newtheorem{remark}[theorem]{Remark}
\newtheorem{claim}[theorem]{Claim}
\newtheorem{example}[theorem]{Example}
\newtheorem{invariant}[theorem]{Invariant}
\newcommand{\K}{\mathcal{K}}
\newcommand{\Q}{\mathcal{Q}}
\newcommand{\norm}[1]{\left\| #1 \right\|}
\newcommand\numeq[1]{\stackrel{\scriptscriptstyle(\mkern-1.5mu#1\mkern-1.5mu)}{=}}
\newcommand\numge[1]{\stackrel{\scriptscriptstyle(\mkern-1.5mu#1\mkern-1.5mu)}{\ge}}
\newcommand\numle[1]{\stackrel{\scriptscriptstyle(\mkern-1.5mu#1\mkern-1.5mu)}{\le}}
\NewDocumentCommand\F{gg}{\ensuremath{F\IfNoValueTF{#1}{}{_{#1}}\IfNoValueTF{#2}{}{(#2)}}}
\NewDocumentCommand\M{gg}{\ensuremath{M\IfNoValueTF{#1}{}{_{#1}}\IfNoValueTF{#2}{}{(#2)}}}
\NewDocumentCommand\N{gg}{\ensuremath{N\IfNoValueTF{#1}{}{_{#1}}\IfNoValueTF{#2}{}{(#2)}}}
\NewDocumentCommand\W{gg}{\ensuremath{W\IfNoValueTF{#1}{}{_{#1}}\IfNoValueTF{#2}{}{(#2)}}}
\NewDocumentCommand\D{gg}{\ensuremath{D\IfNoValueTF{#1}{}{_{#1}}\IfNoValueTF{#2}{}{(#2)}}}
\NewDocumentCommand\A{gg}{\ensuremath{A\IfNoValueTF{#1}{}{_{#1}}\IfNoValueTF{#2}{}{(#2)}}}
\NewDocumentCommand\B{gg}{\ensuremath{B\IfNoValueTF{#1}{}{_{#1}}\IfNoValueTF{#2}{}{(#2)}}}
\NewDocumentCommand\X{gg}{\ensuremath{X\IfNoValueTF{#1}{}{_{#1}}\IfNoValueTF{#2}{}{(#2)}}}
\NewDocumentCommand\Y{gg}{\ensuremath{Y\IfNoValueTF{#1}{}{_{#1}}\IfNoValueTF{#2}{}{(#2)}}}
\NewDocumentCommand\Z{gg}{\ensuremath{Z\IfNoValueTF{#1}{}{_{#1}}\IfNoValueTF{#2}{}{(#2)}}}
\NewDocumentCommand\R{gg}{\ensuremath{R\IfNoValueTF{#1}{}{_{#1}}\IfNoValueTF{#2}{}{(#2)}}}
\NewDocumentCommand\Smat{gg}{\ensuremath{S\IfNoValueTF{#1}{}{_{#1}}\IfNoValueTF{#2}{}{(#2)}}}
\NewDocumentCommand\U{gg}{\ensuremath{U\IfNoValueTF{#1}{}{_{#1}}\IfNoValueTF{#2}{}{(#2)}}}
\NewDocumentCommand\Hmat{gg}{\ensuremath{H\IfNoValueTF{#1}{}{_{#1}}\IfNoValueTF{#2}{}{(#2)}}}
\NewDocumentCommand\Pmat{gg}{\ensuremath{P\IfNoValueTF{#1}{}{_{#1}}\IfNoValueTF{#2}{}{(#2)}}}
\NewDocumentCommand\I{gg}{\ensuremath{I\IfNoValueTF{#1}{}{_{#1}}\IfNoValueTF{#2}{}{(#2)}}}
\NewDocumentCommand\dG{gg}{\ensuremath{d\IfNoValueTF{#1}{}{_{#1}}\IfNoValueTF{#2}{}{(#2)}}}
\NewDocumentCommand\dGG{g}{\ensuremath{d_G\IfNoValueTF{#1}{}{(#1)}}}
\NewDocumentCommand\muG{gg}{\ensuremath{\mu\IfNoValueTF{#1}{}{_{#1}}\IfNoValueTF{#2}{}{(#2)}}}
\NewDocumentCommand\muGG{g}{\ensuremath{\mu\IfNoValueTF{#1}{}{(#1)}}}
\newcommand{\RR}{\mathop{\mathbb R}}
\newcommand{\1}{\mathbbm{1}}
\newcommand{\eg}{{\it e.g.}}
\newcommand{\ie}{{\it i.e.}}
\DeclareMathOperator{\vol}{\mathbf{vol}}
\DeclareMathOperator{\poly}{poly}
\DeclareMathOperator{\dem}{dem}
\DeclareMathOperator{\charge}{charge}
\DeclareMathOperator{\capacity}{cap}
\DeclareMathOperator{\mincut}{mincut}
\DeclareMathOperator{\congestion}{cong}
\newcommand{\defeq}{\mathrel{\mathop:}=}
\definecolor{PrintGreen}{cmyk}{0.8, 0.0, 1.0, 0.0}
\begin{document}
    \title{Improved Tree Sparsifiers in Near-Linear Time
    \thanks{This work was supported in part by Israel Science Foundation grant no.\ 1156-23 and the Blavatnik Family Foundation.}
	}
    \date{\today}
	
	\author{Daniel Agassy\thanks{Tel Aviv University,
danielagassy@mail.tau.ac.il} \and Dani Dorfman\thanks{Max Planck Institute for Informatics, 
ddorfman@mpi-inf.mpg.de} 	\and Haim Kaplan\thanks{Tel Aviv University, haimk@tauex.tau.ac.il}}
    \date{\today}

	\maketitle
	
	\noindent
	
    \begin{abstract}

A \emph{tree cut-sparsifier} $T$ of quality $\alpha$ of a graph $G$ is a single tree that preserves the capacities of all cuts in the graph up to a factor of $\alpha$. A \emph{tree flow-sparsifier} $T$ of quality $\alpha$ guarantees that every demand that can be routed in $T$ can also be routed in $G$ with congestion at most $\alpha$.

We present a near-linear time algorithm that, for any undirected capacitated graph $G=(V,E,c)$, constructs a tree cut-sparsifier $T$ of quality $O(\log^{2} n \log\log n)$, where $n=|V|$. This nearly matches the quality of the best known polynomial construction of a tree cut-sparsifier, of quality  $O(\log^{1.5} n \log\log n)$ [Räcke and Shah, ESA~2014]. By the flow-cut gap, our result yields a tree flow-sparsifier (and congestion-approximator) of quality $O(\log^{3} n \log\log n)$. This improves on the celebrated result of [Räcke, Shah, and Täubig, SODA~2014] (RST) that gave a near-linear time construction of a tree flow-sparsifier of quality $O(\log^{4} n)$.

    Our algorithm builds on a recent \emph{expander decomposition} algorithm by [Agassy, Dorfman, and Kaplan, ICALP~2023], 
    which we use as a black box to obtain a clean and modular foundation for tree cut-sparsifiers. This yields an improved and simplified version of the RST construction for cut-sparsifiers with quality $O(\log^{3} n)$. We then introduce a near-linear time \emph{refinement phase} that controls the load accumulated on boundary edges of the sub-clusters across the levels of the tree. Combining the improved framework with this refinement phase leads to our final $O(\log^{2} n \log\log n)$ tree cut-sparsifier.
	\end{abstract}
	
	\newpage
	
	\section{Introduction}\label{sec:introduction}
    A cut-sparsifier of a weighted undirected graph $G=(V,E,w)$ is a sparse graph that preserves the cuts of $G$ up to some factor. Formally $H = (V_H,E_H,w_H)$ is a cut-sparsifier of $G$ of quality $\alpha$ if $V\subseteq V_H$ and for every $S\subseteq V$ it holds that $\capacity(S,V\setminus S) \le \mincut_H(S, V\setminus S) \le \alpha \cdot \capacity(S,V\setminus S) $, where 
    \[\mincut_H(S, V\setminus S) \defeq \min_{U:\ S\subseteq U, \; V\setminus S \subseteq V_H \setminus U} \capacity_H(U, V_H\setminus U).\]

    A stronger notion is of flow-sparsifiers. We say that $H = (V_H,E_H,w_H)$, where $V\subseteq V_H$, is a flow-sparsifier of $G=(V,E,w)$ of quality $\alpha$ if every multi-commodity flow problem that can be routed in $G$ with congestion $1$, can be also routed in $H$ with congestion $1$ and, moreover, every multi-commodity flow problem that can be routed in $H$ with congestion $1$, can be also routed in $G$ with congestion $\alpha$.

     A concept closely related to tree flow-sparsifiers is that of \emph{congestion-approximators}. An $\alpha$-congestion-approximator of $G$ is a collection of cuts $\{(S_i , V\setminus S_i) \}_{i=1}^k$ that approximately capture the optimal congestion needed to route multi-commodity flows in $G$. Formally, for any demand matrix $Q$, consider the minimum ratio (over all cuts $(S_i, V \setminus S_i)$) between the capacity of the cut and the total demand separated by it, \ie,
     $\frac{\capacity(S_i,V\setminus S_i)}{\dem_Q(S_i, V\setminus S_i)}$. The collection $\{(S_i, V \setminus S_i)\}$ is an $\alpha$-congestion-approximator if this ratio $\alpha$-approximates the minimum congestion required to route $Q$ in $G$. Tree flow-sparsifiers arising from hierarchical decompositions yield congestion approximators of the same quality, consisting of only a near-linear number of cuts.
     In a breakthrough result, Sherman~\cite{sherman2013nearly} showed that congestion-approximators of polylogarithmic quality suffice to compute approximate maximum flow in near-linear time.

    R{\"a}cke~\cite{racke2002minimizing} proved the existence of tree flow-sparsifiers of quality $O(\log^3 n)$. The construction is based on a hierarchical decomposition of $G$. This result was soon improved to a polynomial time algorithm for computing a tree flow-sparsifier of quality $O(\log^2 n \log \log n)$~\cite{harrelson2003polynomial}. 
    This series of works has culminated in an almost linear time (that is, $O\!\left(m^{1+o(1)}\right)$ time) 
    construction of a tree flow-sparsifier of quality $O(\log ^4 n)$ by R{\"a}cke, Shah, and T{\"a}ubig~\cite{racke2014computing}, which was later improved to a near-linear time (that is, $\tilde{O}(m)$ time\footnote{The $\tilde{O}$ notation is used to hide $\log^c n$ factors.}) 
    construction due to Peng's near-linear time approximate max flow algorithm~\cite{peng2016approximate}. 
    This $O(\log^4 n)$ bound is the best known for near-linear time constructions of a tree flow-sparsifier and of congestion-approximators of near-linear size.

    Even though flow-sparsifiers are robust,  for many applications 
    a cut-sparsifier suffices (see, \eg, \cite{chekuri2004all,engelberg2007cut,andreev2009simultaneous,bansal2014min,van2024almost}). 
    The best known polynomial time  construction by R{\"a}cke and Shah~\cite{RackeS14} gives a tree cut-sparsifier of quality  $O(\log^{1.5} n\log\log n)$. Grid graphs impose an $\Omega(\log n)$ lower bound on the quality of tree cut-sparsifiers and flow-sparsifiers~\cite{bartal1997line,maggs1997exploiting}.

    A tree flow-sparsifier $T$ of quality $\alpha$ is also a tree cut-sparsifier of the same quality. Therefore, \cite{racke2014computing, peng2016approximate} implies a near-linear time construction of a tree cut-sparsifier of quality $O(\log^4 n)$, and this is the best near-linear time construction to date.
    We give an $\tilde{O}(m)$ time construction of tree cut-sparsifier of quality $O(\log^2 n \log \log n)$, a tree flow-sparsifier of quality $O(\log^3 n \log\log n)$, and a congestion-approximator of near-linear size of quality $O(\log^3 n \log\log n)$. This improves for the first time upon the 2014 bound of \cite{racke2014computing,peng2016approximate}. The following theorem is our main result.

    \begin{theorem}
    \label{theorem:main}
        Given a graph $G=(V,E)$ of $m$ edges, there exists a randomized algorithm which takes $\tilde{O}(m)$ time and with high probability finds a weighted tree $T=(V_T, E_T, w_T)$ with $V\subseteq V_T$, such that $T$ is a cut-sparsifier of $G$ with quality $O(\log^2 n \log \log n)$. The same tree is also a flow-sparsifier of $G$ with quality $O(\log^3 n \log \log n)$, which in particular induces a congestion-approximator of $G$ with quality $O(\log^3 n \log \log n)$ and near-linear size. 
    \end{theorem}

   \subsection{Techniques}\label{subsec:intro-techniques}
    We now elaborate on the tools and techniques that allow us to obtain Theorem  \ref{theorem:main}. First, we develop a way to work directly with cuts rather than flows.
    \smallskip

    \noindent\textbf{Demand State Formulation.} 
    Certifying that a hierarchical tree $T$ is a flow-sparsifier of a graph $G$ with quality $\alpha$, amounts to showing that every multi-commodity flow problem $Q$ routable in $T$ with congestion $1$, is routable in $G$ with congestion $\alpha$. The natural approach is to route the mass in $G$ according to the structure of $T$: we partition~$G$ to clusters according to the laminar set family given by the nodes of~$T$, and iteratively route flow in~$G$ from the boundary edges of each cluster to the boundary edges of its parent cluster. The process begins at the leaf clusters (corresponding to the vertices~$V$) and proceeds upward, canceling demands whenever both endpoints lie inside the same cluster.

    For the cut-sparsifier setting, we replace explicit routings by a notion of $G$ \emph{respecting} demand matrices. We say that $G$ $\alpha$-respects a demand matrix if for every cut $(S,\bar{S})$ the total amount of that demand that must cross $(S,\bar{S})$ is at most $\frac{1}{\alpha} \cdot \capacity_G(S,\bar{S})$. 
    To prove that $T$ is a tree cut-sparsifier we need to show that $G$ $\frac{1}{\alpha}$-respects every demand matrix that is $1$-respected by $T$ (see Claim~\ref{claim:equiv-def-cut}).
    To carry out this task we use the notion of \emph{demand states} (Definition~\ref{def:flow_state}). A demand state assigns to each vertex a vector of commodity masses, and each update to the state corresponds to moving mass. Demand states allow us to formally define and maintain invariants, reason about how mass is redistributed across clusters, and rigorously charge these movements to the capacity of cuts. In contrast to flow-sparsifiers, these movements and corresponding charges are not necessarily achieved by explicit routing. 
    While similar techniques were used in previous work, using this demand state formulation allows us to provide a clearer and more systematic presentation of the algorithms, making their structure explicit. 

    \smallskip
    \noindent\textbf{Expander Decomposition.} Leveraging the ``balanced cut or expander" primitive from~\cite{ADK25} with adaptively chosen expansion parameters, we develop a unified approach that allows us to simplify and strengthen the~\cite{racke2014computing} framework to obtain a tree cut-sparsifier of quality $O(\log^3 n)$. Afterwards, we use the same approach to implement a variant of the refinement phase of~\cite{harrelson2003polynomial,RackeS14} in near-linear time, as we describe below.

    Using this primitive from \cite{ADK25}, together with the fair-cuts computation \cite{LNPSsoda23}, allows us to abstract away the details of the underlying algorithmic machinery and significantly simplify the construction of the tree cut-sparsifier, when compared to \eg, \cite{racke2014computing}. This connects tree cut-sparsifier constructions to modern sparse-cut tools and shows how they can be constructed using these tools.  

    \smallskip
    \noindent\textbf{A Near-Linear Time Refinement Phase.}
    A main caveat in the construction of~\cite{racke2014computing} is that the boundary edges of the clusters are not \emph{flow-linked}. The notion of a flow-linked set, introduced by~\cite{chekuri2005multicommodity}, 
    means that it is possible to route an all-to-all multi-commodity flow between the nodes (or edges) of this set. The lack of this property in~\cite{racke2014computing} adds a logarithmic factor to the quality of the resulting tree flow-sparsifier.
    For more details, see Section~\ref{subsec:recap-refinement}.
    In the context of tree cut–sparsifiers, flow-linkedness corresponds to \emph{expansion}. By applying the “balanced cut or expander’’ primitive of~\cite{ADK25} to a given cluster~$S$, we can test whether the boundary of~$S$ expands. If not, we split~$S$ along a balanced cut and recur on both sides. This recursive process, which we call the \emph{refinement phase}, allows us to partition a cluster $S$ into $\{S_1,\ldots,S_t\}$ such that in each $S_i$, the boundary edges expand. 

    Additionally, similar to~\cite{harrelson2003polynomial,RackeS14}, we show that there exists a low-congestion flow from the inter-cluster edges of the refinement partition to the boundary of~$S$. We do this by carefully using the guarantees supplied by \cite{ADK25}. This requires overcoming several technical obstacles, as we need to meticulously control both the number of boundary edges and the number of vertices in the resulting sub-clusters. 
    Importantly, we are able to perform this step in near-linear time, instead of polynomial time.
    
    Previous constructions~\cite{harrelson2003polynomial, RackeS14} relied on intricate recursive schemes that handled “bad child events’’, a subcall that violated the recursion invariants, by restarting parts of the computation. In contrast, our approach is entirely modular: we alternately apply a 
    \emph{merge phase} (on which we elaborate in Section \ref{subsec:first-step-summary}) and a refinement phase, each performing a single, independent partition of the current cluster (no backtracking is needed). This simple alternation yields a much cleaner and modular tree cut-sparsifier construction of quality $O(\log^2 n \log\log n)$.

    \subsection{Further Background}
    Our work builds on the tree flow-sparsifier constructions of~\cite{racke2014computing} and~\cite{harrelson2003polynomial}, combined with an expander decomposition result by~\cite{ADK25}.

    A crucial component in \cite{racke2014computing} was a variant of the cut-matching game, introduced by Khandekar, Rao, and  Vazirani \cite{khandekar2009graph}.
    The cut-matching game~\cite{khandekar2009graph} provides an $O(\log^2 n)$-approximate  sparsest cut in near-linear time.
 By running a ``non-stop" version of this game, \cite{racke2014computing} obtained
    \emph{balanced} $O(\log^2 n)$-approximate sparse cuts, or alternatively, constructed clusters whose inter-cluster edges (not including boundary edges) are \emph{flow-linked}.
    
    A stronger cut strategy was developed by~\cite{orecchia2008partitioning}, achieving an $O(\log n)$-approximation for the sparsest cut problem. The expansion produced by this cut strategy follows from Cheeger's inequality~\cite{cheeger1970lower} and therefore, at least intuitively, is unsuitable for flow routing purposes. This stands in contrast to the cut player of~\cite{khandekar2009graph}, whose expansion follows from embedding a uniform demands matrix (all $1$'s) into the expanding set.

    Agassy, Dorfman, and Kaplan~\cite{agassy2023expander} were able to translate \cite{orecchia2008partitioning}'s improved cut strategy into a ``balanced cut or expander" procedure of approximation $O(\log n)$, resulting in a near-linear time expander decomposition algorithm which improved on the prior expander decomposition algorithm by Saranurak and Wang~\cite{saranurak2019expander} that uses~\cite{khandekar2009graph}'s cut strategy. In~\cite{ADK25}, Agassy, Dorfman and Kaplan show how to extend their results to the case of generalized conductance. By generalized conductance, they refer to conductance with respect to a vertex measure. Formally, given a vertex measure $\mu : V \to \RR_{\ge 0}$, and a set $S\subseteq V$, the expansion of a cut $(S,\bar{S})$ with respect to $\muG$ is defined as $\Phi^{\mu}_{G}(S,\bar{S}) \defeq \frac{\capacity(S,\bar{S})}{\min(\mu(S), \mu(\bar{S}))}$, where $\mu(U)$, for $U\subseteq V$, is defined as $\mu(U) \defeq \sum_{v\in U}{\mu(v)}$.  We use the result by~\cite{ADK25} in a black-box manner as explained in Section~\ref{subsec:first-step-summary}. We note that for the purpose of obtaining our  $O(\log^3 \log \log n)$ quality tree flow-sparsifier, one could alternatively rely on~\cite{saranurak2019expander}, together with our refinement phase. However, achieving the improved $O(\log^2 \log \log n)$ quality tree cut-sparsifier crucially relies on the spectral cut player underlying~\cite{ADK25}.

    Typically, as in this work, hierarchical decompositions are computed \emph{top-down} by recursively splitting clusters starting from the root $V$. 
    Li, Rao, and Wang~\cite{li2025congestion} take the opposite, \emph{bottom-up} approach, constructing a congestion approximator of quality $O(\log^{10} n)$. 
    Their main tool is a weak expander decomposition: they build a sequence of partitions 
    $P_1, P_2, \ldots, P_L$ of $V$, where each partition corresponds to a weak expander decomposition, and $P_1$ is the partition into singleton clusters and $P_L = \{V\}$. 
    The collection of cuts from these partitions does not form a laminar family. Instead, the congestion-approximator’s set of cuts consists of all intersections of sub-clusters drawn from different levels. 
    A technical challenge is that expander-decomposition routines typically invoke approximate max-flow algorithms that themselves rely on the top-down congestion-approximator of~\cite{racke2014computing}. 
    The construction of~\cite{li2025congestion} avoids this potential circularity by maintaining ``pseudo''-congestion-approximators derived from the partial hierarchy 
    $P_1, \ldots, P_i$ and using them within the max-flow subroutines required for the weak expander decomposition.

    We note that the ``balanced cut or expander" primitive of~\cite{ADK25}, which we use extensively, internally computes fair-cuts (approximate maximum flows) via the algorithm of~\cite{LNPSsoda23}.  This algorithm requires a congestion–approximator for the underlying graph~$G$. By instantiating the congestion–approximator required by the fair-cut routine~\cite{LNPSsoda23} with the construction of \cite{li2025congestion}, our tree cut–sparsifiers can be implemented \emph{without} invoking the congestion–approximator of~\cite{racke2014computing}.
    This means that one can also think of our work (in particular, Sections \ref{section:merge-phase} and \ref{section:charging-scheme}) as a simplification of the construction of \cite{racke2014computing} (which does not \emph{depend} on it).

    \subsection{Related Work}

    When allowing to sparsify with a convex combination of trees, R{\"a}cke achieved a polynomial time algorithm for computing a flow-sparsifier~\cite{racke2008optimal} of optimal $O(\log n)$ quality. However, some problems require a flow-sparsifier which is a single tree (\eg, \cite{khandekar2014advantage,andreev2009simultaneous,bansal2014min}).

    Using modern techniques of (dynamic) expander decomposition, and expander-hierarchies, \cite{goranci2021expander} achieved a fully dynamic, $n^{o(1)}$ update and $O(\log^{1/6}n)$ query time, data structure for maintaining a tree flow-sparsifier of quality $n^{o(1)}$. 

    Tree flow–sparsifiers have been extensively used in the context of oblivious routing schemes, that is, routings that are independent of the input demands and preassign flow paths for every pair of vertices. 
    There has also been progress on hop-constrained routing, where paths are kept short. Ghaffari, Haeupler, and \v{Z}u\v{z}i\'{c}
    obtain a \emph{static} oblivious scheme that, for any hop bound $h$, uses paths of length at most $h\cdot \mathrm{poly}(\log n)$ and achieves congestion within $n^{o(1)}$ of the best  routing that uses paths of length at most $h$~\cite{ghaffari2021hop}. Haeupler, R{\"a}cke, and Ghaffari introduce hop-constrained expander decompositions, from which they derive hop-constrained routing~\cite{haeupler2022hop}. These works are complementary to our goal of producing a single tree with provable cut guarantees and near-linear preprocessing time.

    \paragraph{Recent Work.} In an independent and parallel work Henzinger, M{\"u}nk, and R{\"a}cke~\cite{henzinger2025improved} achieved a tree cut sparsifier with the same $O(\log^2 n \log \log n)$ quality. They also show how to implement their construction in parallel with polylogarithmic span and near linear work.
    Their hierarchy construction follows the~\cite{harrelson2003polynomial} bad-child-event paradigm: when a cluster cannot be certified to satisfy the required expansion invariant, the algorithm dynamically restructures the current hierarchy level and revisits the affected subproblems. This event-driven backtracking leads to a more intricate control flow. 
    In contrast, our construction avoids such restart mechanisms entirely. The merge and refinement phases are applied as independent, one-shot procedures, resulting in a simpler and more modular algorithmic structure.

    \section{Overview of Our Results and Techniques}
    \label{section:techniques}

    This section is organized as follows. Section \ref{section:preliminary_preliminaries} introduces key definitions used throughout the section (additional definitions and notation are deferred to Section \ref{section:preliminaries}). In Section \ref{section:our_contribution}, we detail our results using these definitions.
    In Section~\ref{subsec:relating-cut-and-flow-sparsifiers}, we introduce a general framework 
    for analyzing the quality of tree cut-sparsifiers. 
    Section~\ref{subsec:recap-refinement} reviews the key ideas and techniques from previous work. In Section~\ref{subsec:first-step-summary}, we outline the approach used to prove Theorem~\ref{theorem:final}, and in Section~\ref{subsec:full_construction_intuition}, we describe the additional ideas that give the improved result stated in Theorem~\ref{theorem:final-improved}. 

    \subsection{Preliminaries}
    \label{section:preliminary_preliminaries}

     We use $\log$ to denote the base-2 logarithm. Throughout the paper, we assume all graphs are undirected. Given an undirected graph $G = (V,E)$ and a set $S\subseteq V$, we denote by $\deg_G(v)$ the \emph{degree} of $v$ (for a weighted graph, this is the sum of the weights of the edges adjacent to $v$). 
     The \emph{volume} of a set $S$ is defined as $\vol_G(S) \defeq \sum_{v\in S} \deg_G(v)$. Given vertex weights $\mu : V \to \RR_{\ge 0}$, and a set $U\subseteq V$, we denote the total weight of $U$ as $\mu(U) \defeq \sum_{v\in U}{\mu(v)}$. We refer to $\mu$ interchangeably as a \emph{vertex weighting} or a \emph{vertex measure}.
     In almost all of our use cases, $\mu$ will be an indicator measure. That is, $\mu(v) = \1_{\{v\in X\}}$ for some $X\subseteq V$, and $\mu$ simply distinguishes the set $X$ of vertices (formally, $\mu(v) = 1$ if $v\in X$ and otherwise $\mu(v) = 0$).
    For disjoint sets $A,B \subseteq V$, we denote by $E_G(A,B)$ the set of edges connecting $A$ and $B$.
    We denote by $\capacity_G(A,B)$ the number of edges in $E_G(A,B)$, or the sum of their weights if the graph is weighted. We sometimes omit the subscript when the graph is clear from the context. Denote $\bar{A} = V \setminus A$. If $A$ and $\bar{A}$ are both non-empty, then we call $(A, \bar{A})$ a \emph{cut}.

    \begin{definition}[Multi-Commodity Flow Problem, Demand Matrix]
    \label{def:flow_problem}
        Given a vertex set $V$, a \emph{multi-commodity flow problem} $Q$ on $V$ is specified by a \emph{demand matrix} $D_Q\in\RR_{\ge 0}^{V\times V}$. Each vertex $u\in V$ is associated with a single \emph{commodity}: the $u$'th row of $D_Q$ describes how much of $u$'s commodity is to be delivered to every vertex $v\in V$. In other words, $D_Q(u, v)$ denotes the amount of flow of commodity $u$ that must be sent from $u$ to $v$ (\ie, $u$ is the source of its commodity). 

        Given $S \subseteq V$, we denote the \emph{demand of $Q$ across the cut $(S,\bar{S})$} by 
        \[\dem_Q(S,\bar{S}) \defeq \sum_{u\in S}{\sum_{v\in \bar{S}}{(D_Q(u, v) + D_Q(v, u))}}.\]

        We identify a multi-commodity flow problem with its demand matrix (or simply demands), and use $Q$ and $D_Q$ interchangeably.

    \end{definition}

    A specific flow problem which will be used multiple times is the \emph{all-to-all} flow problem.

    \begin{definition}[All-to-all flow problem]
    \label{def:all_to_all_flow}
         Let $G=(V,E)$ and let $A\subseteq V$. The all-to-all multi-commodity flow problem on $A$ is defined by the demand matrix $D_A(u, v) \defeq 
        \begin{cases*}
            \frac{1}{|A|}, & if $u,v \in A$,\\
            0,             & otherwise.
        \end{cases*}
        $.
    \end{definition}

    We now define congestion, which quantifies the efficiency of a given flow. 

    \begin{definition}[Congestion]\label{def:congestion}
        Given a graph $G=(V,E)$ with edge capacities $c:E\to\mathbb{R}_{>0}$
        and a flow $f:E\to\mathbb{R}_{\ge0}$ routing a demand matrix $Q$ ($f$ does not necessarily abide capacity constraints),
        the \emph{congestion} of $f$ in $G$ is
        \[
            \congestion_G(f)
            := \max_{e\in E} \frac{f(e)}{c(e)},
        \]
        where $f(e)$ is the total flow that crosses $e$, of all commodities. The \emph{congestion of \(Q\) in \(G\)}, $\congestion_G(Q)$, is the minimum possible value of $\congestion_G(f)$ over all flows $f$ routing $Q$.
    \end{definition}

    For completeness, we restate below the definitions from Section~\ref{sec:introduction} of cut–sparsifiers and flow–sparsifiers.
    Let $G=(V,E)$ be an undirected graph, and let $H = (V_H,E_H,w_H)$ be a weighted undirected graph with $V \subseteq V_H$.

    \begin{definition}[Cut-Sparsifier]
        \label{def:cut-sparsifier}
        We say that $H=(V_H,E_H)$ is a cut-sparsifier of $G=(V,E)$ of quality $\alpha$ if $V\subseteq V_H$ and for every $S\subseteq V$ it holds that 
        \[\text{$\capacity_G(S,V\setminus S) \le \mincut_H (S, V\setminus S) \le \alpha \cdot \capacity_G(S, V\setminus S)$},\]
        where \text{$\mincut_H$}$(S, V\setminus S) =\min \{\capacity_H(U, V_H\setminus U) \mid S\subseteq U,\; V\setminus S \subseteq V_H \setminus U \}$.
    \end{definition}

    \begin{definition}[Flow-Sparsifier]\label{def:flow-sparsifier}
        Given a graph $G=(V,E)$, we say that a graph $H$ on the vertices $V_H$, where $V\subseteq V_H$, is a flow-sparsifier of $G$ of quality $\alpha$ if for every demand matrix $Q$ on the vertex set $V$, 
        it holds that
        \[
            \congestion_H(Q) \leq \congestion_G(Q) \leq \alpha \cdot \congestion_H(Q).
        \]

    \end{definition}

    \begin{definition}[Respecting Cut]
    \label{def:cut_respects_flow}
        Let $Q$ be a demand matrix on $G=(V,E)$ and let $(S, \bar{S})$ be a cut in $G$. 

        We say that $(S, \bar{S})$ \emph{$\alpha$-respects $Q$} if $\capacity(S, \bar{S}) \ge \alpha\cdot \dem_Q(S, \bar{S})$. We say that $G$ \emph{$\alpha$-respects $Q$} if
        \[
            \min_{\substack{S\subseteq V\\\dem_Q(S, \bar{S}) > 0}} \frac{\capacity(S, \bar{S})}{\dem_Q(S, \bar{S})} \ge \alpha.
        \]
        More generally, if $\tilde{G}=(\tilde{V},\tilde{E})$ satisfies $V \subseteq \tilde{V}$, then in the above definition we replace $\capacity(S, \bar{S})$ by $\mincut_{\tilde{G}}(S, \bar{S})$.
    \end{definition}

    \begin{fact}
    \label{fact:congestion_implies_respecting_cut}
        If the demand matrix $Q$ can be routed in $G$ with congestion $\alpha$, then $\dem_{Q}(S, V\setminus S) \le \alpha\cdot\capacity_G(S, V\setminus S)$, 
        for all cuts $(S, V\setminus S)$, \ie, $G$ $\frac{1}{\alpha}$-respects $Q$.
    \end{fact}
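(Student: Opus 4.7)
The plan is to use the standard flow--cut inequality in the easy direction, applied per commodity and then summed. Fix an arbitrary cut $(S,\bar{S})$ with $\bar{S}=V\setminus S$, and let $f$ be a flow routing $Q$ with $\congestion_G(f)\le \alpha$. Decompose $f=\sum_u f_u$, where $f_u$ is the portion of $f$ transporting commodity $u$. The proof consists of bounding the total flow crossing $(S,\bar{S})$ from above by the capacity of the cut and from below by $\dem_Q(S,\bar{S})$, then combining the two.

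For the upper bound, the congestion assumption gives $f(e)\le \alpha\cdot c(e)$ for every $e\in E$, so summing over the cut edges yields
\[
\sum_{e\in E_G(S,\bar{S})} f(e) \;\le\; \alpha\cdot \capacity_G(S,\bar{S}).
\]
For the lower bound, I would apply flow conservation to each commodity $u$ separately. If $u\in S$, then the source of commodity $u$ lies in $S$ while sinks in $\bar{S}$ demand a total of $\sum_{v\in \bar{S}} D_Q(u,v)$ units, so the net amount of $f_u$ leaving $S$ is exactly $\sum_{v\in \bar{S}} D_Q(u,v)$. Symmetrically, if $u\in \bar{S}$, the net amount of $f_u$ entering $S$ is $\sum_{v\in S} D_Q(u,v)$. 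In either case the total (unsigned) flow of $f_u$ across the cut is at least this net value. Summing over all commodities,
\[
\sum_{e\in E_G(S,\bar{S})} f(e) \;\ge\; \sum_{u\in S}\sum_{v\in \bar{S}} D_Q(u,v) \;+\; \sum_{u\in \bar{S}}\sum_{v\in S} D_Q(u,v) \;=\; \dem_Q(S,\bar{S}),
\]
where the equality uses the fact that Definition~\ref{def:flow_problem} already symmetrizes $\dem_Q(S,\bar{S})$ over both directions of every demand pair. Chaining the two displays gives $\dem_Q(S,\bar{S})\le \alpha\cdot \capacity_G(S,\bar{S})$, and since the cut was arbitrary this is precisely the statement that $G$ $\tfrac{1}{\alpha}$-respects $Q$ in the sense of Definition~\ref{def:cut_respects_flow}.

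There is no real obstacle here; the only thing to be careful about is the per-commodity bookkeeping. Different commodities contribute additively to $f(e)$, so the per-commodity lower bounds on $|f_u|$ across the cut can be summed without loss, and orienting the net-flow argument according to whether the source $u$ lies in $S$ or $\bar{S}$ makes the summation reassemble exactly $\dem_Q(S,\bar{S})$ as defined. Beyond this the claim is an immediate consequence of flow conservation and the definition of congestion.
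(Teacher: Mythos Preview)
Your argument is correct and is the standard flow--cut inequality: bound the total flow across the cut from above by $\alpha\cdot\capacity_G(S,\bar S)$ via the congestion bound, and from below by $\dem_Q(S,\bar S)$ via per-commodity flow conservation. The paper states this as a fact without proof, so there is nothing to compare against; your write-up fills in exactly the elementary argument the paper takes for granted.
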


    Note that $G$ $\alpha$-respecting a demand matrix $Q$ does not necessarily imply that $Q$ can be routed with congestion $1/\alpha$ in $G$. Due to the flow–cut gap~\cite{leighton1999multicommodity}, an additional $O(\log n)$ factor may be required.

    \subsection{Our Contribution}
    \label{section:our_contribution}
    
    We prove the following theorem:
    \begin{theorem}
    \label{theorem:final-improved}
        Given a graph $G=(V,E)$ of $m$ edges, there exists a randomized algorithm which takes $\tilde{O}(m)$ time and with high probability finds a weighted tree $T=(V_T, E_T, w_T)$ with $V\subseteq V_T$, such that $T$ is a cut-sparsifier of $G$ with quality $O(\log^2 n \log \log n)$.
    \end{theorem}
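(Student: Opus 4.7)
I would build the tree $T$ top-down, starting from the root which corresponds to the cluster $V$, and recursively partitioning each cluster into sub-clusters that become its children in $T$. To certify quality $\alpha = O(\log^2 n \log\log n)$, I would invoke the framework from Section~\ref{subsec:relating-cut-and-flow-sparsifiers}: it suffices to show that every demand matrix $Q$ that is $1$-respected by $T$ is $\frac{1}{\alpha}$-respected by $G$. Following the demand state formulation, I would assign to each vertex its contribution in $Q$ and, starting from the leaves, iteratively push mass along the hierarchy, charging every mass movement inside a cluster to boundary cuts of that cluster. The final quality is then the product of (i) a per-level congestion incurred by these movements and (ii) a factor measuring how much the boundary load accumulates along root-to-leaf paths.

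\textbf{Merge phase.} For each cluster $S$ I would call the ``balanced cut or expander'' primitive of~\cite{ADK25} on the subgraph induced by $S$, using the boundary-edge indicator as the vertex measure $\mu$. This either produces a $\mu$-balanced sparse cut, along which I split $S$, or certifies that the resulting sub-clusters have the property that their inter-cluster edges are well-linked with expansion $1/O(\log n)$ with respect to $\mu$. I would recurse on the pieces, obtaining a partition of $S$ into children whose inter-cluster edges have the ``generalized expansion'' guarantee. Stopping here and iterating over $O(\log n)$ levels gives the intermediate $O(\log^3 n)$ bound of Theorem~\ref{theorem:final}: one $O(\log n)$ from the expansion, one $O(\log n)$ from the depth of the hierarchy, and one $O(\log n)$ from the fact that boundary edges across levels are not ``flow-linked''.

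\textbf{Refinement phase.} To save the last logarithmic factor, I would post-process each sub-cluster $S_i$ produced by the merge phase by running a second sequence of ``balanced cut or expander'' calls of~\cite{ADK25}, this time with the measure being the indicator of the boundary vertices of $S_i$ inherited from $S$. Each call either splits $S_i$ along a $\mu$-balanced cut or certifies that its boundary expands; recursing until every piece satisfies the latter partitions $S_i$ into sub-sub-clusters whose boundaries are expanders. A fair-cut computation~\cite{LNPSsoda23} then gives a low-congestion routing of mass from the newly created inter-cluster edges back to the original boundary of $S$, which is exactly what is needed to turn the $O(\log n)$ accumulation factor of the baseline into $O(\log\log n)$ in the demand-state charging. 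Combining the $O(\log n)$ expansion from the merge phase, the $O(\log n)$ depth of the hierarchy, and the $O(\log\log n)$ accumulation factor from the refinement phase yields the claimed $O(\log^2 n \log\log n)$ quality.

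\textbf{Main obstacle and running time.} The main technical obstacle is making the refinement phase both tight and efficient: the sub-sub-clusters it produces must not blow up the number of boundary edges or vertices relative to their parent sub-cluster, because otherwise the charging scheme degrades. I would control this by choosing the expansion parameter in~\cite{ADK25} adaptively with the local boundary size, and by arguing that every vertex participates in at most $O(\log n)$ splits throughout the entire construction. Since each invocation of the primitive runs in $\tilde{O}(|E(S)|)$ time and each edge is charged $\tilde{O}(1)$ times across all merge and refinement calls, the overall running time is $\tilde{O}(m)$. Correctness then follows by instantiating the demand-state argument of Section~\ref{subsec:relating-cut-and-flow-sparsifiers} with the per-cluster invariants established by the two phases.
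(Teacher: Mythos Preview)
Your high-level two-phase structure is right, but the roles of the phases are partly swapped and the key quantitative mechanism behind the $\log\log n$ is missing.

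In the merge phase you propose using the boundary-edge indicator as the measure $\mu$ for the ADK25 primitive. The paper does the opposite: the merge phase deliberately \emph{ignores} the boundary and instead iterates the primitive with $\mu=\mathbf{1}_{X_F}$ (the current inter-cluster edges) to obtain a partition $Z_1,\ldots,Z_z$ whose inter-cluster edges $F$ expand and whose pieces have size $\le \tfrac{2}{3}|S|$ (Theorem~\ref{theorem:merge-phase_part1}). A $\mu$-balanced cut for the boundary measure balances boundary edges, not vertex count, so your version would not guarantee shrinkage. The fair-cut computation you place in the refinement phase actually lives in the merge phase (Lemma~\ref{lemma:fair-cut-separator}, Theorem~\ref{theorem:merge-phase-part2}): it produces the bottleneck cut $Y$ separating $F$ from $B$, which is inserted into $T$ as the $(L,R)$ level so that the demand matrix $Q$ is forced to respect it.

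In the refinement phase the routing from new inter-cluster edges back to the boundary of $S$ does not come from fair cuts but directly from the flow guarantees of the ADK25 primitive (Remark~\ref{remark:trimming_allows_routing}). The crucial technical point you do not address is that these flows must go to a side whose \emph{vertex count} is at most a constant fraction of the parent; the primitive only controls $\mu$-mass, so the paper proves a refined version (Corollary~\ref{cor:balanced-cut-refined}) that simultaneously bounds $\nu(\cdot)=|\cdot\cap V|$. This is what makes the load product $\prod_{\ell} \bigl(1+\tfrac{1}{\ell\log\log n}\bigr)$ in Claim~\ref{claim:hhr-binary-tree} stay $O(1)$. The $\log\log n$ is not an ``accumulation factor'' replacing a $\log n$; rather, the refinement phase \emph{resets} the load to $1$ at every level, and the $\log\log n$ enters through the expansion target $\phi_S=\Theta\bigl(1/(\log n\cdot f(S))\bigr)$ with $f(S)=c_f\log(\sigma/|S|)\log\log n$. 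The final bound comes from the telescoping $\sum_i f(S_i)=O(\log n\log\log n)$ along any root-to-leaf path, times the $O(\log n)$ charge per level.
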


    However, first, to gradually develop our techniques, we prove a weaker and simpler version of the theorem, in which we look for a tree cut-sparsifier of quality $O(\log^3 n)$:
    \begin{theorem}
    \label{theorem:final}
        Given a graph $G=(V,E)$ of $m$ edges, there exists a randomized algorithm which takes $\tilde{O}(m)$ time and with high probability finds a weighted tree $T=(V_T, E_T, w_T)$ with $V\subseteq V_T$, such that $T$ is a cut-sparsifier of $G$ with quality $O(\log^3 n)$.
    \end{theorem}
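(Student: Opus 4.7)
The plan is a top-down recursive construction using the ``balanced cut or expander'' primitive of \cite{ADK25} as a black box, combined with the demand-state formulation from Section~\ref{subsec:relating-cut-and-flow-sparsifiers}. Starting from the root cluster $V$, I process each cluster $S$ by endowing it with a vertex measure $\mu$ that reflects the boundary of $S$ in $G$ (so the primitive's generalized conductance corresponds to the ratio between an internal cut and the boundary mass it separates), and invoking the ADK25 primitive on the induced subgraph $G[S]$ with a suitably chosen conductance threshold. The primitive either returns a $\mu$-balanced cut of $S$ of nearly optimal generalized conductance, in which case I recurse on both sides, or an expander-type certificate witnessing that the boundary-carrying vertices are well connected inside $S$. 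In either case I obtain a partition of $S$ into sub-clusters that become the children of $S$ in $T$, the recursion stops at singletons (which form the leaves of $T$ corresponding to the vertices of $V$), and the edge of $T$ between a cluster $S$ and its parent is weighted by $\capacity_G(S,V\setminus S)$.

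\textbf{Depth and running time.} Because every split is $\mu$-balanced, the $\mu$-mass halves (up to the primitive's approximation factor) when going from a cluster to its children, giving recursion depth $O(\log n)$. The primitive runs in time $\tilde{O}(\vol_G(S))$ on a cluster $S$, and clusters at the same level of $T$ are vertex-disjoint, so the work per level is $\tilde{O}(m)$, yielding total construction time $\tilde{O}(m)$.

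\textbf{Quality via demand states.} The inequality $\mincut_T(S,V\setminus S)\ge \capacity_G(S,V\setminus S)$ follows directly from the boundary-based edge weighting. For the opposite direction I would use the framework of Section~\ref{subsec:relating-cut-and-flow-sparsifiers}: take an arbitrary demand matrix $Q$ that $T$ $1$-respects, and show that $G$ $\Omega(1/\log^3 n)$-respects $Q$ in the sense of Definition~\ref{def:cut_respects_flow}; by Definition~\ref{def:cut-sparsifier} this is exactly the claimed cut-sparsifier bound. Working bottom-up along $T$, at each cluster $S$ I use the merge-phase certificate on $S$ to push mass from the interior of $S$ toward its boundary edges via a sequence of demand-state updates (in the sense of Definition~\ref{def:flow_state}), charging each unit of movement to a cut of $G$. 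I expect the accumulated charge on any single edge of $G$ to split as a product of three logarithmic factors: the $O(\log n)$ depth of $T$, the $O(\log n)$ approximation quality of the ADK25 primitive, and another $O(\log n)$ coming from the cost of rerouting mass inside an expanding cluster (the standard conversion between flow-linkedness/expansion and cut capacity).

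\textbf{Main obstacle.} The hardest part will be the demand-state bookkeeping: the per-cluster mass movements must be defined so that movements performed inside a child cluster are not undone by later movements at an ancestor cluster, and so that the charges on any single edge of $G$ telescope cleanly across the $O(\log n)$ levels without double-counting. The expansion certificate produced by the ADK25 primitive with the boundary measure $\mu$ is exactly what makes each per-level rerouting possible at bounded cost, so once the charging scheme is specified carefully, the $O(\log^3 n)$ bound follows from multiplying the three logarithmic factors above, establishing Theorem~\ref{theorem:final}.
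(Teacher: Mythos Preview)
Your plan has a genuine structural gap: you are applying the ADK25 primitive with the \emph{boundary} measure~$\mu$ and hoping for either a balanced cut or a certificate that the boundary expands inside $G[S]$. But as the paper stresses in Section~\ref{subsec:recap-refinement}, the boundary $B$ of a cluster $S$ need not expand in $G[S]$ at all (bottleneck cuts), so repeatedly cutting along $\mu$-sparse cuts may never yield the expander outcome. Worse, your depth argument does not go through: a $\mu$-balanced cut halves the boundary mass on each side, but cutting \emph{adds} the cut edges to the new boundaries, so the child's boundary measure is not a constant fraction of the parent's; and $\mu$-balance says nothing about vertex-count balance, so neither potential drops geometrically in general.

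The paper's merge phase resolves this with a two-step partition that your proposal is missing. First (Theorem~\ref{theorem:merge-phase_part1}), it \emph{ignores} the boundary and uses ADK25 with the measure supported on a shrinking edge set $F$ to get a vertex-balanced partition $Z_1,\dots,Z_z$ with $|Z_i|\le\tfrac{2}{3}|S|$ whose inter-cluster edges $F$ expand in $G[S]$; this is what guarantees $O(\log n)$ depth. Second (Theorem~\ref{theorem:merge-phase-part2}), it computes a fair-cut separator $Y$ between $X_F$ and $X_B$ and inserts the induced cut $(L,R)$ as an extra level of $T$, thereby \emph{forcing} any demand matrix respected by $T$ to also respect the bottleneck cut $Y$. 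This separator, together with the scaling parameter $\uptau$ (set to $1/\log n$), is what allows demands to be pushed from $F$ to $B$ while the load on boundary split nodes grows only by a $(1+O(\uptau))$ factor per level (Theorem~\ref{theorem:charge-in-S}); the $1/\uptau=\log n$ slack in the per-level charge, the $\log n$ from the ADK25 expansion guarantee, and the $\log n$ depth are the three factors that multiply to $O(\log^3 n)$. Your third factor (``rerouting inside an expanding cluster'') is not the right one, and without the fair-cut step and the $\uptau$-mechanism you have no way to move mass from inter-cluster edges to a possibly non-expanding boundary with controlled load.
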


    \begin{remark}[Hierarchical trees]
    \label{remark:hierarchical-trees}
        The trees produced in Theorems~\ref{theorem:final-improved} and~\ref{theorem:final} are \emph{hierarchical trees}: each node of $T$ corresponds to a subset of vertices of $G$, and these subsets form a \emph{laminar family}\footnote{That is, the subsets corresponding to the children of a node in $T$ are pair-wise disjoint and contained in the subset corresponding to that node.} in which the root corresponds to $V$ and the leaves correspond to all singleton subsets $\{v\}_{v\in V}$.
        Each edge $e \in E_T$ connects a node, corresponding to a cluster $S_e \subseteq V$, with its parent node, corresponding to a larger cluster $S_e \subseteq P_T(S_e) \subseteq V$. This edge has a capacity equal to the capacity of the cut $(S_e, V\setminus S_e)$ in $G$, \ie, $\capacity_T(e) = \capacity_G(S_e, V\setminus S_e)$.
    \end{remark}
    
    The following claim follows from the flow-cut gap theorem~\cite{leighton1999multicommodity, aumann1998log}, together with the fact that the flow-cut gap in a tree is $1$~\cite{okamura1981multicommodity}.
    We defer the (standard) proof to appendix \ref{appendix:omitted-proofs}.
    \begin{claim}
    \label{claim:cut-sparsifier-implies-flow}
        The following relations hold for a (hierarchical) tree $T$ of $G$:
        \begin{itemize}
            \item If $T$ is a cut-sparsifier of $G$ of quality $\alpha$, then $T$ is also a flow-sparsifier of quality $O(\alpha \log n)$.
            \item If $T$ is a flow-sparsifier of $G$ of quality $\alpha$, then $T$ is also a cut-sparsifier of quality $\alpha$.
        \end{itemize}
    \end{claim}
    
    Hence, as a corollary of Theorem \ref{theorem:final-improved}, we get:
    \begin{theorem}
    \label{theorem:final-improved-flow}
        Given a graph $G=(V,E)$ of $m$ edges, there exists a randomized algorithm which takes $\tilde{O}(m)$ time and with high probability finds a weighted tree $T=(V_T, E_T, w_T)$ with $V\subseteq V_T$, such that $T$ is a flow-sparsifier of $G$ with quality $O(\log^3 n \log \log n)$, which in particular induces a congestion-approximator of $G$ with quality $O(\log^3 n \log \log n)$ and near-linear size.
    \end{theorem}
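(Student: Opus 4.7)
The plan is to derive Theorem \ref{theorem:final-improved-flow} directly from Theorem \ref{theorem:final-improved}: I would invoke Theorem \ref{theorem:final-improved} to obtain, in $\tilde{O}(m)$ time, the tree $T$ with $V\subseteq V_T$ that is already guaranteed to be a cut-sparsifier of $G$ of quality $\alpha = O(\log^2 n \log \log n)$, and then argue that this very tree is also a flow-sparsifier of quality $O(\alpha \log n)$ and induces a near-linear size congestion-approximator of the same quality. No additional construction is needed; the proof reduces to comparing $\congestion_T(Q)$ with $\congestion_G(Q)$ for an arbitrary multi-commodity flow problem $Q$ on $V$.

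For the two congestion inequalities, I would proceed as follows. For $\congestion_T(Q) \le \congestion_G(Q)$, let $c = \congestion_G(Q)$; then for any cut $(U, V_T \setminus U)$ of $T$, setting $S := U \cap V$ gives $\dem_Q(U, V_T\setminus U) = \dem_Q(S, V\setminus S) \le c \cdot \capacity_G(S,V\setminus S) \le c \cdot \mincut_T(S,V\setminus S) \le c \cdot \capacity_T(U,V_T\setminus U)$, where the penultimate step uses the lower half of Definition \ref{def:cut-sparsifier}; hence $\theta_T(Q) \ge 1/c$, and because the flow-cut gap of a tree is $1$ we obtain $\congestion_T(Q) = 1/\theta_T(Q) \le c$. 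For the opposite direction I would chain the three ingredients exactly as displayed after Theorem \ref{thm:flow-cut-gap} in the excerpt:
\[
\congestion_G(Q) \;\le\; \frac{\log n}{c\cdot\theta_G(Q)} \;\le\; \frac{\alpha\log n}{c\cdot \theta_T(Q)} \;=\; \frac{\alpha\log n}{c}\cdot\congestion_T(Q),
\]
where the first step is the hard direction of the flow-cut gap, the middle step is $\theta_G(Q) \ge \theta_T(Q)/\alpha$ (obtained by applying the upper half of Definition \ref{def:cut-sparsifier}, $\mincut_T(S,V\setminus S) \le \alpha\cdot\capacity_G(S,V\setminus S)$, to the cut of $T$ achieving $\mincut_T(S,V\setminus S)$), and the last step again uses the flow-cut gap of a tree. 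Substituting $\alpha = O(\log^2 n \log\log n)$ yields the promised $O(\log^3 n\log\log n)$ flow-sparsifier quality.

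For the congestion-approximator claim, since $T$ is built by a top-down hierarchical decomposition with $V$ as the leaves, $|V_T|$ and $|E_T|$ are $O(n)$, and each tree edge $e$ induces a cut $(S_e, V\setminus S_e)$ of $G$ (the leaves hanging off the two sides of $e$). Because the congestion of $Q$ in a tree is realized edge-by-edge, the family $\{(S_e,V\setminus S_e):e\in E_T\}$ of $O(n)$ cuts approximates $\congestion_G(Q)$ within the flow-sparsifier quality, producing the near-linear-size $O(\log^3 n\log\log n)$-congestion-approximator promised by the theorem. The main obstacle, though mild, is the bookkeeping around Steiner vertices: I must check that extending $Q$ from $V$ to $V_T$ by zero-padding preserves the demand across every cut (since $Q$ is supported on $V$) and that the feasibility condition in the definition of $\mincut_T$ forces $U \cap V = S$, so that cuts of $T$ translate cleanly into cuts of $G$ on both sides of the argument.
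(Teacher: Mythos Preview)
Your proposal is correct and follows essentially the same route as the paper: you invoke Theorem \ref{theorem:final-improved} and then apply the flow-cut gap chain displayed after Theorem \ref{thm:flow-cut-gap} to upgrade the cut-sparsifier quality $\alpha$ to flow-sparsifier quality $O(\alpha\log n)$. Your argument for $\congestion_T(Q)\le\congestion_G(Q)$ via the cut-sparsifier lower bound and the tree flow-cut gap is a clean variant of the paper's reasoning (the paper establishes this direction in Claim \ref{claim:equiv-def-flow} using the specific hierarchical structure of $T$), and your Steiner-vertex bookkeeping and congestion-approximator discussion fill in details the paper leaves implicit.
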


    The flow-cut gap for planar graphs is $O(\sqrt{\log n})$~\cite{rao1999small}. Therefore, for planar graphs, our construction achieves tree flow-sparsifier (and congestion-approximator) of quality $O(\log^{2.5} n \log \log n)$.

    For simplicity, we present our algorithms in the setting where $G$ is unweighted, but our constructions can be adapted to the case of polynomially bounded weights. 

    \subsection{Certifying a Tree Cut-Sparsifier Versus Certifying a Tree Flow-Sparsifier}\label{subsec:relating-cut-and-flow-sparsifiers}
    
    Our construction follows a recursive partitioning algorithm which takes as input a cluster $S\subseteq V$ and partitions it to sub-clusters $S_1,\ldots,S_t$. Applying this algorithm recursively gives a laminar decomposition of the set $V$, which naturally corresponds to a hierarchical tree $T$ (see Remark~\ref{remark:hierarchical-trees}). The root of $T$ corresponds to the set $V$ while the leaves correspond to individual vertices $v\in V$.
    The weight of an edge $(S, P_T(S))$ in $T$, between the node corresponding to the cluster $S$ and its parent $P_T(S)$ is $\capacity_G(S, V\setminus S)$.

    Consider first the task of certifying that such a tree $T$ is a flow-sparsifier. We can do this by showing that the congestion of routing a demand matrix $Q$ in $T$ is comparable to the congestion of routing $Q$ in $G$. Note that by choosing the capacities of the tree edges as we did, the inequality $\congestion_T(Q)\le\congestion_G(Q)$ is immediate: we route the demand matrix $Q$ in $T$ trivially along the unique path between each pair of vertices. If the flow problem can be routed in $G$ with congestion $c$, then by Fact~\ref{fact:congestion_implies_respecting_cut} it must be that, for every sub-cluster $S\in V_T$, the cut $(S, V\setminus S)$ $\frac{1}{c}$-respects $Q$:  $\frac{\capacity(S, V\setminus S)}{\dem_Q(S, V\setminus S)} \ge \frac{1}{c}$. 
    The demand across each cut $(S, V\setminus S)$ corresponds exactly to the total amount of flow routed along the edge $(S, P_T(S))$ in the tree, so the way the edge capacities in $T$ are assigned
    ensures that the flow problem can be routed in the tree with congestion $c$. Therefore, when building a tree flow-sparsifier, our goal is to choose the tree such that the reverse inequality $\congestion_G(Q)\le\alpha\cdot\congestion_T(Q)$ also holds. This means that  given  demands $Q$ which can be routed in $T$ we have to show how to route $Q$ in  $G$ with similar congestion. $Q$ being routable in $T$ means that only for every $S\in V_T$, the cut $(S, V\setminus S)$ in $G$, respects $Q$. Therefore, the intuition is that we look to include in the tree the sub-clusters which best capture the cut structure of the graph, in the sense that any demands respected by them can be routed in $G$.

    In the flow-sparsifier case, given demands  $Q$  routable in $T$, \cite{harrelson2003polynomial, bienkowski2003practical, racke2014computing} used the tree structure to describe the routing of $Q$ in the graph $G$. They did this by routing  $Q$  which originates at all the vertices of $V$ (corresponding to leaves in $T$), toward the boundaries of sub-clusters $S$ as they go up along the tree.\footnote{This is made formal using the subdivision graph (see Definition \ref{def:subdivision_graph}), where the demands  reside on the split nodes of the boundary edges. For the overview in this section, however, we omit these details for simplicity.}
    That is, when a sub-cluster $S\in T$ is partitioned into $\{S_1,\ldots,S_t\}$,
     they route (in $G$) the mass from the boundary edges of $\{S_i\}_{i=1}^t$ (where they arrived in the previous iteration) to the boundary edges of $S$. 
     We emphasize that this routing takes place in $G$, while the structure of $T$ is used only as a ``blueprint'' to direct the intermediate flows. 
     Then, source and target demands of vertex pairs that are both in $S$, are ``paired up'' and canceled before moving further upwards in the tree.

    Turning now back to cut-sparsifiers, in order to certify that $T$ is a tree cut-sparsifier,  we no longer need to route the demands  $Q$ (which can be routed in $T$) in  $G$, but rather only show that $G$ respects $Q$ (see Claim \ref{claim:equiv-def-cut} which essentially says that $T$ is a tree cut-sparsifier of quality $\alpha$ if it $1/\alpha$-respects every demands $Q$ that are $1$-respected by $T$). 
    To show that demands $Q$ are respected by $G$ we develop a technique of moving demands rather than flow. For this we define the notion of a \emph{valid demand state} (see Definition~\ref{def:flow_state}). 

    In a demand state $P$ each node $v\in V$ has its own vector of $|V|$ commodities (one for each vertex), where for each commodity it has some amount of mass that is waiting at $v$ to be shipped out (or, if the amount is negative, waiting to be shipped into $v$). This representation is particularly convenient in the hierarchical setting: 
    Even when we focus on a sub-cluster $S$, we maintain at each $v\in S$ information on the mass from all $|V|$ commodities including those of vertices that are not in $S$. In contrast, working directly with $|S|\times |S|$ demand matrices within a sub-cluster $S$ would be inconvenient, as the local view would lose information about demands involving vertices outside $S$. The demand state formulation, on the other hand, always retains the complete global demand structure, allowing us to reason locally within clusters while still maintaining awareness of the external interactions. 
    
    We use the structure of $T$ to direct the movement of demands. This corresponds to updating the demand state. Specifically, as we climb up the sparsifier $T$, we move the demands of the demand state from internal edges to the boundary edges of sub-clusters $S\in V_T$.
    However, unlike in the flow-sparsifier setting, these movements do not necessarily correspond to an actual multi-commodity flow that is routable in the graph. In fact, we do not ``pay''  for the congestion of these routings, as in the flow-sparsifier analysis, but rather according to how much the graph respects them (see Definition~\ref{def:cut_respects_flow}). This charging scheme is detailed in Section \ref{section:charging-scheme}.

    \subsection{Dealing With Bottleneck Cuts}
    \label{subsec:recap-refinement}

    As mentioned in Section~\ref{subsec:relating-cut-and-flow-sparsifiers}, a key property of tree flow-sparsifier $T$ is that for every cluster $S\in V_T$, we can move the demands currently in the inter-cluster edges $F$ between sub-clusters
    $\{S_1,\ldots,S_t\}$
    of $S$, to the boundary edges $B = E(S, V\setminus S)$ of $S$ while incurring small congestion. 
    An ideal scenario is when the algorithm guarantees that $F \cup B$ is \emph{flow-linked} inside $G[S]$, meaning that one can route an all-to-all multi-commodity  flow $Q_{F\cup B}$ (see Definition~\ref{def:all_to_all_flow}) 
    between the edges of $F \cup B$ with low congestion. In this scenario, we can transfer the demands from the inter-cluster edges $F$ to the boundary $B$ in a two-step process:
    \begin{itemize}
        \item Spread the demands of $F\cup B$ uniformly by routing them according to $Q_{F\cup B}$.
        \item Route the demands of $F$ to $B$ according to the flow paths of the sub-matrix $Q_{F\cup B}(F,B)$.
    \end{itemize}
    
    Note that the uniform spreading in the first step ensures that demands whose source and target are in $S$ are routed to the same uniform distribution over $F\cup B$ and are therefore canceled. Moreover, after the second step every boundary edge carries an equal share of the total demand. 
    Because $S\in V_T$, the cut $E(S, V\setminus S)$ must respect $Q$, so the amount of mass that has to leave the cluster $S$ is at most $E(S, V\setminus S)$. Therefore, the mentioned cancellation guarantees that the load (total demand per edge) is at most $1$ on each boundary edge.

    However, as discussed in~\cite{bienkowski2003practical, harrelson2003polynomial, RackeS14, racke2014computing}, 
    there may be sub-clusters $S\subseteq V$, for which we cannot find a partition $\{S_i\}_{i=1}^t$ such that $F\cup B$ is flow-linked in $G[S]$. A simple counter-example is that it might be the case that $B$ itself is not flow-linked in $G[S]$, or maybe even disconnected there.
    In this case, no partition of $G[S]$ will work.
    We use the name \emph{bottleneck cut} to refer to a cut which certifies that the boundary edges are not flow-linked. This name was introduced by~\cite{RackeS14}, though their bottleneck cuts only had to satisfy a weaker requirement.
    A similar (yet not equivalent) notion appeared earlier in \cite{bienkowski2003practical, harrelson2003polynomial}. The algorithms of \cite{bienkowski2003practical, harrelson2003polynomial, RackeS14} had to make sure that the clusters have no bottleneck cuts. They achieved this by cutting clusters that have bottleneck cuts. We call this step the \emph{refinement phase} \cite{RackeS14}. Previous work implement the refinement phase in different ways, either by ensuring a ``precondition'' on the sub-clusters they recur into \cite{bienkowski2003practical}, or encountering these cuts and handling them ``on the fly'' \cite{harrelson2003polynomial, RackeS14}. In either case dealing with these cuts slowed down these early algorithms substantially.

   Since ensuring that there is no bottleneck cut in each sub-cluster before recurring into it may be expensive (in terms of running time),  \cite{racke2014computing} settled for something slightly weaker: They first ignore the boundary edges of $S$ and find a partition
    of $S$ into  $\{S_1,\ldots,S_t\}$ such that the inter-cluster edges $F$ are flow-linked. 
    Then they find an approximate min-cut $Y$ for the (single commodity) flow problem which routes flow from $F$ to $B$. 
    By adding to $T$ a new sub-cluster whose boundary is $Y$,
    \cite{racke2014computing} 
    are able to bound the amount of mass that has to cross the cut $Y$, since now the demand matrix $Q$ is respected by $Y$ in $G$. 
    
    Adding $Y$ as a boundary of a sub-cluster in $T$ is enough to show how to move the demands to the boundary edges $B$, such that the load on each boundary edge is multiplied by $1+\uptau$ units ($\uptau$ is a parameter $0 < \uptau \le 1$).
    However, the distribution of demands on the boundary edges is not uniform. Hence, demands whose source and target are in the sub-cluster do not cancel-out. 
    Consequently, the load accumulates when going up along the tree, multiplied by a factor of at most $1+\uptau$ in each iteration. Choosing $\uptau = \frac{1}{\log n}$, such that the load is multiplied by a factor of $(1+\frac{1}{\log n})$ along the $O(\log n)$ levels of the tree, ensures that the load is constant. 
    This method, however, incurs a degradation of $O(\frac{1}{\uptau})$ per level in the quality of the tree flow-sparsifier,
    leading to an overall loss of $O(\frac{\log n}{\uptau})=O(\log^2 n)$.\footnote{Choosing $\uptau = \frac{1}{\log n}$ is optimal, up to a multiplicative constant. Let $k=\Theta(\log n)$ be the depth of the tree. We omit the proof of the following claim: For any sequence of $\uptau_1,\ldots,\uptau_k \le 1$, it holds that $\sum_{i=1}^k{\frac{1}{\uptau_i}\prod_{j=i+1}^k{(1+\uptau_j)}} = \Omega(k^2)$.} The  additional $O(\log^2 n)$ factor in their quality comes from the usage of~\cite{khandekar2009graph}'s cut-matching game for flow (so in total the quality is $O(\log^4 n)$).

    In our first step below (Section \ref{subsec:first-step-summary}) we deal with bottleneck cuts similarly to \cite{racke2014computing}, but using primitives from the expander decomposition algorithm in \cite{ADK25}, which are the cut-analogue of~\cite{khandekar2009graph}'s cut-matching game. This allows us to get Theorem \ref{theorem:final}.
    Afterwards, in Section \ref{subsec:full_construction_intuition} we describe a better way to deal with bottleneck cuts, by implementing a version of the refinement phase in near-linear time. This allows us to improve the quality of the resulting tree cut-sparsifier and get Theorem \ref{theorem:final-improved}.

    \subsection{A Tree Cut-Sparsifier of Quality \texorpdfstring{$O(\log^3 n)$}{O(log\^3 n)}}
    \label{subsec:first-step-summary}

    We now adapt~\cite{racke2014computing} to the \emph{cut-sparsifier} setting. As outlined in Section~\ref{subsec:relating-cut-and-flow-sparsifiers}, our goal here is not to route the demand matrix $Q$ within $G$, but rather to ensure that $G$ respects $Q$.
    In this formulation, the analogue of a set $F$ of inter-cluster edges (within a cluster $S$) being flow-linked in~\cite{racke2014computing} is that $G[S]$ respects the all-to-all demand matrix between edges of $F$ (see Definition~\ref{def:all_to_all_flow}), or equivalently that $F$ expands in $G[S]$ (see Definition \ref{def:expanding-set} below). 
    Therefore, our first goal is to obtain, efficiently, such a set $F$. For this step we ignore the boundary edges of $S$ (which may be separated by bottleneck cuts).
    We use the following algorithmic tool: 
   
    \paragraph{Balanced-cut-or-expander procedure.} To obtain such an $F$ efficiently we use the “balanced cut or expander” primitive from~\cite{ADK25},
    which allows us to certify that the sub-cluster $G[S]$ respects the all-to-all flow problem on a subset $F\subseteq E$, or find a balanced cut which refutes this.
    We start with $F_1=E(G[S])$ and iterate on $i=1,\ldots$: 
    \begin{itemize} 
        \item Invoke the primitive on $G[S]$ to check if it respects the all-to-all flow problem on $F_i$. 
        \item On an \emph{expander} outcome, we stop and set $F:=F_i$. This certifies the all-to-all flow problem on $F$ is respected by $G[S]$.
        \item On a \emph{balanced cut} outcome, we shrink $F_i$ to a new set $F_{i+1}\subset F_i$. The sparsity guarantee ensures $|F_{i+1}|\le (1-1/\log n)|F_i|$, so after $O(\log^2 n)$ iterations we must obtain the first case.
    \end{itemize} 
    This step is the cut-analogue of \cite[Lemma 3.1]{racke2014computing}, but powered by~\cite{ADK25}’s tool so that the certification we obtain when $F$ is good is expansion (according to Definition~\ref{def:expanding-set}) rather than being flow-linked. 

    \paragraph{Routing from $F$ to the boundary.} Once $F$ is fixed, we connect $F$ to the boundary edges similarly to~\cite{racke2014computing}. 
    We compute a cut $Y$ (via an auxiliary approximate maximum flow problem) that separates $F$ from $B$. For this purpose, we use the fair-cuts algorithm of~\cite{LNPSsoda23}, which serves as an efficient approximate max-flow solver and provides slightly better congestion (see Lemma~\ref{lemma:separator}) compared to \cite[Lemma 3.2]{racke2014computing}, as well as simplifying the argument. 
    
    \medskip
    The usage of~\cite{ADK25} rather than a procedure inspired by the cut-matching game of~\cite{khandekar2009graph} in~\cite{racke2014computing} allows us to improve the quality of the cut-sparsifier by a $\log n$ factor. Specifically, we still pay $O(\frac{\log n}{\uptau})=O(\log^2 n)$ like in Section \ref{subsec:recap-refinement}, but the additional factor from the cut-matching component is now $O(\log n)$ instead of $O(\log^2 n)$. This gives a tree cut-sparsifier of quality of $O(\log^3 n)$, and establishes Theorem \ref{theorem:final}. 
    In particular, as discussed after Theorem~\ref{theorem:final}, a tree cut-sparsifier of quality $\alpha$ implies a tree flow-sparsifier of quality $O(\alpha \log n)$.
    Therefore, this tree is also a tree flow-sparsifier of quality $O(\log^4 n)$, matching the quality of the flow-sparsifier of~\cite{racke2014computing}.

    The construction of the tree cut-sparsifier is detailed in Section \ref{section:merge-phase}. Importantly, we use the modular building blocks of the balanced sparse cut procedure~\cite{ADK25} and the fair-cuts algorithm \cite{LNPSsoda23} as a black box. This significantly simplifies the construction of the tree cut-sparsifier, when compared to, \eg, \cite{racke2014computing}, as it abstracts away many details of the underlying algorithmic tools. 

    In Section \ref{section:charging-scheme} we show that the constructed tree is indeed a tree cut-sparsifier. There, we explicitly define the invariant that holds when we progressively move the demands around the graph (following the structure of the tree). Additionally, we use a careful charging scheme to rigorously bound the quality of the tree cut-sparsifier.

    \subsection{A Tree Cut-Sparsifier of Quality \texorpdfstring{$O(\log^2 n \log\log n)$}{O(log\^2 n * log log n)}}\label{subsec:full_construction_intuition}
    Following the naming of \cite{harrelson2003polynomial, RackeS14}, we call the partitioning algorithm from Section~\ref{subsec:first-step-summary}, that takes $S$ and returns $\{S_i\}_{i=1}^t$, the \emph{merge phase}. As explained in Section \ref{subsec:first-step-summary}, this phase does \emph{not} guarantee that the inter-cluster edges $F$ and the boundary edges $B$ are expanding together, just that the edges of $F$ are expanding, and we can route from $F$ to $B$.
    Recall from Section \ref{subsec:recap-refinement} that because the distribution of demands on the boundary edges is not uniform, source and target demands do not cancel-out, so the load on the boundary edges accumulates as we go up along the tree. This degrades the quality of the tree cut-sparsifier by an $O(\log n)$ factor.
    To get the improved Theorem \ref{theorem:final-improved}, in Section \ref{section:improvement} we show how to implement a version of the \emph{refinement phase} (see Section~\ref{subsec:recap-refinement}) in near-linear time.  
    Specifically, we show that given a cluster $S$ we can partition it to sub-clusters $\{S_1,\ldots,S_t\}$ such that each sub-cluster respects the all-to-all flow problem on its boundary edges. This is a slightly weaker requirement than the precondition of~\cite{bienkowski2003practical} and the requirement of~\cite{harrelson2003polynomial, RackeS14}.
    Additionally, we show that we can route from the inter-cluster edges of this refinement partition to the boundary of $S$. To perform this partition in near-linear time, we again use the result of~\cite{ADK25}
    which allows us to either find a balanced sparse cut (with respect to the all-to-all flow problem on the boundary of the current sub-cluster) or ensure that the boundary edges are expanding. Additionally, it lets us route from the sparse cut's edges to the boundary edges of one of the sides of the cut. Crucially, however, contrary to the merge phase, we get no guarantee that the size $|S_i|$ is smaller than $|S|$ by a constant factor (and in fact it may be that the partition is simply $\{S\}$).
    
    Using the refinement phase, we amend the hierarchical decomposition tree by performing this refinement after each call to the merge phase algorithm. This gives a decomposition which alternates between a refinement phase partition, 
    that allows to spread the demand uniformly on the boundary,
    and a merge phase partition that shrinks the clusters. Because each sub-cluster $S_i$ in the refinement phase respects the all-to-all demand matrix on its boundary edges, we can move the demands to a uniform distribution on these boundary edges. Furthermore, since $S_i\in V_T$ (and by assumption $Q$ is routable in $T$), the cut $E(S_i, V\setminus S_i)$ respects $Q$,  we get that at most $1$ unit of mass per edge needs to leave the cluster. This allows us to ``reset'' the load on each boundary edge to be $1$ and avoid accumulating the loads on the boundary edges as we go up the tree. By setting $\uptau = 1$ in the merge phase, we improve the degradation (in quality) incurred by the merge phase by a factor of $O(\log n)$, to be $O(\log^2 n)$ (in total over all sub-clusters), instead of $O(\log^3 n)$. This is because we pay $O(\frac{\log n}{\uptau})=O(\log n)$, plus an additional $O(\log n)$ term for the cut-matching component like before.
    As for the degradation in quality incurred by the refinement phase, we adapt the analysis of \cite{harrelson2003polynomial, RackeS14} to show that it is bounded by $O(\log^2 n \log \log n)$ (in total over all sub-clusters), which gives Theorem \ref{theorem:final-improved}. In particular, our tree is also a tree flow-sparsifier of quality $O(\log^3 n \log\log n)$, which gives Theorem \ref{theorem:final-improved-flow}.

    The rest of the paper is structured as follows. Section \ref{section:preliminaries} contains additional definitions and notations, as well as some algorithmic tools we use. In Section \ref{section:merge-phase} we detail the merge phase partition and the hierarchical decomposition for Theorem \ref{theorem:final}. In Section \ref{section:charging-scheme} we show how to update the demand states as we go up the tree in each application of the merge phase, and show that the resulting tree indeed suffices for Theorem \ref{theorem:final}. In Section \ref{section:improvement} we detail the refinement phase and use it to prove Theorem \ref{theorem:final-improved}.

    \section{Additional Preliminaries}
	\label{section:preliminaries}
    \subsection{General Definitions}
    We begin with defining the subdivision graph, which allows us to formally reason about the edges of a graph carrying demand and expanding.
	
	\begin{definition}[Subdivision graph]
	\label{def:subdivision_graph}
	    Given a graph $G=(V,E)$, the \emph{subdivision graph} is denoted by $G'=(V', E')$. It is created by adding a \emph{split vertex} in the middle of each edge, replacing it with a path of length $2$. Formally, $V' \defeq V\cup X_E$, where $X_E$ are the split vertices $X_E \defeq \{x_e\mid e\in E\}$. The new edges are $E' \defeq \left\{(x_e, v)\mid v\in e\right\}$. Given a set of edges $F\subseteq E$, the corresponding split vertices are denoted by $X_F \defeq \{x_f \mid f\in F\}$.
	\end{definition}
    
	\begin{definition}[Indicator Measure]
	\label{def:vertex_measure_of_subset}
	   Let $G=(V,E)$ be a graph and let $U\subseteq V$ be a subset. We define the \emph{weighting induced by $U$} as $\mu_{U} : V \to \RR_{\ge 0}$, such that $\mu_U = \1_U$. In other words, for each $v\in V$, $\mu_{U}(v) = 1$ if $v\in U$, and $0$ otherwise.
	\end{definition}
    In almost all of our use cases, we will only use such indicator measures. In these cases, one should think of the measure $\mu$ as a way of distinguishing a subset of the vertices.

    \subsection{Expansion}

    The following definition extends the standard notion of conductance to the vertex-weighted setting, where some vertices are more important than others. Specifically, when $\mu$ is an indicator measure, it lets us focus  on a subset of the vertices (for example, those corresponding to flow terminals).

	\begin{definition}[Expansion with Vertex Weights]
	\label{def:expansion_with_mu}
    	Let $G=(V,E)$ and $\emptyset\neq S \subset V$. Let $\mu : V \to \RR_{\ge 0}$ be a weighting of the vertices. The expansion of the cut $(S, \bar{S})$ with respect to $\mu$ (sometimes called $\mu$-expansion), denoted by  $\Phi^{\mu}_{G}(S,\bar{S})$, is defined as
    	\[
            \Phi^{\mu}_{G}(S,\bar{S}) \defeq \frac{\capacity(S,\bar{S})}{\min(\mu(S), \mu(\bar{S}))},
    	\]
    	whenever $\min(\mu(S), \mu(\bar{S})) > 0$. A cut with low $\mu$-expansion is called \emph{$\mu$-sparse} (the exact bound on the expansion is given in the context). 
        
        The expansion of $G$ with respect to $\mu$ (sometimes called $\mu$-expansion) is defined to be
    	\[
            \Phi^{\mu}(G) \defeq \min_{\substack{S\subseteq V\\\min(\mu(S), \mu(\bar{S})) > 0}}\Phi^{\mu}_{G}(S,\bar{S})
    	\]
    \end{definition}
    
    The next definition introduces the corresponding notion of an expander, \ie, a graph that maintains sufficiently large expansion across all vertex subsets.
    
    \begin{definition}[Expander]
    \label{def:expander_near_expander}
    	Let $G = (V,E)$ and let $\mu$ be a weighting of the vertices. We say that $G$ is an $\alpha$-expander with respect to $\mu$ if $\Phi^{\mu}(G) \ge \alpha$. 
        
	\end{definition}
    So far, expansion has been defined as a \emph{property of a graph}. 
    In many of our later arguments, however, we talk about 
    \emph{individual subsets} $A \subseteq V$ that expand well 
    \emph{within} a given graph $G$ under some weighting. 
    The next definition captures this idea: it says that a subset $A$, \emph{$\alpha$-expands with respect to $\mu$}, if no cut of $V$ separates $A$ too unevenly, when the imbalance is measured according to the $\mu$-measure of the vertices. This notion generalizes  the $\alpha$-cut-linked notion of~\cite{chekuri2005multicommodity}.

    \begin{definition}[Subset of $V$ Expands]
    \label{def:expanding-set}
        Given a graph $G=(V,E)$, $\alpha>0$, a weighting $\mu : V \to \RR_{\ge 0}$ of the vertices, and a set of vertices $A\subseteq V$, we say that $A$ $\alpha$-expands 
        with respect to $\mu$ in $G$ if
        \[
            \forall \substack{S\subseteq V\\\min(\mu(A\cap S), \mu(A\setminus S)) > 0}: \frac{\capacity(S, V\setminus S)}{\min(\mu(A\cap S), \mu(A\setminus S))} \ge \alpha.
        \]
        In other words, if we let $\mu' = \mu\cdot\mu_A$, then $A\subseteq V$,  $\alpha$-expands in $G$ with respect to $\mu$, 
        if and only if $G$ is an $\alpha$-expander with respect to $\mu'$. 
        When $\mu\equiv 1$, we just say that $A$ $\alpha$-expands in $G$.
    \end{definition}

    Note that in case $\mu = \1_X$ is an indicator measure, some set $A$ is $\alpha$-expanding with respect to $\1_X$ is equivalent to saying that the set $A\cap X$ is $\alpha$-expanding.
    Definition \ref{def:expanding-set} is the cut-analogue of the set $A$ being $\alpha$-flow-linked, in case each vertex $v\in A$ injects $\mu(v)$ units of flow. This approach is useful for us when we want to choose a measure $\mu = \1_X$ in advance and \emph{then} find some set $A$ such that $A\cap X$ expands.

    The next remark bridges the two viewpoints introduced in Definitions~\ref{def:cut_respects_flow} and~\ref{def:expanding-set}: 
    the notion of a set $A$ that $\alpha$-expands with respect to a measure $\mu$, and the notion of a graph $G$ that $\alpha$-respects a corresponding demand matrix defined by $\mu$. 

    \begin{remark}
    \label{remark:mu_expansion_as_problem}
        Let $G = (V,E)$ and let $\mu$ be a weighting of the vertices. Let $A\subseteq V$ be a set of vertices with $\mu(A) > 0$ and let $\mu' = \mu\cdot \mu_A$. Consider the demand matrix $Q$ given by $Q(u, v) = 
        \frac{\mu'(u)\cdot\mu'(v)}{2\mu(A)}$ for each $u, v \in V$. Fix a cut $(S, \bar{S})$. It holds that
        $\dem_Q(S, \bar{S}) = \frac{\mu'(S)\cdot\mu'(\bar{S})}{\mu(A)}$. Assume $\mu'(S) \le \mu'(\bar{S})$. Then, $\frac{1}{2}\le\frac{\mu'(\bar{S})}{\mu(A)}\le 1$. Plugging in,
        \[
            \Phi^{\mu'}(S, \bar{S})\le\frac{\capacity(S, \bar{S})}{\dem_Q(S, \bar{S})}\le2\cdot\Phi^{\mu'}(S, \bar{S}).
        \]

        In other words, if $A$ $\alpha$-expands in $G$ with respect to $\muG$, then it certifies that $G$ $\alpha$-respects the demand matrix $Q$ 
        given above, and conversely if $G$ $\alpha$-respects $Q$, then $A$  $\frac{\alpha}{2}$-expands with respect to $\mu$ in $G$.

        In particular (by setting $\mu \equiv \1$), if $A$ $\alpha$-expands in $G$, then $G$ $\frac{\alpha}{2}$-respects the all-to-all flow problem on $A$ (recall Definition~\ref{def:all_to_all_flow}), and conversely if $G$ $\alpha$-respects the all-to-all flow problem on $A$ then $A$ $\alpha$-expands in $G$.

    \end{remark}

    The following fact shows that if $G[A]$ is an expander with respect to a measure, then $A$ expands in $G$ with respect to this measure. Note that the converse is not true. The flow analogue of this claim is that if we can route an all-to-all flow problem on the vertices of $A$ within $G[A]$ then we can do so in $G$, but not conversely.
    
    \begin{fact}
    \label{fact:expansion_implies_flow_expansion}
        If $G[A]$ is an $\alpha$-expander with respect to $\mu|_A$, then $A$ $\alpha$-expands with respect to $\mu$ in $G$. 
    \end{fact}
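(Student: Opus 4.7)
The plan is to reduce an arbitrary cut of $V$ to a cut of $A$ inside $G[A]$, and invoke the expansion hypothesis there. So fix an arbitrary $S\subseteq V$ with $\min(\mu(A\cap S),\mu(A\setminus S))>0$; I want to lower bound $\capacity_G(S,V\setminus S)/\min(\mu(A\cap S),\mu(A\setminus S))$ by $\alpha$. Set $T\defeq A\cap S$, so $A\setminus T=A\setminus S$. Since $\mu|_A$ is just $\mu$ restricted to $A$, the denominators match exactly: $\mu|_A(T)=\mu(A\cap S)$ and $\mu|_A(A\setminus T)=\mu(A\setminus S)$, so both are positive, and the cut $(T,A\setminus T)$ is a valid cut in $G[A]$ on which the expansion hypothesis applies.

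The heart of the argument is then a monotonicity observation on the numerators: every edge of $G[A]$ that crosses $(T,A\setminus T)$ has one endpoint in $A\cap S$ and the other in $A\setminus S\subseteq V\setminus S$, so it also crosses $(S,V\setminus S)$ in $G$. Since $E(G[A])\subseteq E(G)$, this gives
\[
\capacity_{G[A]}(T,A\setminus T)\ \le\ \capacity_G(S,V\setminus S).
\]
Combining with the $\alpha$-expander assumption on $G[A]$ with respect to $\mu|_A$,
\[
\frac{\capacity_G(S,V\setminus S)}{\min(\mu(A\cap S),\mu(A\setminus S))}\ \ge\ \frac{\capacity_{G[A]}(T,A\setminus T)}{\min(\mu|_A(T),\mu|_A(A\setminus T))}\ \ge\ \alpha.
\]
Since $S$ was arbitrary, this is exactly the condition in Definition~\ref{def:expanding-set} for $A$ to $\alpha$-expand with respect to $\mu$ in $G$.

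There is really no obstacle here, only two things to be careful about: that we pick the correct subset of $A$ to play the role of $T$ (ensuring the measures $\mu|_A(T)$ and $\mu(A\cap S)$ coincide), and that the edges we count in the $G[A]$-cut are a subset of the edges counted in the $G$-cut (which uses only $E(G[A])\subseteq E(G)$ and the observation that $A\setminus S\subseteq V\setminus S$). The converse direction, as noted in the surrounding discussion, fails because cuts in $G$ can separate $A$ using edges \emph{outside} of $G[A]$, and those edges do not help $G[A]$ realize its expansion; but that direction is not needed here.
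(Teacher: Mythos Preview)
Your proof is correct and follows essentially the same approach as the paper: fix an arbitrary cut $(S,V\setminus S)$, pass to the induced cut $(A\cap S, A\setminus S)$ in $G[A]$, observe that the numerator can only grow while the denominator is unchanged, and invoke the $\alpha$-expander hypothesis. The only difference is that you spell out the capacity monotonicity and the measure identity a bit more explicitly than the paper does.
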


    \begin{proof}
        Indeed, let $S\subseteq V$ be such that $\min(\mu(A\cap S), \mu(A\setminus S)) > 0$. Then, 
        \[
            \frac{\capacity(S, V\setminus S)}{\min(\mu(A\cap S), \mu(A\setminus S))} \ge \frac{\capacity(A\cap S, A\setminus S)}{\min(\mu(A\cap S), \mu(A\setminus S))} \ge \alpha~,
        \]
        where the last inequality is due to $(A\cap S, A\setminus S)$ being a cut in $G[A]$.
    \end{proof}

    The following fact shows that expansion is a monotone property, in the sense that subsets of an expanding sets also expand.
    \begin{fact}
    \label{fact:subset_expands}
        Assume $A_1\subseteq A_2\subseteq V$ and $A_2$ $\alpha$-expands with respect to $\mu$. Then $A_1$ $\alpha$-expands with respect to $\mu$.
    \end{fact}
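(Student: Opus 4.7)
The plan is to unfold Definition~\ref{def:expanding-set} applied to $A_1$ and reduce it to the corresponding inequality for $A_2$ via the monotonicity of the measure $\mu$ under subset inclusion. Concretely, I take an arbitrary cut $S \subseteq V$ witnessing the expansion condition for $A_1$, that is, one satisfying $\min(\mu(A_1 \cap S), \mu(A_1 \setminus S)) > 0$, and aim to show that $\capacity(S, V\setminus S) \ge \alpha \cdot \min(\mu(A_1 \cap S), \mu(A_1 \setminus S))$.

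The key observation is that since $A_1 \subseteq A_2$, we have the inclusions $A_1 \cap S \subseteq A_2 \cap S$ and $A_1 \setminus S \subseteq A_2 \setminus S$. As $\mu$ is nonnegative, this gives $\mu(A_1 \cap S) \le \mu(A_2 \cap S)$ and $\mu(A_1 \setminus S) \le \mu(A_2 \setminus S)$, hence
\[
\min(\mu(A_2 \cap S), \mu(A_2 \setminus S)) \;\ge\; \min(\mu(A_1 \cap S), \mu(A_1 \setminus S)) \;>\; 0.
\]
In particular, the expansion hypothesis for $A_2$ is applicable at this cut, and it yields $\capacity(S, V\setminus S) \ge \alpha \cdot \min(\mu(A_2 \cap S), \mu(A_2 \setminus S))$. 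Chaining this with the previous inequality produces the required bound, and since $S$ was arbitrary, this verifies the condition of Definition~\ref{def:expanding-set} for $A_1$.

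I do not anticipate any real obstacle here; the argument is essentially a two-line monotonicity observation on the $\min$ function composed with the measure $\mu$. The only subtlety worth flagging is confirming that the positivity condition transfers from $A_1$ to $A_2$ (so that the hypothesis for $A_2$ is indeed applicable at the chosen cut), which is immediate from the same inclusion chain.
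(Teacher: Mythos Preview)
Your proof is correct and follows exactly the same approach as the paper's: both observe that $\min(\mu(A_1\cap S), \mu(A_1\setminus S)) \le \min(\mu(A_2\cap S), \mu(A_2\setminus S))$ via monotonicity of $\mu$ under subset inclusion, so the expansion bound for $A_2$ at the cut $S$ immediately yields the one for $A_1$. Your write-up is simply a more detailed version of the paper's one-line argument.
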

    \begin{proof}
        The denominator satisfies $\min(\mu(A_1\cap S), \mu(A_1\setminus S)) \le \min(\mu(A_2\cap S), \mu(A_2\setminus S))$, which gives the result.
    \end{proof}

    \subsection{Cut-Matching Game}
    \label{subsec:cut-matching}
    
    The following is a recent result of~\cite{ADK25}, which generalizes~\cite{agassy2023expander}. They show that in near-linear time we can get a balanced sparse cut, or certify that a large part of the graph is an expander, with respect to a general vertex measure.\footnote{Theorem \ref{theorem:balanced_cut_mu} requires that $\mu(v)$ will be in $\left[\frac{1}{\poly(n)}, \poly(n)\right]\cup\{0\}$, which will hold for our usages.
    }

    \paragraph{Convention (orientation of cuts).}
    In all balanced-cut results we use (Theorems~\ref{theorem:balanced_cut_mu} and~\ref{theorem:trimming_allows_routing}, and Corollary~\ref{cor:balanced-cut-refined}), we orient a cut $(A,\bar A)$ so that the left side $A$ is conceptually the \emph{larger} side with respect to the relevant measure (typically $\mu$).

    \begin{theorem}[\texorpdfstring{\cite[Theorem 5.2]{ADK25}}{[ADK25, Theorem 5.2]}]
    \label{theorem:balanced_cut_mu}
        Given a graph $G=(V,E)$ of $m$ edges, a parameter $\phi > 0$, and a weighting $\mu : V \to \RR_{\ge 0}$ of the vertices, there exists a randomized algorithm which takes $\tilde{O}(m)$ time and must end in one of the following three cases:
        \begin{enumerate}
            \item We certify that $G$ has $\mu$-expansion $\Phi^{\mu}(G)=\Omega(\phi)$, with high probability. 
            \item We find a cut $(A,\bar{A})$ 
            in $G$ of $\mu$-expansion $\Phi^{\mu}_G(A,\bar{A})=O(\phi\log n)$, and $\mu(A), \mu(\bar{A})$ are both $\Omega\left(\frac{\mu(V)}{\log n}\right)$, i.e, we find a relatively balanced $\mu$-sparse cut. 
            \item We find a cut $(A,\bar{A})$ with $0<\mu(\bar A)\le \frac{\mu(V)}{2}$, $\Phi^{\mu}_G(A,\bar{A})=O(\phi\log n)$, and with high probability, $G[A]$ is an $\Omega(\phi)$-expander with respect to $\mu$. 
        \end{enumerate}
    \end{theorem}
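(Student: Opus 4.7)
The plan is to instantiate the cut-matching game framework of~\cite{khandekar2009graph} with a spectral cut strategy in the style of~\cite{orecchia2008partitioning,agassy2023expander}, and to lift it from the uniform to the $\mu$-weighted setting. Throughout the game I would maintain an auxiliary weighted multigraph $H$ on the support $\mathrm{supp}(\mu)=\{v\in V:\mu(v)>0\}$, together with an implicit embedding of $H$ into $G$ whose congestion is $O(1/\phi)$ per $\mu$-unit of demand, so that any $\Omega(1)$-$\mu$-expansion of $H$ automatically certifies $\Phi^{\mu}(G)=\Omega(\phi)$.

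In each of $O(\log n)$ rounds, I would run the spectral cut player on $H$: it applies a matrix-exponential projection of a random vector under the $\mu$-normalized Laplacian of $H$ to produce a $\mu$-balanced bipartition $(S_i,T_i)$ of $\mathrm{supp}(\mu)$. I would then attempt, inside $G$, to route a $\mu$-proportional demand from $S_i$ to $T_i$ at capacity scale $\phi$, invoking the fair-cut subroutine of~\cite{LNPSsoda23} in place of ordinary max-flow. The outcome is either a flow that embeds a $\mu$-weighted matching from $S_i$ to $T_i$ with congestion $O(1/\phi)$, in which case I scale it appropriately, add it to $H$, and continue; or a cut $(A_i,\bar A_i)$ of $\mu$-expansion $O(\phi\log n)$ in $G$ that separates $\mu$-constant fractions of both $S_i$ and $T_i$.

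If every round succeeds, then after $O(\log n)$ iterations the OSVV potential argument, adapted to the $\mu$-weighted Laplacian, shows that the lazy random walk on $H$ has mixed, so $H$ is an $\Omega(1)$-expander with respect to $\mu$; pulling this back along the embedding yields case~1. Otherwise, let $(A,\bar A)$ be the first $\mu$-sparse cut returned, with $\mu(A)\le\mu(\bar A)$. If $\mu(A)=\Omega(\mu(V)/\log n)$, the cut is already $\mu$-balanced and we are in case~2. If $\mu(A)$ is smaller, I would set $A$ aside as a ``trimmed piece'', recurse on $G[\bar A]$ with the restricted measure, and iterate; after $O(\log n)$ such trims the accumulated $\mu$-mass is still at most $\mu(V)/2$, and either the residual instance terminates in case~1 or case~2, or the union of all trimmed pieces becomes the cut $A$ of case~3, whose total $\mu$-expansion remains $O(\phi\log n)$ since each individual trim is $O(\phi\log n)$-sparse and the trimmed $\mu$-masses decrease geometrically.

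The hardest step is making the OSVV-style potential analysis go through in the $\mu$-weighted setting: the matrix-exponential cut player, the matching embedding, and the potential drop per round must all be reweighted by $\mu$ rather than by degree, and one has to verify that a round in which the fair-cut routine returns a flow does decrease the potential by $\Omega(1/\log n)$ even when $\mu$ is far from uniform. The assumption $\mu(v)\in[1/\poly(n),\poly(n)]\cup\{0\}$ is precisely what is needed to keep eigenvalues and the matrix exponential polynomially bounded, so that each round (a Johnson--Lindenstrauss style projection of heat-kernel vectors, plus one fair-cut call) runs in $\tilde O(m)$ time, giving the overall $\tilde O(m)$ bound.
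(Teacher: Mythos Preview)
The paper does not prove this theorem; it is quoted verbatim from~\cite{ADK25} and used as a black box. So there is no ``paper's own proof'' to compare against. That said, your high-level strategy---a cut--matching game with an OSVV-style spectral cut player, lifted to the $\mu$-weighted setting and using fair cuts for the matching player---is indeed the approach of~\cite{agassy2023expander,ADK25}.

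Two genuine gaps, however, appear in how you handle failed rounds. First, you propose to \emph{restart} the entire game on $G[\bar A]$ each time a small sparse cut is found, and you assert that ``the trimmed $\mu$-masses decrease geometrically'' so that only $O(\log n)$ restarts occur. There is no reason for this geometric decrease: each trimmed piece has $\mu$-mass below $\mu(V)/\log n$, but nothing forces successive pieces to shrink, so the number of restarts (and hence the running time) is unbounded. The actual algorithm runs a single \emph{non-stop} game: when a round fails, the sparse piece is removed and the game continues on the remainder with the \emph{same} potential, rather than restarting. The total number of rounds is then bounded by the potential argument itself.

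Second, your treatment of case~3 is incomplete. If the residual instance eventually terminates in case~1, that certifies the \emph{residual} is an expander, and the union of trimmed pieces is $\bar A$---fine. But your third sub-case (``or the union of all trimmed pieces becomes the cut of case~3'') gives no mechanism to certify $G[A]$ is an $\Omega(\phi)$-expander. In fact, removing a sequence of sparse pieces can destroy expansion of the remainder; \cite{ADK25} (following~\cite{saranurak2019expander}) applies a separate flow-based \emph{trimming step} that enlarges $\bar A$ beyond $\bigcup_t S_t$ precisely to restore the expander guarantee on $G[A]$. This step is what the paper's Remark~\ref{remark:trimming_allows_routing} alludes to, and it is missing from your outline.
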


    \paragraph{Routing from a cut.}
        Let $G = (V, E)$, and let $(A, S \setminus A)$ be a cut of a cluster $S \subseteq V$.  
        When we say that we \emph{route flow in $G[A]$ from the cut} $(A, S \setminus A)$ to some destination subset $X \subseteq A$, 
        we mean that there exists a flow $f$ in the subgraph $G[A]$ such that each vertex $v \in A$ sends
            $\deg_{G[S]}(v) - \deg_{G[A]}(v)$
        units of mass, i.e., one unit for every incident edge of $v$ crossing the cut $(A, S \setminus A)$.  
        The specific requirements on the destination (\ie, the amount of mass that each vertex $x\in X$ receives),   
        and congestion of this flow are stated in the corresponding context.

    The authors of~\cite{ADK25} supply additional guarantees in Cases (2) and (3) of Theorem \ref{theorem:balanced_cut_mu} (see \cite[Section 4.3 and Corollaries A.3, A.4, A.5]{ADK25}). We summarize these guarantees in the following theorem.

    \begin{theorem}[Additional Guarantees for Theorem~\ref{theorem:balanced_cut_mu}]
    \label{theorem:trimming_allows_routing}
        The following are additional guarantees for Cases (2) and (3) of Theorem \ref{theorem:balanced_cut_mu}. First, in both cases $\mu(A) \ge \frac{1}{4}\mu(V)$.
        
        In Case (2), there exist constant $c>0$ and a sequence of cuts $S_t\subseteq A_t$, $0\le t < T$ 
        with $T=O(\log^2 n)$, such that $A_0 = V$, $A_{t+1} = A_t \setminus S_t$ (for $t \in \{0,\ldots T-1\}$). The returned cut is simply $(A,\bar{A})=(A_{T},\bigcup_{t=0}^{T-1}{S_t})$.
        The cuts apart from the last one satisfy $\mu\!\left(\bigcup_{t=0}^{T-2}{S_t}\right)=O\!\left(\frac{\mu(V)}{\log n}\right)$. Each $S_t$ is sparse: $\Phi^\mu_{G[A_t]}(A_{t+1}, S_t) \le c\phi \log n$. 
        Moreover, for every $t\in\{0,\ldots, T-1\}$,
        \begin{itemize}
            \item 
            There exists a flow routable with congestion $O(1)$ in $G[S_t]$, from the cut  $(A_t\setminus S_t, S_t)$ 
            to $S_t$, such that each vertex  $v\in S_t$ receives at most $O(\phi\log n\cdot\mu(v))$ units. 
            \item There exists a flow routable with congestion $O(1)$ in $G[A_t\setminus S_t]$, from the cut $(A_t\setminus S_t, S_t)$ to $A_t\setminus S_t$, such that each vertex $v\in A_t\setminus S_t$ receives at most $O(\phi\log n\cdot\mu(v))$ units. 
        \end{itemize}
        In other words, in each step we can route flow from the cut edges $E(A_t\setminus S_t, S_t)$ to either the side $S_t$ or $A_t\setminus S_t$. 
                    
        In Case (3), this sequence of cuts exists as well, but a ``trimming step" is run to get some larger cut $\bar{A}\supseteq \bigcup_{t=1}^{T-1} S_t$ which is returned. Moreover,
        \begin{itemize}
            \item There exists a flow routable with congestion $O(1)$ in $G[A]$, from the cut $(A,\bar{A})$ 
            to $A$, such that each vertex  $v\in A$ receives at most $O(\phi\cdot\mu(v))$ units. 
            \item There exists a flow routable with congestion $O(1)$ in $G[\bar{A}]$, from the cut $(A,\bar{A})$ to $\bar{A}$, such that each vertex $v\in \bar{A}$ receives at most $O(\phi\log n\cdot\mu(v))$ units.
        \end{itemize}
    \end{theorem}

    \begin{remark}
        In Case (3) of Theorem~\ref{theorem:balanced_cut_mu}, the fact that $G[A]$ is an $\Omega(\phi)$-expander with respect to $\mu$, implies that $A$ $\Omega(\phi)$-expands with respect to $\mu$ in $G$ by Fact~\ref{fact:expansion_implies_flow_expansion}. 
    \end{remark}

    \subsection{Demand States}

    When we describe the solution to a multi-commodity flow problem, it is useful to talk about \emph{intermediate} stages of the flow. Analogously, we wish to talk about intermediate states of the distribution of demands when showing that a graph respects a given demand matrix.
    Intuitively, we imagine the vertices of the graph holding some demand of each commodity, which can be positive or negative. These demands move from one vertex to another according to demand matrices (\ie, multi-commodity flow problems), incurring some congestion (or, in the cut-sparsifier case, \emph{charge}, as detailed in Section \ref{section:charging-scheme}) on the edges of the graph. When a positive demand meets a negative demand of the same commodity, they cancel out. This scheme is made formal in the definitions in this section.
    
    \begin{definition}[Demand State]
    \label{def:flow_state}
        Consider a set of commodities $\K$. 
        The \emph{demand state of $v$}, $v\in V$, is a vector, $P(v)$, 
        where $P(v,k)$, $k\in \K$, defines the amount of commodity $k\in \K$ which resides at $v\in V$ and needs to be shipped out. Note that 
        $P(v,k)$ may be either positive or negative, and in case it is negative, the commodity needs to be shipped into $v$. We denote by $\norm{P(v)}_1=\sum_{k\in\K}{|P(v,k)|}$ the \emph{load} on $v$.

        A \emph{demand state $P$ of $G$} (or simply a \emph{demand state} when $G$ is clear from context) is a collection of demand states, one for each $v\in V$, \ie, $P\in \RR^{V\times \K}$, where 
        the row of vertex $v$ is the demand state of $v$.

        For a set $A \subseteq V$, we say that a demand state $P$ on $G$
        is \emph{supported on $A$} if for all $k\in \K$, $P(v, k) = 0$ whenever $v\notin A$.
    \end{definition}

    As explained below in Remark \ref{remark:flow_problem_as_flow_state}, demand states generalize demand matrices.
    Unlike demand matrices (see Definition~\ref{def:flow_problem}), where each vertex serves as the source of its own commodity and source-sink pairs are explicitly encoded in the matrix, in a demand state the commodities are independent of the vertices, and multiple vertices may act as sources of the same commodity (similarly, the same vertex may act as a source of multiple commodities).

    Intuitively, we think of a demand state as the current state of mass residing at the vertices of the graph, where positive and negative demands ``await'' to be canceled-out, while a multi-commodity flow problem and its demand matrix describe an explicit \emph{movement} of mass between pairs of vertices.

    \begin{definition} [Valid Demand State]
    \label{def:valid_flow_state}
        We say that a demand state on $G$ is \emph{valid} if $\sum_{u\in V}{P(u, k)} = 0$ for each $k\in\K$. That is, in total, no mass
        of commodity $k$ is created or destroyed.

        The \emph{demand} of $P$ across a cut $(A, B=\bar{A})$ is 
        \[
            \dem_P(A, B)\defeq\sum_{k\in \K}{\left|\sum_{u\in A}{P(u, k)}\right|} = \sum_{k\in \K}{\left|\sum_{u\in B}{P(u, k)}\right|}.
        \]

        We extend the notion of a cut (resp., graph) respecting a demand matrix, to a cut (resp., graph) respecting valid demand state using this definition: A cut $(S, \bar{S})$ in $G$ $\alpha$-respects $P$ if $\capacity(S, \bar{S})\ge\alpha\cdot\dem_P(S, \bar{S})$. $G$ $\alpha$-respects $P$ if
        \[
            \min_{\substack{S\subseteq V\\\dem_P(S, \bar{S}) > 0}} \frac{\capacity(S, \bar{S})}{\dem_P(S, \bar{S})} \ge \alpha.
        \]
    \end{definition}

    Throughout the paper we manipulate many demand states that are not themselves valid. Such demand states typically arise by restricting a valid demand state to a subgraph (or cluster), that is, by retaining only the demand vectors associated with vertices in the cluster.
    Although these restricted demand states may violate the validity condition, this poses no issue: they should be viewed as a \emph{partial} or \emph{intermediate} collection of demand vectors. 

    \begin{remark}
    \label{remark:flow_problem_as_flow_state}
        Note that the notion of a valid demand state generalizes the notion of a demand matrix from Definition~\ref{def:flow_problem}. Indeed, a demand matrix is a $V\times V$ matrix $Q$ 
        where $Q(u, v)$ is the amount of $u$'s commodity that has to go to $v$. Such a specification corresponds to a valid demand state $P$ in which $\K = V$, the total amount shipped out 
        of $u$ of its own commodity is $P(u, u) = \sum_{v\in V\setminus \{u\}}{Q(u, v)}$, and the amount shipped to $v$ of $u$'s commodity is $P(v, u) = -Q(u, v)$, $v\neq u$ (note that the commodity is shipped \emph{into} $v$, so $P(v,u)$ is negative). The standard definitions of congestion and demand across a cut of $Q$ equal those of $P$. 

        On the other hand, given a valid demand state, we can define the corresponding demand matrix (according to Definition \ref{def:flow_problem}) by adding to the graph a source vertex and sink vertex corresponding to each commodity, with appropriate additional edges capacitated according to the demands in $P$. The latter reduction is not be used in this paper, so we omit the full details.
    \end{remark}
    
    \begin{definition}[Sum and Negation]
    \label{def:sum_of_valid_states}
        We define the \textit{sum} of two demand states $P_1$ and $P_2$ on the same set of commodities $\K$ to be  $(P_1+P_2)(u, k) = P_1(u, k) + P_2(u, k)$. Note that if $P_1$ and $P_2$ were valid demand states, then $P_1+P_2$ is also a valid demand state.     

        We define the \textit{negation} $-P$ of a demand state $P$ on the set of commodities $\K$ by $(-P)(u, k) = -P(u, k)$. Note that if $P$ was a valid demand state, then $-P$ is also a valid demand state.
    \end{definition}

    The following fact, which we use extensively, is a direct consequence of Definitions~\ref{def:valid_flow_state} and~\ref{def:sum_of_valid_states}: 
    \begin{fact}
    \label{fact:extended_flow_problems}
        Fix a cut $(S, \bar{S})$. 
        For two valid demand states $P_1, P_2$ on the commodities $\K$, we have $\dem_{P_1 + P_2}(S, \bar{S}) \le \dem_{P_1}(S, \bar{S}) + \dem_{P_2}(S, \bar{S})$.

        Additionally, if $G$ $\alpha_1$-respects $P_1$ and $\alpha_2$-respects $P_2$, then it $\frac{1}{\frac{1}{\alpha_1}+\frac{1}{\alpha_2}}$-respects $P_1+P_2$.
        
    \end{fact}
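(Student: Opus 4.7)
The plan is to prove the two assertions of the fact essentially by unfolding the definitions and applying the triangle inequality, since everything reduces to manipulating scalar sums indexed by commodities.

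For the first assertion, I would start from the definition of $\dem_{P_1+P_2}(S,\bar S)$, namely
\[
\dem_{P_1+P_2}(S,\bar S) \;=\; \sum_{k\in\mathcal{K}} \Bigl| \sum_{u\in S} \bigl(P_1(u,k)+P_2(u,k)\bigr) \Bigr|,
\]
split the inner sum into the $P_1$ and $P_2$ contributions, and apply the triangle inequality $|a+b|\le|a|+|b|$ commodity-by-commodity. Summing over $k\in\mathcal{K}$ and matching each piece with the definition of $\dem_{P_i}(S,\bar S)$ yields exactly $\dem_{P_1}(S,\bar S)+\dem_{P_2}(S,\bar S)$. The only thing to double-check is that $P_1+P_2$ is a valid demand state, which follows directly from Definition~\ref{def:sum_of_valid_states}, so $\dem_{P_1+P_2}$ is well-defined.

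For the second assertion, I would fix an arbitrary cut $(S,\bar S)$ and combine the first assertion with the hypotheses: $\capacity(S,\bar S)\ge\alpha_i\cdot\dem_{P_i}(S,\bar S)$ for $i=1,2$, so $\dem_{P_i}(S,\bar S)\le\capacity(S,\bar S)/\alpha_i$. Summing and invoking the first part gives
\[
\dem_{P_1+P_2}(S,\bar S) \;\le\; \dem_{P_1}(S,\bar S)+\dem_{P_2}(S,\bar S) \;\le\; \capacity(S,\bar S)\Bigl(\tfrac{1}{\alpha_1}+\tfrac{1}{\alpha_2}\Bigr),
\]
which rearranges to the desired inequality $\capacity(S,\bar S)\ge\dem_{P_1+P_2}(S,\bar S)\,\bigl/\,(\tfrac{1}{\alpha_1}+\tfrac{1}{\alpha_2})$. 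Since $(S,\bar S)$ was arbitrary, taking the minimum over cuts with $\dem_{P_1+P_2}(S,\bar S)>0$ gives the claimed respect factor.

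There is essentially no obstacle here: both parts are one-line consequences of the triangle inequality and the definitions, and the second part is purely algebraic manipulation of the first. The only minor subtlety is handling cuts where $\dem_{P_1+P_2}(S,\bar S)=0$, which are trivially respected by any factor; this can be dispatched in a single sentence.
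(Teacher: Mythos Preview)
Your proposal is correct and matches the paper's approach: the paper does not give an explicit proof, stating only that the fact ``is a direct consequence of Definitions~\ref{def:valid_flow_state} and~\ref{def:sum_of_valid_states},'' and your argument via the triangle inequality and unfolding definitions is precisely the intended one-line justification.
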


    Suppose we have a demand state $P$. Given a demand matrix $Q$ (recall Definition \ref{def:flow_problem}), 
    we can ``move'' the demand vectors in $P$ according to the requirements of $Q$. 
    Each vertex $u\in V$ will send $Q(u, v)$ units in total to $v\in V$, and we split the demands evenly according to the distribution of commodities in the demand vector $P(u)$. In other words, $\frac{P(u,k)}{\norm{P(u)}_1}\cdot Q(u,v)$ units of commodity $k\in \K$ will be shipped to $v$. If $P(u,k) < 0$ this corresponds to shipping units of commodity $k$ from $v$ to $u$.
    This results in a demand state $P^{\uparrow Q}$, called the updated demand state:

    \begin{definition}[Updated Demand State]
    \label{def:augment_state}
        Let $P$ be a (not necessarily valid) demand state on the set of commodities $\K$ and let $Q$ be a demand matrix (\ie, a multi-commodity flow problem). 
        We define the \emph{updated} demand state $P^{\uparrow Q}$ on $\K$ as follows:
        \[
            P^{\uparrow Q}(u, k) = P(u,k) - \frac{P(u,k)}{\norm{P(u)}_1}\cdot\sum_{v\in V}{Q(u,v)} + \sum_{v\in V}{\frac{P(v,k)}{\norm{P(v)}_1}\cdot Q(v,u)} .
        \]
    \end{definition}

    \begin{example}
    \label{example:updating-demands}
        Suppose there are three commodities and the demand states of three vertices $u,v,w$ are
        \[
        P(u)=(20,-70,10),\qquad
        P(v)=(6,2,-2),\qquad
        P(w)=(-4,12,-4).
        \]
        Consider a demand matrix $Q$ that sends $30$ units from $u$ to $v$ and $10$ units from $w$ to $u$.
        Sending $30$ units from $u$ to $v$ moves $30\cdot (0.2,-0.7,0.1)=(6,-21,3)$ units of demand from $u$ to $v$.
        Sending $10$ units from $w$ to $u$ moves $10\cdot(-0.2,0.6,-0.2)=(-2,6,-2)$ units of demand from $w$ to $u$.
        
        Applying both movements simultaneously, the updated demand vectors are
        \[
        P^{\uparrow Q}(u)=(12,-43,5),\qquad
        P^{\uparrow Q}(v)=(12,-19,1),\qquad
        P^{\uparrow Q}(w)=(-2,6,-2).
        \]
    \end{example}

    We  make frequent use of the following fact. It states that because we only move the mass around, no mass of any commodity is created or destroyed. This means that the total mass of each commodity is the same in $P$ and $P^{\uparrow Q}$, so it is zero in $P - P^{\uparrow Q}$. Therefore, $P - P^{\uparrow Q}$ is a valid demand state. The full proof of the fact is deferred to Appendix~\ref{appendix:omitted-proofs}.

    \begin{fact}
    \label{fact:basic_flow_state}
        In the notation of Definition~\ref{def:augment_state}: 
        \begin{enumerate}
            \item $P-P^{\uparrow Q}$ is a valid demand state.
            \item For any cut $(S, \bar{S})$ of $V$, $\dem_{(P-P^{\uparrow Q})}(S, \bar{S})\le\dem_Q(S, \bar{S})$.
        \end{enumerate}
    \end{fact}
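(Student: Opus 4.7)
The plan is to expand the definition of $P^{\uparrow Q}$ and manipulate the resulting double sum directly. From Definition \ref{def:augment_state}, we have
\[
    (P - P^{\uparrow Q})(u, k) \;=\; \frac{P(u,k)}{\norm{P(u)}_1}\sum_{v\in V} Q(u,v) \;-\; \sum_{v\in V}\frac{P(v,k)}{\norm{P(v)}_1} Q(v,u),
\]
which can be read as ``mass of commodity $k$ shipped out of $u$'' minus ``mass of commodity $k$ shipped into $u$,'' where the fractions $\tfrac{P(u,k)}{\norm{P(u)}_1}$ act as a probability distribution over commodities at each vertex.

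For Part 1, I would sum the above expression over $u\in V$ and swap the roles of $u$ and $v$ in the second term, obtaining two identical double sums $\sum_{u,v} \tfrac{P(u,k)}{\norm{P(u)}_1} Q(u,v)$ that cancel. This gives $\sum_u (P-P^{\uparrow Q})(u,k)=0$ for every $k$, hence the validity of the demand state.

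For Part 2, fix a cut $(S,\bar S)$ and a commodity $k$, and restrict the outer sum to $u\in S$. Splitting the inner sum according to whether $v\in S$ or $v\in\bar S$, the two ``internal'' pieces (with $u,v\in S$) again cancel by the same $u\leftrightarrow v$ swap, leaving
\[
    \sum_{u\in S}(P - P^{\uparrow Q})(u,k) \;=\; \sum_{u\in S}\sum_{v\in \bar S}\!\left(\tfrac{P(u,k)}{\norm{P(u)}_1}Q(u,v) \;-\; \tfrac{P(v,k)}{\norm{P(v)}_1}Q(v,u)\right).
\]
Taking absolute values, summing over $k$, applying the triangle inequality, and using $\sum_k \tfrac{|P(u,k)|}{\norm{P(u)}_1}=1$ at every vertex, the right-hand side collapses to $\sum_{u\in S,\,v\in\bar S}(Q(u,v)+Q(v,u)) = \dem_Q(S,\bar S)$, as desired.

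The only subtle step is the cancellation of the internal pieces in Part 2: one has to be careful that the fractions $\tfrac{P(u,k)}{\norm{P(u)}_1}$ may carry signs, so the swap is legitimate at the level of the signed expression \emph{before} absolute values are taken, not after. Once this order of operations is respected, the triangle inequality then cleanly converts the bound into $\dem_Q$. No further estimates are needed, so the argument is essentially algebraic bookkeeping of the index swap $u\leftrightarrow v$.
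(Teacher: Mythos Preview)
Your proposal is correct and follows essentially the same approach as the paper: both expand the definition, use the index swap $u\leftrightarrow v$ to cancel the ``internal'' terms (over all of $V$ for Part~1, and within $S$ for Part~2), then apply the triangle inequality and the identity $\sum_k |P(u,k)|/\norm{P(u)}_1 = 1$ to collapse the remaining sum to $\dem_Q(S,\bar S)$. Your remark about taking absolute values only \emph{after} the cancellation is exactly the order the paper follows as well.
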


    The next lemma presents two basic types of demand-state updates that will be used extensively throughout the analysis. The first corresponds to uniformly redistributing mass within a set, and the second to pushing all mass out of a set. Both updates arise frequently when we manipulate demand states.

    \begin{lemma}
    \label{lemma:special_flows}
        Let $P$ be a (not necessarily valid) demand state on the set of commodities $\K$. Then:
        \begin{enumerate}
            \item Suppose
            $Q$ is a \emph{scaled all-to-all flow problem} on some subset $A\subseteq V$, such that $Q(u, v) = \frac{\norm{P(u)}_1}{|A|}$ if $u, v\in A$, and  $Q(u,v) = 0$ otherwise. Then, $P^{\uparrow Q}(v) = \frac{1}{|A|}\sum_{u\in A}{P(u)}$ for all $v\in A$, and $P^{\uparrow Q}(v) = P(v)$ for $v\notin A$.
            That is, $P^{\uparrow Q}$ distributes the mass in $A$ uniformly. 
            \item Suppose $Q$ moves demands from a set $A\subseteq V$ to $V\setminus A$: for all $u\in V$ and $v\in A$, $Q(u, v) = 0$, and moreover, for all $u\in A$, $\sum_{v\in V\setminus A}{Q(u, v)} = \norm{P(u)}_1$. In this case, $P^{\uparrow Q}$ is supported on $V\setminus A$.
        \end{enumerate}
    \end{lemma}

    \begin{proof}
        ~\\
        \begin{enumerate}
            \item 
                Let $v\in V$. If $v\notin A$, then $Q(v,u)=Q(u,v)=0$ for all $u\in V$, and therefore $P^{\uparrow Q}(v) = P(v)$. Assume now that $v\in A$, then
                \begin{align*}
                    P^{\uparrow Q}(v) &= P(v) - \frac{P(v)}{\norm{P(v)}_1}\cdot\sum_{u\in V}{Q(v,u)} + \sum_{u\in V}{\frac{P(u)}{\norm{P(u)}_1}\cdot Q(u,v)} \\
                    &=
                    P(v) - \frac{P(v)}{\norm{P(v)}_1}\cdot\sum_{u\in A}{\frac{\norm{P(v)}_1}{|A|}} + \sum_{u\in A}{\frac{P(u)}{\norm{P(u)}_1}\cdot \frac{\norm{P(u)}_1}{|A|}} \\
                    &= P(v)- P(v) + \frac{1}{|A|}\sum_{u\in A}{P(u)}
                    \\&= \frac{1}{|A|}\sum_{u\in A}{P(u)}.
                \end{align*}
            \item
                Let $v\in A$, then
                \begin{align*}
                    P^{\uparrow Q}(v) &= P(v) - \frac{P(v)}{\norm{P(v)}_1}\cdot\sum_{u\in V}{Q(v,u)} + \sum_{u\in V}{\frac{P(u)}{\norm{P(u)}_1}\cdot Q(u,v)} \\
                    &=
                    P(v) - \frac{P(v)}{\norm{P(v)}_1}\cdot\sum_{u\in V\setminus A}{Q(v,u)}
                    =0.
                \end{align*}
        \end{enumerate}
    \end{proof}

	\section{The Merge Phase}
    \label{section:merge-phase}

    In this section we detail our algorithm for constructing the hierarchical decomposition for Theorem \ref{theorem:final}, whose main partitioning step will be an important tool for the proof of Theorem \ref{theorem:final-improved} as well.    
    
    Our construction closely follows that of \cite{racke2014computing}. For the readers familiar with \cite{racke2014computing}, we mention the main differences:
    \begin{enumerate}
        \item The ``first partition" step (Theorem~\ref{theorem:merge-phase_part1}) is translated to the cut-sparsifier case (which improves the quality of the cut-sparsifier by a factor of $\log n$), and its proof is significantly simplified, as the ``heavy algorithmic tools'' of a \emph{balanced cut or expander} theorem \cite{ADK25} and a \emph{fair cuts} computation \cite{LNPSsoda23} are used in a black box manner.
        \item The proof of the main lemma used in the second partition step (Lemma~\ref{lemma:fair-cut-separator}) is improved and simplified by using the results of \cite{LNPSsoda23}.
    \end{enumerate}

    As explained in Section \ref{section:techniques}, the hierarchical decomposition is constructed by recursively applying the ``merge phase'' partitioning procedure to input clusters $S\subseteq V$. Each input cluster $S\subseteq V$ is partitioned in two levels. 
    At the first level, we define the direct children of $S$ to be disjoint clusters $L$ and $R$\footnote{In may happen that $R=\emptyset$. In this case, no node corresponding to $R$ is added to the hierarchical tree $T$. Similarly, we do not add nodes corresponding to the empty set to the hierarchy.} such that $L\cup R = S$, and at the second level the children of $L$ and of $R$, respectively, are defined to be disjoint clusters $L_1, \ldots, L_\ell$ and $R_1, \ldots, R_r$ (satisfying $L_1\cup \cdots \cup L_\ell = L$ and $R_1 \cup \cdots \cup R_r = R$). 
    The size of the clusters $L_i$ and $R_i$ is bounded as $|L_i|\le\frac{2}{3}|S|$ and $|R_i|\le\frac{2}{3}|S|$.

    The two-level partitioning 
    is produced in two steps. First (Section~\ref{section:merge-phase_part1}), we define a partitioning of the cluster $S$ into disjoint sets $Z_1, \ldots, Z_z$, where each cluster satisfies $|Z_i| \le \tfrac{2}{3}|S|$.
    Next (Section~\ref{section:merge-phase-part2}), we partition $S$ into two disjoint sets $L, R$.
    The second-level clusters $L_1, \ldots, L_\ell$ and $R_1, \ldots, R_r$ are defined as $L_i = L\cap Z_i$ and $R_i = R\cap Z_i$, taking only non-empty intersections (see Figure~\ref{fig:merge_phase_high_level}).

    \begin{figure}[t]
        \begin{center}
            \includegraphics[scale=0.5]{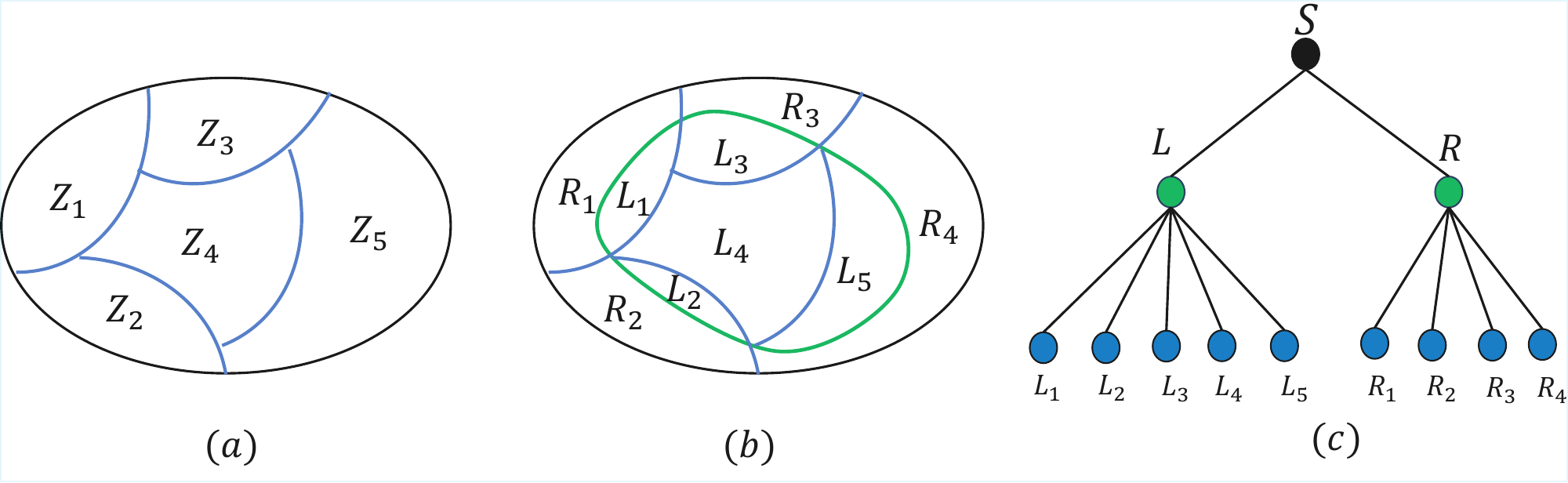}
        \end{center}
        \caption{
        (a) Illustration of the first partitioning step, where a cluster $S$ is divided into smaller clusters $Z_1, \ldots, Z_z$. 
        (b) Illustration of the second partitioning step, where the cluster $S$ is first split into two disjoint clusters $L$ and $R$ (shown in green), and then each of $L$ and $R$ is further subdivided according to the $Z$-partition into sub-clusters $L_1, \ldots, L_\ell $ and $R_1, \ldots, R_r$ (shown in blue). 
        (c) The corresponding hierarchical-tree view, where $S$ has two children $L$ and $R$, each of which splits into its second-level sub-clusters.}
        \label{fig:merge_phase_high_level}
    \end{figure}

    The intuition behind the two-level partition is as follows. As discussed in Section \ref{section:techniques}, we  want to partition $S$ into sub-clusters $\{S_1,\ldots,S_t\}$ of size $|S_i|\le c\cdot |S|, c < 1$, so that the inter-cluster edges $F$ and the boundary edges $B$ of $S$, satisfy that $F\cup B$ expands. However, this may not be possible for a cluster $S$ if $B$ does not expand in $G[S]$. Consequently, the first partition into $Z_1, \ldots,Z_z$ ignores $B$ and focuses on reducing the size of the clusters, while ensuring that $F$, the set of inter-cluster edges, expands. 
    In Section \ref{section:charging-scheme}, we describe how to move the demands that need to leave each cluster $S\in V_T$, to the cluster's boundary edges. Specifically, we wish to move demands that reside on $F$ (\ie, boundary edges of sub-clusters) to $B$. Therefore, we use a \emph{fair-cut} computation~\cite{LNPSsoda23} to find an approximate min-cut, $Y\subseteq E$, that blocks the flow problem of routing from $F$ to $B$. We then add this cut to the tree (\ie, find a partition $L\cup R$ induced by $Y$), thereby forcing the given demand matrix to respect this cut. This allows us to bound the amount of mass that has to cross the cut $Y$. See Section \ref{section:charging-scheme} for details on the usage of this partition.

    As mentioned above, the first partition into $Z_1, \ldots, Z_z$ ignores the boundary edges of the cluster and focuses only on the induced subgraph $G[S]$. Afterwards, the second partition\footnote{Note that the partition into $L\cup R$ corresponds to the first layer of the hierarchical tree (see Figure~\ref{fig:merge_phase_high_level}).} into $L\cup R$, takes into account the boundary edges of $S$ (\ie, edges of $E_G(S, V\setminus S)$), as well as the inter-cluster edges from the first partition.

    \subsection{Partition of \texorpdfstring{$G[S]$}{G[S]}}
    \label{section:merge-phase_part1}
    
    We first partition the cluster $S$ into clusters $Z_1, \ldots, Z_z$ 
    such that edges between the clusters $Z_i$ are ``well-connected'' in $G$, in the sense that the following theorem make precise. Furthermore, the clusters $Z_i$ are small (that is, each $Z_i$ satisfies $|Z_i|\le\frac{2}{3}|S|$). Formally,
    \begin{theorem}[Improved version of \texorpdfstring{\cite[Theorem 3.1]{racke2014computing}}{[RST14, Theorem 3.1]}]
    \label{theorem:merge-phase_part1}
        There is an algorithm \emph{merge-phase-1}
        that partitions the induced subgraph $G[S]$ of a cluster $S$ into disjoint connected components $Z_1, \ldots, Z_z$ such that:
        \begin{enumerate}
            \item $Z_1\cup\cdots\cup Z_z = S$,
            \item $\forall i\in \{1, \ldots, z\}: |Z_i|\le\frac{2}{3}|S|$, and
            \item with high probability, in the subdivision  graph $G[S]'$ (recall Definition~\ref{def:subdivision_graph}), the set of split vertices corresponding to the inter-cluster edges
            \[
                X_F=\left\{x_{(u, v)} \mid u\in Z_i, v\in Z_j, (u,v)\in E, i\neq j\right\}
            \]
             $\alpha$-expands 
             in $G[S]'$ 
             for $\alpha = \Omega\!\left(\frac{1}{\log n}\right)$.
             
        \end{enumerate}
        The algorithm runs in time $\tilde{O}(m)$ where $m$ denotes the number of edges of $G[S]$.
    \end{theorem}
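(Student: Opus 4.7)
The plan is to apply the ``balanced cut or expander'' primitive of Corollary~\ref{cor:balanced_cut_mu} iteratively on the subdivision graph $G[S]'$, mirroring the high-level strategy of Section~\ref{subsec:first-step-summary}. I would maintain a partition $\mathcal{P}$ of $S$ (initialized to $\{S\}$) together with the corresponding set $F$ of inter-cluster edges (initialized to $\emptyset$), and in each round invoke Corollary~\ref{cor:balanced_cut_mu} on $G[S]'$ with $\phi = \Theta(1/\log n)$ and a measure $\mu$ that targets whichever of the two desired properties is currently violated.

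As long as some $Z \in \mathcal{P}$ has $|Z| > \tfrac{2}{3}|S|$, the first step would be to invoke the primitive with $\mu = \mu_{Z}$. Outcome~(2) then splits $Z$ into two vertex-balanced pieces, while outcome~(3) peels off a piece of $\mu$-measure at most $|Z|/\log n$, shrinking $Z$ by a factor of $(1-1/\log n)$; in either case the crossing edges are added to $F$ and we refine $\mathcal{P}$ accordingly. Iterating, after $O(\log n)$ rounds every cluster has size at most $\tfrac{2}{3}|S|$. At that point I would switch to $\mu = \mu_{X_F}$ on $G[S]'$: outcome~(1) immediately certifies via Fact~\ref{fact:expansion_implies_flow_expansion} that $X_F$ $\Omega(1/\log n)$-expands in $G[S]'$, which is exactly property~(3) and terminates the algorithm. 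Outcomes~(2)/(3) instead provide a balanced $\mu$-sparse cut (or an unbalanced expander certificate) on the split vertices, which I would use to further refine $\mathcal{P}$ and decrease a potential measuring the failure of $X_F$ to expand by a factor of $(1-1/\log n)$, following the shrinking scheme sketched in Section~\ref{subsec:first-step-summary}.

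The hardest part will be showing that the two types of rounds do not interfere: a cut aimed at reducing the size of some cluster may unbalance $X_F$, and vice versa. The resolution I have in mind is a two-potential argument: the maximum cluster size decreases geometrically in vertex-balance rounds, while the potential measuring the deficit of $X_F$ from expanding decreases geometrically in split-vertex rounds. The sparsity guarantee $\Phi^{\mu}_G(A,\bar A) = O(\phi \log n) = O(1)$ of every cut ensures that each type of round adds only a small number of edges to $F$ and preserves the balance conditions needed by the other potential. Summing the geometric decreases bounds the total number of rounds by $O(\log^2 n)$. Since each invocation of Corollary~\ref{cor:balanced_cut_mu} runs in $\tilde O(m)$ time, where $m$ denotes the number of edges of $G[S]$, the overall running time is $\tilde O(m)$, as required.
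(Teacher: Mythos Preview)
Your proposal reverses the direction of the paper's argument and, as written, has genuine gaps.

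The paper never runs a vertex-balancing phase. It starts with the edge set $F = E$ (every edge of $G[S]$), which trivially induces a balanced clustering since each component is a single vertex, and then \emph{shrinks} $F$ while preserving the invariant ``$F$ induces a balanced clustering''. Every invocation of Corollary~\ref{cor:balanced_cut_mu} uses the single measure $\mu = \mu_{X_F}$. The combinatorial workhorse is Claim~\ref{claim:improve_balanced}: whenever $F$ induces a balanced clustering, $A \subseteq F$, and an edge set $C$ separates $A$ from $F\setminus A$, one of $A\cup C$ or $(F\setminus A)\cup C$ again induces a balanced clustering. This is what lets each balanced-sparse-cut outcome replace $F$ by a set of size at most $(1-c/\log n)|F|$ while keeping the clustering balanced; the potential that drops is simply $|F|$, so the process terminates in $O(\log^2 n)$ rounds, and in the terminating round Lemma~\ref{lemma:separator-general} upgrades the expander certificate on $X_{\tilde A}$ to one on $X_{\tilde A\cup\tilde C}$.

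Your plan instead starts with $\mathcal{P}=\{S\}$, $F=\emptyset$, and refines $\mathcal{P}$, so $F$ only grows. Two concrete problems follow. In your vertex-balancing phase, outcome~(3) of Corollary~\ref{cor:balanced_cut_mu} with $\mu=\mu_Z$ peels off $\bar A$ with $|Z\cap\bar A|=\mu(\bar A)\le |Z|/\log n$, so the surviving piece $Z\cap A$ has size \emph{at least} $|Z|(1-1/\log n)$; you have the inequality the wrong way, and in the worst case $|Z|$ drops by a single vertex per round, while the expander certificate on $G[A]$ with respect to $\mu_Z$ gives no help in splitting $Z$ further. In your expansion phase, refining $\mathcal{P}$ after a sparse cut only enlarges $X_F$, and you never define the potential that is supposed to decrease; the very cut $(A',\bar{A'})$ that witnessed non-expansion of $X_F$ can still witness non-expansion of the larger set. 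The paper's shrink-from-$E$ direction, together with Claim~\ref{claim:improve_balanced}, is precisely the device that handles the size constraint and the expansion constraint with a single monotone potential rather than the two interfering ones you are worried about.
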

    
    Note that Theorem \ref{theorem:merge-phase_part1} has $\alpha = \Omega\!\left(\frac{1}{\log n}\right)$, instead of $\alpha = \Omega\!\left(\frac{1}{\log^2 n}\right)$ in \cite[Theorem 3.1]{racke2014computing}, which leads (as shown below) to the improvement by a factor of $\log n$ in the quality of the tree cut-sparsifier. The partitioning algorithm in Theorem \ref{theorem:merge-phase_part1} is independent of the edges outside $G[S]$. Therefore, we can restrict our attention to the induced subgraph $G[S]$: in the rest of this section, $G=(V,E)$ refers to $G[S]$, and $n, m$ denote the number of vertices and edges, respectively, in $G[S]$. Following the notation of \cite{racke2014computing}, we say that a subset $F\subseteq E$ of edges induces a \emph{balanced clustering} in $G$, if deleting $F$ from $G$ leaves connected components with at most $\frac{2}{3}|V|$ vertices in each.

    To establish Theorem~\ref{theorem:merge-phase_part1}, we state the following lemma, which improves and unifies Lemmas 3.1 and 3.2 in \cite{racke2014computing}. We defer the proof of Lemma {\ref{lemma:separator}} to the end of the section. 
    \begin{lemma}[Improved version of \texorpdfstring{\cite[Lemmas 3.1 and 3.2]{racke2014computing}}{[RST14, Lemmas 3.1 and 3.2]}]
    \label{lemma:separator}
        Given a subset $F\subseteq E$ that induces a balanced clustering, we can find in time $\tilde{O}(m)$, either
        \begin{enumerate}
            \item a smaller edge set $\tilde{F}\subseteq E$ that induces a balanced clustering, with $|\tilde{F}|\le|F|\cdot\left(1 - \frac{c}{\log n}\right)$ for some constant $c > 0$, \textbf{or}
            \item an edge set $\tilde{F}=\tilde{A}\cup \tilde{C}$, 
            such that $\tilde{F}$ induces a balanced clustering, and with high probability $X_{\tilde{A}}$, $\Omega(1/\log n)$-expands in $G'$.
            Additionally there exists a flow in $G'$ from nodes in $X_{\tilde{C}}$ to $X_{\tilde{A}}$ that can be routed with congestion $O(1)$ and:
            \begin{itemize}
                \item every node $x_c\in X_{\tilde{C}}$ sends $1$ unit of flow;
                \item every node $x_a\in X_{\tilde{A}}$ receives at most $1$ unit of flow.
            \end{itemize}
        \end{enumerate}
    \end{lemma}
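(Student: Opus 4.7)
The plan is to apply the balanced-cut-or-expander primitive (Corollary~\ref{cor:balanced_cut_mu}) to the subdivision graph $G'$ of $G[S]$ with indicator measure $\mu=\mathbf{1}_{X_F}$ and sparsity parameter $\phi=\Theta(1/\log n)$. This runs in $\tilde O(m)$ time and produces one of three outcomes, which I map to the two cases of the lemma.

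When the corollary certifies expansion (its Case~1 or Case~3), set $\tilde F=F$, $\tilde A=\{e\in F:x_e\in A\}$, and $\tilde C=F\setminus\tilde A$. Balanced clustering is inherited from $F$, and Fact~\ref{fact:expansion_implies_flow_expansion} together with the remark following Definition~\ref{def:expanding-set} yields that $X_{\tilde A}=X_F\cap A$ is $\Omega(1/\log n)$-expanding in $G'$, handling the expansion part of Case~2 of the lemma. The required routing is assembled from the two congestion-$O(1)$ flows of Remark~\ref{remark:trimming_allows_routing}: reverse the $\bar A$-side flow, which in its original direction pushes one unit out of each cut edge in $E_{G'}(A,\bar A)$ into $\bar A$ with every $x_c\in X_{\tilde C}$ absorbing at most $O(\phi\log n)=O(1)$ units, and compose with the $A$-side flow that pushes one unit out of each cut edge into $A$ with every $x_a\in X_{\tilde A}$ absorbing at most $O(\phi)=O(1/\log n)$ units. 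A normalization then makes each $x_c$ a unit source and each $x_a$ a sub-unit sink, all at congestion $O(1)$.

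When the corollary returns a balanced sparse cut (Case~2), we aim for Case~1 of the lemma. Let $A_V=A\cap V$ and $F_{\bar A}=\{e\in F:x_e\in\bar A\}$; WLOG $|F_{\bar A}|\le|F_A|$. Picking the constant in $\phi$ small enough gives $|E_{G'}(A,\bar A)|\le\tfrac12|F_{\bar A}|$, which also bounds the number of $G$-edges crossing the $V$-cut $(A_V,\bar A_V)$; combined with the balance $|F_{\bar A}|\ge\Omega(|F|/\log n)$, a swap of the form $\tilde F=(F\setminus F_{\bar A}^{*})\cup E_G(A_V,\bar A_V)$ for a suitable $F_{\bar A}^{*}\subseteq F_{\bar A}$ will shave off $\Omega(|F|/\log n)$ edges. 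The main obstacle is specifying $F_{\bar A}^{*}$ so that $\tilde F$ still induces a balanced clustering: the naive choice $F_{\bar A}^{*}=F_{\bar A}$ disconnects the two sides of the $V$-cut, but releasing edges of $F_{\bar A}\cap E(\bar A_V)$ may reconnect components of $G\setminus F$ on the $\bar A_V$-side into a blob larger than $\tfrac23|V|$, with a symmetric issue on $A_V$. The resolution is to retain every $F_{\bar A}$-edge whose removal would create an over-large component, then charge the retained edges against $|E_{G'}(A,\bar A)|$ using that every $F_{\bar A}$-edge with both endpoints in $A_V$ contributes two $G'$-cut edges, together with a balanced-clustering argument on the $\bar A_V$-side; making this charging quantitative is the delicate technical heart of the argument.
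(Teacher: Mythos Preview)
Your high-level plan of invoking Corollary~\ref{cor:balanced_cut_mu} on $G'$ with $\mu=\mathbf 1_{X_F}$ and $\phi=\Theta(1/\log n)$ matches the paper, but both branches of your mapping have genuine gaps.

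\textbf{Case~3 of the corollary.} Setting $\tilde F=F$, $\tilde A=\{e\in F:x_e\in A\}$ and $\tilde C=F\setminus\tilde A$ breaks the flow requirement. Your composed routing pushes mass from $X_{\tilde C}$ through the cut $E_{G'}(A,\bar A)$ into $X_{\tilde A}$; the total flow across that cut at congestion~$O(1)$ is at most $O(|E_{G'}(A,\bar A)|)$, which by sparsity is $O(\phi\log n)\cdot\mu(\bar A)=O(|X_{\tilde C}|)$ with no matching lower bound. Concretely, one can have $|X_{\tilde C}|$ arbitrarily larger than $|E_{G'}(A,\bar A)|$ while still satisfying $\Phi^\mu_{G'}(A,\bar A)=O(1)$ and $\mu(\bar A)\le\mu(V')/\log n$, and then it is impossible to ship one unit out of every $x_c$ at congestion~$O(1)$. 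Your ``normalization'' cannot fix this: some $x_c$ may receive zero flow in the reversed $\bar A$-side routing. The paper avoids the issue by taking $\tilde C$ to be the set $C$ of $G$-edges touching the $G'$-cut, so the Remark~\ref{remark:trimming_allows_routing} flow on the $A$-side already sends one unit from each $x_c\in X_{\tilde C}$ directly into $X_{\tilde A}$ with each sink absorbing $O(\phi)\le 1$.

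\textbf{Case~2 of the corollary (and the balanced-clustering step in Case~3).} You correctly identify that naively dropping $F_{\bar A}$ and adding the $V$-cut edges need not preserve a balanced clustering, but your proposed fix (``retain every $F_{\bar A}$-edge whose removal would create an over-large component, then charge'') is not a proof; you yourself call it the ``delicate technical heart'' and leave it unquantified. The paper sidesteps the whole difficulty via a single combinatorial claim (Claim~\ref{claim:improve_balanced}): if $C\subseteq E$ separates the edge sets $F\cap A$ and $F\setminus A$, then at least one of $(F\cap A)\cup C$ and $(F\setminus A)\cup C$ induces a balanced clustering. In Case~2 this immediately yields $|\tilde F|\le|F|-\tfrac12|F\setminus A|\le|F|(1-\Omega(1/\log n))$; in Case~3 it gives a dichotomy whose ``small side plus $C$'' branch lands in Case~1 of the lemma and whose ``large side plus $C$'' branch lands in Case~2 of the lemma with $\tilde C=C$. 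You are missing this claim, and it is exactly what closes both gaps.
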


    The following lemma shows that in Case~(2) of Lemma~\ref{lemma:separator}, the set $X_{\tilde{F}}$, $\Omega(1/\log n)$-expands in~$G'$.

    \begin{lemma}
    \label{lemma:separator-general}
        Let $G$ be a graph, and let $A, C\subseteq V$. Assume that $A$ $\phi$-expands in $G$, and additionally that there exists a flow routable with congestion $c$, such that each vertex $v\in C$ sends $1$ unit of flow, and each vertex $u\in A$ receives at most $a$ units (other vertices receive no flow). 
        Then $A\cup C$  $\frac{\phi}{2\cdot(2+2a+c\phi)}$-expands in $G$.
    \end{lemma}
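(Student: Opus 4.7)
My proposal is a direct cut-by-cut analysis, rather than a three-phase routing argument. By Definition~\ref{def:expanding-set} (with $\mu\equiv\mathbf{1}$), to establish that $A\cup C$ $\frac{\phi}{2(a+1+c\phi)}$-expands in $G$ it suffices to show, for every cut $(S,\bar S)$ with $\min(|(A\cup C)\cap S|,\,|(A\cup C)\cap \bar S|)>0$, that $\capacity(S,\bar S)\ge \frac{\phi}{2(a+1+c\phi)}\cdot\min(|(A\cup C)\cap S|,\,|(A\cup C)\cap \bar S|)$. I fix such a cut, assume without loss of generality that $|(A\cup C)\cap S|\le |(A\cup C)\cap \bar S|$, and split into two cases according to which side contains the smaller portion of $A$.

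In the first case $|A\cap S|\le|A\cap \bar S|$, the hypothesis that $A$ $\phi$-expands in $G$ directly gives $|A\cap S|\le \capacity(S,\bar S)/\phi$. To control $|C\cap S|$, I use the flow $f$: each vertex of $C\cap S$ sources $1$ unit destined for $A$, and I decompose this outflow into the portion landing in $A\cap S$ and the portion crossing the cut into $A\cap \bar S$. The former is bounded by the per-vertex receive constraint, yielding $f(C\cap S\to A\cap S)\le a\cdot|A\cap S|$; the latter is bounded by the congestion of $f$, yielding $f(C\cap S\to A\cap \bar S)\le c\cdot\capacity(S,\bar S)$. Summing and then applying the expansion bound on $|A\cap S|$ produces
\[
    |(A\cup C)\cap S|\ \le\ |A\cap S|+|C\cap S|\ \le\ \frac{1+a+c\phi}{\phi}\,\capacity(S,\bar S).
\]

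In the second case $|A\cap S|>|A\cap \bar S|$, a symmetric argument on the $\bar S$ side yields $|(A\cup C)\cap \bar S|\le \frac{1+a+c\phi}{\phi}\,\capacity(S,\bar S)$, and the WLOG ordering $|(A\cup C)\cap S|\le|(A\cup C)\cap \bar S|$ transfers this bound to the $S$ side. Together the two cases give $\capacity(S,\bar S)\ge \frac{\phi}{1+a+c\phi}\cdot\min(|(A\cup C)\cap S|,\,|(A\cup C)\cap \bar S|)$, which is already a factor of $2$ stronger than the lemma requires.

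The main subtlety I foresee is the asymmetry between the two cases: in Case~1 the expansion of $A$ and the flow bound on $|C\cap S|$ both concern the $S$ side (which hosts the $A\cup C$ minority) and combine immediately, whereas in Case~2 the expansion of $A$ gives information only about the opposite side, so the bound has to be propagated through the WLOG ordering. The crucial combinatorial step is recognizing that the per-vertex receive bound on $A$ can be combined with $A$'s $\phi$-expansion to absorb $|A\cap S|$ into an estimate proportional to $\capacity(S,\bar S)$, which yields the $a+c\phi$ term in the denominator; the additional $+1$ comes from the $|A\cap S|$ contribution itself. No heavy machinery beyond the stated hypotheses and basic counting of flow across a cut should be needed.
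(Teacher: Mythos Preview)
Your proof is correct and takes a genuinely different route from the paper.

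The paper argues via its demand-state formalism: it converts the all-to-all problem on $A\cup C$ into a valid demand state $P_1$, then updates it through three phases (push $C\setminus A$ onto $A$ along the given flow; spread uniformly on $A$ using the $\phi$-expansion of $A$; observe the result is zero). Each phase is charged against the cut via Fact~\ref{fact:basic_flow_state} and Fact~\ref{fact:extended_flow_problems}, and at the end Remark~\ref{remark:mu_expansion_as_problem} converts ``$G$ $\alpha$-respects $Q_{A\cup C}$'' back into ``$A\cup C$ $\tfrac{\alpha}{2}$-expands''. That final conversion is where the factor $2$ in the lemma comes from.

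Your approach bypasses all of this machinery: you fix a cut, split on which side carries the $A$-minority, and bound $|C\cap S|$ directly by decomposing the outflow of $C\cap S$ into the part absorbed by $A\cap S$ (at most $a\,|A\cap S|$ via the per-vertex receive cap) and the part that must cross the cut (at most $c\cdot\capacity(S,\bar S)$ via the congestion bound). Combined with $|A\cap S|\le \capacity(S,\bar S)/\phi$ from the expansion of $A$, this yields $\min(|(A\cup C)\cap S|,|(A\cup C)\cap\bar S|)\le \tfrac{1+a+c\phi}{\phi}\,\capacity(S,\bar S)$ immediately. Since you never pass through the ``respects $Q$ $\Leftrightarrow$ expands'' conversion, you avoid its factor-of-$2$ loss and in fact prove the lemma with the sharper constant $\tfrac{\phi}{a+1+c\phi}$.

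The tradeoff: the paper's proof exercises and illustrates the demand-state framework that is central to the rest of the paper, so it doubles as an example of that technique. Your argument is shorter, more elementary, and yields a tighter constant, but is specific to this lemma rather than showcasing reusable machinery.
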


    Before proving Lemma \ref{lemma:separator-general}, we show how Lemmas~\ref{lemma:separator} and~\ref{lemma:separator-general} imply Theorem~\ref{theorem:merge-phase_part1}.
    
    \begin{proof}[Proof of Theorem~\ref{theorem:merge-phase_part1}]
        We begin with $F = E$ (which induces a balanced clustering) and repeatedly invoke Lemma~\ref{lemma:separator} on $F$ until it terminates with Case (2). Each time the lemma terminates with Case (1), $|F|$ shrinks by a factor of at least $1 - \frac{c}{\log n}$, and still induces a balanced clustering. Consequently, after at most $O(\log^2 n)$ iterations, the procedure must terminate with Case (2).

        Once Lemma \ref{lemma:separator} terminates with Case (2), $X_{\tilde{A}}$, $\Omega(1/\log n)$-expands in $G'$, and we can route in $G'$ with congestion $O(1)$ a flow such that each node in $X_{\tilde{C}}$ sends $1$ unit of flow and each node in $X_{\tilde{A}}$ receives at most $1$ unit of flow.
        By Lemma~\ref{lemma:separator-general}, 
        this implies that edges in $\tilde{F}=\tilde{A}\cup \tilde{C}$ satisfy that $X_{\tilde{F}}$,  $\Omega(1/\log n)$-expands in $G'$. Additionally, $\tilde{F}$ induces a balanced clustering whose connected components will be $\{Z_1,\ldots, Z_z\}$.\footnote{After computing $\tilde{F}$ and the resulting partition $\{Z_1,\ldots,Z_z\}$, we re-define $F$ to be the new set of inter-cluster edges $F \defeq \{(u,v) \mid u\in Z_i, v\in Z_j, (u,v)\in E, i\neq j\}$. This set satisfies $F\subseteq \tilde{F}$ and thus $\Omega(1/\log n)$-expands in $G[S]'$ (see Fact \ref{fact:subset_expands}).}
        This proves Theorem~\ref{theorem:merge-phase_part1}.
    \end{proof}

    Next, we prove the deferred lemmas, starting with Lemma~\ref{lemma:separator-general} and concluding with Lemma~\ref{lemma:separator}.
    We begin by giving intuition for Lemma~\ref{lemma:separator-general}. By Remark~\ref{remark:mu_expansion_as_problem}, it is enough to prove that $G$ $\frac{\phi}{2\cdot(2+2a+c\phi)}$-respects the all-to-all flow problem between the nodes of $A\cup C$ (see Definition~\ref{def:all_to_all_flow}).

    Using the fact that $A$ expands in $G$ (thus $G$ respects the all-to-all flow problem on $A$) and that there is a low-congestion flow $f$ from $C$ to $A$ (in particular $G$ respects the flow problem that moves mass from $C$ to $A$ according to $f$), we observe that the all-to-all demand matrix on $A \cup C$ can be realized by the following sequence of demand transitions, each of which is respected by $G$:

    \begin{enumerate}
        \item First, ``move"  the demands in $C$ to $A$ according to $f$.
        \item Then, spread the demands in $A$ uniformly (using the fact that $A$ expands in $G$).
        \item Finally, ``move'' part of the (uniform) demands in $A$ back to $C$ according $f$ (in reverse). 
    \end{enumerate}

    We now formalize this argument.
    
    \begin{proof}[Proof of Lemma~\ref{lemma:separator-general}]
        Consider the all-to-all demand matrix  $Q_{A\cup C}$ on $A\cup C$ in~$G$ (recall Definition~\ref{def:all_to_all_flow}). By Remark~\ref{remark:mu_expansion_as_problem}, it suffices to show that $G$ respects $Q_{A\cup C}$. Let $P_1\defeq P_{A\cup C}$ be the corresponding valid demand state (see Remark~\ref{remark:flow_problem_as_flow_state}). Note that for each $u\in A\cup C$, $\norm{{P_1}(u)}_1 = 2\cdot\left(1-\frac{1}{|A\cup C|}\right) \le 2$.
        
        We use the given flow from $C$ to $A$, which corresponds to a demand matrix denoted by $Q$, to transport the demand state vectors of the vertices in $C\setminus A$ to vertices in $A$:
        We scale the demands
        of $Q$ 
        such that each $u\in C\setminus A$ sends $2\cdot\left(1-\frac{1}{|A\cup C|}\right)$ units and each $u\in C\cap A$ sends $0$. That is, we multiply the rows of $Q$ corresponding to $C\setminus A$ by $2\cdot\left(1-\frac{1}{|A\cup C|}\right)$ and rows of $C\cap A$ are zeroed out. 
        Let $Q'$ be the scaled demand matrix. Because each of the rows was scaled by at most $2$, $Q'$ can be routed with congestion $2c$ and every vertex $v\in A$ receives at most $2a$ units. Next, consider the updated demand state $P_2\defeq P_1^{\uparrow Q'}$ (recall Definition \ref{def:augment_state}). By Lemma \ref{lemma:special_flows}(2), $P_2$ is supported only on $A$. Moreover, for every $u\in A$, the load satisfies $\norm{{P_2}(u)}_1\le 2+2a$. 
        Indeed, by triangle inequality,
        \begin{align*}
            \norm{P_2(u)}_1 &\le \norm{P_1(u) - \frac{P_1(u)}{\norm{P_1(u)}_1}\cdot\sum_{v\in V}{Q'(u,v)}}_1 + \norm{\sum_{v\in V}{\frac{P_1(v)}{\norm{P_1(v)}_1}\cdot Q'(v,u)}}_1 \\ &\le 
            \norm{P_1(u)}_1 + \sum_{v\in V}Q'(v,u) \le  \norm{P_1(u)}_1 +2a \le 2+2a.
        \end{align*}

        Because $Q'$ can be routed with congestion $2c$, it follows (see Fact \ref{fact:congestion_implies_respecting_cut}) that $G$, $\frac{1}{2c}$-respects $Q'$. Thus, by Fact \ref{fact:basic_flow_state}(2), for any cut $(U, W= V\setminus U)$ we have $\dem_{(P_1 - P_2)}(U,W) \le 2c\cdot \capacity_G(U,W)$. 
        
        Next, since $A$ $\phi$-expands in $G$, Remark~\ref{remark:mu_expansion_as_problem} implies that $G$ $\frac{\phi}{2}$-respects the all-to-all flow problem on $A$, whose demand matrix is denoted by $Q_A$.
        By scaling the rows of $Q_A$ by at most $2+2a$, we can get a demand matrix which routes $\norm{{P_2}(u)}_1$ units out of each $u\in A$, and spreads them uniformly on $A$. This scaled problem is $\frac{\phi}{4+4a}$-respected by $G$. 
        Updating $P_2$ using this scaled flow gives $P_3$, which by Lemma~\ref{lemma:special_flows}, has identical demand vectors for all $u\in A$. Because $P_1$ is a valid demand state, so are $P_2$ and $P_3$ (see Fact \ref{fact:basic_flow_state}(1)). Since all non-zero demand vectors in $P_3$ are equal, this means that $P_3\equiv 0$.

        Similar to before, we get that for any cut $(U,W)$ we have $\dem_{(P_2 - P_3)}(U,W) \le \frac{4+4a}{\phi}\cdot \capacity_G(U,W)$. Therefore, using Fact \ref{fact:extended_flow_problems}, for any cut $(U,W)$, we have
        \begin{align*}
            \dem_{Q_{A\cup C}}(U,W)&=\dem_{P_1}(U,W) \le \dem_{(P_1-P_2)}(U,W) + \dem_{(P_2-P_3)}(U,W) 
            \\
            &\le \left(2c+\frac{4+4a}{\phi}\right)\cdot\capacity_G(U,W) = \frac{2\cdot(2+2a+c\phi)}{\phi}\cdot\capacity_G(U,W) ,
        \end{align*}
        which gives the result using Remark \ref{remark:mu_expansion_as_problem}.

    \end{proof}

    \begin{figure}[t]
    \begin{center}
    \begin{tabular}{ccccc}
    \includegraphics[scale=0.45]{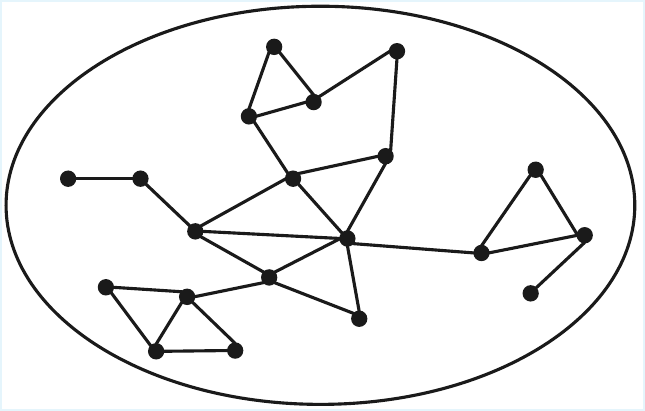} & \quad &
    \includegraphics[scale=0.45]{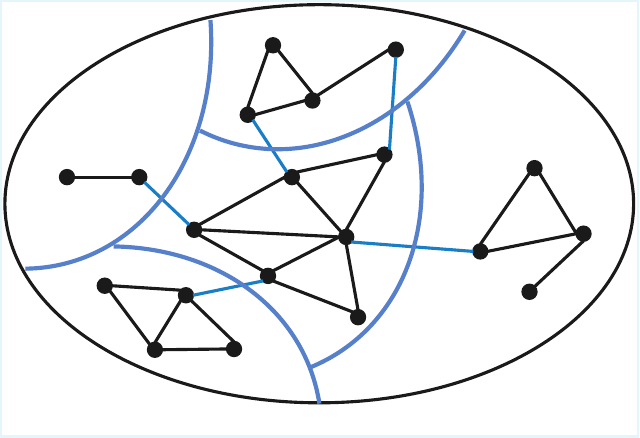} & \quad & 
    \includegraphics[scale=0.45]{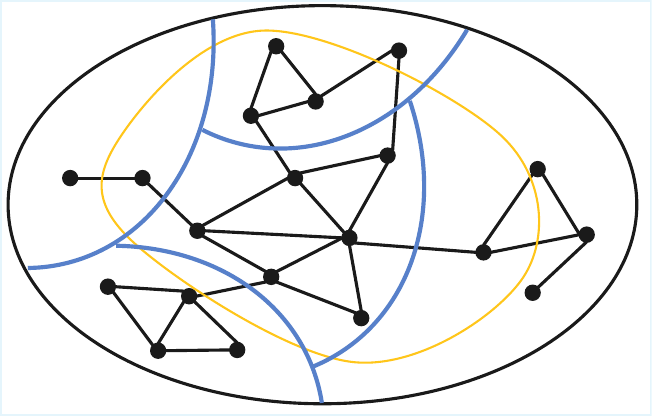}\\
    (a) & \quad & (b) & \quad & (c) \\
    \includegraphics[scale=0.45]{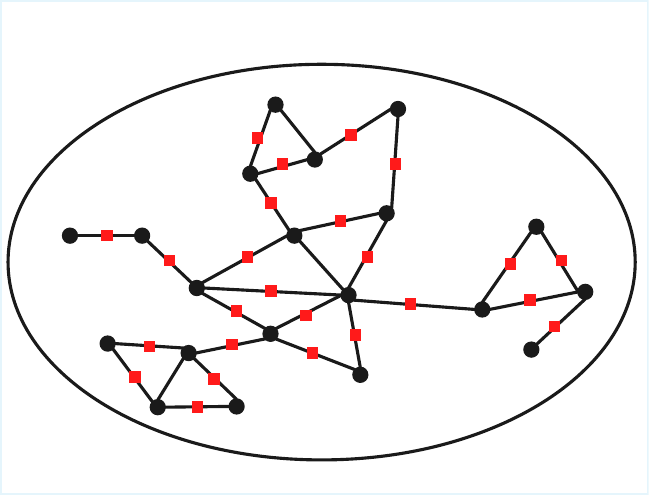} & \quad &
    \includegraphics[scale=0.45]{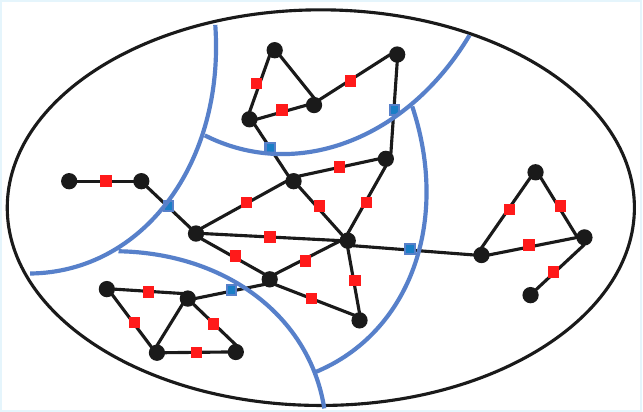} & \quad &
    \includegraphics[scale=0.45]{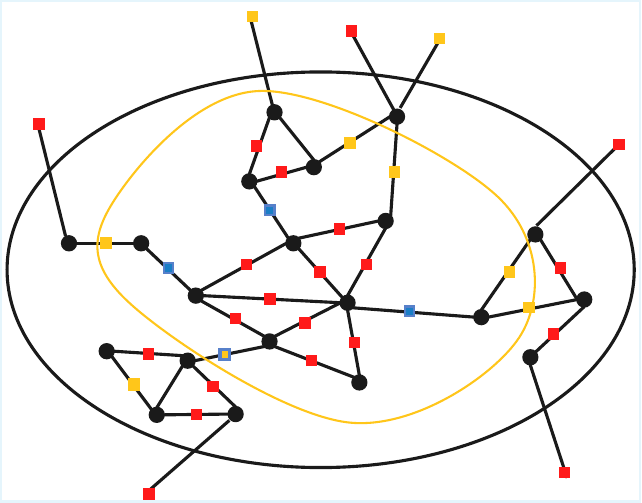}\\
    (d) & \quad & (e) & \quad & (f)
    \end{tabular}
    \end{center}
    \caption{Partitioning of cluster $S$. Figures (a)-(c) represent $G$, figures (d)-(f) represent the split graph. Black ovals depict vertices and squares depict split nodes. Figures $(b)$ and $(e)$ correspond to the first partition. Blue edges (and blue squares)
    depict inter-cluster edges of the first partition (that is, $F$ and $X_F$, respectively). 
    Figure $(f)$ depicts the second partition (which is computed in $G'[S']$) (recall $S'=S\cup \{x_e \mid e\cap S\ge 1\}$). The core $L$ is enveloped in gold. Split nodes in $X_Y$ are filled with gold. Split nodes in $X_{F\cap Y}$ are colored in blue and gold.}
    \label{fig:examples}
    \end{figure}

    To prove Lemma~\ref{lemma:separator} we need the following claim. It is a slightly more general rephrasing of Claim 1 of \cite{racke2014computing}. We provide its proof for completeness. 
    
    \begin{claim}[\texorpdfstring{\cite[Claim 1]{racke2014computing}}{[RST14, Claim 1]}]
    \label{claim:improve_balanced}
        Let $F\subseteq E$ be a subset that induces a balanced clustering. Let $A\subseteq F$ be arbitrary. Let $C\subseteq E$ be a set of edges that separates $A$ and $F\setminus A$. Then, either $A\cup C$ or $(F\setminus A)\cup C$ induces a balanced clustering.
    \end{claim}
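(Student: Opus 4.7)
The plan is to argue by contradiction: assume neither $A\cup C$ nor $(F\setminus A)\cup C$ induces a balanced clustering. Then I can pick a connected component $Q_1$ of $G\setminus((F\setminus A)\cup C)$ with $|Q_1|>\tfrac{2}{3}|V|$ and a connected component $Q_2$ of $G\setminus(A\cup C)$ with $|Q_2|>\tfrac{2}{3}|V|$. By inclusion--exclusion $|Q_1\cap Q_2|\ge |Q_1|+|Q_2|-|V|>\tfrac{1}{3}|V|>0$, so I fix an arbitrary vertex $v\in Q_1\cap Q_2$ to analyze.

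The central object I will work with is $R_v$, the connected component of $v$ in $G\setminus C$. The separator hypothesis on $C$ says that in $G\setminus C$ no component contains the endpoints of both an edge of $A$ and an edge of $F\setminus A$; hence the $F$-edges inside $R_v$ fall into exactly one of three cases: (a) all in $A$, (b) all in $F\setminus A$, or (c) there are no $F$-edges in $R_v$ at all. Since any edge of $G$ not in $C$ has both endpoints in a single component of $G\setminus C$, any path from $v$ that avoids $C$ must stay inside $R_v$. In particular both the component of $v$ in $G\setminus(A\cup C)$ and the component of $v$ in $G\setminus((F\setminus A)\cup C)$ are contained in $R_v$, and therefore equal the components of $v$ in their respective subgraphs restricted to $R_v$.

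In case (a), removing $A$ from $R_v$ kills all $F$-edges of $R_v$, so within $R_v$ the graph $G\setminus(A\cup C)$ uses only $(E\setminus F)$-edges; hence $Q_2$ is contained in $P_v$, the component of $v$ in $G\setminus F$. Case (b) is symmetric and forces $Q_1\subseteq P_v$, while case (c) gives both containments at once. Since $F$ induces a balanced clustering, $|P_v|\le\tfrac{2}{3}|V|$, which contradicts $|Q_1|,|Q_2|>\tfrac{2}{3}|V|$ in every case.

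The only delicate step is translating the informal ``$C$ separates $A$ and $F\setminus A$'' into the clean trichotomy for $F$-edges inside $R_v$; I would record this as a one-line sub-claim at the beginning of the proof. Once that is in place the rest is a routine set-chase: the ``stays in $R_v$'' observation, together with the case distinction, immediately degrades the relevant subgraph inside $R_v$ to an $(E\setminus F)$-only subgraph, trapping $Q_1$ or $Q_2$ inside $P_v$ and yielding the contradiction.
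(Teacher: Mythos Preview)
Your proof is correct and rests on the same key observation as the paper's: in $G\setminus C$, every connected component contains $F$-edges from at most one of $A$ or $F\setminus A$. The paper packages this directly---defining $B$ (resp.\ $B'$) as the vertices reaching an $A$-edge (resp.\ $(F\setminus A)$-edge) in $G\setminus C$, noting $B\cap B'=\emptyset$, and showing that whichever of $B,B'$ is smaller (hence $\le\tfrac12|V|$) bounds the ``new'' components---whereas you arrive at the same conclusion by contradiction via the trichotomy on $R_v$; the arguments are essentially equivalent.
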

    \begin{proof}
        Consider the graph $G \setminus C$ where we remove the edges in $C$. Denote by $B$ the set of vertices that can reach any edge of $A$ in this graph, and by $B'$ the set of vertices that can reach any edge of $F\setminus A$ in this graph. Because $C$ separates $A$ from $F\setminus A$, $B\cap B' = \emptyset$. Assume w.l.o.g that $|B|\le|B'|$. In particular, $|B|\le\frac{1}{2}|V|$. Note that each connected component of the graph $G\setminus ((F\setminus A)\cup C)$, is either a subset of a connected component of $G\setminus F$, or it is a subset of $B$. In particular all connected components have size at most $\frac{2}{3}|V|$.
    \end{proof}

    We are ready to prove Lemma~\ref{lemma:separator}.
    
    \begin{proof} [Proof of Lemma~\ref{lemma:separator}]
        Given the subset $F\subseteq E$ that induces a balanced clustering, we use Theorem~\ref{theorem:balanced_cut_mu} on $G'$ with $\phi \defeq \Theta\left(\frac{1}{\log n}\right)$ and $\mu \defeq \mu_{X_F}$. 
        It terminates with one of three cases:
        \begin{enumerate}
            \item With high probability, $\Phi^{X_F}(G') = \Omega(\frac{1}{\log n})$. In this case $X_F$ $\Omega(\frac{1}{\log n})$-expands in $G'$, and we have Case (2) of the lemma (with $\tilde{A} = F, \tilde{C} = \emptyset$). 
            
            \item We find a cut $(A', \bar{A'})$ in $G'$ of $\mu$-expansion $\Phi_{G'}^{X_F}(A', \bar{A'}) = O(\phi\log n)$, and $|X_F\cap A'|, |X_F\setminus A'|$ are both $\Omega(\frac{|F|}{\log n})$. We choose the hidden constant in the definition of $\phi = \Theta\left(\frac{1}{\log n}\right)$ such that 
            \begin{equation}
            \label{eq:separator1}
                \Phi_{G'}^{X_F}(A', \bar{A'}) \le \frac{1}{2}~,
            \end{equation}
            and both
            \begin{equation}
            \label{eq:separator2}
                |X_F\cap A'|, |X_F\setminus A'| \ge 2c_1\cdot\frac{|F|}{\log n}~,
            \end{equation}
            for some constant $c_1 > 0$. 
            Let $C'\subseteq E'$ be the set of edges that cross the cut $(A', \bar{A'})$ in $G'$. Let $C\subseteq E$ be the corresponding set in $G$: $C \defeq \{e\in E \mid \exists c\in C': x_e\in c \}$. 
            We have $|C| \le |C'|$. Inequality (\ref{eq:separator1}) is equivalent 
            to $|C'| \le \frac{1}{2}\min(|X_F\cap A'|, |X_F\setminus A'|)$. Let $A\subseteq E$ be the set of edges corresponding to $A'$: $A \defeq \{e\in E\mid x_e\in A'\}$. We have $|F\cap A|=|X_F\cap A'|$ and $|F\setminus A| = |X_F\setminus A'|$.
            Therefore,
            \begin{equation}
            \label{eq:separator3}
                |C| \le \frac{1}{2}\min(|F\cap A|, |F\setminus A|)~.
            \end{equation}
            
            Since the set of edges $C'$ separated $A'$ from $\bar{A'}$ in $G'$, the edges in $C$ separate the edges in $A$ from the edges in $E\setminus A$ in the graph $G$.
            Claim~\ref{claim:improve_balanced} gives that either $(F\cap A)\cup C$ or $(F\setminus A)\cup C$ induces a balanced clustering. W.l.o.g it is $(F\cap A)\cup C$.
            We have
            \begin{align*}
                |(F\cap A)\cup C| &\le |F| - |F\setminus A| + |C| \le |F| - |F\setminus A| + \frac{1}{2}\min(|F\cap A|, |F\setminus A|) 
                \\
                &\le |F| - \frac{1}{2}|F\setminus A| \le |F|\cdot\left(1 - \frac{c_1}{\log n}\right)
            \end{align*}
            where the second inequality follows from (\ref{eq:separator3}), and the last one follows from (\ref{eq:separator2}).
            So we get Case (1) of the lemma, for $\tilde{F} = (F\cap A)\cup C$. 
            
            \item We find a cut $(A',\bar{A'})$ in $G'$ 
            with $0<|X_F\setminus A'|\le \frac{|F|}{2}$, $\Phi_{G'}^{X_F}(A', \bar{A'}) = O(\phi\log n)$, and with high probability, 
            $A'$  $\Omega(\phi)$-expands with respect to $X_F$ in $G$. If $|X_F\setminus A'| > \frac{|F|}{\log n}$, then we complete the proof as in Case (2). Therefore, assume that $0<|X_F\setminus A'|\le \frac{|F|}{\log n}$.
            As $\phi = \Theta\left(\frac{1}{\log n}\right)$, denote by $c_2>0$ the constant such that $\Phi_{G'}^{X_F}(A', \bar{A'}) \le c_2$. 
            Like in the previous case, denote by $C'\subseteq E'$ the set of edges that cross the cut $(A', \bar{A'})$ in $G'$. Let $C\subseteq E$ be the corresponding set in $G$: $C \defeq \{e\in E \mid \exists c\in C': x_e\in c \}$. We have $|C| \le |C'|$. Because the cut is sparse with respect to $\mu_{X_F}$, $|C'| \le c_2\cdot|X_F\setminus A'|$.
            
            From Theorem~\ref{theorem:trimming_allows_routing}, 
        we get that there exists a flow in $G'$, routable with congestion $O(1)$, such that each split-node in $X_C$ sends $1$ unit of flow and each node $v\in A'$ receives at most $O(\phi\cdot\mu(v))$ units. Recall that $\phi = \Theta\left(\frac{1}{\log n}\right)$, so for large enough $n$, we may assume that each node receives at most $\mu(v)$ units of flow.
            In particular, only split nodes $v\in X_F$ can receive flow, and each receives at most one unit.
            
            As in the previous case, let $A \defeq \{e\in E\mid x_e\in A'\}$. We have $|F\cap A|=|X_F\cap A'|$ and $|F\setminus A| = |X_F\setminus A'|$.
            Denote $\tilde{A} = F\cap A$, $\tilde{R} = F\setminus A$. Then, $\tilde{A} \cup \tilde{R} = F$ induces a balanced clustering. 
            
            As in the previous case, observe that the set $C\subseteq E$ separates $\tilde{A}$ from $\tilde{R}$. By Claim~\ref{claim:improve_balanced}, either $\tilde{A}\cup C$ or $\tilde{R}\cup C$ induces a balanced clustering.
            \begin{enumerate}
                \item Assume $\tilde{R}\cup C$ induces a balanced clustering. In this case, we will set $\tilde{F} \defeq \tilde{R}\cup C$, and claim that we get Case (1) of the lemma. Indeed, we have $|\tilde{R}\cup C| \le |\tilde{R}| + |C| = |F\setminus A| + |C| \numle{1} (c_2 + 1)\cdot|X_F\setminus A'| \le (c_2 + 1)\cdot \frac{|F|}{\log n} < |F|\cdot (1 - \frac{1}{\log n})$, where Inequality~$(1)$ holds since $|C|\le |C'| \le c_2\cdot|X_F \setminus A'|$, and the last inequality holds for large enough $n$. 
                \item Assume $\tilde{A}\cup C$ induces a balanced clustering. We  set $\tilde{C} \defeq C$, $\tilde{F} = \tilde{A} \cup \tilde{C}$, and claim that we have Case (2) of the lemma. As mentioned above, the existence of the flow is guaranteed by Theorem~\ref{theorem:trimming_allows_routing}, and each node $v\in A'$ receives at most $\mu(v)$ units of flow.  
                Since $\mu = \mu_{X_F}$, each node $x_a\in X_{\tilde{A}} = X_F\cap A'$ 
                receives at most one unit of flow, and the rest receive zero. 
                
                The fact that $A'$ $\Omega(\phi)$-expands with respect to $\mu_{X_F}$ in $G'$ means that $G'$ is an $\Omega(\phi)$ expander with respect to $X_F \cap A'  = X_{\tilde{A}}$, which by definition means that $X_{\tilde{A}}$ $\Omega(\phi)$-expands in $G'$. As $\phi = \Theta(\frac{1}{\log n})$, this means that $X_{\tilde{A}}$ $\Omega(\frac{1}{\log n})$-expands in $G'$.
                This shows we have Case (2) of the Lemma. 

            \end{enumerate}
        \end{enumerate}
    \end{proof}
    
    \subsection{Finding a Bottleneck Cut}
    \label{section:merge-phase-part2}

    We denote by $Z_1, \ldots, Z_z$ the partition of $S$ that is returned by Theorem \ref{theorem:merge-phase_part1}. Recall from that theorem that $F$ denotes the set of inter-cluster edges (\ie, $F = \{(u, v)\in E \mid u\in Z_i, v\in Z_j, i\neq j\}$). 
    The partition of $S$ into $L\cup R$ is computed using both this initial clustering and the set of boundary edges $B = E(S, V\setminus S)$ of $S$. 

    As mentioned in Section \ref{section:techniques}, we want the set $F\cup B$ of inter-cluster edges and boundary edges to expand in $G[S]$. 
    However, the existence of bottleneck-cuts may make such a partition impossible (in the cut-sparsifier case, a bottleneck cut is a cut that certifies that the boundary edges are not expanding within the cluster). 
    Similar to the technique \cite{racke2014computing} used in the flow-sparsifier case, the approach we use in this case is the following: As Claim~\ref{claim:equiv-def-cut} shows, in order to certify that a tree is a cut-sparsifier, we should check that all demand matrices which can be routed in the tree are respected 
    by $G$. We therefore find a ``problematic'' cut $(L, R=S\setminus L)$ and add $L, R$ to the tree. This forces every demand matrix which is respected by $T$ to respect the cut $(L, V\setminus L)$.
    Specifically, we will add a cut which ``blocks'' the sending of flow from $X_F$ to $X_B$. As shown in Section~\ref{section:charging-scheme}, this will be enough.

    Formally, the second partition is parameterized by $0 < \uptau \le 1$. The parameter $\uptau$ balances two opposing effects in our analysis: a larger $\uptau$ reduces the cost incurred when routing flow \emph{within} a cluster, but simultaneously increases the $\ell_{1}$-blowup (that is, the increase in load) of the demand state vectors as we move from a cluster to its parent. Thus, $\uptau$ governs the trade-off between the internal routing cost and the load accumulation along the decomposition tree. See Theorem \ref{theorem:charge-in-S} for the exact details of this trade-off.
    
    We shift our focus from the graph $G[S]'$ to the graph $G'[S']$, where $S'\defeq S\cup \{x_e \mid e\cap S\ge 1\}$. That is, in addition to the vertices of $G[S]'$, we also include the nodes $x_e$ for every boundary edge $e\in B$ and connect $x_e$ to $v$ where $v\in e\cap S$.
    
    Next, we define the weighting $\mu_\uptau$ on the vertices of $G'[S']$: the boundary split nodes are assigned a weight of $\uptau$, and the rest of the vertices 
    have weight 1.
\footnote{For non split nodes, $v\in S$, $\mu_\uptau(v)$ does not play a role in the following discussion.}
    As explained above, we are looking for a cut which blocks the sending of flow from $X_F$ to $X_B$. We  therefore apply the following lemma on the sets $X_F$ and $X_B$. The set of edges $Y$ will be the cut edges that ``block'' the flow.

    \begin{lemma} [Improved version of \texorpdfstring{\cite[Lemma 3.9]{racke2014computing}}{[RST14, Lemma 3.9]}]
    \label{lemma:fair-cut-separator}
        Let $G = (V, E)$ be a graph, and let $\mu$ be a measure on $V'$. Consider two sets of split vertices $X_A$ and $X_B$ in $G'$. We can find a set $X_Y$ of split vertices such that the following holds:
        \begin{itemize}
            \item The set $X_Y$ separates $X_A$ from $X_B$. This means that every path between nodes $x_a\in X_A$ and $x_b\in X_B$ in $G'$ must contain a vertex $x_y\in X_Y$.
            \item There exists a multi-commodity flow in $G'$ from nodes $X_Y$ to nodes in $X_A$ that can be routed with congestion $2$, such that every node $x_y\in X_Y$ sends out $\mu(x_y)$ units of flow, and every node $x_a\in X_A$ receives at most $2\mu(x_a)$ units of flow.
            \item There exists a multi-commodity flow in $G'$ from nodes $X_Y$ to nodes in $X_B$ that can be routed with congestion $2$, such that every node $x_y\in X_Y$ sends out $\mu(x_y)$ units of flow, and every node $x_b\in X_B$ receives at most $2\mu(x_b)$ units of flow.
        \end{itemize}
        The set $X_Y$ can be computed in time $\tilde{O}(m)$.
    \end{lemma}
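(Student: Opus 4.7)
The plan is to reduce the statement to a single call to the fair-cut algorithm of~\cite{LNPSsoda23} on an auxiliary $s$-$t$ network $H$ built from $G'$, and then read off both $X_Y$ and the two multi-commodity flows from the returned flow/cut pair. I form $H$ from $G'$ by (i)~keeping every $G'$-edge at capacity $1$ so that congestion in $G'$ is directly bounded by edge flow, (ii)~applying the standard vertex-capacity reduction at each split vertex $x_e\in X_E$, replacing it by $x_e^{\mathrm{in}},x_e^{\mathrm{out}}$ joined by an internal edge of capacity $\mu(x_e)$, and (iii)~attaching a super-source $s$ by infinite-capacity edges to every $x_a^{\mathrm{in}}$ (for $x_a\in X_A$) and a super-sink $t$ by infinite-capacity edges from every $x_b^{\mathrm{out}}$ (for $x_b\in X_B$). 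The infinite capacities ensure the super-source/sink edges are never cut, and the internal vertex-capacity edges force every $s$-$t$ min-cut of $H$ to lift to a vertex separator in $G'$.

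Running the fair-cut procedure of~\cite{LNPSsoda23} on $H$ with approximation factor $\kappa=2$ returns, in $\tilde O(m)$ time, a flow $f$ and an $s$-$t$ cut $C$ with $f(e)\ge c(e)/2$ for every $e\in C$. Since super-source/sink edges cannot be cut, every $e\in C$ is either a $G'$-edge or an internal vertex-capacity edge, and I define $X_Y$ as the set of split vertices $x_e$ such that at least one edge of $C$ is incident to $x_e^{\mathrm{in}}$ or $x_e^{\mathrm{out}}$. Any path in $G'$ from some $x_a\in X_A$ to some $x_b\in X_B$ extends to an $s$-$t$ path in $H$ by prepending $(s,x_a^{\mathrm{in}})$ and appending $(x_b^{\mathrm{out}},t)$; this $H$-path must cross some edge of $C$, and the associated split vertex lies on the original $G'$-path and belongs to $X_Y$, which establishes the separation claim.

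For the two required flows, I decompose $f$ into $s$-$t$ paths $\{P_i\}$ with weights $f_i$; each $P_i$ crosses $C$ exactly once, at an edge incident to some $x_{y_i}\in X_Y$, and its source-side portion (stripped of the super-source edge) is a path in $G'$ from some $x_a^{P_i}\in X_A$ to $x_{y_i}$. Reversing these portions and summing yields a multi-commodity flow $g^A$ in $G'$ from $X_Y$ to $X_A$. By fair-cut saturation, each $x_y\in X_Y$ ships at least $\mu(x_y)/2$ units in $g^A$, while each $x_a\in X_A$ receives at most $\mu(x_a)$ units because its internal vertex-capacity edge (of capacity $\mu(x_a)$) throttles all flow that passes through $x_a$ on its way from $s$. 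Scaling each source $x_y$'s contribution by the factor $\mu(x_y)/(\text{current outflow from }x_y)\le 2$ then produces a flow in which every $x_y$ sends exactly $\mu(x_y)$, every $x_a$ receives at most $2\mu(x_a)$, and every $G'$-edge carries at most $2$ units of flow. The symmetric construction on the sink side of $C$ yields $g^B$ from $X_Y$ to $X_B$ with identical guarantees.

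The main technical obstacle is choosing the capacities of $H$ so that a single fair-cut invocation simultaneously delivers three properties: (i)~the cut lifts to a genuine vertex separator of $G'$, (ii)~the per-split-vertex outflow is bounded below by $\mu(x_y)/2$ so that the per-source scale-up factor stays at most $2$, and (iii)~the flow entering each $x_a$ (resp.\ $x_b$) is throttled by $\mu(x_a)$ (resp.\ $\mu(x_b)$) so that the post-scaling received amount is at most $2\mu(x_a)$ (resp.\ $2\mu(x_b)$). The vertex-capacity reduction on every split vertex, combined with the infinite-capacity super-source/sink edges and unit-capacity $G'$-edges, is precisely what pins down all three conditions at once and lets the fair-cut output double-duty as both the separator and the two saturating routings.
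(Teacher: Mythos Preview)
Your approach is correct and uses the same fair-cut primitive as the paper, but the auxiliary network you build is more elaborate than necessary, and this extra machinery actually obscures the one structural fact that makes the argument clean. The paper attaches $s$ to each $x_a\in X_A$ with an edge of capacity $\mu(x_a)$ and each $x_b\in X_B$ to $t$ with capacity $\mu(x_b)$, leaving $G'$ itself untouched; it then takes $X_Y$ to be the split vertices incident to a fair-cut edge in $E'$. The key observation is that every split vertex has degree exactly~$2$ in $G'$, so once one of its two edges lies in the fair cut, the saturation guarantee plus flow conservation immediately pins down the throughput of that vertex, and a single global scaling by~$2$ finishes. Your in/out splitting with internal capacity $\mu(x_e)$ at \emph{every} split vertex is not needed for the lemma: a vertex $x_e\notin X_A\cup X_B$ with tiny $\mu(x_e)$ becomes an artificial bottleneck in your network $H$ even though $\mu(x_e)$ plays no role in the statement for such vertices. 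This does not break correctness (the resulting separator still satisfies all three bullets), but it can produce a very different $X_Y$ than the paper's and it hides the degree-$2$ insight behind a generic reduction. One point you do leave unjustified: ``by fair-cut saturation each $x_y$ ships at least $\mu(x_y)/2$'' requires a short case split, since when the saturated cut edge at $x_y$ is a unit-capacity $G'$-edge rather than the internal edge you only get $\tfrac12$ units; this suffices only under $\mu(x_y)\le 1$, which holds in the paper's application ($\mu_\uptau\le 1$) but should be stated.
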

    \begin{remark}
    \label{remark:fair-cut-separator}
        In the lemma above, the separator $Y$ must satisfy $\mu(X_Y)\le 2\min(\mu(X_A), \mu(X_B))$ for such flows to exist, as the total source mass is $\mu(X_Y)$ and the total sink capacities are $2\mu(X_A), 2\mu(X_B)$.

    \end{remark}

    Consider the flow network where we add to $G'$ a source $s$ connected to $X_A$ (with edges $(s, x_a)$ of capacity $\mu(x_a)$) and a sink $t$ connected to $X_B$ (with edges $(x_b, t)$ of capacity $\mu(x_b)$). It can be seen that an exact min–cut $Y$ that separates $s$ from $t$ in this network would satisfy the requirements of Lemma~\ref{lemma:fair-cut-separator} with the constant~$1$ instead of~$2$. 
    However, computing a min–cut is too expensive for our purposes. Instead, we use fair–cuts \cite{LNPSsoda23}, which provide the same guarantees up to constant factors and can be computed in near-linear time.

    \begin{definition}[Fair Cut, \cite{LNPSsoda23}]
        Let $G=(V, E)$ be an undirected graph with edge capacities $c\in \RR^E_{>0}$. Let $s, t$ be two vertices in $V$. For any parameter $\alpha \ge 1$, we say that a cut $(S, T)$ is a \emph{$\alpha$-fair $(s, t)$-cut} if there exists a feasible $(s, t)$-flow $f$ such that $f(u, v)\ge \frac{1}{\alpha}\cdot c(u, v)$ for every $(u, v)\in E(S, T)$, where $u\in S$ and $v\in T$. In particular, $f$ is an $\alpha$-approximate max flow.
    \end{definition}

    Standard approximate max-flow algorithms have the unpleasant property that the flow provided can go backwards from $T$ to $S$ across the approximate min-cut $(S,T)$. Fair-cuts give a much stronger guarantee: a fair-cut requires that \emph{every} edge that crosses the cut is roughly saturated. Thus, the flow can only cross the cut from $S$ to $T$ and the capacity of every cut edge is almost fully utilized. 
    Intuitively, for our use cases, fair-cuts can be thought of as analogues to minimum cuts which can be computed in near linear time (they give the same guarantees as minimum cuts, up to constant factors).

    The following theorem, due to~\cite{li2025simple}, is the state of the art result for the computation of fair-cuts.

    \begin{theorem}[Fair Cut, \texorpdfstring{\cite[Theorem 1]{li2025simple}}{[LL25, Theorem 1]}]
    \label{theorem:fair_cuts}
        Given a graph $G=(V, E)$, two vertices $s, t\in V$ and $\epsilon\in (0, 1]$, we can compute with high probability a $(1+\epsilon)$-fair $(s, t)$-cut in $\tilde{O}(\frac{m}{\epsilon})$ time.
    \end{theorem}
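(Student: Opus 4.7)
The plan is to establish the result via a multiplicative weights update (MWU) framework layered on top of a near-linear time approximate maximum flow subroutine. The key structural observation is that a fair cut is a dual object: we simultaneously need a feasible $(s,t)$-flow $f$ of large value and a cut $(S,T)$ every edge of which is saturated by $f$ up to a $(1+\epsilon)$ factor. A standard approximate max-flow routine (e.g., Peng's or Sherman's) returns a flow of value $(1-\epsilon)$ of the max-flow, but offers no per-edge guarantee on the cut; flow can even traverse the returned min-cut in the ``wrong'' direction. The remedy is to damp this behavior by dynamically re-weighting the edges so that oversaturated edges become more expensive and undersaturated edges become cheaper, and to average the resulting flows.

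Concretely, I would maintain edge weights $w:E \to \RR_{>0}$, initialized to $w(e)=1$. At each iteration $t = 1,\ldots,T$ with $T = O(\log n / \epsilon^2)$, I would compute a $(1+\epsilon/2)$-approximate max $(s,t)$-flow $f_t$ in the graph with capacities $c(e)w_t(e)$, then update $w_{t+1}(e) \defeq w_t(e) \cdot \exp\bigl(\eta \cdot (f_t(e)/c(e) - 1)\bigr)$ for a small step size $\eta$, clipped appropriately. A standard MWU potential argument (tracking $\sum_e w_t(e)$) shows that the averaged flow $\bar f \defeq \frac{1}{T}\sum_t f_t$ has value at least $(1-\epsilon)$ of the max flow, while simultaneously $\bar f(e) \ge (1 - \epsilon)c(e)$ on every edge of the cut recovered from the connected component of $s$ in the residual graph of $\bar f$. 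That cut is then $(1+O(\epsilon))$-fair, and rescaling $\epsilon$ yields the claimed $(1+\epsilon)$-fairness.

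The central obstacle is meeting the $\tilde O(m/\epsilon)$ running time rather than the straightforward $m^{1+o(1)}/\epsilon^{O(1)}$ bound one obtains by plugging near-linear max-flow into MWU black-box. Each max-flow call itself relies on a congestion approximator, and invoking it $O(\log n/\epsilon^2)$ times from scratch is wasteful; moreover, careless use would create a circular dependence between fair cuts and congestion approximators. The core technical work is therefore to (i) amortize work across iterations, by reusing data structures such as the approximator and the low-stretch spanning tree between consecutive weight updates that differ only multiplicatively by a small factor, (ii) control the MWU width by ensuring $f_t(e)/c(e)$ stays bounded via a capacity-doubling preprocessing step on ``bad'' edges, and (iii) bootstrap the congestion approximator from a weaker, pseudo-approximator (in the spirit of~\cite{li2025congestion}) so that no circular call to the very tree cut-sparsifiers we are trying to build is needed. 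Once these engineering pieces are assembled, correctness follows from the two-line MWU averaging guarantee, and the total cost is $T$ applications of a near-linear max-flow routine amortized to $\tilde O(m/\epsilon)$.
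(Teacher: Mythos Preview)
This theorem is not proved in the paper; it is quoted verbatim as an external result from \cite{li2025simple} and used as a black box (specifically, it is invoked once in the proof of Lemma~\ref{lemma:fair-cut-separator} with $\epsilon=1$ to extract a separator $X_Y$ between $X_A$ and $X_B$). There is therefore no ``paper's own proof'' against which to compare your attempt.

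Your sketch is a plausible outline of how one might approach the statement, and the MWU-on-top-of-approximate-max-flow idea is indeed the rough shape of the original fair-cut construction in~\cite{LNPSsoda23}. However, be aware that several of the ``engineering pieces'' you list are not mere details: the width reduction, the amortization of the congestion approximator across reweighted instances, and especially getting the dependence on $\epsilon$ down to $\tilde O(m/\epsilon)$ rather than $\tilde O(m/\epsilon^2)$ (which is what a naive $T=O(\log n/\epsilon^2)$ MWU would give) are the substantive content of the cited works. Your proposal acknowledges these obstacles but does not resolve them, so as written it is a high-level plan rather than a proof. Since the present paper treats the theorem as imported, this is outside its scope anyway.
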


    \begin{proof}[Proof of Lemma~\ref{lemma:fair-cut-separator}]
        Consider the following flow problem: start with $G'$ and set the edge capacities according to $\mu$. That is, set $\capacity(u,x_e)= \capacity(x_e, v) = \mu(x_e)$, for every $e=(u,v)\in E$. Next, 
        connect a source $s$ to all split nodes in $x_a \in X_A$ with capacity $\mu(x_a)$ and connect all split nodes $x_b \in X_B$ to a target $t$ with capacity $\mu(x_b)$. 
        Using Theorem~\ref{theorem:fair_cuts}, we get a $\frac{3}{2}$-fair $(s,t)$-cut $(S,\bar{S})$ in $G'$ corresponding to a $\frac{3}{2}$-approximate max flow $f$.
        We define $X_Y$ as follows: for every cut edge $(v, x_e) \in E'(S,\bar{S})$ we add $x_e$ to $X_Y$. This $X_Y$ separates $X_A$ from $X_B$. From the definition of a $\frac{3}{2}$-fair cut and since $X_Y$ consists of split nodes, it follows that each $x_e \in X_Y$ receives (and also sends)  
        at least $\frac{2}{3}\muG{}{x_e}$ flow. 

        The desired flows are obtained by the following modification of $f$. Consider the scaled flow $f' := \frac{3}{2}f$. For each $x_e \in X_Y$, the total flow (in $f'$) $x_e$ sends (or receives, depends on its side in the cut) through the cut $(S,\bar{S})$ is at least $\muG{}{x_e}$. 
        We decompose $f'$ into flow paths and scale them down such that, for $x_e \in X_Y$, the total flow $x_e \in X_Y$ sends (or receives) through $(S,\bar{S})$ is exactly $\muG{}{x_e}$. This scaling argument is possible since $(S,\bar{S})$ is a fair-cut and therefore each flow path of $f$ crosses the cut $(S,\bar{S})$ exactly once. Let $f''$ be the resulting flow.

        To see why $X_Y$ satisfies the constraints, decompose $f''$ into flow paths. 
        Thus, we can view $f''$ as two routings: one from $s$ (\ie, $X_A$) to $X_Y$, and one from $X_Y$ to $t$ (\ie, $X_B$). Reversing the former flow gives the result.
    \end{proof}
    
    We define the second partition ($L$, $R$) by computing the subset $X_Y$ of split nodes in $G'[S']$, as obtained from applying Lemma~\ref{lemma:fair-cut-separator} on $X_F$ and $X_B$ (see Figure~\ref{fig:examples}).
    The sets $L$ and $R$ are then defined as follows: $R$ consists of all vertices in $S$ that can reach a boundary node $x_b\in X_B\setminus X_Y$ in the graph $G'[S']\setminus X_Y$, while $L$ contains the remaining vertices. We adopt the notation of \cite{racke2014computing} and call $L$ the \emph{core} of $S$. Note that $X_F\setminus X_Y\subseteq L'$, where $L' \defeq L \cup \{x_e \mid e=(u,v), u,v\in L\}$. 

    By applying Lemma \ref{lemma:fair-cut-separator} with $\mu_\uptau$, we get the following theorem:
    \begin{theorem}
    \label{theorem:merge-phase-part2}
        Let $0 < \uptau \le 1$. Define the weighting $\mu_\uptau$ on the split vertices of $G'[S']$, such that boundary split nodes have weight $\uptau$ while the rest of the split nodes have weight $1$. There exists an algorithm \emph{merge-phase-2} which finds a partition $L\cup R = S$, induced by a set of edges $Y\subseteq E$ as described above, such that,
        \begin{enumerate}
            \item\label{prop-1} There is a flow in $G'[S']$ 
            from nodes in $X_Y$ to nodes in $X_B$, routable with congestion $2$, such that \emph{every} node $x_y \in X_Y$ sends out $\mu_\uptau(x_y)$ units of flow, 
            and each node $x_b \in X_B$ gets at most $2 \uptau$ units of flow.
        
            \item\label{prop-2} There is a flow in $G'[S']$ from nodes in $X_Y$ to nodes in $X_F$, routable with congestion $2$, such that \emph{every} node $x_y \in X_Y$ sends out $\mu_\uptau(x_y)$ units of flow, and each node $x_f \in X_F$ gets at most $2$ units of flow.
        
            \item The nodes in $X_F$ are separated from $X_B$ in $G'[S'] \setminus X_Y$. This means that every path in $G'[S']$ that starts at $x_b\in X_B$ and ends at $x_f\in X_F$ contains a node from $X_Y$ (which may be $x_b$ or $x_f$).
        
            \item The nodes in $L$ are separated from $X_B$ in $G'[S']\setminus X_Y$.
        \end{enumerate}

        The algorithm runs in time $\tilde{O}(m)$ where $m$ denotes the number of edges of $G'[S']$.
    \end{theorem}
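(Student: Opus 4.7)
The plan is to establish Theorem~\ref{theorem:merge-phase-part2} as an essentially immediate application of Lemma~\ref{lemma:fair-cut-separator} to the subdivision graph $G'[S']$, instantiated with the weighting $\mu_\uptau$ and the two distinguished sets of split vertices $X_F$ and $X_B$. Specifically, I would take the role of $X_A$ in the lemma to be $X_F$ (the inter-cluster split nodes produced by the first partition), keep $X_B$ (the boundary split nodes) as the second set, and use $\mu = \mu_\uptau$. The lemma outputs a set $X_Y$ of split vertices together with the two required multi-commodity flows, and this $X_Y$ induces the partition $L \cup R$ exactly as described before the theorem statement: $R$ is the set of vertices of $S$ that can reach some $x_b \in X_B \setminus X_Y$ in $G'[S'] \setminus X_Y$, and $L = S \setminus R$.

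Properties \ref{prop-1} and \ref{prop-2} then follow by plugging in the weights of $\mu_\uptau$. For property~\ref{prop-1}, each $x_y \in X_Y$ sends out $\mu_\uptau(x_y)$ units, and each receiving node $x_b \in X_B$ satisfies $\mu_\uptau(x_b) = \uptau$, so the per-sink bound of $2\mu(x_b)$ from Lemma~\ref{lemma:fair-cut-separator} becomes exactly $2\uptau$, as required. Property~\ref{prop-2} is symmetric: each $x_f \in X_F$ has $\mu_\uptau(x_f) = 1$, yielding a per-sink bound of $2$ units. The congestion bound of $2$ on both flows is inherited verbatim from the lemma.

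Property~(3) is just a rephrasing of the separation guarantee of Lemma~\ref{lemma:fair-cut-separator}: every path in $G'[S']$ between $X_F$ and $X_B$ must contain a node of $X_Y$. Property~(4) is automatic from how we defined the partition: vertices of $L$ are precisely those vertices of $S$ that cannot reach $X_B \setminus X_Y$ in $G'[S'] \setminus X_Y$, so by construction $L$ is separated from $X_B$ in that graph. The running time bound of $\tilde{O}(m)$ is inherited directly from Lemma~\ref{lemma:fair-cut-separator} (which is itself powered by the near-linear fair-cuts routine of Theorem~\ref{theorem:fair_cuts}), together with a standard BFS/connectivity pass on $G'[S'] \setminus X_Y$ to extract the partition $(L, R)$ from $X_Y$; since $G'[S']$ has $O(m)$ edges, both steps fit within the $\tilde{O}(m)$ budget.

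There is no real technical obstacle once Lemma~\ref{lemma:fair-cut-separator} is in hand; the proof becomes pure bookkeeping. The single point that deserves care is choosing the measure $\mu_\uptau$ so that the sink bounds $2\mu(x_f)$ and $2\mu(x_b)$ match the two different constants appearing in properties~\ref{prop-1} and~\ref{prop-2}, namely $2$ and $2\uptau$; this asymmetry is precisely why we introduce $\mu_\uptau$ (with weight $\uptau$ on boundary split nodes and $1$ on inter-cluster split nodes) rather than the uniform measure on split vertices.
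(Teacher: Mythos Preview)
Your proposal is correct and matches the paper's approach exactly: the paper simply states that Theorem~\ref{theorem:merge-phase-part2} follows ``by applying Lemma~\ref{lemma:fair-cut-separator} with $\mu_\uptau$,'' and your write-up is a faithful expansion of that one-line justification, with the same instantiation $X_A = X_F$, the same definition of $(L,R)$ from $X_Y$, and the same reading-off of the four properties.
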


    \begin{remark}
        In degenerate cases, it may occur that $F = \emptyset$ (\eg, when $G[S]$ has no edges) or $B=\emptyset$ (when $S = V$ or if $G$ is disconnected).
        In either case, Theorem~\ref{theorem:merge-phase-part2} returns $Y = \emptyset$.
    \end{remark}

    At this point, we can give some informal intuition for the role of the parameter $\uptau$ (used in Theorem~\ref{theorem:merge-phase-part2}) and the trade-off it induces. As we will see in the next section, to prove that our hierarchical decomposition $T$ is a tree cut-sparsifier, it suffices to show that at each cluster $S$ we can perform the following local operation: given demands supported on the boundaries of the sub-clusters of $S$ (that is, on $X_F\cup X_Y\cup X_B$), we are able to (i) sufficiently mix the demands inside $S$, so that some positive and negative demands on $X_F\cup X_Y$ cancel out (see Lemma~\ref{theorem:charge-in-S_part1} and Corollary~\ref{cor:obs_F}), 
    and (ii) route the remaining demands to the boundary split nodes $X_B$ of $S$, while ensuring that the load on each boundary split node does not increase too much, so that these demands can be handled at higher levels of the tree. 
    Theorem~\ref{theorem:merge-phase-part2} provides the mechanism need to carry these operations.

    For operation (i), in order to mix the demands on $X_F\cup X_Y$, we first move the demands residing on $X_Y$ to $X_F$, and then exploit the fact that $F$ $\Omega(\frac{1}{\log n})$-expands in $S$ in order to mix these demands inside $X_F$.
    We therefore apply Theorem~\ref{theorem:merge-phase-part2} and move the demands from $X_Y$ to $X_F$. 
    Since $X_Y$ may include boundary split nodes $x_y\in X_Y\cap X_B$ (which have constant load on them but a capacity of $\mu_\uptau(x_y) = \uptau$), the congestion incurred by routing the demands from $X_Y$ to $X_F$ is inversely proportional to $\uptau$.

    For operation (ii), we use the fact that the cut $(L,R)$, constructed using the separator $Y$, was added to the hierarchical decomposition. 
    This allows us to argue that after the mixing performed in operation~(i), the total demand remaining on $X_F\cup X_Y$ is bounded by the capacity of $Y$\footnote{Actually, the demand is bounded by the capacity of $\tilde{Y}\subseteq Y$ which is defined in the proof.} (see Corollary~\ref{cor:obs_F}). 
    Consequently, by again invoking the routing guarantees of Theorem~\ref{theorem:merge-phase-part2}, we can route the remaining demands to $X_Y$ such that the total demand on each vertex is constant.
    Finally, by applying Theorem~\ref{theorem:merge-phase-part2} once more, we move the demands from $X_Y$ to the boundary $X_B$, while increasing the load on the boundary split nodes by at most a factor of $1+\uptau$.
    
    \section{The Charging Scheme}
    \label{section:charging-scheme}
    
    Let $T$ be the tree corresponding to the hierarchical decomposition from Section~\ref{section:merge-phase}. In this section we show that $T$ is indeed a tree cut-sparsifier with quality $O(\log^3 n)$, proving Theorem~\ref{theorem:final}. We define the capacity in $T$ of an edge connecting a cluster $S$ to its parent cluster $P_T(S)$, as $\capacity_G(S, V\setminus S)$. In section \ref{section:techniques} we mentioned that the method used to show that a tree is a cut-sparsifier is to prove that every demand matrix that can be routed in the tree is respected by the graph. We begin by explaining this formally.
    
    Consider the following claim for tree flow-sparsifiers. 

    \begin{claim}
    \label{claim:equiv-def-flow}
        The following notions are equivalent for a (hierarchical) tree $T$ and a parameter $\alpha$:
        \begin{itemize}
            \item $T$ is a tree flow-sparsifier of $G$ with quality $\alpha$.
            \item For every demand matrix (recall Definition \ref{def:flow_problem}) $Q$ on $V$, if $Q$ can be routed in $T$ with congestion $1$, then it can be routed in $G$ with congestion $\alpha$.
        \end{itemize}
    \end{claim}
    \begin{proof}
        The implication from the first item to the second is immediate from the definition of a flow-sparsifier. We therefore prove the converse.
        
        Let $Q$ be a demand matrix on $V$. We have to show that $\congestion_{T}(Q)\le \congestion_{G}(Q) \le \alpha \cdot \congestion_{T}(Q)$. 
        The first inequality is clear: We route the flow in $T$ naturally (a unit from $u$ to $v$ moves along the unique path connecting them). 
        Therefore, there exists an edge  $(S, P_T(S))$ of $T$ (of capacity $\capacity_G(S, V\setminus S)$) such that $\dem_Q(S, V\setminus S) = \congestion_{T}(Q)\cdot\capacity_G(S, V\setminus S)$.
        Moreover, since $Q$ can be routed in $G$ with congestion $\congestion_{G}(Q)$, it follows from Fact~\ref{fact:congestion_implies_respecting_cut} that $\dem_Q(S, V\setminus S) \le \congestion_{G}(Q)\cdot\capacity_G(S, V\setminus S)$. Therefore, we conclude that $\congestion_{T}(Q) \le \congestion_{G}(Q)$. 
        
        For the other inequality, let $R$ denote the flow problem corresponding to $\frac{1}{\congestion_T(Q)}\cdot Q$. This flow can be routed in $T$ with congestion $1$. Hence,
        \[\alpha \ge \congestion_G\left(\frac{1}{\congestion_T(Q)}\cdot Q\right) = \frac{\congestion_G(Q)}{\congestion_T(Q)},\] 
        as required.
    \end{proof}
    
    The following is the corresponding claim for tree cut-sparsifiers. It is our main tool to show that $T$ is a tree cut-sparsifier. Recall the definition of ``respecting a cut" (Definition \ref{def:cut_respects_flow}).

    \begin{claim}
    \label{claim:equiv-def-cut}
        The following notions are equivalent for a (hierarchical) tree $T$ and a parameter $\alpha$:
        \begin{itemize}
            \item $T$ is a tree cut-sparsifier of $G$ with quality $\alpha$.
            \item For every demand matrix $Q$ on $V$, if $T$ $1$-respects $Q$, then $G$ $\frac{1}{\alpha}$-respects $Q$.
        \end{itemize}
    \end{claim}
    \begin{proof}
        The implication from the first item to the second is immediate from the definition of a cut-sparsifier.
        We therefore prove the converse.
        
        Fix a cut $(U,W)$ of $V$. 
        We have to show that $\capacity_G(U,W)\le\mincut_T(U,W)\le\alpha\cdot\capacity_G(U,W)$. 
        The first inequality is clear given the definition of the capacities in $T$. Indeed, consider any cut $(U',W')$ in $T$, where $U\subseteq U', W\subseteq W'$. For each $u\in U, w\in W$, the path from the leaf $\{u\}$ to $\{w\}$ in $T$ must contain an edge $(S,P_T(S))$  that crosses the cut $(U',W')$ of $T$. Assume without loss of generality that $S\in U'$ and $P_T(S)\in W'$. In particular, $u\in S$.
        Then, $\capacity_G(u,w)$ will be counted in the capacity $\capacity_T(S,P_T(S))=\capacity_G(S,V\setminus S)$.
        
        Next, we will show the other inequality. Consider the flow network $T' = T\cup\{s,t\}$, where $s$ is connected to all nodes in $W$ and $t$ is connected to all nodes in $U$ with edges of infinite capacity.
        Due to the max-flow min-cut theorem,
        the maximum $(s,t)$-flow $f$ in this network has value $\mincut_T(U,W)$. Separate this flow into paths. Each path begins in a vertex of $U$ and ends in a vertex of $W$.
        Denote by $R$ the demand matrix corresponding to the flow paths (each demand has a distinct commodity, whose source is a node $u\in U$ and whose target is a node $w\in W$). Clearly, the congestion of routing $R$ in $T$ is $1$, which means (see Fact \ref{fact:congestion_implies_respecting_cut}) that $R$ is $1$-respected by $T$. Therefore, we have $\frac{\capacity_{G}(U,W)}{\dem_{R}(U,W)} \ge \frac{1}{\alpha}$. Note that $\dem_R(U,W)$, the total demand that crosses the cut, is exactly the value of the flow $|f| = \mincut_T(U,W)$. Therefore, we have $\mincut_T(U,W)\le\alpha\cdot\capacity_G(U,W)$, as we wanted.
    \end{proof}

    We prefer to work with demand states instead of demand matrices.  To this end, we use the following Corollary:
    \begin{corollary}
    \label{cor:equiv-def-cut2}
        Assume every valid demand state $P$ (recall Definition \ref{def:valid_flow_state}) on $V$ (with arbitrary commodities $\K$),
        that is $1$-respected by $T$, is $\frac{1}{\alpha}$-respected by $G$. In other words, we have for any cut $(U,W)$, 
        \[
            \frac{\capacity_{G}(U,W)}{\dem_{P}(U,W)} \ge \frac{1}{\alpha} ~.
        \]
        Then $T$ is a tree cut-sparsifier with quality $\alpha$.
    \end{corollary}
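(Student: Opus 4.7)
The plan is to reduce Corollary \ref{cor:equiv-def-cut2} directly to Claim \ref{claim:equiv-def-cut}, using the correspondence between multi-commodity flow problems and valid demand states recorded in Remark \ref{remark:flow_problem_as_flow_state}. The strategy exploits the fact that the hypothesis of the corollary is strictly stronger than the hypothesis of the claim: valid demand states form a superclass of demand matrices, and the notion of ``demand across a cut'' agrees on the two formulations.

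Concretely, I would start from an arbitrary multi-commodity flow problem $R$ on $V$ with demand matrix $D_R$ that is $1$-respected by $T$, and form the associated valid demand state $P_R$ on the commodity set $\mathcal{K}=V$ exactly as in Remark \ref{remark:flow_problem_as_flow_state}: set $P_R(u,u)=\sum_{v\neq u}D_R(u,v)$ and $P_R(v,u)=-D_R(u,v)$ for $v\neq u$. A direct check shows $\sum_{u\in V}P_R(u,k)=0$ for every commodity $k$, so $P_R$ is valid. Moreover, for any cut $(S,\bar S)$, the signed sum $\sum_{u\in S}P_R(u,k)$ equals the net amount of commodity $k$ that has to cross $(S,\bar S)$ under $D_R$, and summing absolute values over $k$ recovers $\dem_{P_R}(S,\bar S)=\dem_R(S,\bar S)$, as already noted in Remark \ref{remark:flow_problem_as_flow_state}.

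With the equality $\dem_{P_R}(S,\bar S)=\dem_R(S,\bar S)$ in hand, the respecting notions of Definitions \ref{def:cut_respects_flow} and \ref{def:valid_flow_state} coincide on $P_R$. Since $R$ is $1$-respected by $T$, so is $P_R$, and by the hypothesis of the corollary $G$ must $\frac{1}{\alpha}$-respect $P_R$, which immediately gives $\frac{\capacity_G(B,W)}{\dem_R(B,W)}\ge\frac{1}{\alpha}$ for every cut $(B,W)$. Thus $G$ $\frac{1}{\alpha}$-respects $R$, and Claim \ref{claim:equiv-def-cut} applied to the arbitrary $R$ concludes that $T$ is a tree cut-sparsifier of quality $\alpha$.

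There is no real obstacle here: the corollary is essentially a change-of-language statement, and the only content lies in verifying that the cut-demand of the constructed demand state $P_R$ matches that of the original demand matrix $D_R$, which Remark \ref{remark:flow_problem_as_flow_state} already supplies. The main reason to state this corollary separately is ergonomic, since subsequent sections (the charging scheme in Section \ref{section:charging-scheme}, and the refinement analysis in Section \ref{section:improvement}) operate natively on demand states, and this lemma lets them conclude cut-sparsification without ever translating back to demand matrices.
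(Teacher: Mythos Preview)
Your proposal is correct and matches the paper's own proof, which simply states that the corollary follows from Claim \ref{claim:equiv-def-cut} and Remark \ref{remark:flow_problem_as_flow_state}. You have merely spelled out in more detail the correspondence that the remark already records; there is nothing to add.
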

    \begin{proof}
        Follows from Claim \ref{claim:equiv-def-cut} and Remark \ref{remark:flow_problem_as_flow_state}.
    \end{proof}

    Therefore, we have to show that every valid demand state that is respected by $T$ is $\Omega\!\left(\frac{1}{\log^3 n}\right)$-respected by $G$. 
    To this end, we use $T$ as a \emph{blueprint} that guides how demands are moved (in $G$) and charged to the capacity of cuts.
    Specifically, we update the demand state on $G$ iteratively as we go up along the tree structure, such that after processing a sub-cluster $S$, the demand state is no longer supported on any vertex of $S$, apart from it's boundary split nodes.\footnote{As will be explained soon, we work in the subdivision graph.}
    As mentioned in Section \ref{section:techniques}, we will ``pay'' not for the congestion of routing the demands, but rather according to how much the graph respects the corresponding demand transitions.

    The construction below follows~\cite{racke2014computing}, with the required modifications for the case of cuts instead of flows. 
    In contrast to~\cite{racke2014computing}, our presentation makes use of demand states, which provides a convenient and precise framework for describing the dynamic movements of the demands as we go up along the decomposition tree and for analyzing how the graph respects them.

    Fix a cut $(U \subseteq V, W = V\setminus U)$ in $G$, and fix a valid demand state $P$ on the vertex set $V$ (with arbitrary commodity set $\K$), 
    which is $1$-respected by $T$. 
    Our goal is to show that $\dem_P(U,W)\le O(\log^3 n)\cdot\capacity_G(U,W)$. We do this by charging the demand crossing the cut to the cut edges, such that the total charges dominate the demand, and each edge is charged $O(\log^3 n)$. 

    We work in the subdivision graph $G' = (V'=V\cup X_E, E')$. Recall that for a cluster $S\subseteq V$, we define the corresponding cluster in the subdivision graph as $S' \defeq S\cup\{x_e \mid e\cap S \ge 1\}$, which also includes the split nodes of the boundary edges of $S$. Recall that the graph $G'[S']$ is the subdivision graph restricted to the set $S'$.
    
    Let $\left(U' = U\cup\{x_e \mid e\cap U\ge 1\}, W' = V' \setminus U'\right)$ be the corresponding cut in $G'$ (see Figure~\ref{fig:U-W-cut}).\footnote{Note that we arbitrarily chose to include the split vertices of the edges crossing $(U,W)$ in $U'$.}
    Intuitively, we want to convert the demand state $P$ into a demand state $P'$ on $G'$, and then charge the edges crossing $(U',W')$ to cover the demand $\dem_{P'}(U',W')$, which would be bounded from below by $\dem_P(U,W)$.

    \begin{figure}[t]
        \begin{center}
            \begin{tabular}{c}
                \includegraphics[scale=0.35]{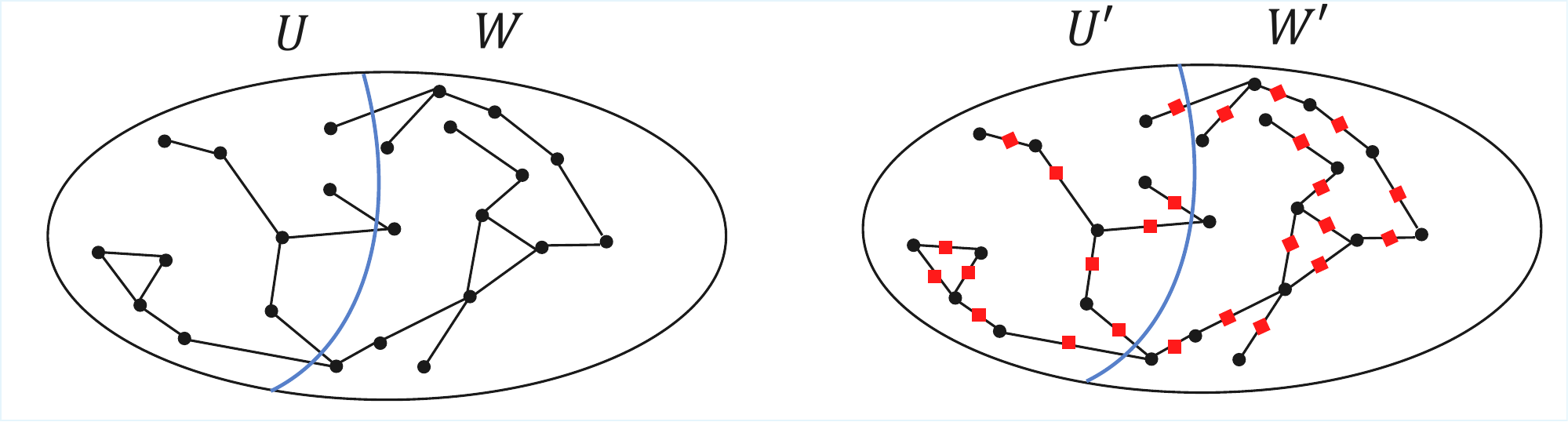} \\
                (a) \quad \quad \quad \quad \quad \quad \quad \quad \quad \quad \quad \quad \quad \quad \;  (b)
            \end{tabular}
        \end{center}
        \caption{Figure (a) - a cut $(U,W)$ in $G$. Figure (b) - the corresponding cut $(U',W')$ in $G'$.}
        \label{fig:U-W-cut}
    \end{figure}

    \cite{racke2014computing} essentially used Claim~\ref{claim:equiv-def-flow} to prove that $T$ is a tree flow-sparsifier, and to this end they showed how to route the corresponding valid demand state in $G$ with low congestion. Though they did not explicitly use the notation of demand states, their key method (rephrased in this notation) was akin to maintaining a collection of demand states, each supported on some sub-cluster $S'$ of the subdivision graph, corresponding to a vertex of $T$. The original state $P$ is converted to a set of demand states $\{P_{\{v\}'}\}_{v\in V}$, where for each vertex $v\in V$, $P_{\{v\}'}$ is supported on $\{v\}'$\footnote{The notation $\{v\}'$ refers to the cluster $S'$ induced by the single vertex set $S=\{v\}$. That is, this includes the vertex $v$ as well as it's adjacent split nodes.} (the vertices $V$ are the leaves in the decomposition tree $T$).
    Then they show how to merge the demand states according to the structure of the decomposition tree. 

    We proceed similarly. We combine the demand states when we go up the tree $T$, by constructing a series of flow problems (\ie, demand matrices) to update one demand state to the next (recall Definition \ref{def:augment_state}). Instead of showing that these matrices can be routed with low congestion, we charge the edges of the cut $(U',W')$ to cover the demands that these demand matrices induce across the cut. That is, in each such update step, we assign to every cut edge an amount of charge, so that the total assigned charge is at least the demand that the corresponding demand matrix induces across the cut.
    This is made explicit below.

    Initially, each edge of $(U',W')$ is charged $0$. 
    We convert the given valid demand state $P$ to multiple demand states $P_{\{v\}'}$, one for each vertex $v\in V$. For each $v\in V$,
    its demand vector $P(v)$ is  evenly distributed among the split nodes $x_e$, incident to $v$. Formally, each split node $x_e$ receives the demand vector $P_{\{v\}'}(x_e) = \frac{1}{\deg(v)}P(v)$.
    Observe that $\norm{P(v)}_1 = \dem_P{(\{v\}, V\setminus \{v\})} \le \capacity_{G}(\{v\}, V\setminus \{v\}) = \deg_G(v)$, where the inequality is due to the fact that $\{v\}$ is a cluster in $T$ and that $P$ is $1$-respected by $T$.
    In particular, each split node $x_e$ has a total load of at most $1$: $\norm{P_{\{v\}'}(x_e)}_1=\frac{1}{\deg(v)}\norm{P(v)}_1
    \le 1$. Each non-split vertex of $V$ has load of $0$.

    We combine demand states as we go up the decomposition tree $T$. We consider each cluster $S'$ in~$T$, and replace the demand states $P_{S_1'}, \ldots, P_{S_r'}$ that are supported on its sub-clusters $S_1', \ldots, S_r'$ with a new demand state that is supported on $S'$. This is done in Theorem~\ref{theorem:charge-in-S} below. Note that our demand states are not necessarily valid demand states, since $P_{S'_i}$ only stores the portion of the global demand that currently resides in $S'_i$. The matching positive and negative parts of a commodity may lie in different clusters, so complete cancellation cannot occur within a single cluster.  However, when summing over all demand states, these contributions do cancel out, and the total demand state remains valid. The demands states will, however, satisfy the following invariant with respect to $P$, the initial valid demand state we have fixed:

    \begin{invariant}
    \label{def:invariant}
        For a cluster $S$ and $\alpha \ge 1$, we say that a demand state $P_{S'}$  
        \emph{satisfies the invariant} with load $\alpha$ 
        if
        \begin{enumerate}[label=(\arabic*), ref=\arabic*]
            \item \label{inv-1} It is supported on the boundary of $S$:
            \[
                \forall k\in \K, \forall v\notin X_B: P_{S'}(v, k) = 0~,
            \]
            where $B$ are the boundary edges of $S$.  
            \item \label{inv-2} No demands were created or destroyed (relative to $P$): 
            \[
                \forall k\in \K: \sum_{v\in S'}{P_{S'}(v, k)} = \sum_{v\in S}{P(v, k)} ~.
            \]
            \item \label{inv-3} The load on each vertex is at most $\alpha$:
            \[
                \forall v\in S', \norm{P_{S'}(v)}_1 \le \alpha~.
            \]
        \end{enumerate}
    \end{invariant}
    It can be seen that $P_{\{v\}'}$ satisfies the invariant with $\alpha = 1$ for the cluster $\{v\}$. 
    
    Recall from Section \ref{section:merge-phase} that the partition of a cluster $S$ is comprised of two levels. In the first level, $S$ is cut into two sets $L$ and $R$, and in the second level $L$ and $R$ are partitioned into $L_1, \ldots, L_\ell$ and $R_1, \ldots, R_r$, respectively. Further recall the notation from Section \ref{section:merge-phase}: The set $F$ denotes the \textit{inter-cluster edges} from Theorem \ref{theorem:merge-phase_part1}, the set $Y$ denotes the \textit{cut edges} from Theorem \ref{theorem:merge-phase-part2} and the set $B$ denotes the \textit{boundary edges} $E(S, V\setminus S)$. These sets may intersect. In Section~\ref{section:charging-in-S}, we describe a procedure that is the core component of our argument, and ``combines'' demand states satisfying Invariant~\ref{def:invariant} for sub-clusters of a cluster $S$ to a demand state that satisfies it for $S$.
    
    The properties of this process are summarized in the following theorem:
    \begin{theorem}
    \label{theorem:charge-in-S}
        Let $S$ be a cluster, and let $S_1, \ldots, S_t$ be its sub-clusters (\ie, $\{S_1,\ldots S_t\} = \{L_1,\ldots,L_\ell\}\cup\{R_1,\ldots,R_r\}$), computed by Theorems~\ref{theorem:merge-phase_part1} and~\ref{theorem:merge-phase-part2} with parameter $0 < \uptau \le 1$. 
        For each $1\le i\le t$, assume that $P_{S'_i}$ is a demand state satisfying Invariant~\ref{def:invariant} for $S_i$ with load $\alpha \ge 1$. Let $P^{\mathrm{before}}_{S'}=\sum_{i=1}^t P_{S'_i}$ be their sum.
        Then, we can replace $P^{\mathrm{before}}_{S'}$ with a demand state $P^{\mathrm{after}}_{S'}$ which satisfies the following:
        
        \begin{enumerate}
            \item $P^{\mathrm{after}}_{S'}$ satisfies Invariant~\ref{def:invariant} for $S$ with load 
            $\alpha' = \alpha\cdot(1 + c\cdot\uptau)$ for some constant $c > 0$.
            \item $G'[S']$ $\Omega\!\left(\frac{\uptau}{\alpha \log n}\right)$-respects $P^{\mathrm{before}}_{S'} - P^{\mathrm{after}}_{S'}$. Moreover, $P^{\mathrm{before}}_{S'} - P^{\mathrm{after}}_{S'}$ is a valid demand state.
        \end{enumerate}
    \end{theorem}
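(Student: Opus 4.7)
The plan is to construct $P^{\mathrm{after}}_{S'}$ by applying three successive updates to $P^{\mathrm{before}}_{S'}$ (via Definition~\ref{def:augment_state}), each driven by a flow problem in $G'[S']$, and to bound the respect of the total change by combining the individual respect bounds through Fact~\ref{fact:extended_flow_problems}. Observe first that $P^{\mathrm{before}}_{S'}$ is supported on $X_B \cup X_F \cup X_Y$ (the boundary split nodes of the $S_i$'s, since edges with both endpoints in the same $S_i$ have load zero), with load at most $2\alpha$ on interior inter-cluster split nodes (shared between two $S_i$'s) and at most $\alpha$ on nodes of $X_B$ (adjacent to a unique sub-cluster).

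Step~1 equalizes the demand vectors across $X_F$ by applying the scaled all-to-all flow of Remark~\ref{remark:special_flows}(1). By Theorem~\ref{theorem:merge-phase_part1} and Remark~\ref{remark:mu_expansion_as_problem}, $G'[S']$ $\Omega(1/\log n)$-respects the unscaled all-to-all on $X_F$, so after scaling by at most $O(\alpha)$ at each source this update is $\Omega(1/(\alpha \log n))$-respected. Step~2 transports the uniform demand on $X_F$ to $X_Y$ using the reverse of the flow from Property~(\ref{prop-2}) of Theorem~\ref{theorem:merge-phase-part2}, scaled so that each $x_y \in X_Y$ receives a load proportional to $\mu_{\uptau}(x_y)$. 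Step~3 then empties $X_Y$ into $X_B$ using the flow from Property~(\ref{prop-1}), whose $\mu_{\uptau}$-proportional source profile at $X_Y$ matches the distribution produced by Step~2 together with the original load at each $x_y$. Since Properties~(\ref{prop-1}) and~(\ref{prop-2}) guarantee congestion $O(1)$, after the required scalings both updates are $\Omega(\uptau/\alpha)$-respected.

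Combining the three updates via Fact~\ref{fact:extended_flow_problems} yields that $G'[S']$ $\Omega(\uptau/(\alpha \log n))$-respects $P^{\mathrm{before}}_{S'} - P^{\mathrm{after}}_{S'}$, which is Property~2. For Property~1, item~(\ref{inv-1}) is immediate since $P^{\mathrm{after}}_{S'}$ is supported on $X_B$ by construction; item~(\ref{inv-2}) follows by applying Fact~\ref{fact:basic_flow_state} to each update and using that summing the sub-cluster invariants gives $\sum_{v \in S'} P^{\mathrm{before}}_{S'}(v,k) = \sum_{v \in S} P(v,k)$; item~(\ref{inv-3}) follows from the Step~3 load calculation, since each $x_b \in X_B$ receives at most $2\uptau$ per unit of scaling, giving an additional $O(\uptau \alpha)$ on top of the original $\leq \alpha$ load and yielding $\norm{P^{\mathrm{after}}_{S'}(x_b)}_1 \le \alpha(1 + c\uptau)$.

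The main obstacle is the delicate calibration of the scalings in Steps~2 and~3 so that the $\mu_{\uptau}$-proportional profile delivered to $X_Y$ by Step~2 is exactly what the Step~3 flow can evacuate, while at the same time keeping the extra load contributed to each $x_b$ within the $c\uptau \alpha$ budget. Remark~\ref{remark:fair-cut-separator} bounds $\mu_{\uptau}(X_Y) \le 2\min(\mu_{\uptau}(X_F), \mu_{\uptau}(X_B))$ and interacts with these scalings to produce the $\uptau$ factor in the final respect bound. Additional bookkeeping is required for the overlaps $X_F \cap X_Y$ (edges that are simultaneously inter-cluster of the first partition and cut edges, whose split nodes both send and receive in Step~2) and $X_Y \cap X_B$ (boundary edges selected by the fair cut); nodes in $X_Y \cap X_B$ already reside on the boundary and must be exempted from Step~3 to avoid double-counting.
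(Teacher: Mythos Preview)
Your proposal has a genuine gap: it never uses that the core $L$ is a cluster in $T$, and without this the load bound~(\ref{inv-3}) cannot be established. After your Step~1 the uniform load on each $x_f\in X_F$ is still up to $2\alpha$, so the \emph{total} mass sitting on $X_F$ can be $\Theta(\alpha|X_F|)$; your Steps~2 and~3 simply transport this mass to $X_B$, so each boundary node would receive $\Theta(\alpha|X_F|/|X_B|)$, and there is no reason for $|X_F|\le O(\uptau|X_B|)$. Remark~\ref{remark:fair-cut-separator} only gives $\mu_\uptau(X_Y)\le 2\min(\mu_\uptau(X_F),\mu_\uptau(X_B))$, which constrains $|X_Y|$ but says nothing about $|X_F|$ relative to $\uptau|X_B|$.

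The paper's argument is organized differently precisely to trigger a cancellation. It splits $P^{\mathrm{before}}_{S'}$ into $P^{(1)}_{L'}$ (the contribution from sub-clusters inside the core $L$) and $P^{(1)}_{R'}$, and processes \emph{only} $P^{(1)}_{L'}$ through $X_F$: first push the $L$-side demands on $X_Y\cap L'$ to $X_F$ (Property~\ref{prop-2}), then spread uniformly. At that point the demand vector at every $x_f$ equals $\tfrac{1}{|X_F|}\sum_{v\in L}P(v)$, so the total load on $X_F$ is exactly $\dem_P(L,V\setminus L)$. Because $L\in V_T$ and $P$ is $1$-respected by $T$, this is at most $\capacity_G(L,V\setminus L)\le|\tilde Y|$ (Claim~\ref{claim:opt_constraint}/Corollary~\ref{cor:obs_F}). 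Only then does the reverse flow push the mass to $X_{\tilde Y}$ with load $\le 1$ per node, and the final Property~\ref{prop-1} push to $X_B$ adds just $O(\alpha\uptau)$ per boundary node. The separation into $L$- and $R$-sides and the use of $L\in V_T$ is the key idea; your three steps, as written, have no mechanism for reducing the total mass on $X_F$ below $\Theta(\alpha|X_F|)$, so the ``delicate calibration'' you flag as an obstacle is in fact unachievable.
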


    \begin{remark}
    \label{remark:respects_means_charge}
        Result (2) of Theorem \ref{theorem:charge-in-S} means that if we  charge the edges $e\in E(U',W')\cap G'[S']$ to cover the demand of $P^{\mathrm{before}}_{S'} - P^{\mathrm{after}}_{S'}$ across the cut $(U',W')$ by
        \[
            \charge_S(e) \defeq \frac{\dem_{\left(P^{\mathrm{before}}_{S'} - P^{\mathrm{after}}_{S'}\right)}(U',W')}{\capacity\!\left(E(U',W')\cap G'[S']\right)},
        \]
        then  $\charge_S(e)=O\!\left(\frac{\alpha\log n}{\uptau}\right)$.
        Note that the charged edges are in the subdivision graph.
    \end{remark}

    In other words, Theorem~\ref{theorem:charge-in-S} shows that the demand states supported on the sub-clusters $S_1, \ldots, S_t$ can be merged into a single demand state on $S$, while only slightly increasing the load (by a factor of $1+O(\uptau)$) on the boundary edges of $S$. The cost of this operation is that we add $O\left(\frac{\alpha\log n}{\uptau}\right)$ to the charge of each edge $e\in E(U',W')\cap G'[S']$.
    The proof of this theorem is given in Section~\ref{section:charging-in-S}.
    
    Next, we use Theorem~\ref{theorem:charge-in-S} to show Theorem \ref{theorem:final}. 

    \begin{proof} [Proof of Theorem \ref{theorem:final}]
        We prove that the conditions of Corollary~\ref{cor:equiv-def-cut2} are satisfied. Let $P$, $(U,W)$, and $(U',W')$ be as in the discussion following  Corollary~\ref{cor:equiv-def-cut2}.
        Set $\uptau = \frac{1}{\log n}$.
        We process $P$ in a bottom-up manner: First we convert it to a set of demand states $P_{\{v\}'}$ as described above. Then, for each cluster in $T$, we apply Theorem~\ref{theorem:charge-in-S} and Remark \ref{remark:respects_means_charge}, accumulating the charges incurred at each application of the theorem.
        Let $\charge_S(e)$
        denote the charge assigned to an edge $e \in E(U',W')\cap G'[S']$ when Theorem~\ref{theorem:charge-in-S} is applied to cluster $S$ as in Remark~\ref{remark:respects_means_charge}.
        
        After applying the theorem for the final cluster $V$, the resulting demand state $P_{V'}$ is actually the zero demand state on $V'$. This follows from Invariant~\ref{def:invariant}\eqref{inv-1}, since the cluster $V$ has no boundary edges. We now explain why the accumulated charges are enough to cover $\dem_P(U,W)$.

        Denote by $\Q$ the set of current ``active'' clusters, such that initially $\Q=\Q_0\defeq\{\{v\} \mid v\in V\}$.  Then in each iteration we remove $S_1, \ldots, S_t$ from $\Q$ and add $S$. Let $P_\Q$ be the sum of all demand states $P_{S'}$, for $S\in \Q$. 

        \begin{claim}
\label{claim:total_active_demand_is_a_flow_problem}
            In each step, $P_\Q$ is a valid demand state on $V'$. That is, $\sum_{v\in V'}{P_\Q(v, k)} = 0$, for every $k\in \K$.
        \end{claim}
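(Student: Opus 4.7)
The plan is to prove the claim by induction on the number of steps, with the key observation being that Invariant~\ref{def:invariant}\eqref{inv-2} guarantees that every local update preserves the total amount of each commodity.

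For the base case, consider $\mathcal{Q}_0 = \{\{v\} \mid v \in V\}$, so $P_{\mathcal{Q}_0} = \sum_{v \in V} P_{\{v\}'}$. Fix a commodity $k\in\mathcal{K}$. By construction, each demand vector $P(v)$ is split evenly across the split nodes $x_e$ incident to $v$, contributing a total of $\sum_{x_e\sim v} \frac{1}{\deg(v)} P(v,k) = P(v,k)$ to commodity $k$. Summing over $v$ gives $\sum_{v'\in V'} P_{\mathcal{Q}_0}(v',k) = \sum_{v\in V} P(v,k) = 0$, where the last equality uses that the original $P$ is a valid demand state on $V$.

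For the inductive step, suppose $P_\mathcal{Q}$ is a valid demand state before we process some cluster $S$ with sub-clusters $S_1,\ldots,S_t$. The update removes $P^{\mathrm{before}}_{S'} = \sum_{i=1}^t P_{S'_i}$ from $P_\mathcal{Q}$ and adds $P^{\mathrm{after}}_{S'}$, so it suffices to show that these two sums have the same total mass per commodity. Since each $P_{S'_i}$ satisfies Invariant~\ref{def:invariant}\eqref{inv-2} for $S_i$, and since the sets $S_1,\ldots,S_t$ partition $S$, linearity gives
\[
    \sum_{v\in V'} P^{\mathrm{before}}_{S'}(v,k) \;=\; \sum_{i=1}^t \sum_{v\in S'_i} P_{S'_i}(v,k) \;=\; \sum_{i=1}^t \sum_{v\in S_i} P(v,k) \;=\; \sum_{v\in S} P(v,k).
\]
By Invariant~\ref{def:invariant}\eqref{inv-2} applied to $P^{\mathrm{after}}_{S'}$, the right-hand side also equals $\sum_{v\in V'} P^{\mathrm{after}}_{S'}(v,k)$. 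Hence the per-commodity totals are preserved by the update, and $P_\mathcal{Q}$ remains a valid demand state.

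No serious obstacles are expected: the argument is purely a book-keeping exercise once the invariant is in hand. The only subtle point is being careful with shared split nodes on the inter-cluster edges $F$ of the partition of $S$: such an $x_f$ lies in $S'_i \cap S'_j$ whenever $f$ is a boundary edge of both $S_i$ and $S_j$, but this causes no issue because $P^{\mathrm{before}}_{S'}$ is defined as a sum and the supports of the individual $P_{S'_i}$ may legitimately overlap; the telescoping in the displayed equation above is unaffected.
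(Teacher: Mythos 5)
Your proof is correct and rests on exactly the same ingredients as the paper's: Invariant~\ref{def:invariant}\eqref{inv-1} (support on $S'$), Invariant~\ref{def:invariant}\eqref{inv-2} (conservation of each commodity relative to $P$), and the fact that the active clusters partition $V$. The paper simply computes $\sum_{v\in V'}P_{\mathcal{Q}}(v,k)$ directly at an arbitrary step by summing over all active clusters, whereas you phrase it as an induction showing each update preserves the total; the two presentations are equivalent.
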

        \begin{proof}
            By induction on the applications of Theorem~\ref{theorem:charge-in-S}, for each $S\in \Q$, $P_{S'}$ satisfies Invariant~\ref{def:invariant}. 
            Therefore, for each $k\in \K$,
            \begin{align*}
                \sum_{v\in V'}{P_\Q(v, k)} &= \sum_{v\in V'}{\sum_{S\in \Q}{P_{S'}(v, k)}} = \sum_{S\in \Q}{\sum_{v\in V'}{P_{S'}(v, k)}} 
                \\
                &\numeq{3} \sum_{S\in \Q}{\sum_{v\in S'}{P_{S'}(v, k)}} \numeq{4} \sum_{S\in \Q}{\sum_{v\in S}{P(v, k)}} = \sum_{v\in V}{P(v, k)} = 0,
            \end{align*}
            where Equality~(3) holds since $P_{S'}$ is supported on $S'$ and
            Equality~(4) uses Invariant~\ref{def:invariant}\eqref{inv-2}.
        \end{proof}
    
        The following claim, whose proof is deferred to Appendix~\ref{appendix:omitted-proofs}, 
        shows that $\dem_P(U,W)$ and $\dem_{P_{\Q_0}}(U',W')$ are the same up to some minor error that arises due to split nodes on the cut $(U',W')$ of $G'$.
    
        \begin{claim}
        \label{claim:initial_flow_state}
            $\dem_P(U,W)\le\dem_{P_{\Q_0}}(U',W') + \capacity_G(U,W)$. 
        \end{claim}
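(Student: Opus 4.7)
The plan is to compute $\dem_{P_{\mathcal{Q}_0}}(B',W')$ commodity by commodity and relate it to $\dem_P(B,W)$. Since each $P_{\{v\}'}$ is supported on the split nodes adjacent to $v$ with value $P(v)/\deg(v)$ on each such node, and since by construction $x_e\in B'$ iff $e$ has at least one endpoint in $B$, I would unpack the sum to obtain
\begin{equation*}
    \sum_{v\in B'} P_{\mathcal{Q}_0}(v,k)
    \;=\; \sum_{u\in B} P(u,k) \;+\; \sum_{u\in W} \frac{d_B(u)}{\deg(u)}\, P(u,k),
\end{equation*}
for every commodity $k$, where $d_B(u)$ denotes the number of edges from $u$ into $B$. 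The reason is that each $u\in B$ is counted through all of its $\deg(u)$ incident edges (since any edge touching $u$ automatically touches $B$), while each $u\in W$ is counted only through its $d_B(u)$ edges that cross into $B$.

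Applying the triangle inequality and using that $P$ is a valid demand state (so that $\dem_P(B,W)=\sum_k|\sum_{u\in B} P(u,k)|$) would then give
\begin{equation*}
    \dem_P(B,W) \;\le\; \dem_{P_{\mathcal{Q}_0}}(B',W') \;+\; \sum_k\left|\sum_{u\in W}\frac{d_B(u)}{\deg(u)}P(u,k)\right|.
\end{equation*}
To bound the residual by $\capacity_G(B,W)$, I would move the absolute value inside the sum over $k$ and use the basic fact (already used right after the definition of $P_{\{v\}'}$) that $\|P(u)\|_1 = \dem_P(\{u\},V\setminus\{u\})\le \deg_G(u)$, which holds because $\{u\}$ is a cluster of $T$ and $T$ $1$-respects $P$. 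This turns the residual into $\sum_{u\in W}\frac{d_B(u)}{\deg(u)}\|P(u)\|_1\le \sum_{u\in W} d_B(u) = \capacity_G(B,W)$, completing the argument.

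The proof is essentially routine bookkeeping; the only subtlety worth flagging is the asymmetric convention that places every split node of an edge crossing $(B,W)$ into $B'$. That asymmetry is exactly what produces a one-sided correction term involving $d_B(u)$ only for $u\in W$, and it is the precise source of the additive $\capacity_G(B,W)$ loss rather than a symmetric, smaller error.
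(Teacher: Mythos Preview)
Your proof is correct and follows essentially the same approach as the paper's. Both compute the contribution of each $P_{\{v\}'}$ to the $B'$-side of the cut, observe that vertices in $B$ contribute their full demand while vertices in $W$ contribute only through their edges crossing into $B$, and then bound the residual using $\norm{P(u)}_1\le\deg_G(u)$. Your presentation is in fact slightly cleaner than the paper's, since you first establish the exact identity $\sum_{v\in B'}P_{\mathcal{Q}_0}(v,k)=\sum_{u\in B}P(u,k)+\sum_{u\in W}\frac{d_B(u)}{\deg(u)}P(u,k)$ and only then apply the triangle inequality, whereas the paper intermixes the two steps in a single chain of inequalities.
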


        Consider a replacement of $\Q$ by $\tilde{\Q} = \Q \cup \{S\} \setminus \{S_1, S_2, \ldots, S_t\}$ when we apply Theorem~\ref{theorem:charge-in-S}. 
        Recall that $P^{\mathrm{before}}_{S'}=\sum_{i=1}^t P_{S'_i}$ is replaced by $P^{\mathrm{after}}_{S'}$. Let $P^{\mathrm{before}}_{\tilde{\Q}} \defeq P_\Q$ be the sum of active demand states before the change and let $P^{\mathrm{after}}_{\tilde{\Q}} \defeq P_\Q - P^{\mathrm{before}}_{S'} + P^{\mathrm{after}}_{S'}$ be the sum after the change.
        We have $\dem_{\left(P^{\mathrm{before}}_{\tilde{\Q}} - P^{\mathrm{after}}_{\tilde{\Q}}\right)}(U',W') = \dem_{\left(P^{\mathrm{before}}_{S'} - P^{\mathrm{after}}_{S'}\right)}(U',W')$. This means that after accumulating all the charges in the applications of Theorem~\ref{theorem:charge-in-S} on all clusters $S \in T$, the total charge is  
        \begin{align*}
            \sum_{e\in E(U',W')}{\charge(e)} &= \sum_{e\in E(U',W')}\sum_{S\in T}{\charge_S(e)} = \sum_{S\in T}\sum_{e\in E(U',W')\cap G'[S']}{\charge_S(e)} 
            \\
            &\numge{1} \sum_{S\in T}{\dem_{\left(P^{\mathrm{before}}_{S'} - P^{\mathrm{after}}_{S'}\right)}(U',W')} 
            =
             \sum_{S\in T}{\dem_{\left(P^{\mathrm{before}}_{\tilde{\Q}} - P^{\mathrm{after}}_{\tilde{\Q}}\right)}(U',W') }
        \\ &\numge{2}\dem_{\left(P_{\Q_0} - P_{V'}\right)}(U',W') = \dem_{P_{\Q_0}}(U',W') 
            \ge \dem_P(U,W) - \capacity_G(U,W)~,
        \end{align*}
        where $\charge(e)$ is the total charge incurred on $e$, Inequality~$(1)$ follows from Remark~\ref{remark:respects_means_charge} and Inequality~(2)
        uses Fact~\ref{fact:extended_flow_problems} on a telescoping sum.
        Recall that $P_{V'}$ is the final demand state which is zero.
        By charging an additional unit to each cut edge, we get $\dem_P(U,W)\le\sum_{e\in E(U',W')}{\charge(e)}$. 
        
        To conclude, we bound the total charge on each edge as follows. The height of the decomposition tree is $O(\log n)$ and $\uptau=\frac{1}{\log n}$, which means that $\alpha$ is a constant for each cluster. Therefore, by Theorem~\ref{theorem:charge-in-S} and Remark \ref{remark:respects_means_charge}, each edge is charged $O(\log^2 n)$ for each cluster containing it, for a total charge of $O(\log^3 n)$ per edge, which gives $\dem_P(U,W) = O(\log^3 n)\cdot\capacity_{G'}(U',W') = O(\log^3 n)\cdot\capacity_{G}(U,W)$, as we wanted. Thus, the conditions of Corollary~\ref{cor:equiv-def-cut2} are satisfied, and Theorem~\ref{theorem:final} follows.
    \end{proof}

    \subsection{Proof of Theorem \texorpdfstring{\ref{theorem:charge-in-S}}{5.5}}
    \label{section:charging-in-S}

    The goal of this section is to prove Theorem~\ref{theorem:charge-in-S}. Recall the notations of this theorem: $S\in T$ is a cluster,
    and let $S_1, \ldots, S_t$ be its sub-clusters (\ie, $\{S_1,\ldots S_t\} = \{L_1,\ldots,L_\ell\}\cup\{R_1,\ldots,R_r\}$), computed by Theorems~\ref{theorem:merge-phase_part1} and~\ref{theorem:merge-phase-part2} with parameter $0 < \uptau \le 1$.
    For $i=1\ldots t$, $P_{S'_i}$ is a demand state satisfying Invariant~\ref{def:invariant} for $S_i$ with a given load $\alpha \ge 1$.

    Let $P^{\mathrm{before}}_{S'} \defeq \sum_{i=1}^{t}P_{S'_i}$ be the sum of the demand states on the sub-clusters.
    By Invariant~\ref{def:invariant}\eqref{inv-1}, all the non-zero demands in $P^{\mathrm{before}}_{S'}$ are residing at the nodes in $X_F\cup X_Y\cup X_B$ of $S'$ (these are the boundary edges of $\{S_i\}_{i=1}^t$). 
    Given these demand vectors, we wish to ``solve'' some of these demands by updating the demand states such that a positive demand and a negative demand of the same commodity are moved to the same vertex, and cancel out. Then we want to ``send'' the remaining demands to $X_B$ (the boundary of $S$).
    In other words, we replace $P^{\mathrm{before}}_{S'}$ by a demand state $P^{\mathrm{after}}_{S'}$, where the demands of $P^{\mathrm{after}}_{S'}$ lie on $X_B$ (\ie, satisfying Invariant~\ref{def:invariant} for $S$), and we show that $G'[S']$ $\Omega(\frac{\uptau}{\alpha\log n})$-respects $P^{\mathrm{before}}_{S'} - P^{\mathrm{after}}_{S'}$.

    We give a sketch of the argument via the following sequence of demand state updates:
    \begin{enumerate}
        \item Denote $P^{(1)}_{S'} \defeq P^{\mathrm{before}}_{S'}$. First, we deal with demands that originated inside $L$, \ie, with the demand states $P^{(1)}_{L'}=\sum_{i}{P_{L'_i}}$, where $L=\bigcup_i{L_i}$. We move the demands of $P^{(1)}_{L'}$ that reside on $X_Y\cap L'$ to $X_F$, using the flow guaranteed by Property~\ref{prop-2} of Theorem~\ref{theorem:merge-phase-part2}. This flow is routable in the graph with congestion $O(\frac{\alpha}{\uptau})$ and in particular, by Fact~\ref{fact:congestion_implies_respecting_cut}, is $\Omega(\frac{\uptau}{\alpha})$-respected by the graph. At this point, nodes in $X_F$ have $O(\frac{\alpha}{\uptau})$ load.
        \item Then, we use a scaled all-to-all demand matrix on $X_F$ to update the demand state and spread the demands uniformly on $X_F$. By Theorem \ref{theorem:merge-phase_part1} we know that this problem is $\Omega(\frac{\uptau}{\alpha\log n})$-respected by the graph. 
        This means that positive and negative demands of the same commodity cancel out, and we are only left with demand that has to enter or leave the cluster $L$. Because $L$ is a node in the hierarchical decomposition tree $T$ and $P$ is respected by $T$, the total amount of leftover demand is at most $\capacity_G(L, V\setminus L)$.
        \item Then, we apply the reverse demand transformations from step (1) to move the demands back to $X_Y$.\footnote{In fact, the demands are moved to only some edges of $Y$ (see the proof for details).}
        \item Finally, we take all the demands that are currently on $X_Y$ (either from $P_{R'} \defeq P^{(1)}_{S'} - P^{(1)}_{L'}$ or the resulting updated demands), and use the flow ensured by Property~\ref{prop-1} of Theorem~\ref{theorem:merge-phase-part2} to route them to $X_B$. This flow is routable in the graph with congestion $O(\alpha)$ and, by Fact~\ref{fact:congestion_implies_respecting_cut}, is $\Omega(\frac{1}{\alpha})$-respected by the graph.
    \end{enumerate}
    In all of these steps, we show that $G'[S']$ respects the updates.
     We do this in two different ways: whenever the update is implemented by an explicit flow (as in steps~$1$,$3$ and~$4$), we use the fact that the flow is routable with low congestion, and therefore, by Fact~\ref{fact:congestion_implies_respecting_cut}, the update is respected by $G'[S']$. In the step that relies on uniform spreading on $X_F$ (step~$2$), we instead use the expansion of $X_F$ to claim that the update is respected by $G'[S']$.  We formalize these arguments in the lemmas below.

     Let $P^{(1)}_{L'} \defeq \sum_{i=1}^{\ell} P_{L'_i}$ be the sum of the demand states on the sub-clusters within  $L$. The following lemma captures steps $1$ and $2$ in the above process.

    \begin{lemma}\label{theorem:charge-in-S_part1}
        We can replace $P^{(1)}_{L'}$ by a demand state $P^{(3)}_{L'}$ which satisfies the following:
        \begin{enumerate}
            \item 
            $P^{(3)}_{L'}$
            is supported and spread uniformly on $X_F$. Formally, 
            $P^{(3)}_{L'}(x_e) = \frac{1}{|X_F|}\sum_{v \in L'}P^{(1)}_{L'}(v)$,
            for $x_e\in X_F$, and $P^{(3)}_{L'}(v) = 0$ for $v\notin X_F$. Note that $P^{(3)}_{L'}$ is no longer supported on $L'$. 
            \item $G'[S']$ $\Omega\!\left(\frac{\uptau}{\alpha\log n}\right)$-respects $P^{(1)}_{L'} - P^{(3)}_{L'}$. Moreover, $P^{(1)}_{L'} - P^{(3)}_{L'}$ is a valid demand state.
        \end{enumerate}
    \end{lemma}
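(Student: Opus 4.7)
The plan is to go from $P^{(1)}_{L'}$ to $P^{(3)}_{L'}$ via an intermediate demand state $P^{(2)}_{L'}$, with each transition realized by updating (Definition~\ref{def:augment_state}) by a scaled multi-commodity flow problem. First, I would argue that by Invariant~\ref{def:invariant}\eqref{inv-1} applied to each $P_{L'_i}$, together with Property~$4$ of Theorem~\ref{theorem:merge-phase-part2} (which forces every edge leaving $L$ to have its split node in $X_Y$), the support of $P^{(1)}_{L'}=\sum_i P_{L'_i}$ is contained in $(X_F\cap L')\cup(X_Y\cap L')$, with a per-node load of at most $2\alpha=O(\alpha)$ since each split node is shared by at most two sub-clusters $L_i$.

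For the first step, let $f$ be the flow guaranteed by Property~\ref{prop-2} of Theorem~\ref{theorem:merge-phase-part2}, which routes in $G'[S']$ from $X_Y$ to $X_F$ with congestion at most $2$, every $x_y\in X_Y$ sending $\mu_\uptau(x_y)\ge\uptau$ units and every $x_f\in X_F$ receiving at most $2$ units. Construct $Q_1$ by rescaling, at each source $x_y\in X_Y\cap L'$, by the factor $\|P^{(1)}_{L'}(x_y)\|_1/\mu_\uptau(x_y)\le O(\alpha/\uptau)$, so that $x_y$ sends out exactly its current load. Set $P^{(2)}_{L'}\defeq(P^{(1)}_{L'})^{\uparrow Q_1}$; direct inspection of Definition~\ref{def:augment_state}, as in Remark~\ref{remark:special_flows}, shows that the update zeroes out $P^{(1)}_{L'}$ on $X_Y\cap L'$ and places mass only on $X_F$, giving support on $X_F$ with load $O(\alpha/\uptau)$ per node. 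Since $Q_1$ is routable in $G'[S']$ with congestion $O(\alpha/\uptau)$, Fact~\ref{fact:congestion_implies_respecting_cut} gives that $G'[S']$ $\Omega(\uptau/\alpha)$-respects $Q_1$, and Fact~\ref{fact:basic_flow_state} then implies that $G'[S']$ $\Omega(\uptau/\alpha)$-respects $P^{(1)}_{L'}-P^{(2)}_{L'}$.

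For the second step, let $Q_2$ be the all-to-all demand matrix on $X_F$ (Definition~\ref{def:all_to_all_flow}) scaled so that each $x_f\in X_F$ sends $\|P^{(2)}_{L'}(x_f)\|_1=O(\alpha/\uptau)$ units. By Theorem~\ref{theorem:merge-phase_part1}, $X_F$ $\Omega(1/\log n)$-expands in the subdivision graph, so by Remark~\ref{remark:mu_expansion_as_problem} the unscaled all-to-all on $X_F$ is $\Omega(1/\log n)$-respected by $G'[S']$ (the extra boundary split vertices that $G'[S']$ has over $G[S]'$ only raise cut capacities, while the demand is supported on $X_F$ and unchanged). After the $O(\alpha/\uptau)$ scaling, $G'[S']$ $\Omega(\uptau/(\alpha\log n))$-respects $Q_2$. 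Setting $P^{(3)}_{L'}\defeq(P^{(2)}_{L'})^{\uparrow Q_2}$, Remark~\ref{remark:special_flows} yields that $P^{(3)}_{L'}$ is uniform on $X_F$, and since total mass of every commodity is preserved along $P^{(1)}_{L'}\to P^{(2)}_{L'}\to P^{(3)}_{L'}$ (Fact~\ref{fact:basic_flow_state}), its common value at $x_f\in X_F$ equals $\tfrac{1}{|X_F|}\sum_{v\in L'}P^{(1)}_{L'}(v)$, proving item~(1). Finally, applying Fact~\ref{fact:extended_flow_problems} to $P^{(1)}_{L'}-P^{(3)}_{L'}=(P^{(1)}_{L'}-P^{(2)}_{L'})+(P^{(2)}_{L'}-P^{(3)}_{L'})$ gives item~(2) at the weaker of the two rates, namely $\Omega(\uptau/(\alpha\log n))$.

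The main obstacle I anticipate is managing the per-source rescaling cleanly: because $\mu_\uptau$ equals $\uptau$ on boundary split nodes but $1$ elsewhere, we must rescale each source $x_y$ individually (with potentially different factors, all bounded by $O(\alpha/\uptau)$), and one must verify that this pointwise rescaling keeps the congestion of $Q_1$ at $O(\alpha/\uptau)$ overall rather than blowing up further. A smaller but related subtlety is that Theorem~\ref{theorem:merge-phase_part1} provides expansion of $X_F$ inside $G[S]'$, so we must justify that the resulting ``respect'' bound transfers to the slightly larger graph $G'[S']$; here the key observation is that the extra split vertices of boundary edges only contribute additional cut capacity while the demand stays supported on $X_F$.
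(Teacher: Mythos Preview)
Your proposal is correct and follows essentially the same two-step approach as the paper's proof: first route the mass on $X_Y\cap L'$ to $X_F$ via a $O(\alpha/\uptau)$-scaling of the flow from Property~\ref{prop-2} of Theorem~\ref{theorem:merge-phase-part2}, then spread uniformly on $X_F$ via a scaled all-to-all problem using the expansion guarantee of Theorem~\ref{theorem:merge-phase_part1}, combining the two via Fact~\ref{fact:extended_flow_problems}. You are in fact more explicit than the paper about the two subtleties you flag (the per-source rescaling and the passage from $G[S]'$ to $G'[S']$), both of which the paper handles only implicitly.
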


    \begin{proof}
        Initially, Invariant~\ref{def:invariant}\eqref{inv-3} gives that every boundary split-node $x_b\in X_B$ in the graph $G'[S']$ has load at most $\alpha$ in $P^{(1)}_{S'}$, while vertices of $(X_F\cup X_Y)\setminus X_B$ may have load at most $2\alpha$ (each of these vertices is a boundary node of two of the sub-clusters $\{S_i\}_{i=1}^t$, so two sub-clusters may have placed a load of $\alpha$ on it). That is, $\sum_{k\in \K}{|P^{(1)}_{S'}(x_b, k)|}\le \alpha$ for boundary nodes $x_b\in X_B$, and $2\alpha$ for nodes in $(X_F\cup X_Y)\setminus X_B$.  Due to Invariant~\ref{def:invariant}\eqref{inv-1}, $P^{(1)}_{L'}$ is supported only on $L'$.

        The first step is to move the demands of $P^{(1)}_{L'}$ from $X_Y\cap L'$ to $X_F$. Specifically, each vertex $x_y\in X_Y\cap L'$ 
        will send  $|P^{(1)}_{L'}(x_y, k)| = |\sum_{i=1}^\ell{P_{L'_i}(x_y, k)}|$ units (these units may be negative) of commodity $k$ to $X_F$, for every $k\in \K$.
        As mentioned above, this is at most $2\alpha$ for each vertex.  \
        
        We use Property~\ref{prop-2} of Theorem~\ref{theorem:merge-phase-part2} for the set $Y$, and ``move'' the demands that reside at these vertices to vertices of $X_F$. Property~\ref{prop-2} says that there exists a flow which can be routed with congestion $O(1)$ in $G'[S']$ such that every node $x_y\in X_Y$ sends $\mu_\uptau(x_y)$ units of mass,
        which is at least $\uptau$ (if $y\in B$, then this is tight, otherwise $\mu_\uptau(x_y)=1 \ge \uptau$), and every node in $X_F$ receives at most $2$ units of mass. Every such $x_y\in X_Y$ has at most $2\alpha$ units of mass that it should send. Therefore, by scaling the given flow we can route this mass to nodes in $X_F$ with congestion $O(\frac{\alpha}{\uptau})$. 
        By Fact \ref{fact:congestion_implies_respecting_cut} we get that this routing is $\Omega(\frac{\uptau}{\alpha})$-respected by $G'[S']$. In particular, we can replace the demand state $P^{(1)}_{L'}$ with the updated demand state $P^{(2)}_{L'}$ after this routing, such that $G'[S']$ $\Omega(\frac{\uptau}{\alpha})$-respects $P^{(1)}_{L'} - P^{(2)}_{L'}$. 
        To be precise, we use the fact that $P^{(1)}_{L'}-P^{(2)}_{L'}$ is a valid demand state (recall Fact~\ref{fact:basic_flow_state}). 

        After this step, any node in $X_F$ has load of at most $2\cdot\frac{2\alpha}{\uptau} + 2\alpha \le \frac{6\alpha}{\uptau}$ units in $P^{(2)}_{L'}$ (as we scaled the flow by $\frac{2\alpha}{\uptau}$ and there was a load of at most $2\alpha$ at the beginning). 
        By Lemma~\ref{lemma:special_flows}(2), $P^{(2)}_{L'}$ is supported only on the vertices $X_F$ (since $P^{(1)}_{L'}$ was supported on $X_F\cup X_Y$ and we moved the demands on $X_Y\cap L'$).
        Note that $P^{(2)}_{L'}$ is not supported only on $L'$, as some vertices of $X_F$ may lie outside of $L'$.\footnote{This can occur when there exists $e=(u,v)\in F\cap Y$ with $u,v\in R$.}

        Next, we update $P^{(2)}_{L'}$ by uniformly
        spreading the demand vectors of $P^{(2)}_{L'}$  on $X_F$. For this we use the \emph{scaled all-to-all flow problem} of Lemma~\ref{lemma:special_flows}(1),
        which is defined with respect to the current demand state: each $u\in X_F$ sends $\frac{\norm{P^{(2)}_{L'}(u)}_1}{|X_F|}$ units of its own commodity to every $v\in X_F$. Let $Q_F$ be the corresponding demand matrix and let $N_F \defeq \frac{1}{|X_F|}\sum_{v\in X_F}{P^{(2)}_{L'}(v)}$ be the average demand vector that Lemma~\ref{lemma:special_flows}(1) yields, that is, $(P^{(2)}_{L'})^{\uparrow Q_F}(v) = N_F$, for all $v\in X_F$.
        Let $P^{(3)}_{L'} \defeq (P^{(2)}_{L'})^{\uparrow Q_F}$.
         
        Let $\tilde{Q}_F$ be the all-to-all demand matrix on $X_F$ as in  Definition~\ref{def:all_to_all_flow}. That is, $\tilde{Q}_F(u,v) = \frac{1}{|X_F|}$ for $u,v\in X_F$ and $0$ otherwise.
        By Theorem~\ref{theorem:merge-phase_part1}, the set $X_F$ $\Omega(\frac{1}{\log n})$-expands in $G'[S']$. Thus, by Remark~\ref{remark:mu_expansion_as_problem}, $\tilde{Q}_F$
        is $\Omega(\frac{1}{\log n})$-respected by $G'[S']$. 
        Since the rows of $Q_F$ differ from the rows of $\tilde{Q}_F$ by a factor of
        $O(\frac{\alpha}{\uptau})$, it follows, using Fact \ref{fact:basic_flow_state}(2), that the valid demand state $P^{(2)}_{L'} - P^{(3)}_{L'}$ is  $\Omega(\frac{\uptau}{\alpha \log n})$-respected by $G'[S']$. Using Fact \ref{fact:extended_flow_problems}, we get that $P^{(1)}_{L'} - P^{(3)}_{L'}$ is also $\Omega(\frac{\uptau}{\alpha \log n})$-respected by $G'[S']$.

        Finally, due to Fact~\ref{fact:basic_flow_state}(1), $P^{(1)}_{L'} - P^{(2)}_{L'}$ is a valid demand state and therefore:

        \[
            N_F = 
            \frac{1}{|X_F|}\sum_{v\in X_F}{P^{(2)}_{L'}(v)}
            \numeq{2}
            \frac{1}{|X_F|}\sum_{v\in V'}{P^{(2)}_{L'}(v)}
            =
            \frac{1}{|X_F|}\sum_{v\in V'}{P^{(1)}_{L'}(v)}
            \numeq{4}
            \frac{1}{|X_F|}\sum_{v\in L'}{P^{(1)}_{L'}(v)},
        \]
        where Equality $(2)$ holds since $P^{(2)}_{L'}$ is supported on $X_F$, and Equality $(4)$ holds since $P^{(1)}_{L'}$ is supported on $L'$.
        \end{proof}

        Let $P^{(3)}_{L'}$ be the demand state from Lemma~\ref{theorem:charge-in-S_part1}.
        All demands in $P^{(3)}_{L'}$ are distributed uniformly among vertices in $X_F$. Hence, positive demand units and negative demand units of the same commodity are ``paired-up'' and cancel out.
        
        Denote by $B^R\subseteq E(S, V\setminus S)$ the set of boundary edges that are incident to a node in $R$. Note that $B^R$ may not be all of $B$, as some boundary edges can be incident to a node in $L$, as long as these boundary edges are also in $Y$. Let $\tilde{Y} \defeq Y \setminus B^R$ (see Figure~\ref{fig:BR_Y_tildeY_explanation}).

     \begin{figure}[t]
        \begin{center}
            \includegraphics[scale=0.6]{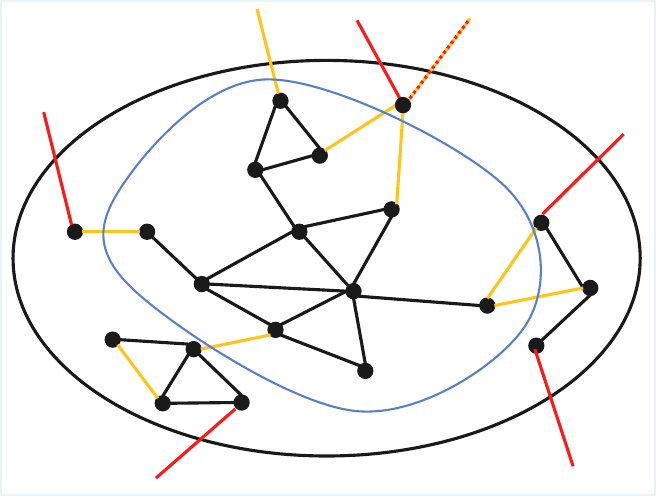}
        \end{center}
        \caption{Visualization of the sets $L,R,B^R, Y,$ and $\tilde{Y}$ in a cluster $S$. The core $L$ is the region outlined in blue. Red edges correspond to $B^R$, and golden edges correspond to $Y$. The golden-red edge lies in $B^R\cap Y$. The set $\tilde{Y}$ is defined as $\tilde{Y} \defeq Y \setminus B^R$. }
        \label{fig:BR_Y_tildeY_explanation}
    \end{figure}

        A key observation is that since $L$ is node in the hierarchical decomposition tree $T$, and the original demand state $P$ is respected by $T$, the amount of mass that has to cross the cut $(L, V\setminus L)$ is bounded by the capacity of this cut. This is the reason we included $L$ in the hierarchical decomposition tree. We formalize this observation in the following claim: 
        \begin{claim}
        \label{claim:opt_constraint}
            Let $X\subseteq V'$ and let $\tilde{P}$ be a demand state that is supported on $X$ and distributed uniformly on it. Let $P$ be a demand state that is $1$-respected by $T$. Let $L\in V_T$ be a cluster in the hierarchical decomposition and assume that $\sum_{v\in X}{\tilde{P}(v)} = \sum_{v\in L}{P(v)}$. Then, $\sum_{v\in X}{\norm{\tilde{P}(v)}}_1 \le \capacity(L, V\setminus L)$.
        \end{claim}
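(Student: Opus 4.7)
The plan is to unwind the left-hand side using the uniformity of $\tilde{P}$, identify the resulting aggregate with $\dem_P(L, V\setminus L)$, and then invoke the weight chosen for the tree edge above $L$ together with the hypothesis that $P$ is $1$-respected by $T$. The whole argument is essentially definitional; there is one key observation.

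The one slightly non-trivial step is the first. Because $\tilde{P}$ is supported on $X$ and uniformly distributed there, a single vector $\bar{N}\in\RR^{\mathcal{K}}$ satisfies $\tilde{P}(v) = \bar{N}$ for every $v\in X$. Hence every summand in $\sum_{v\in X}\norm{\tilde{P}(v)}_1$ is literally the \emph{same} vector, so no cancellation is hidden when passing from a sum of $\ell_1$-norms to the $\ell_1$-norm of a sum:
\[
    \sum_{v\in X}\norm{\tilde{P}(v)}_1 \;=\; |X|\cdot\norm{\bar{N}}_1 \;=\; \Bigl\|\sum_{v\in X}\tilde{P}(v)\Bigr\|_1.
\]
This is the only place where the uniformity assumption is used, and it is the only step in the argument that is not purely notational.

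I will then substitute the hypothesis $\sum_{v\in X}\tilde{P}(v) = \sum_{v\in L}P(v)$ to rewrite the right-hand side as $\sum_{k\in\mathcal{K}}\bigl|\sum_{v\in L}P(v,k)\bigr|$, which, since $P$ is the valid demand state from the setting of Corollary~\ref{cor:equiv-def-cut2}, equals $\dem_P(L,V\setminus L)$ by Definition~\ref{def:valid_flow_state}. Finally, since $L\in V_T$ and the tree edge between $L$ and its parent was assigned weight $\capacity_G(L, V\setminus L)$, the assumption that $P$ is $1$-respected by $T$, applied to the $T$-cut induced by that edge, gives $\dem_P(L,V\setminus L)\le\capacity_G(L, V\setminus L)$. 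Chaining the three steps concludes the proof. I do not foresee any obstacle — everything beyond the uniformity observation is a definitional identification.
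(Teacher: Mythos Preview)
Your proposal is correct and follows essentially the same argument as the paper's proof: use uniformity to collapse $\sum_{v\in X}\norm{\tilde{P}(v)}_1$ into $\bigl\|\sum_{v\in X}\tilde{P}(v)\bigr\|_1$, substitute the hypothesis to identify this with $\dem_P(L,V\setminus L)$, and then bound by $\capacity_G(L,V\setminus L)$ using that $L\in V_T$ and $P$ is $1$-respected by $T$. The only cosmetic difference is that the paper writes the uniformity step coordinate-wise as $\bigl|\sum_{v\in X}\tilde{P}(v,k)\bigr| = \sum_{v\in X}|\tilde{P}(v,k)|$ rather than via a common vector $\bar{N}$, but the content is identical.
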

        \begin{proof}
            Because $\tilde{P}$ is spread uniformly on $X$, we have for each commodity $k\in \K$:
            \[
                \left|\sum_{v\in X}{\tilde{P}(v, k)} \right| = \sum_{v\in X}{\left| \tilde{P}(v, k) \right|}.
            \]
            Therefore,
            \begin{align*}
                \sum_{v\in X}{\norm{\tilde{P}(v)}_1} &=
                \sum_{v\in X}{\sum_{k\in \K}{\left| \tilde{P}(v, k) \right|}} = \sum_{k\in \K}{\sum_{v\in X}{\left| \tilde{P}(v, k) \right|}} 
                \\
                &= \sum_{k\in \K}{\left|\sum_{v\in X}{\tilde{P}(v, k)}\right|} 
                = \sum_{k\in \K}{\left|\sum_{v\in L}{P(v, k)}\right|} = \dem_P(L, V\setminus L).
            \end{align*}
            $L$ corresponds to a node in $T$, and $P$ is $1$-respected by $T$. Therefore, we must have $\dem_P(L, V\setminus L) \le \capacity_G(L, V\setminus L)$, which gives the result.
        \end{proof}
                
        We can now prove the following corollary:
        \begin{corollary}
        \label{cor:obs_F}
            The total load on the nodes $X_F$ in $P^{(3)}_{L'}$ satisfies
            \[
                \sum_{v\in X_F}{\norm{P^{(3)}_{L'}(v)}_1} \le \capacity_G(\tilde{Y}).
            \]
        \end{corollary}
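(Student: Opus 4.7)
The plan is to apply Claim~\ref{claim:opt_constraint} to $P^{(3)}_{L'}$ with $X = X_F$ and $L$ the cluster in $T$, and then argue that $\capacity_G(L, V\setminus L) \le \capacity_G(\tilde Y)$.

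First I would verify the hypotheses of Claim~\ref{claim:opt_constraint}. Lemma~\ref{theorem:charge-in-S_part1} tells us that $P^{(3)}_{L'}$ is supported on $X_F$ and spread uniformly there, with every demand vector equal to $\frac{1}{|X_F|}\sum_{v\in L'} P^{(1)}_{L'}(v)$. Since $P^{(1)}_{L'}=\sum_{i=1}^{\ell} P_{L'_i}$ and each $P_{L'_i}$ satisfies Invariant~\ref{def:invariant}\eqref{inv-2}, we get for every commodity $k\in\mathcal{K}$
\[
\sum_{v\in X_F} P^{(3)}_{L'}(v,k)
= \sum_{v\in L'} P^{(1)}_{L'}(v,k)
= \sum_{i=1}^{\ell}\sum_{v\in L'_i} P_{L'_i}(v,k)
= \sum_{i=1}^{\ell}\sum_{v\in L_i} P(v,k)
= \sum_{v\in L} P(v,k),
\]
so the condition $\sum_{v\in X_F} P^{(3)}_{L'}(v) = \sum_{v\in L} P(v)$ of Claim~\ref{claim:opt_constraint} holds for the cluster $L\in V_T$. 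Since $P$ is $1$-respected by $T$, Claim~\ref{claim:opt_constraint} yields
\[
\sum_{v\in X_F}\norm{P^{(3)}_{L'}(v)}_1 \;\le\; \capacity_G(L, V\setminus L).
\]

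The remaining step, and the main obstacle, is to show that every edge of $E_G(L, V\setminus L)$ lies in $\tilde Y$, which gives $\capacity_G(L, V\setminus L)\le \capacity_G(\tilde Y)$. Let $e=(u,v)$ with $u\in L$ and $v\notin L$, and consider its split vertex $x_e\in V'$. I would split into two cases depending on whether $v\in R$ or $v\notin S$. If $v\in R$, then by definition $v$ reaches some $x_b\in X_B\setminus X_Y$ in $G'[S']\setminus X_Y$; if $x_e\notin X_Y$, concatenating the edge $u$--$x_e$--$v$ with this path would yield an $L$-to-$X_B$ path in $G'[S']\setminus X_Y$, contradicting Property~\ref{prop-2}\,(4) of Theorem~\ref{theorem:merge-phase-part2} (``nodes in $L$ are separated from $X_B$ in $G'[S']\setminus X_Y$''). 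Hence $x_e\in X_Y$, i.e., $e\in Y$; furthermore $e$ is internal to $S$, so $e\notin B^R$ and thus $e\in \tilde Y$. If $v\notin S$, then $e$ is a boundary edge of $S$ and $x_e\in X_B$; since $u\in L$ and $u$--$x_e$ is an edge in $G'[S']$, the same separation property forces $x_e\in X_Y$, so $e\in Y$. Moreover $e$ is incident to $u\in L$ but not to any vertex of $R$, so $e\notin B^R$ and again $e\in\tilde Y$.

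Combining the two bounds,
\[
\sum_{v\in X_F}\norm{P^{(3)}_{L'}(v)}_1 \;\le\; \capacity_G(L,V\setminus L)\;\le\;\capacity_G(\tilde Y),
\]
which is the claimed inequality.
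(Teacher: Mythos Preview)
Your proof is correct and follows the same approach as the paper: apply Claim~\ref{claim:opt_constraint} with $X=X_F$ and the cluster $L$, then bound $\capacity_G(L,V\setminus L)\le\capacity_G(\tilde Y)$. The paper simply asserts the last containment $E_G(L,V\setminus L)\subseteq\tilde Y$ with a reference to Figure~\ref{fig:BR_Y_tildeY_explanation}, whereas you spell out the case analysis; both are fine. One cosmetic point: your reference ``Property~\ref{prop-2}\,(4)'' is confusing since \texttt{prop-2} labels Property~2 of Theorem~\ref{theorem:merge-phase-part2}; you mean Property~(4) of that theorem.
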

        \begin{proof}
            For each commodity $k\in \K$ and $i\in\{1,\ldots,\ell\}$, using Invariant~\ref{def:invariant}\eqref{inv-2}, we have:
            \[
                \sum_{v\in L'_i}{P_{L'_i}(v, k)} = \sum_{v\in L_i}{P(v, k)}.
            \]
            Therefore, 
            \[
                \sum_{v\in V'}{P^{(1)}_{L'}(v, k)}
                =
                \sum_{v\in L'}{P^{(1)}_{L'}(v, k)}
                = \sum_{v\in L}{P(v, k)},
            \]
            where the first equality holds since $P^{(1)}$ is supported on $L'$. Since the updates that moved the demand vectors from $X_Y$ to $X_F$ and the all-to-all routing do not change the total demand (\ie, the difference $P^{(3)}_{L'} - P^{(1)}_{L'}$ is a valid demand state), we also get for each commodity $k\in \K$ that
            \[
                \sum_{v\in X_F}{P^{(3)}_{L'}(v, k)} = \sum_{v\in V'}{P^{(3)}_{L'}(v, k)} =
                \sum_{v\in V'}{P^{(1)}_{L'}(v, k)}=
                \sum_{v\in L}{P(v, k)},
            \]
            where the first equality is because $P^{(3)}_{L'}$ is supported on $X_F$. Therefore, we get that the demand state vectors satisfy $\sum_{v\in X_F}{P^{(3)}_{L'}(v)} = \sum_{v\in L}{P(v)}$.

            Applying Claim \ref{claim:opt_constraint} with $X \defeq X_F, \tilde{P}\defeq P^{(3)}_{L'}$ gives that $\sum_{v\in X_F}{\norm{P^{(3)}_{L'}(v)}_1} \le \capacity_G(L, V\setminus L)$.
            Since $\tilde{Y}$ is a super-set of the edges that leave the core $L$ in $G$ (see Figure~\ref{fig:BR_Y_tildeY_explanation}), we must have $\capacity_G(L, V\setminus L) \le \capacity_G(\tilde{Y})$,  which gives the result.
        \end{proof}

    Next, we distribute the remaining demands in $P^{(3)}_{L'}$ of nodes in $X_F$ to nodes in $X_{\tilde{Y}}$.

    \begin{lemma}\label{theorem:charge-in-S_part2}
        We can replace $P^{(3)}_{L'}$ by a demand state $P^{(4)}_{L'}$ which satisfies the following:
        \begin{enumerate}
            \item 
            $P^{(4)}_{L'}$
            is supported on $X_{\tilde{Y}}$. Moreover, 
            $\norm{P^{(4)}_{L'}(x_e)}_1 \le 1$, for every $x_e \in  X_{\tilde{Y}}$.
            \item $G'[S']$ $\Omega(\frac{\uptau}{\log n})$-respects $P^{(3)}_{L'} - P^{(4)}_{L'}$. Moreover, $P^{(3)}_{L'} - P^{(4)}_{L'}$ is a valid demand state.
        \end{enumerate}
    \end{lemma}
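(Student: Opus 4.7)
Let $N_F$ denote the common demand vector $P^{(3)}_{L'}(x_f)$ for $x_f\in X_F$ (which equals $\frac{1}{|X_F|}\sum_{v\in L'}P^{(1)}_{L'}(v)$ by Lemma~\ref{theorem:charge-in-S_part1}), and let $D\defeq|X_F|\,\norm{N_F}_1=\sum_{x_f\in X_F}\norm{P^{(3)}_{L'}(x_f)}_1$. I plan to define $P^{(4)}_{L'}$ to spread the total $X_F$-mass uniformly across $X_{\tilde{Y}}$:
\[
    P^{(4)}_{L'}(x_e) \;\defeq\; \frac{|X_F|}{|X_{\tilde{Y}}|}\,N_F \qquad \text{for } x_e\in X_{\tilde{Y}},
\]
and $P^{(4)}_{L'}(v)\defeq 0$ elsewhere. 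By Corollary~\ref{cor:obs_F}, $D\le \capacity_G(\tilde{Y})=|X_{\tilde{Y}}|$, so each $x_e\in X_{\tilde{Y}}$ carries load $\norm{P^{(4)}_{L'}(x_e)}_1=D/|X_{\tilde{Y}}|\le 1$, which is property~(1). Because $\sum_{x_y\in X_{\tilde{Y}}}P^{(4)}_{L'}(x_y)=|X_F|N_F=\sum_{x_f\in X_F}P^{(3)}_{L'}(x_f)$, the difference $P^{(3)}_{L'}-P^{(4)}_{L'}$ is a valid demand state.

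For property~(2), I plan to exhibit a multi-commodity routing of $P^{(3)}_{L'}-P^{(4)}_{L'}$ in $G'[S']$ with congestion $O(\log n/\uptau)$; Fact~\ref{fact:congestion_implies_respecting_cut} then yields the required $\Omega(\uptau/\log n)$-respect factor. The core of the routing is the reverse of the flow guaranteed by Property~\ref{prop-2} of Theorem~\ref{theorem:merge-phase-part2}, restricted to paths ending in $X_{\tilde{Y}}$ and scaled by $D/\mu_\uptau(X_{\tilde{Y}})$: using $\mu_\uptau(X_{\tilde{Y}})\ge \uptau|X_{\tilde{Y}}|\ge\uptau D$, this scaled flow carries mass from $X_F$ (distributed as $r_f D/\mu_\uptau(X_{\tilde{Y}})$ per source $x_f$, where $r_f\le 2$ is the amount the unscaled flow deposited at $x_f$) to $X_{\tilde{Y}}$ (distributed as $\mu_\uptau(x_y) D/\mu_\uptau(X_{\tilde{Y}})$ per sink $x_y$) with congestion at most $2D/\mu_\uptau(X_{\tilde{Y}})=O(1/\uptau)$.

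The two flanking pieces reconcile this non-uniform source/sink pattern with the uniform patterns of $P^{(3)}_{L'}$ and $P^{(4)}_{L'}$. On the $X_F$ side I would use the $\Omega(1/\log n)$-expansion of $X_F$ in $G'[S']$ from Theorem~\ref{theorem:merge-phase_part1} to route the redistribution from uniform $D/|X_F|$ to the $r_f$-weighted pattern; combined with the flow-cut gap and the observation $D/|X_F|=O(1/\uptau)$, this produces a flow of congestion $O(\log n/\uptau)$. On the $X_{\tilde{Y}}$ side I would first derive that $X_{\tilde{Y}}$ itself $\Omega(\uptau/\log n)$-expands in $G'[S']$: apply Lemma~\ref{lemma:separator-general} with $A=X_F$ (expanding at rate $\phi=\Omega(1/\log n)$), $C=X_{\tilde{Y}}$, and the flow from Property~\ref{prop-2} rescaled so that each $x_y\in X_{\tilde{Y}}$ emits exactly one unit (with parameters $a,c=O(1/\uptau)$), and appeal to Fact~\ref{fact:subset_expands}; this expansion lets me route the $\mu_\uptau$-to-uniform redistribution on $X_{\tilde{Y}}$ with congestion $O(\log n/\uptau)$. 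Summing the three sub-flows via Fact~\ref{fact:extended_flow_problems} then gives overall congestion $O(\log n/\uptau)$, as desired.

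The main obstacle I anticipate is controlling the constants in the $X_{\tilde{Y}}$-expansion step. The ``one unit per source'' hypothesis of Lemma~\ref{lemma:separator-general} forces a $1/\uptau$-scaling that inflates both its congestion parameter $c$ and its per-sink load parameter $a$ to $O(1/\uptau)$, and it is only through the identities $\mu_\uptau(X_{\tilde{Y}})\ge\uptau|X_{\tilde{Y}}|$ and $|X_F|\ge\mu_\uptau(X_{\tilde{Y}})/2$ (the latter coming from the $X_F$-sink capacity in Property~\ref{prop-2}) that the resulting bound $\Omega(\phi/(a+1+c\phi))$ collapses to the target $\Omega(\uptau/\log n)$ rather than the weaker $\Omega(\uptau^2/\log n)$; a careful accounting of these relations is needed throughout the composition.
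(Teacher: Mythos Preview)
Your definition of $P^{(4)}_{L'}$ and the verification of property~(1) via Corollary~\ref{cor:obs_F} are correct and match the paper.

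The gap is in property~(2). You commit to exhibiting an \emph{actual routing} of $P^{(3)}_{L'}-P^{(4)}_{L'}$ with congestion $O(\log n/\uptau)$ and then invoking Fact~\ref{fact:congestion_implies_respecting_cut}. But two of your three sub-flows (the redistribution on $X_F$ and the redistribution on $X_{\tilde{Y}}$) are justified only by \emph{expansion} (a cut/respect guarantee), and you explicitly invoke the flow--cut gap to turn that into a routing. The flow--cut gap (Theorem~\ref{thm:flow-cut-gap}) costs a multiplicative $\log n$: an $\Omega(1/\log n)$-respected all-to-all on $X_F$ is only guaranteed to be routable with congestion $O(\log^2 n)$, not $O(\log n)$. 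After scaling by the $O(1/\uptau)$ per-node mass, your redistribution congestion is $O(\log^2 n/\uptau)$, which yields only $\Omega(\uptau/\log^2 n)$-respect---off by a $\log n$ factor from the claim. The same loss occurs on the $X_{\tilde{Y}}$ side.

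The paper sidesteps this entirely by never routing: it stays in the respect framework. Observe that the demand matrix ``each $x_y\in X_{\tilde{Y}}$ sends one unit to a uniform distribution over $X_F$'' is $\Omega(\uptau/\log n)$-respected by $G'[S']$: the first hop is the Property~\ref{prop-2} flow scaled by at most $1/\uptau$ (routable, hence $\Omega(\uptau)$-respected via Fact~\ref{fact:congestion_implies_respecting_cut}), and the second hop is the all-to-all on $X_F$ scaled by the $O(1/\uptau)$ load each $x_f$ now carries, which is $\Omega(\uptau/\log n)$-respected directly by the expansion of $X_F$ (Remark~\ref{remark:mu_expansion_as_problem})---no flow--cut gap needed. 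Reversing this demand matrix and scaling by $D/|X_{\tilde{Y}}|\le 1$ realizes exactly $P^{(3)}_{L'}-P^{(4)}_{L'}$. This makes your three-piece decomposition, the derivation that $X_{\tilde{Y}}$ expands, and all the constant-chasing you flagged as the ``main obstacle'' unnecessary.
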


    \begin{proof}
        By Property~\ref{prop-2} of Theorem~\ref{theorem:merge-phase-part2} and since $X_F$ $\Omega(\frac{1}{\log n})$-expands in $G'[S']$, the demand matrix which ``sends'' one unit of flow from every $x_y\in X_{\tilde{Y}}\subseteq X_Y$  to a uniform distribution over $X_F$ is $\Omega(\frac{\uptau}{\log n})$-respected by $G'[S']$. 
        Indeed, as done in Lemma~\ref{theorem:charge-in-S_part1}, we can first route the demands from $X_{\tilde{Y}}$ to $X_F$ and then spread the demands uniformly. By reversing these demands movements, we can distribute the demands of $P^{(3)}_{L'}$ remaining on $X_F$ uniformly on the nodes in $X_{\tilde{Y}}$. 
        By Corollary~\ref{cor:obs_F}, each node $x_e\in X_{\tilde{Y}}$ receives at most one unit of demand. The new demand state $P^{(4)}_{L'}$ satisfies that $P^{(3)}_{L'} - P^{(4)}_{L'}$ is $\Omega(\frac{\uptau}{\log n})$-respected by $G'[S']$. 
    \end{proof}

    We are ready to prove Theorem~\ref{theorem:charge-in-S}.

    \begin{proof}[Proof of Theorem~\ref{theorem:charge-in-S}]
        Denote $P^{(1)}_{R'} \defeq \sum_{i=1}^r{P_{R'_i}}$.
        Observe that $P^{(1)}_{S'} = P^{(1)}_{L'} + P^{(1)}_{R'}$.
        We apply Lemmas~\ref{theorem:charge-in-S_part1}, and~\ref{theorem:charge-in-S_part2} consecutively.
        Let $P^{(4)}_{S'} = P^{(4)}_{L'} + P^{(1)}_{R'}$ be the resulting demand state. Observe that $P^{(4)}_{S'}$ is not supported on any node in $X_F\setminus X_Y$. Additionally, $P^{(4)}_{L'}$ is supported only on $X_{\tilde{Y}}$.
        
        To summarize the state so far, note that only vertices in $X_B\cup X_Y$ have non-zero demands. The vertices of $X_B\setminus X_Y \subseteq X_{B^R}$
        have load at most $\alpha$ (this load comes from $P^{(1)}_{R'}$). The vertices of $X_Y\setminus X_B$ have load at most $2\alpha + 1$: at most $2\alpha$ from $P^{(1)}_{R'}$ and at most $1$ unit from $P^{(4)}_{L'}$. Crucially, the vertices in $X_B\cap X_Y$ also have at most $\alpha$ units of demand in $P^{(4)}_{S'}$. Indeed, if $x_e\in X_B \cap X_Y$ is incident to a node $v\in R$ (i.e., $x_e\in X_{B^R}$), then $x_e$ carries at most $\alpha$ units from $P^{(1)}_{R'}$ and no additional units from $P^{(4)}_{L'}$. Otherwise, $x_e$ is incident to a node $v\in L$, so $x_e\in X_{\tilde{Y}}$. Therefore, $x_e$ does not have any demand from $P^{(1)}_{R'}$, and has at most $1\le \alpha$ units from $P^{(4)}_{L'}$.
        
        \begin{claim}
        \label{claim:Y-to-B}
            We can route a flow in $G'[S']$ which moves the demands in $P^{(4)}_{S'}$ from $X_Y\setminus X_B$ to the boundary nodes $X_B$, such that each node $x_b\in X_B$ receives at most $6\alpha\uptau$ units of demand (in addition to at most $\alpha$ units of existing load already on these vertices in $P^{(4)}_{S'}$). This flow can be routed with congestion $O(\alpha)$.
        \end{claim}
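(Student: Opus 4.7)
The plan is to invoke the boundary-routing flow from Property \ref{prop-1} of Theorem \ref{theorem:merge-phase-part2} and then scale it so that it can carry the current load on $X_Y \setminus X_B$. Let $f$ denote that flow: it is routable in $G'[S']$ with congestion at most $2$, every $x_y \in X_Y$ sends out $\mu_\uptau(x_y)$ units, and every $x_b \in X_B$ receives at most $2\uptau$ units. Since $\mu_\uptau(x_y) = 1$ for every $x_y \in X_Y \setminus X_B$ (only boundary split nodes are down-weighted to $\uptau$), I would restrict $f$ to the sub-flow $f'$ whose paths originate at $X_Y \setminus X_B$; this still has congestion at most $2$, ships exactly one unit of flow out of each $x_y \in X_Y \setminus X_B$, and still delivers at most $2\uptau$ units to any $x_b \in X_B$.

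The key step is to multiply $f'$ by the uniform scaling factor $2\alpha + 1$. By the load bookkeeping made explicit just before the statement of the claim, $\norm{P^{(4)}_{S'}(x_y)}_1 \le 2\alpha + 1$ for every $x_y \in X_Y \setminus X_B$, so the scaled sub-flow is capacious enough to carry away every demand vector sitting on $X_Y \setminus X_B$. Concretely I would path-decompose the scaled flow, and for each $x_y \in X_Y \setminus X_B$ split the components of $P^{(4)}_{S'}(x_y)$ along its outgoing flow paths in proportion to their throughput; this is exactly the update operation of Definition \ref{def:augment_state} applied to the resulting multi-commodity demand matrix, and it moves all mass residing on $X_Y \setminus X_B$ into $X_B$.

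The three bounds in the claim then fall out of the scaling. The congestion is at most $2(2\alpha+1) = O(\alpha)$. Each $x_b \in X_B$ absorbs at most $(2\alpha+1)\cdot 2\uptau \le 6\alpha \uptau$ additional units (using $\alpha \ge 1$), on top of the prior load of at most $\alpha$ already present at $x_b$ in $P^{(4)}_{S'}$ that was recorded before the claim. I do not anticipate a genuine obstacle: Theorem \ref{theorem:merge-phase-part2} was engineered precisely so that the sink capacities of the boundary-directed flow scale with $\mu_\uptau$, so that this final boundary step inflates the per-edge boundary load by only a $\Theta(\uptau)$ factor per level and thus keeps the eventual quality loss at the $O(\log n/\uptau)$ target discussed in Section \ref{subsec:first-step-summary}; the only care needed is the uniform choice of the scaling factor, which is controlled by the $2\alpha+1$ load bound on $X_Y \setminus X_B$.
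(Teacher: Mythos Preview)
Your proposal is correct and follows essentially the same approach as the paper: invoke the flow from Property~\ref{prop-1} of Theorem~\ref{theorem:merge-phase-part2}, restrict to sources in $X_Y\setminus X_B$ (where $\mu_\uptau=1$), and scale by $2\alpha+1\le 3\alpha$ to get congestion $O(\alpha)$ and per-boundary-node inflow at most $6\alpha\uptau$. The paper's proof is a one-line version of exactly this argument.
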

        \begin{proof}
            Property~\ref{prop-1} of Theorem~\ref{theorem:merge-phase-part2} promises a flow routable in $G'[S']$ with congestion $O(1)$, which sends $1$ unit of flow from every node $x_y\in X_Y\setminus X_B$, such that every node $x_b\in X_B$ receives at most $2\uptau$ units of flow. To get the flow for Claim \ref{claim:Y-to-B}, we scale this flow by $2\alpha + 1 \le 3\alpha$.
        \end{proof}
        We can therefore update $P^{(4)}_{S'}$ with the routing of Claim~\ref{claim:Y-to-B} to get the new demand state $P^{(5)}_{S'}$, such that $P^{(4)}_{S'}-P^{(5)}_{S'}$ is $\Omega(\frac{1}{\alpha})$-respected by $G'[S']$.
        
        We can now wrap up the proof of Theorem~\ref{theorem:charge-in-S}. The resulting demand state $P^{\mathrm{after}}_{S'}$ in Theorem~\ref{theorem:charge-in-S} is $P^{(5)}_{S'}$. It satisfies Invariant~\ref{def:invariant} with $\alpha' = \alpha\cdot(1 + 6\uptau)$. Indeed, by Claim~\ref{claim:Y-to-B}, the demands of $P^{(5)}_{S'}$ only reside on nodes of $X_B$ and the demand on each node is bounded by $\alpha'$.
        
        Finally, by Fact \ref{fact:extended_flow_problems}, we can see that $G'[S']$ $\gamma$-respects $P^{(1)}_{S'}-P^{(5)}_{S'}$, where
        \[
            \gamma = \frac{1}{O\!\left(\frac{\alpha\log n}{\uptau}\right) +
            O\!\left(\frac{\log n}{\uptau}\right) + 
            O\!\left(\alpha\right)} =
            \Omega\!\left(\frac{\uptau}{\alpha \log n}\right).
        \]
        This completes the proof of Theorem~\ref{theorem:charge-in-S}.
    \end{proof}

    \begin{remark}
        In the context of oblivious routing schemes, the construction of \cite{racke2014computing} was oblivious in the sense that the flow paths they construct between each pair of vertices $s, t$ were independent of the demands.

        Similarly, in our case, the charges on the edges of $(U,W)$ may be defined for each pair $s, t$, independently of the demands of $P$. The construction guarantees that for any valid demand state $P$, which is $\Omega(1)$-respected by $G$ (and thus by $T$), if we add up the ``oblivious'' charges incurred by each $s, t$ pair, scaled up by the demands of $P$, we will incur a charge of $O(\log^3 n)$ on each edge.        
    \end{remark}

    \section{Improved Tree Cut-Sparsifier}
    \label{section:improvement}
    In this section, we prove Theorem~\ref{theorem:final-improved}, establishing an improved tree cut-sparsifier of quality $O(\log^2 n \log \log n)$. 
    
    We begin by summarizing the results from Sections~\ref{section:merge-phase} and~\ref{section:charging-scheme}:
    \begin{theorem}
    \label{theorem:merge_phase}
        There is an algorithm ``merge-phase'' that given a cluster $S$ and $0<\uptau\le 1$, runs in $\tilde{O}(m)$ time and partitions $S$ into sub-clusters $S_1 ,\ldots,S_k$, such that:
        \begin{enumerate}
            \item $|S_i| \le \frac{2}{3} |S|$ for each $i$, and
            \item For every collection of demand states $\{P_{S'_i} \}_{i=1}^k$ that satisfy Invariant~\ref{def:invariant} with load $\alpha$, we can replace $P^{\mathrm{before}}_{S'} = \sum_{i=1}^{k} P_{S'_i}$ by a demand state $P^{\mathrm{after}}_{S'}$ that satisfies Invariant~\ref{def:invariant} with load 
            $\alpha' = \alpha\cdot(1 + c\cdot\uptau)$ for some constant $c > 0$, and
            \item $G'[S']$ $\Omega(\frac{\uptau}{\alpha\log n})$-respects $P^{\mathrm{before}}_{S'} - P^{\mathrm{after}}_{S'}$. Moreover, $P^{\mathrm{before}}_{S'} - P^{\mathrm{after}}_{S'}$ is a valid demand state.
        \end{enumerate}
    \end{theorem}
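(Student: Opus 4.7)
The plan is to assemble this theorem as a direct composition of the two-level partitioning from Section~\ref{section:merge-phase} together with the charging result Theorem~\ref{theorem:charge-in-S}. First, I would run \emph{merge-phase-1} (Theorem~\ref{theorem:merge-phase_part1}) on the induced subgraph $G[S]$ to obtain connected components $Z_1,\ldots,Z_z$, each of size at most $\tfrac{2}{3}|S|$, such that the split nodes $X_F$ of the inter-cluster edge set $F$ $\Omega(1/\log n)$-expand in $G[S]'$. Then I would run \emph{merge-phase-2} (Theorem~\ref{theorem:merge-phase-part2}) with the given parameter $\uptau$, using the weighting $\mu_\uptau$ on $G'[S']$, to obtain the partition $S = L \cup R$ induced by the cut edge set $Y$. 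The sub-clusters output by the algorithm are then defined as the non-empty intersections $L_j \defeq L \cap Z_j$ and $R_j \defeq R \cap Z_j$, listed as $S_1,\ldots,S_k$.

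For the size bound (property~1), note that every output sub-cluster is contained in some $Z_j$, so $|S_i| \le |Z_j| \le \tfrac{2}{3}|S|$. For the running time, both merge-phase-1 and merge-phase-2 run in $\tilde{O}(m)$ time where $m = |E(G[S]')|$, so the total cost is $\tilde{O}(m)$.

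For properties~2 and~3 (the demand-state guarantees), I would simply invoke Theorem~\ref{theorem:charge-in-S} on the same sub-cluster collection, whose hypotheses match exactly: the sub-clusters $\{S_i\}$ arise from applying Theorems~\ref{theorem:merge-phase_part1} and~\ref{theorem:merge-phase-part2} with parameter $\uptau$, and each $P_{S'_i}$ satisfies Invariant~\ref{def:invariant} with load $\alpha$. The resulting demand state $P^{\mathrm{after}}_{S'}$ obtained in Theorem~\ref{theorem:charge-in-S} satisfies Invariant~\ref{def:invariant} for $S$ with the advertised load $\alpha' = \alpha(1 + c\uptau)$, and $G'[S']$ $\Omega\!\left(\tfrac{\uptau}{\alpha \log n}\right)$-respects $P^{\mathrm{before}}_{S'} - P^{\mathrm{after}}_{S'}$, as required.

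There is no real obstacle, since every ingredient is already established earlier in the paper; the only thing to verify is that nothing in Theorem~\ref{theorem:charge-in-S}'s hypotheses is violated by how we enumerate sub-clusters. In particular, some $L_j$ or $R_j$ may be empty, in which case we simply omit them from the enumeration; this does not affect $P^{\mathrm{before}}_{S'}$ (empty sub-clusters contribute zero demand states) and preserves the disjoint partition $\bigcup_i S_i = S$. Thus the statement reduces, without further work, to the conjunction of Theorems~\ref{theorem:merge-phase_part1}, \ref{theorem:merge-phase-part2}, and~\ref{theorem:charge-in-S}.
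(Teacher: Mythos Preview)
Your proposal is correct and matches the paper's approach exactly: the paper introduces Theorem~\ref{theorem:merge_phase} with the words ``We begin by summarizing the results from Sections~\ref{section:merge-phase} and~\ref{section:charging-scheme},'' and gives no separate proof, since it is precisely the composition of Theorems~\ref{theorem:merge-phase_part1}, \ref{theorem:merge-phase-part2}, and~\ref{theorem:charge-in-S} that you describe. One implicit point worth keeping in mind (which your reduction inherits from Theorem~\ref{theorem:charge-in-S}) is that the argument in Corollary~\ref{cor:obs_F} uses that the core $L$ is a node of the hierarchical tree $T$ and that the original $P$ is $1$-respected by $T$; this is consistent with how Theorem~\ref{theorem:merge_phase} is later applied, since the merge phase contributes both the $(L,R)$ level and the $\{L_i,R_j\}$ level to $T$.
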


    In this section we show how to improve the quality of the resulting tree cut (and flow) sparsifier by a factor of $\Theta\left(\frac{\log n}{\log \log n}\right)$, thus proving Theorem \ref{theorem:final-improved}. 

    Recall from Section \ref{section:merge-phase} that we paid a factor of $\frac{1}{\uptau}$ in the quality of the tree cut-sparsifier, in order to accumulate a factor of just $1 + O(\uptau)$ to the load of each boundary edge as we move up the tree. This forced us to choose $\uptau=O(\frac{1}{\log n})$ to make the total load $O(1)$ in each step.
    We wish to avoid paying this additional factor of $\frac{1}{\uptau}=\Omega(\log n)$. Recall that even though each sub-cluster $S$ is a node in the hierarchical decomposition tree (and therefore the total amount of demand that has to leave it is at most the capacity of its boundary, because $P$ is respected by $T$), the distribution of demands on the boundary of $S$ is not uniform, so positive and negative demands of the same commodity do not cancel out.    
    
    We  amend this using a variation on the ``refinement phase'' technique of~\cite{bienkowski2003practical, harrelson2003polynomial, RackeS14},\footnote{\cite{harrelson2003polynomial} called it the ``bandwidth phase'' while \cite{bienkowski2003practical} called it the ``assure-precondition'' subroutine.} 
    to ensure that the all-to-all flow problem (recall Definition~\ref{def:all_to_all_flow})
    between the boundary edges of the sub-clusters we recur to, is respected by the sub-clusters, meaning that the boundary edges expand in these sub-clusters (see Remark~\ref{remark:mu_expansion_as_problem}).\footnote{This requirement is weaker than what was ensured in \cite{bienkowski2003practical, harrelson2003polynomial, RackeS14}. First, we only require the all-to-all flow problem to be respected by the sub-cluster, not routable as in \cite{bienkowski2003practical,harrelson2003polynomial}. Additionally, they ensured that a slightly stronger flow problem is routable, which implies the all-to-all problem on the boundary edges is routable (but not conversely). However, the boundary edges being expanding will be enough for our purposes.
    }
    
    The resulting scheme is a two-level decomposition: given an input cluster $S$, we  first run one level of the merge phase decomposition described in Theorem \ref{theorem:merge_phase}, with $\uptau = 1$. This partitions $S$ to sub-clusters $\{S_1, \ldots, S_k\}$, such that $|S_i| \le \frac{2}{3}|S|$. This merge phase 
    partition of $S$ corresponds to $2$ layers in the hierarchical decomposition (see Figure~\ref{fig:refine-then-merge}). On each $S_i$ we then run the refinement phase described below. This partitions $S_i$ into $\{S_{i, 1}, \ldots, S_{i, k_i}\}$, such that the boundary edges of $S_{i, j}$ (\ie, $E(S_{i, j}, V\setminus S_{i, j})$) expand within $G'[S_{i, j}']$. Formally, we denote the boundary edges of $S_{i, j}$ as $B_{i, j} \defeq E(S_{i, j}, V\setminus S_{i, j})$ and prove that $X_{B_{i, j}}$ $\Omega\left(\frac{1}{f(S_{i, j})\cdot \log n}\right)$-expands in $G'[S_{i, j}']$, where $f$ is defined in Theorem~\ref{theorem:refinement_phase}.
    The reason we need the merge phase is that we have no guarantee that the size of $S_{i, j}$ is smaller than $S_i$ by a constant factor. 
    Therefore, we repeat by performing an alternation between the merge phase  on each $S_{i, j}$ and then the refinement phase on each produced sub-cluster, etc.

    Even though $\uptau = 1$, in each sub-cluster $S_{i, j}$ created by the refinement phase, we can send the demands to a uniform distribution on the boundary edges of $S_{i, j}$, so positive and negative demands cancel out and we are left with at most a load of $1$ on each boundary edge. This effectively ``resets'' the load on each boundary edge so it does not accumulate when we go up along the hierarchical tree. This scheme allows us to improve the charge incurred, in total over all merge phase sub-clusters, from $O(\log^3 n)$ (by setting $\uptau = \frac{1}{\log n}$) to $O(\log^2 n)$ ($\uptau =1$) per cut edge. As for the sub-clusters of the refinement phase, we  show that they incur a total of $O(\log^2 n \log\log n)$ charge per cut edge.

    The main difference from \cite{bienkowski2003practical, harrelson2003polynomial, RackeS14} is that we are able to run the refinement phase in near-linear time. Additionally, we avoid dealing with signaled events from recursive sub-calls, leading to a simpler algorithm.
    Explicitly, we prove the following theorem:

    \begin{theorem}
    \label{theorem:refinement_phase}
        There is an algorithm ``refinement-phase'' that given a cluster $S$ and a parameter $\sigma \ge \frac{3}{2}|S|$, 
        runs in $\tilde{O}(m)$ time and partitions $S$ into sub-clusters $S_1 ,\ldots,S_k$,  such that:
        \begin{enumerate}
            \item With high probability, for each $i$, the sub-cluster $G'[S_i']$, $\Omega\!\left(\frac{1}{f(S_i)\cdot \log n}\right)$-respects the all-to-all flow problem\footnote{
            See Definition~\ref{def:all_to_all_flow}. Additionally, recall the definition of a graph respecting a flow problem in Definition~\ref{def:cut_respects_flow}.}
            on $X_{B_i}$, where $B_i = E(S_i, V\setminus S_i)$. 
            Here $f(S_i) \defeq c_f\log\left(\frac{\sigma}{|S_i|}\right)\cdot \log \log n$, where $c_f\ge 1$ is a constant which will be defined below.
            \item We can route in $G'[S']$ a demand matrix where each inter-cluster split node $x_f \in X_F$ (where $F = \{(u, v)\in E\mid u\in S_i, v\in S_j, i\neq j\}$) sends one unit of flow, and each boundary split node in $X_B$ (where $B = E(S, V\setminus S)$) receives at most $O(1)$ units of flow. This can be routed with congestion $O(\log n)$.
        \end{enumerate}
    \end{theorem}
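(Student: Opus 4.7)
The plan is to prove the theorem with a recursive refinement procedure built on Corollary~\ref{cor:balanced_cut_mu}. Given the current cluster $T\subseteq S$ with boundary $B_T=E(T,V\setminus T)$, invoke the corollary on $G'[T']$ with the indicator measure $\mu_T = \mu_{X_{B_T}}$ and adaptively chosen expansion parameter $\phi_T = \Theta(1/(f(T)\log n))$. In Case~(1), or for the large side $A$ of Case~(3), $G'[A]$ is an $\Omega(\phi_T)$-expander with respect to $\mu_T$; by Fact~\ref{fact:expansion_implies_flow_expansion} and Remark~\ref{remark:mu_expansion_as_problem} this certifies that the corresponding induced cluster $\Omega(1/(f(\cdot)\log n))$-respects the all-to-all flow problem on its boundary split nodes, so it is added to the output partition. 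If Case~(2) is returned (or the small side of Case~(3)) we split $T$ along the cut, update each new sub-cluster's boundary to include the newly created cut edges, and recurse on each side.

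Property~1 is then a direct consequence of the termination condition: every finalized sub-cluster $S_i$ inherits the expansion certificate issued by the invocation of Corollary~\ref{cor:balanced_cut_mu} that retired it, and monotonicity (Fact~\ref{fact:subset_expands}) lets us transfer the guarantee to $X_{B_i}$ viewed inside $G'[S_i']$. For Property~2 I plan to build the required multi-commodity flow bottom-up along the recursion tree using the flows supplied by Remark~\ref{remark:trimming_allows_routing}: each split of a cluster $T$ into $(A,\bar{A})$ produces cut edges which, by the remark, can be routed inside $G'[A']$ and inside $G'[\bar{A}']$ with congestion $O(1)$, each destination split node of $X_{B_T}$ receiving at most $O(\phi_T\log n\cdot\mu_T(x))=O(1/f(T))$ units. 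Concatenating and appropriately scaling these local flows along the ancestor chain of each leaf sub-cluster produces a flow in $G'[S']$ that sends one unit from every $x_f\in X_F$ and delivers $O(1)$ units to every $x_b\in X_B$, while the per-level congestion contributions telescope along the ancestor chain to the claimed $O(\log n)$ bound.

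The hard part is the tight quantitative coupling between $\phi_T$, the recursion depth, and the two bounds of Property~2. Case~(2) gives only a $(1-\Omega(1/\log n))$-factor $\mu_T$-reduction on the larger side, so naive recursion along a single branch could have depth $\Omega(\log^2 n)$; to preserve the $O(\log n)$ congestion bound I will exploit that each destination load contribution decays like $O(1/f(T))$, and pick the constant $c_f$ in $f(T) = c_f\log(\sigma/|T|)\log\log n$ large enough that the sum over the ancestor chain telescopes to $O(1)$, with the $\log\log n$ factor absorbing the $\log n$ nested Case~(2) splits per doubling of $\sigma/|T|$ and the $\log(\sigma/|S_i|)$ factor tracking the effective size progress since the last merge phase. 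A subtle point is that boundaries \emph{grow} rather than shrink along the recursion (the cut edges become new boundary), so $\mu_T$ is not monotone, and the amortization of work must be carried against each successor boundary rather than the initial one. The $\tilde{O}(m)$ running time follows by the standard amortization argument for Corollary~\ref{cor:balanced_cut_mu}, charging each Case~(2) invocation to the geometric $\mu$-reduction it produces on the larger branch and each Case~(3) invocation to the small side it removes.
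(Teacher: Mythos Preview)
Your plan has a genuine gap at the step where you invoke Remark~\ref{remark:trimming_allows_routing} to route the cut edges ``inside $G'[A']$ and inside $G'[\bar{A}']$ with congestion $O(1)$.'' That is true only in Case~(3). In Case~(2) the cut $\bar{A}=\bigcup_t S_t$ is a \emph{sequence} of cuts, and the remark gives a separate flow for each step $t$: one in $G[S_t]$ and one in $G[A_t\setminus S_t]$. The flows toward the $S_t$'s are edge-disjoint (so you get an $O(1)$-congestion routing from $E(A,\bar{A})$ into $\bar{A}$), but the flows toward the $A_{t+1}$'s all live in nested supersets of $A$ and overlap, so combining them congests edges in $A$ up to $T'=O(\log^2 n)$ times. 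Hence in Case~(2) you can route only to the $\bar{A}$ side. But nothing bounds $|\bar{A}|$ in terms of $|T|$ (the balance is on $\mu_T$, not on vertex count), so $f(\bar{A})$ need not exceed $f(T)$; your telescoping of the $O(1/f(T))$ load increments breaks, and likewise the congestion sum along a branch of depth $\Theta(\log^2 n)$ does not collapse to $O(\log n)$.

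The paper fixes exactly this by introducing a second measure $\nu$ counting non-split vertices and refining Theorem~\ref{theorem:balanced_cut_mu} into a five-case corollary (cases 2a/2b/2c/3a/3b): when Case~(2) occurs it inspects the internal sequence $\{S_t\}$ and, if $\nu(\bar{A})$ is too large, stops the sequence early at some $T_0$ (possibly splitting into two successive cuts) so that the side to which one can route always has at most $\tfrac{3}{4}$ of the vertices. This makes the ``left depth'' $O(\log n)$, yielding both the $O(\log n)$ congestion bound (each edge lies on the routed side $O(\log n)$ times) and the $O(1)$ load bound via $\prod_{j}(1+\tfrac{1}{j\log\log n})=O(1)$. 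A smaller issue: in Case~(3) the expander guarantee is with respect to $\mu_T=\1_{X_{B_T}}$, not $\1_{X_{B_A}}$; the new cut edges are not in $X_{B_T}$, so Fact~\ref{fact:subset_expands} alone does not transfer the guarantee to $X_{B_A}$---you need Lemma~\ref{lemma:separator-general} together with the Case~(3) flow into $A$ to fold in the new boundary.
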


    \paragraph{Intuition for $f$ and $\sigma$.} The quantity $f(S_i) \defeq c_f\log\left(\frac{\sigma}{|S_i|}\right)\cdot \log \log n$ plays two distinct roles in the analysis. 
    The parameter $\sigma$ is a \emph{global scale parameter} that controls how the costs incurred by the refinement phase aggregate across the levels of the hierarchical decomposition, allowing for a telescoping summation.
    At the same time, the quantity $f(S_i)$ provides a \emph{local guarantee} whose strength depends on the size of the sub-cluster $S_i$. Larger sub-clusters admit stronger all-to-all respect guarantees. This is crucial for the routing guarantee in Item~(2) of Theorem~\ref{theorem:refinement_phase}, since the routing will proceed in a bottom-up manner with respect to the refinement-induced hierarchy on $S$.
    Intuitively, if a refinement step significantly reduces the size of a sub-cluster, then the expansion requirement can be relaxed, as the smaller cluster contributes less to congestion (as they tend to be lower at the hierarchy). Conversely, if the cluster remains large, the analysis forces a stronger expansion guarantee. Thus, the refinement phase always makes progress either by enforcing strong expansion or by substantially shrinking the cluster size.

    We prove this theorem in Section~\ref{section:refinement_phase}. In the rest of this section we show how to improve the cut-sparsifier result.

    Our construction, see Algorithm~\ref{algo:alternating_construction}, begins with the merge phase (Theorem~\ref{theorem:merge_phase}) on $S = V$ with $\uptau = 1$. Then, we run the refinement phase (Theorem~\ref{theorem:refinement_phase}) on each of the resulting sub-clusters $S_i$, with $\sigma = |S|$. We proceed with the merge phase with $\uptau = 1$ on each resulting sub-cluster $S_{i,j}$ (i.e., the sub-clusters created from $S_i$ in the previous refinement step), 
    and then with the refinement phase on $S_{i, j, k}$, etc. In each subsequent run of the refinement phase, we choose $\sigma$ to be the size of the previous cluster created by a refinement phase, \ie{} the parent of the current sub-cluster (\eg, when running on $S_{i,j,k}$, we take $\sigma = |S_{i,j}|$, not $|S_{i, j, k}|$.). 
    Because $S_{i}$ is a sub-cluster created in a merge phase, it will satisfy $|S_{i}| \le \frac{2}{3}\cdot |V|$, ensuring the required condition on $\sigma$. We proceed recursively in this manner, running both stages on successively smaller clusters $S$, stopping when $|S| = 1$. 
    Let $T$ be the resulting hierarchical decomposition. Note that because the merge phase reduces the size of the sub-clusters by a constant fraction, the depth of $T$ is logarithmic in $n$.

    \begin{figure}[t]
        \begin{center}
            \includegraphics[scale=0.45]{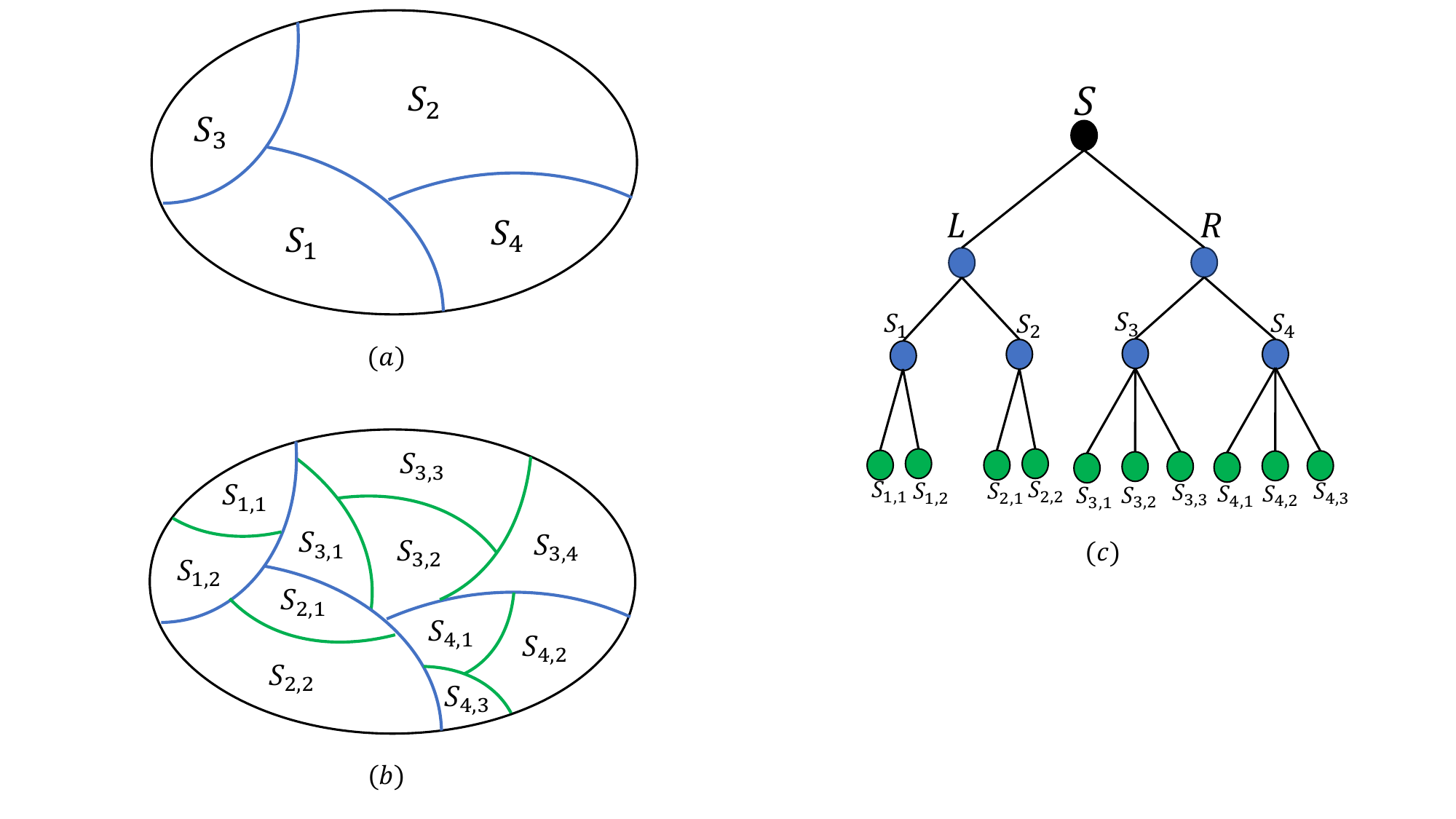}
        \end{center}
        \caption{ The two-step partition of a cluster $S$. The merge phase (Theorem~\ref{theorem:merge_phase}) is shown in Figure~(a). The refinement phase (Theorem~\ref{theorem:refinement_phase}) is then applied to the sub-clusters, as shown in Figure~(b). The corresponding hierarchical tree is given in Figure~(c). Note that the merge phase contributes two levels to the hierarchical tree due to the partition into $L$ (the core) and $R$. 
        }
        \label{fig:refine-then-merge}
    \end{figure}

    We begin by re-introducing the relevant notation from Section~\ref{section:charging-scheme}.
    As per Corollary~\ref{cor:equiv-def-cut2}, we fix a cut $(U\subseteq V, W = V\setminus U)$ in $G$ and a valid demand state $P$ on $V$, which is $1$-respected by $T$. We work in the subdivision graph $G'=(V\cup X_E, E')$.
    We define for a cluster $S\subseteq V$, the corresponding cluster in the subdivision graph as $S'\defeq S\cup\{x_e\mid e\cap S\ge 1\}$ and let $G'[S']$ be the subdivision graph restricted to the set $S'$. We let $(U' = U\cup \{x_e \mid e\cap U \ge 1\}, W' = V'\setminus U')$ be the cut corresponding to $(U,W)$ in $G'$. We will charge the edges $E(U',W')$ in $G'$ to cover the demand $\dem_{P'}(U',W')$, where $P'$ will be the appropriate demand state on $G'$.

      \begin{algorithm}[hbt!]
    \caption{Construction of the tree Cut-Sparsifier}
    \label{algo:alternating_construction}
        \begin{algorithmic}[1]
        
            \Function{BuildTree}{$G=(V,E)$}
              \State Create a tree $T$ with root labeled by the cluster $V$.

              \State \Call{AltMergeRefine}{$G,V$}

              \State \Return $T$
            \EndFunction

            \\
            
            \Function{AltMergeRefine}{$G,S$}
              \If{$|S|\le 1$}
                \State \Return
              \EndIf
            
              \State $\sigma \gets |S|$ \Comment{Previous refinement phase cluster, or the root $V$}
            
              \State $(L,R,\{L_1,\ldots,L_\ell\},\{R_1,\ldots,R_r\})
                     \gets \Call{MergePhase}{G,S,\tau = 1}$.
              \State Make $L$ and $R$ children of $S$ in $T$.
            
              \ForAll{$C \in \{L_1,\ldots,L_\ell\} \cup \{R_1,\ldots,R_r\}$}
                \State Make $C$ a child of its corresponding parent ($L$ or $R$).
                \State $\{C_1,\ldots,C_k\} \gets \Call{Refine}{G,C,\sigma}$ \Comment{partition of $C$}
                \ForAll{$C_i \in \{C_1,\ldots,C_k\}$}
                  \State Make $C_i$ a child of $C$.
                  \State \Call{AltMergeRefine}{$G,C_i$}.
                \EndFor
              \EndFor
            \EndFunction
        
        \end{algorithmic}
    \end{algorithm}

    Like in Section~\ref{section:charging-scheme}, we convert the valid demand state $P$ to demand states $P_{\{v\}'}$
    for each vertex $v\in V$, by distributing the demand vector of $v$ to the split nodes neighboring $v$ evenly. These demand states satisfy Invariant~\ref{def:invariant} with $\alpha=1$. If we denote the sum of these initial demand states as $P_{\Q_0}$, Claim~\ref{claim:initial_flow_state} shows that $\dem_P(U,W)\le\dem_{P_{\Q_0}}(U',W') + \capacity_G(U,W)$.

    We will first prove the following improvement of Theorem~\ref{theorem:charge-in-S}.
    \begin{theorem}
    \label{theorem:charge-in-S_improved}
        Let $S$ be a cluster, and let $S_1, \ldots, S_t$ be its merge phase sub-clusters, computed with $\uptau = 1$. For each $1\le i\le t$, let $S_{i,1}, \ldots, S_{i,t_i}$ be the refinement phase sub-clusters of $S_i$, computed with $\sigma = |S|$. For each $1\le i\le t, 1\le j\le t_i$, assume that $P_{S'_{i,j}}$ is a demand state satisfying Invariant~\ref{def:invariant} for $S_{i,j}$ with a given $\alpha \ge 1$. Let $P^{\mathrm{before}}_{S'}=\sum_{i=1}^t \sum_{j=1}^{t_i} P_{S'_{i,j}}$ be their sum.
        We can replace $P^{\mathrm{before}}_{S'}$ with a demand state $P^{\mathrm{after}}_{S'}$ which satisfies Invariant~\ref{def:invariant} for $S$ with $\alpha' = O(1)$ (which is independent of $\alpha$),
        while charging each edge of the cut $E(U',W')\cap G'[S_{i, j}']$ (for each $i, j$) at most $O(\alpha\cdot\log n\cdot f(S_{i, j}))$ to cover the difference in demands.\footnote{$f(S_{i,j})$ is defined in Theorem~\ref{theorem:refinement_phase}. Additionally, note that each edge in $G'[S']$ is part of exactly one sub-cluster $G'[S'_{i,j}]$.}
        That is, we define $\charge_S(e)$ 
        for each $e\in E(U',W')\cap G'[S']$  such that $\sum_{e\in E(U',W')\cap G'[S']}{\charge_S(e)}\ge \dem_{\left(P^{\mathrm{before}}_{S'}-P^{\mathrm{after}}_{S'}\right)}(U',W')$ 
        and each $e\in E(U',W')\cap G'[S_{i, j}']$ is charged at most $O(\alpha\cdot\log n\cdot f(S_{i, j}))$. 
        Note that the charged edges are in the subdivision graph. 

    \end{theorem}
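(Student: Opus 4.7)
The plan is to process the two-stage partition in three sweeps: a refinement-level uniform spread inside each $S_{i,j}$, an aggregation from $X_{F_i}$ to $X_{B_i}$ inside each merge sub-cluster $S_i$, and finally one invocation of Theorem~\ref{theorem:merge_phase} at the level of $S$ with $\uptau=1$. The overall structure mirrors the proof of Theorem~\ref{theorem:charge-in-S}, but the expansion guarantee of the refinement sub-clusters provided by Theorem~\ref{theorem:refinement_phase}(1) lets us ``reset'' the per-edge load to $O(1)$ before reaching the $S$-level merge phase; this is exactly what makes $\uptau=1$ affordable and removes the extra $\log n$ factor.

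\textbf{First sweep.} For each refinement sub-cluster $S_{i,j}$, Theorem~\ref{theorem:refinement_phase}(1) together with Remark~\ref{remark:mu_expansion_as_problem} gives that $X_{B_{i,j}}$ $\Omega(1/(f(S_{i,j})\log n))$-expands in $G'[S'_{i,j}]$. Update $P_{S'_{i,j}}$ by the all-to-all flow on $X_{B_{i,j}}$, with per-source amount equal to its load (at most $\alpha$), to obtain $\hat P_{S'_{i,j}}$, following the spreading step of Lemma~\ref{theorem:charge-in-S_part1}. The difference $P_{S'_{i,j}}-\hat P_{S'_{i,j}}$ is $\Omega(1/(\alpha\cdot f(S_{i,j})\log n))$-respected by $G'[S'_{i,j}]$, contributing a charge of $O(\alpha\cdot f(S_{i,j})\cdot \log n)$ per edge of $E(B',W')\cap G'[S'_{i,j}]$. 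Because $S_{i,j}$ is a node of $T$ and $P$ is $1$-respected by $T$, Claim~\ref{claim:opt_constraint} bounds the uniform mass by $|B_{i,j}|$, so after the sweep every split vertex of $X_{B_{i,j}}$ carries load at most $1$.

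\textbf{Second sweep.} For each $S_i$, let $F_i$ denote the inter-cluster edge set of its refinement partition. The sum $\sum_j \hat P_{S'_{i,j}}$ is supported on $X_{F_i}\cup X_{B_i}$, with load at most $2$ on vertices of $X_{F_i}$ (each inter-cluster edge is a boundary edge of exactly two sub-clusters $S_{i,j}$) and at most $1$ on vertices of $X_{B_i}$. Apply Theorem~\ref{theorem:refinement_phase}(2) to $S_i$ and scale the guaranteed flow path-wise by the actual per-source loads to route all mass from $X_{F_i}$ onto $X_{B_i}$; this has congestion $O(\log n)$ and deposits at most $O(1)$ additional units on each $x_b\in X_{B_i}$. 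By Fact~\ref{fact:congestion_implies_respecting_cut}, $G'[S'_i]$ $\Omega(1/\log n)$-respects this update, charging $O(\log n)$ per edge of $E(B',W')\cap G'[S'_i]$. The resulting $\tilde P_{S'_i}$ satisfies Invariant~\ref{def:invariant} for $S_i$ with load $O(1)$.

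\textbf{Third sweep and totals.} Invoke Theorem~\ref{theorem:merge_phase} with $\uptau=1$ on $\{\tilde P_{S'_i}\}_i$ to obtain $P^{\mathrm{after}}_{S'}$ satisfying Invariant~\ref{def:invariant} for $S$ with load $O(1)$, at a charge of $O(\log n)$ per edge of $E(B',W')\cap G'[S']$. Summing, any $e\in E(B',W')\cap G'[S'_{i,j}]$ accumulates $O(\alpha\cdot f(S_{i,j})\cdot \log n)+O(\log n)+O(\log n)=O(\alpha\cdot \log n\cdot f(S_{i,j}))$, since $\alpha, f(S_{i,j})\ge 1$. The main technical obstacle is in the second sweep: we must verify that each $x_b\in X_{B_i}$ ends with load $O(1)$ after combining its first-sweep uniform contribution (at most $1$, for $x_b\in X_{B_i}\cap X_{B_{i,j}}$) with the $O(1)$ extra units routed onto it from $X_{F_i}$. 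This uses the disjointness of $X_{F_i}$ and $X_{B_i}$ and the fact that every $x_b\in X_{B_i}$ lies in exactly one $X_{B_{i,j}}$ -- namely, the one whose sub-cluster contains the $S_i$-side endpoint of the underlying edge.
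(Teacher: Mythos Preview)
Your proposal is correct and follows essentially the same three-stage approach as the paper: a uniform spread within each $S_{i,j}$ using Theorem~\ref{theorem:refinement_phase}(1) and Claim~\ref{claim:opt_constraint} to reset the load to $1$, then routing $X_{F_i}\to X_{B_i}$ via Theorem~\ref{theorem:refinement_phase}(2), and finally applying Theorem~\ref{theorem:merge_phase} with $\uptau=1$. The load accounting (at most $2$ on $X_{F_i}$, at most $1$ on $X_{B_i}$) and the final charge summation match the paper's argument.
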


    \begin{proof}
        Let $P^{(1)}_{S'_{i,j}} \defeq P_{S'_{i,j}}$.
        We apply the first guarantee of Theorem~\ref{theorem:refinement_phase} on each of the clusters $S_{i,j}$ ($1\le i\le t, 1\le j\le t_i$), to distribute the demand states $P^{(1)}_{S'_{i,j}}$ uniformly on the boundary edges of the cluster. Formally, define the average demand vector $N_{i,j} \defeq \frac{1}{|X_{B_{i,j}}|}\sum_{v\in X_{B_{i,j}}}{P^{(1)}_{S'_{i,j}}(v, \cdot)}$, where $B_{i,j} = E(S_{i,j}, V\setminus S_{i,j})$ are the boundary edges of the cluster $S_{i,j}$. Consider the demand state $P^{(2)}_{S'_{i,j}}$ which is supported on $X_{B_{i,j}}$, where each demand vector is $N_{i,j}$. By the construction of $P^{(2)}_{S'_{i,j}}$, 
        it satisfies Invariant~\ref{def:invariant} for $S_{i,j}$ with the same parameter $\alpha$ as $P^{(1)}_{S'_{i,j}}$.  
        In fact, $P^{(2)}_{S'_{i,j}}$ is obtained by updating $P^{(1)}_{S'_{i,j}}$ by the all-to-all demand matrix on $X_{B_{i, j}}$ scaled by at most $\alpha$ (see Lemma~\ref{lemma:special_flows}).
 Using the first guarantee of Theorem~\ref{theorem:refinement_phase}, and by Facts~\ref{fact:basic_flow_state}(2) and~\ref{fact:congestion_implies_respecting_cut}, the valid demand state $P^{(1)}_{S'_{i,j}} - P^{(2)}_{S'_{i,j}}$ is $\Omega\left(\frac{1}{\alpha\cdot f(S_{i,j})\cdot \log n}\right)$-respected by $G'[S'_{i,j}]$. 
        Therefore, by charging $O(\alpha\cdot f(S_{i,j})\cdot \log n)$ to each edge of $E(U',W')\cap G'[S_{i,j}']$, we are able to cover the demand $\dem_{\left(P^{(1)}_{S'_{i,j}}-P^{(2)}_{S'_{i,j}}\right)}(U',W')$.

        Now, all demands in $P^{(2)}_{S'_{i,j}}$ are distributed uniformly among vertices in $X_{B_{i,j}}$. Hence, positive demand units and negative demand units of the same commodity are ``paired-up'' and cancel out.

        Next, we use the fact that $S_{i,j}$ is a sub-cluster in the hierarchical decomposition tree $T$, which means that because the original $P$ is respected by $T$, the amount of demand $P$ induces across the cut $(S_{i,j}, V\setminus S_{i,j})$ is at most the capacity of this cut. This is made explicit in the following claim:
        \begin{claim}
        \label{claim:refinement_phase_boundary}
            The total load in $P^{(2)}_{S'_{i,j}}$ on the nodes $X_{B_{i,j}}$ satisfies
            \[
                \sum_{v\in X_{B_{i,j}}}\norm{{ P^{(2)}_{S'_{i,j}}(v) }}_1 \le \capacity_G(B_{i,j}).
            \]
        \end{claim}
        \begin{proof}
            For each commodity $k\in\K$, by Invariant~\ref{def:invariant}\eqref{inv-2}, 
            \[
                \sum_{v\in S'_{i,j}}{P^{(1)}_{S'_{i,j}}(v, k)} = \sum_{v\in S_{i,j}}{P(v, k)}.
            \]
            The update by the scaled all-to-all flow problem does not change the total demand (\ie, $P^{(1)}_{S'_{i,j}} - P^{(2)}_{S'_{i,j}}$ is a valid demand state), so we get 
            \[
                \sum_{v\in X_{B_{i,j}}}{P^{(2)}_{S'_{i,j}}(v, k)}=
                \sum_{v\in S'_{i,j}}{P^{(2)}_{S'_{i,j}}(v, k)} = 
                \sum_{v\in S_{i,j}}{P(v, k)}~,
            \]
            where the first equality holds because $P^{(2)}_{S'_{i,j}}$ is supported on $X_{B_{i,j}}$. The result now follows by applying Claim \ref{claim:opt_constraint} with $X \defeq X_{B_{i,j}}, L \defeq S_{i,j}$ and $\tilde{P}\defeq P^{(2)}_{S'_{i,j}}$.
                
        \end{proof}
        Using Claim \ref{claim:refinement_phase_boundary}, we have that the total load $P^{(2)}_{S'_{i,j}}$ puts on $X_{B_{i,j}}$ is at most $\capacity_G(B_{i,j})$, and because the demand vectors are the same,  we get that the load on each vertex is at most $1$.

        For every $1\le i \le t$, let $P^{(2)}_{S'_i}\defeq \sum_{j=1}^{t_i}{P^{(2)}_{S'_{i,j}}}$. Denote the inter-cluster edges inside $S'_i$ by $F_i\defeq\{(u,v)\in E\mid u\in S_{i,j_1}, v\in S_{i,j_2}, j_1\neq j_2\}$ and the boundary edges of the cluster by $B_i \defeq E(S_i, V\setminus S_i)$. We have that $P^{(2)}_{S'_i}$ is supported on $X_{F_i} \cup X_{B_i}$, such that the load on each of $x_f\in X_{F_i}$ is at most $2$, and the load on each $x_b\in X_{B_i}$ is at most $1$.
        
        Next, we use the second guarantee of Theorem~\ref{theorem:refinement_phase}. This allows us to route in $G'[S'_i]$, one unit of mass from each $x_f\in X_{F_i}$ to $X_{B_i}$, such that each vertex $x_b\in X_{B_i}$ receives at most $O(1)$ units of mass. Scaling this flow by $2$ gives that we can route two units from each inter-cluster edge to the boundary. This flow can be routed with congestion $O(\log n)$. Hence, by Fact~\ref{fact:congestion_implies_respecting_cut}, we can replace $P^{(2)}_{S'_{i}}$ with the updated demand state obtained after this routing, denoted $P^{(3)}_{S'_i}$, while incurring a charge of $O(\log n)$ on the edges $E(U',W')\cap G'[S'_i]$. Consequently, $P^{(3)}_{S'_i}$ satisfies Invariant~\ref{def:invariant} for $S_i$ with $\alpha = O(1)$, as the demand state is supported on nodes of $X_{B_i}$ and the load on each node is constant.

        Denote $P^{(3)}_{S'} \defeq \sum_{i=1}^{t}{P^{(3)}_{S'_i}}$. We now use Theorem~\ref{theorem:merge_phase} on the merge phase partition $\{S_1, \ldots, S_t\}$. It allows us to replace $P^{(3)}_{S'}$ with the updated $P^{(4)}_{S'}$ which satisfies Invariant~\ref{def:invariant} for $S$ with $\alpha = O(1)$, while charging each edge of $E(U',W')\cap G'[S']$ at most $O(\log n)$ to cover the difference in demands.

        Thus, $P^{\mathrm{after}}_{S'} \defeq P^{(4)}_{S'}$ is the demand state required by Theorem \ref{theorem:charge-in-S_improved}, and the total charge on each edge of $E(U',W')\cap G'[S']$ is dominated by $O(\alpha\cdot f(S_{i,j})\cdot \log n)$. 
    \end{proof}

    Using Theorem~\ref{theorem:charge-in-S_improved}, Theorem \ref{theorem:final-improved} follows exactly like Theorem \ref{theorem:final} in Section~\ref{section:charging-scheme}. We give the full details for completeness. 

    \begin{proof}[Proof of Theorem \ref{theorem:final-improved}]
        We deal with $P$ in a bottom-up manner: first we convert it to a set of demand states $P_{\{v\}'}$ as shown in Section~\ref{section:charging-scheme}. Then, we use Theorem~\ref{theorem:charge-in-S_improved} for each cluster in $T$. We accumulate the charges incurred during each application of the theorem.
    
        Denote by $\Q$ the set of current ``active'' clusters, such that initially $\Q=\Q_0\defeq\{\{v\} \mid v\in V\}$. Then in each iteration we remove $\{S_{i,j} \mid 1\le i \le t, 1\le j \le t_i\}$ from $\Q$ and add $S$ to $\Q$, where $S_{i,j}$ are the refinement phase sub-clusters as in Theorem~\ref{theorem:charge-in-S_improved}. 
        Let $P_\Q$ be the sum of all demand states $P_{S'}$, for $S\in \Q$. Using Claim~\ref{claim:total_active_demand_is_a_flow_problem}, we have that $P_\Q$ is always a valid demand state on $V'$. Claim~\ref{claim:initial_flow_state} shows that $\dem_P(U,W)\le\dem_{P_{\Q_0}}(U',W') + \capacity_G(U,W)$.
    
        Consider a replacement of $\Q$ by $\tilde{\Q} = \Q \cup \{S\} \setminus \{S_{i,j} \mid 1\le i \le t, 1\le j \le t_i\}$ when we apply Theorem~\ref{theorem:charge-in-S_improved}. Recall that $P^{\mathrm{before}}_{S'}=\sum_{i=1}^t \sum_{j=1}^{t_i} P_{S'_{i,j}}$ is replaced by $P^{\mathrm{after}}_{S'}$. Let $P^{\mathrm{before}}_{\tilde{\Q}} \defeq P_\Q$ be the sum of active demand states before the change and $P^{\mathrm{after}}_{\tilde{\Q}} \defeq P_\Q - P^{\mathrm{before}}_{S'} + P^{\mathrm{after}}_{S'}$ be the sum after the change.
        We have $\dem_{\left(P^{\mathrm{before}}_{\tilde{\Q}} - P^{\mathrm{after}}_{\tilde{\Q}}\right)}(U',W') = \dem_{\left(P^{\mathrm{before}}_{S'} - P^{\mathrm{after}}_{S'}\right)}(U',W')$. This means that after accumulating all the charges in the applications of Theorem~\ref{theorem:charge-in-S_improved} 
        on all clusters $S \in T$, the total charge satisfies  
        \begin{align*}
            \sum_{e\in E(U',W')}{\charge(e)} &= \sum_{e\in E(U',W')}\sum_{S\in T}{\charge_S(e)} = \sum_{S\in T}\sum_{e\in E(U',W')\cap G'[S']}{\charge_S(e)} 
            \\
            &\ge \sum_{S\in T}{\dem_{\left(P^{\mathrm{before}}_{S'} - P^{\mathrm{after}}_{S'}\right)}(U',W')} 
            =
            \sum_{S\in T}{\dem_{\left(P^{\mathrm{before}}_{\tilde{\Q}} - P^{\mathrm{after}}_{\tilde{\Q}}\right)}(U',W')}
            \\ &\ge\dem_{\left(P_{\Q_0} - P_{V'}\right)}(U',W').
        \end{align*}
        where $\charge_S(e)$ is the charge incurred on $e$ when applying Theorem~\ref{theorem:charge-in-S_improved} 
        on the cluster $S$, $\charge(e)$ is the total charge incurred on $e$, and the last inequality uses Fact~\ref{fact:extended_flow_problems}. 

      Since $P_{V'}$ satisfies Invariant~\ref{def:invariant}\eqref{inv-1} for $V$, it follows that $P_{V'} = 0$. Thus, we have
        \begin{align*}
            \sum_{e\in E(U',W')}{\charge(e)} \ge \dem_{P_{\Q_0}}(U',W') \ge \dem_P(U,W) - \capacity_G(U,W),
        \end{align*}
        where the second inequality follows from Claim~\ref{claim:initial_flow_state}. By charging an additional unit to each cut edge, we get $\dem_P(U,W)\le\sum_{e\in E(U',W')}{\charge(e)}$. 

        It remains to bound the total charge $\charge(e)$ on each edge $e\in E(U',W')$.
        The load $\alpha$ is $O(1)$ for each cluster, as it starts as $\alpha = 1$ and is constant after each application of Theorem~\ref{theorem:charge-in-S_improved}. Therefore, each edge (in the subdivision graph) $e\in E(U',W')$, is charged by Theorem \ref{theorem:charge-in-S_improved}, $O(\log n \cdot f(S))$ for each refinement phase\footnote{Note that the charging in Theorem~\ref{theorem:charge-in-S_improved} only charges refinement phase clusters.}
        of a cluster $S$ that contains it (\ie, $e\in G'[S']$).
        In particular, an edge $e\in E(U',W')$ is only charged by clusters $S$ which lie on a root-to-leaf path in $T$.

        Consider a path in the decomposition hierarchy $T$, $S_1=V,\ldots, S_k$, where $S_{i+1}$ is a refinement phase sub-cluster of $S_i$ (that is, we consider only refinement phase sub-cluster between them since these are the only charged sub-clusters in Theorem~\ref{theorem:charge-in-S_improved}). 
        Note that for all $1\le i < k$, when $S_{i+1}$ was created (as a refinement of some sub-cluster of $S_i$), we used $\sigma = |S_i|$ in Theorem~\ref{theorem:refinement_phase}. So, $f(S_{i+1}) = c_f\log\left(\frac{|S_{i}|}{|S_{i+1}|} \right)\cdot \log \log n$. 
        Therefore, it holds that 
        \begin{align*}
            \sum_{i=1}^{k}{f(S_i)} &= f(S_1) + \sum_{i=1}^{k-1}{f(S_{i+1})} = c_f\log\log n\cdot\left( \log\left(\frac{2n}{|S_{1}|}\right) + \sum_{i=1}^{k-1}{\log\left(\frac{|S_{i}|}{|S_{i+1}|}\right)} \right) 
            \\
            &= c_f\log \log n \cdot \log\left(\frac{2n}{|S_{k}|}\right) = O(\log n\cdot \log \log n)
        \end{align*}
        Therefore, the total charge is $O(\log^2 n \cdot \log\log n)$ per edge, which gives $\dem_P(U,W) = O(\log^2 n \cdot \log\log n)\cdot\capacity_{G'}(U',W') = O(\log^2 n \cdot \log\log n)\cdot\capacity_{G}(U,W)$, as we wanted. Theorem \ref{theorem:final-improved} then follows from Corollary~\ref{cor:equiv-def-cut2}.
    \end{proof}

    As a direct corollary of Theorem~\ref{theorem:final-improved}, we get Theorem \ref{theorem:final-improved-flow}.

    \subsection{Refinement Phase}
    \label{section:refinement_phase}
    The goal of this section is to prove Theorem~\ref{theorem:refinement_phase}. Before delving into the proof in Section \ref{section:refinement_phase_proof}, in Section \ref{section:refinement_phase_overview} we provide a high level overview of the techniques.
    
    \subsubsection{Overview}
    \label{section:refinement_phase_overview}
    Given a cluster $S\subseteq V$, we would like to check if its boundary edges $B_S = E(S, V\setminus S)$  expand within $G[S]$.\footnote{The formal argument is performed in the subdivision graph (\ie\  proving that $X_{B_S}$  expands in $G'[S']$). See Section \ref{section:refinement_phase_proof} for details. For this overview, we omit these details for simplicity.}
    If they are not, we would like to find a sparse cut $(A, S\setminus A)$ with respect to the corresponding weighting $\mu_S = \1_{B_S}$, that certifies that the boundary edges do not expand. We then recur on both $A$ and $S\setminus A$ (with the new weightings $\mu_A = \1_{B_A}$ and similarly $\mu_{S\setminus A}=\1_{B_{S\setminus A}}$). We stop when we find a sub-cluster in which the boundary edges expand (equivalently, the sub-cluster expands with respect to the weighting given by its boundary edges). This can be facilitated using Theorem~\ref{theorem:balanced_cut_mu}. 
    Note that when we recur into a sub-cluster (\eg, $A\subseteq S$), the new cut edges $E(S,A\setminus S)$ are now considered boundary edges of the sub-clusters $A$ and $S\setminus A$. However, because the cut was sparse, the capacity of these cut edges is small compared to the boundary capacity already present in $S$. Consequently, for each sub-cluster, the total capacity of its boundary edges decreases relative to that of its parent cluster, leading to a polylogarithmic recursion depth. 
    Note that we must carefully adjust the target expansion $\phi$ when we invoke Theorem \ref{theorem:balanced_cut_mu} on the different sub-clusters, to ensure that the final sub-clusters $S_i\subseteq S$, $\Omega\!\left(\frac{1}{f(S_i) \cdot\log n}\right)$-expand with respect to 
    their boundary edges.

    We can imagine the sequence of cuts and recursions as a binary tree whose root is $S$, and where in each step, the children of the current node are the two sides of the cut. Because the cut is balanced (by Theorem~\ref{theorem:balanced_cut_mu}),
    we can ensure that the depth of the tree is poly-logarithmic, so the total running time is $\tilde{O}(m)$.

    As for the second result of Theorem \ref{theorem:refinement_phase}, we would like to use the flows that are promised by Theorem~\ref{theorem:trimming_allows_routing} to route demands from the cut edges to the boundary edges. Specifically, in each iteration we route demands from the cut edges $E(A, D\setminus A)$, where $D\subseteq S$ denotes the current sub-cluster of $S$, to the boundary edges of $A$ or of $D\setminus A$. 
    In this manner, we could route the flow from the inter-cluster edges $F$ to the boundary edges $B_S$ by going over the binary tree from the leaves upwards: each time we encounter a cut $(A, D\setminus A)$  
    of the current sub-cluster $D\subseteq S$, we route the mass currently residing on the cut edges to the boundary edges of either $A$ or $D\setminus A$. 
    Note that the demands on the boundary edges accumulate as we perform this step. In particular, after accounting for all relevant details, we obtain that routing a unit of flow from each edge of $E(A, D\setminus A)$ to $A$ adds $O\left(\frac{1}{f(D)}\right)$ units\footnote{Recall that $f(D) \defeq c_f\log\!\left(\frac{\sigma}{|D|}\right)\!\cdot\log \log n$.} of load to the boundary edges of~$A$ (and similarly if we route to the boundary of $D\setminus A$).  
    Assuming the load on the boundary edges of $A$ and on the cut edges $E(A, D\setminus A)$ (accumulated in the lower levels of the binary tree) was $\alpha$, this means that when we go upwards in the binary tree, the load on the boundary of $D$ will now be $\alpha' = \alpha \cdot \left(1+\frac{O(1)}{f(D)}\right)$. 
    
    If we could always route to the side with the smaller number of vertices (\ie, route to $A$ if $|A|\le |D\setminus A|$  and to $|D\setminus A|$ otherwise), this approach would work. The reasoning is as follows. Since $f(D)=c_f \cdot\log\left(\frac{\sigma}{|D|}\right)\log\log n$ and $\sigma \ge \frac{3}{2}|S|$, it follows that if $|D|\le |S|\cdot 2^{-j}$ for some $j\ge 1$, then $f(D) = \Omega(j \log\log n)$. 
    This ensures that for each edge $e$, if we consider the sequence of clusters $D_i$ (along the binary tree) in which the load on $e$ increases (\ie, when $e$ is a boundary edge on the side with fewer vertices), the product $\prod_{i}{\left(1+\frac{1}{f(D_i)}\right)}$ behaves like $\prod_{j=1}^{\log n}{\left(1+\frac{1}{j\log\log n}\right)}$, so it remains bounded by a constant. Hence, the total load accumulated on each edge, determined by this product, is constant (see Claim~\ref{claim:product-converges}).

    However, there is a problem with this approach. Specifically, as detailed in Theorem~\ref{theorem:trimming_allows_routing}, due to the algorithms implying Theorem \ref{theorem:balanced_cut_mu}, the returned sparse cut $(A, D\setminus A)$ of a sub-cluster $D\subseteq S$ 
    is created from a sequence of cuts $\{S_t\}_{t = 0}^{T-1}$, 
    where if we define $A_0 = D$, $A_{t+1} = A_t\setminus S_t$ ($t\in \{0,\ldots,T-1\}$), we have that $S_t\subseteq A_t$ is the cut we remove from the leftover cluster at step $t$. 
    All of these cuts are sparse (\ie, $(A_t\setminus S_t, S_t)$ is a sparse cut of $A_t$), and additionally we can route from the cut edges to both sides ($S_t$ and $A_t\setminus S_t$).  Ultimately, when $A = D\setminus \left(\bigcup_t{S_t}\right)$, we can route flow from the cut edges $(A, D\setminus A)$ to the side $D\setminus A$ (see Figure~\ref{fig:cor-flow-routings}$(2a)$). However, if we attempt to route instead towards $A$, we may incur congestion on the same edges multiple times (specifically, $T$ times),\footnote{This happens since $\{S_t\}_{t = 0}^{T-1}$ are disjoint while $\{A_t\setminus S_t\}_{t = 0}^{T-1}$ are not. Therefore the routings from the cuts $(A_t\setminus S_t, S_t)$ to $A_t\setminus S_t$ may overlap and accumulate congestion.} 
    preventing a routing with sufficiently low congestion. We emphasize that this is not an issue in $D\setminus A$, because each cut is removed from the graph after it is processed: we route from $E(A_t\setminus S_t, S_t)$ to $S_t$ \textbf{within} $G[S_t]$, and then $S_t$ is removed from the remaining graph. As a result, the corresponding flow routings remain disjoint (see Figure~\ref{fig:cor-flow-routings}$(2a)$) and the congestion does not accumulate.

    This routing is good for us if $|D\setminus A|\le|A|$, but we have no such guarantee from Theorem \ref{theorem:balanced_cut_mu} (as the theorem only bounds $\mu(D\setminus A)$, which is the number of boundary edges). Instead, we choose to do something slightly more nuanced: we stop the sequence of cuts $\{S_t\}_{t=0}^{T-1}$ early at some $T_0 < T$, as soon as its size $\left|\bigcup_{t=0}^{T_0}{S_t}\right|$ is a constant fraction of $|D|$. In other words, we find the first $T_0$ such that $\left|\bigcup_{t=0}^{T_0}{S_t}\right| = \Theta(|D|)$. Denote $A = D\setminus \left(\bigcup_{t=0}^{T_0}{S_t}\right)$. Assuming that at this step, both sides of the cut are of size $\Theta(|D|)$, this is enough to ensure that the recursion depth is poly-logarithmic (even though $\mu(D\setminus A) = \mu\left(\bigcup_{t=0}^{T_0}{S_t}\right)$ may be arbitrarily small), 
    and allows us to bound $|D\setminus A| = \left|\bigcup_{t=0}^{T_0}{S_t}\right| = \Theta(|D|)$ (see Figure~\ref{fig:cor-flow-routings}$(2b)$).

    The final obstacle with this approach is that it may be that while $\left|\bigcup_{t=0}^{T_0-1}{S_t}\right|$ is small (say $o(|D|)$), it's possible $\left|\bigcup_{t=0}^{T_0}{S_t}\right|$ is already too large (say $(1-o(1))\cdot|D|$), so $\left|\bigcup_{t=0}^{T_1}{S_t}\right|$ is not $\Theta(|D|)$ for any $T_1$. In this case we perform two steps: we denote $A_1 \defeq D \setminus \left(\bigcup_{t=0}^{T_0-1}{S_t}\right)$, which is a cut of $D$, and then further cut $A_1$ into $(S_{T_0}, A_2 \defeq A_1 \setminus S_{T_0})$. Routing from $E(A_1, D\setminus A_1)$ to boundary of the side $D\setminus A_1$ is done like before (as mentioned, routing to the side $D\setminus A_1$ does not incur multiple congestion on the same edges, because in $D\setminus A_1 = \bigcup_{t=0}^{T_0-1}{S_t}$, the cuts are removed at each step). Routing from $E( D\setminus A_2, A_2)$ to the boundary of $A_2$ is possible because $S_{T_0}$ now represents a single internal cut rather than a sequence of cuts. Consequently, we do not incur the congestion multiple times (see Figure~\ref{fig:cor-flow-routings}$(2c)$). 

    This provides a high-level overview of the partitioning algorithm when Theorem \ref{theorem:balanced_cut_mu} terminated with Case~(2). A similar, though slightly simpler, argument applies when the theorem terminates with Case~(3).
    
    In total, this partitioning scheme guarantees that we always route from the cut edges to the boundary edges of a side whose size is at most $c\cdot |D|$. This suffices to show that the total accumulated load on each boundary edge remains bounded by a constant.

    The full details are provided in Section~\ref{section:refinement_phase_proof}. It begins with Corollary~\ref{cor:balanced-cut-refined}, a refined version of Theorem~\ref{theorem:balanced_cut_mu}, which bounds the number of vertices on the side of the cut to which flow from the cut edges can be routed. Using this corollary, we then present the details of the aforementioned binary tree and prove Theorem~\ref{theorem:refinement_phase}.

    \subsubsection{Proof of Theorem \texorpdfstring{\ref{theorem:refinement_phase}}{6.2}}
    \label{section:refinement_phase_proof}
    We are going to apply Theorem~\ref{theorem:balanced_cut_mu} on the cluster $S\subseteq V$. However, as discussed in Section~\ref{section:refinement_phase_overview}, 
    Theorem~\ref{theorem:balanced_cut_mu} accounts only for the vertex measure~$\muG$. In our setting, we also need to control the number of vertices in the relevant sets, and therefore require a refined variant that incorporates both $\muG$ and cardinality. The following corollary of Theorems~\ref{theorem:balanced_cut_mu} and~\ref{theorem:trimming_allows_routing} 
    provides this. For generality, we state it with an arbitrary measure $\nu:V\rightarrow \RR_{\ge 0}$, even though in our application $\nu$ will always be the counting measure $\nu(A) = |A|$.\footnote{As will be described below, we use Corollary \ref{cor:balanced-cut-refined} on the subdivision graph and $\nu(A)$ will actually be the number of ``non-split vertices'' in $A$, \ie, $\nu(A)=|A\cap V|$.} 
    Recall the notion of routing from a cut to a set from Section~\ref{subsec:cut-matching}. See Figure~\ref{fig:cor-flow-routings} for an illustration of the cases of the corollary.

    \begin{figure}
        \centering
        \includegraphics[width=0.9\linewidth]{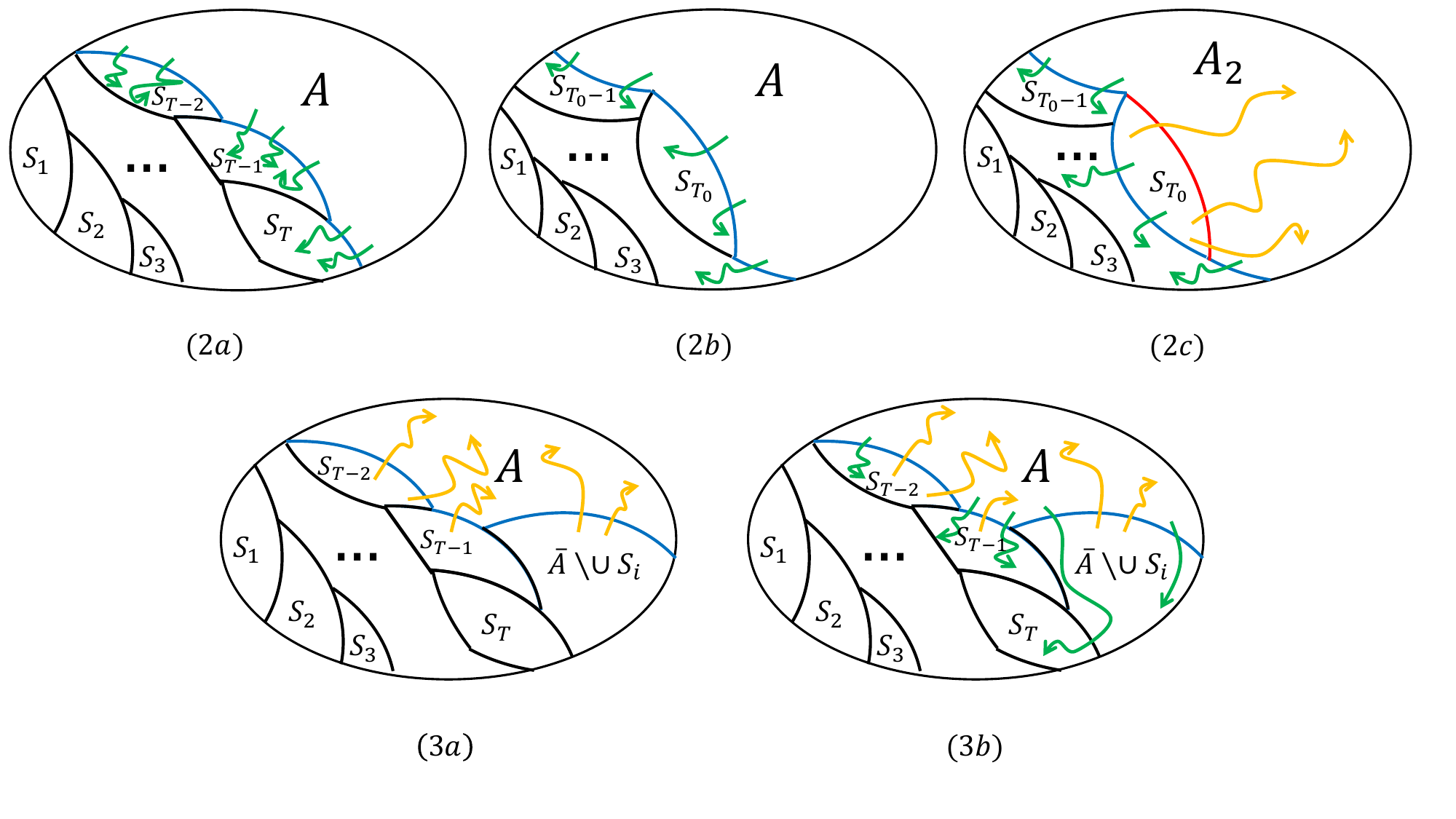}
        \caption{The 5 cases of Corollary~\ref{cor:balanced-cut-refined}. In all cases except case $(2c)$, the left side of the blue cut corresponds to $V\setminus A$, where in case $(2c)$ it corresponds to $V\setminus A_1$. The green flow depicts the guaranteed flow from Corollary~\ref{cor:balanced-cut-refined}. In cases $(2c)$ and $(3b)$ the corollary guarantees an additional flow, depicted in gold. The red cut in case $(2c)$ corresponds to the cut $(A_1\setminus A_2, A_2)$. 
        }
        \label{fig:cor-flow-routings}
    \end{figure}

    \begin{corollary}
    \label{cor:balanced-cut-refined}
        Given a graph $G=(V,E)$ of $m$ edges, a parameter $\phi > 0$, and two weight functions $\mu,\nu : V \to \RR_{\ge 0}$ on the vertices, there exists a randomized algorithm which takes $\tilde{O}(m)$ time and must end in one of the following cases:
        \begin{enumerate}
            \item We certify that $G$ has $\mu$-expansion $\Phi^{\mu}(G)=\Omega(\phi)$, with high probability. 
            \item[2a.] We find a cut $(A,V\setminus A)$ of $V$ with $\mu$-expansion $\Phi^{\mu}_G(A,V\setminus A)=O(\phi\log n)$, $\mu(V\setminus A)=\Omega\left(\frac{\mu(V)}{\log n}\right)$, and $\nu(V\setminus A)\le\frac{1}{4}\nu(V)$. Additionally, there exists a flow routable with congestion $O(1)$ in $G[V\setminus A]$, such that each cut edge of $E(A,V\setminus A)$ is a source of one unit of flow\footnote{Formally, there is a flow in $G[V\setminus A]$ routable with congestion $O(1)$, such that each $v\in V\setminus A$ sends $\deg_G(v)-\deg_{G[V\setminus A]}(v)$ units of flow and receives at most $O(\phi\log n\cdot\mu(v))$ units. The rest of the flows in this corollary are defined similarly.} and each $v\in V\setminus A$ receives at most $O(\phi\log n\cdot\mu(v))$ units. 
            \item[2b.] We find a cut $(A,V\setminus A)$ of $V$ with $\mu$-expansion $\Phi^{\mu}_G(A,V\setminus A)=O(\phi\log n)$, and $\frac{1}{4}\nu(V)\le\nu(V\setminus A)\le\frac{1}{2}\nu(V)$. 
            Additionally, there exists a flow routable with congestion $O(1)$ in $G[V\setminus A]$, such that each cut edge of $E(A,V\setminus A)$ is a source of one unit of flow and each $v\in V\setminus A$ receives at most $O(\phi\log n\cdot\mu(v))$ units.
            \item[2c.] We find a cut $(A_1,V\setminus A_1)$ of $V$ and a cut $(A_2, A_1\setminus A_2)$ of $A_1$, with $\mu$-expansion $\Phi^{\mu}_G(A_1,V\setminus A_1), \Phi^{\mu}_G(A_2,A_1\setminus A_2)=O(\phi\log n)$. Additionally, $\mu(A_2) \ge\frac{1}{4}\mu(V)$ and $\nu(A_1\setminus A_2)\ge \frac{1}{4}\nu(V)$.\footnote{In particular, the sets $V\setminus A_1, A_2,$ and $A_1\setminus A_2$ are all small with respect to either $\mu$ or $\nu$.} 
            There exists a flow routable with congestion $O(1)$ in $G[V\setminus A_1]$, such that each cut edge of $E(A_1,V\setminus A_1)$ is a source of one unit of flow and each $v\in V\setminus A_1$ receives at most $O(\phi\log n\cdot\mu(v))$ units. There also exists a flow routable with congestion $O(1)$ in $G[A_2]$, such that each cut edge of $E(A_2,A_1\setminus A_2)$ is a source of one unit of flow and each  $v\in A_2$ receives at most $O(\phi\log n\cdot\mu(v))$ units.
            \item[3a.] We find a cut $(A,V\setminus A)$ with $\mu$-expansion $\Phi^{\mu}_G(A,V\setminus A)=O(\phi\log n)$, $0<\mu(V\setminus A)\le \frac{\mu(V)}{2}$, and $\nu(V\setminus A)\ge\frac{1}{2}\nu(V)$. Additionally, there exists a flow routable with congestion $O(1)$ in $G[A]$, such that each cut edge of $E(A, V\setminus A)$ is a source of one unit of flow and each $v\in A$ receives at most $O(\phi\cdot\mu(v))$ units.
            \item[3b.] We find a cut $(A,V\setminus A)$ with $\mu$-expansion $\Phi^{\mu}_G(A,V\setminus A)=O(\phi\log n)$, with high probability, $G[A]$ is an $\Omega(\phi)$-expander with respect to $\mu$, and  $\nu(V\setminus A)\le\frac{1}{2}\nu(V)$. Additionally, there exists a flow routable with congestion $O(1)$ in $G[V\setminus A]$, such that each cut edge of $E(A, V\setminus A)$ is a source of one unit of flow and each $v\in V\setminus A$ receives at most $O(\phi\log n\cdot\mu(v))$ units. Furthermore, there exists a flow routable with congestion $O(1)$ in $G[A]$, such that each cut edge of $E(A, V\setminus A)$ is a source of one unit of flow and each $v\in A$ receives at most $O(\phi\cdot\mu(v))$ units.
        \end{enumerate}
    \end{corollary}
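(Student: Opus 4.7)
The plan is to invoke Theorem~\ref{theorem:balanced_cut_mu} on $G$ with the weighting $\mu$ and the given $\phi$, then use the additional structure exposed by Remark~\ref{remark:trimming_allows_routing}---in particular, the sequence $S_0,\ldots,S_{T'-1}$ with $A_{t+1}=A_t\setminus S_t$ together with the per-step side routings in $G[S_t]$ and $G[A_t\setminus S_t]$---to sub-divide the outcome according to the new measure $\nu$. If Theorem~\ref{theorem:balanced_cut_mu} terminates in its Case~(1), we output our Case~(1). If it terminates in Case~(3) with cut $(A,V\setminus A)$, we simply inspect $\nu(V\setminus A)$: if it is at least $\nu(V)/2$ we output Case~3a (keeping only the expanding-side flow into $A$, whose per-vertex bound is the sharper $O(\phi\cdot\mu(v))$), and otherwise Case~3b (keeping both side flows).

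The interesting work lies in the Case~(2) branch. Define the non-decreasing prefix sum $N_t:=\nu\!\left(\bigcup_{s\le t}S_s\right)$ and let $T_0$ be the smallest index with $N_{T_0}\ge\nu(V)/4$, or $T_0=T'$ if no such index exists. Three sub-cases arise: (i) if $T_0=T'$, so that $\nu(V\setminus A)<\nu(V)/4$, output Case~2a with the cut $(A,V\setminus A)$ returned by Theorem~\ref{theorem:balanced_cut_mu}; (ii) if $\nu(V)/4\le N_{T_0}\le\nu(V)/2$, output Case~2b with the truncated cut $V\setminus A':=\bigcup_{s\le T_0}S_s$; (iii) if $N_{T_0}>\nu(V)/2$, then by minimality of $T_0$ we have $\nu(S_{T_0})>\nu(V)/4$, and we output Case~2c with $V\setminus A_1:=\bigcup_{s<T_0}S_s$ and $A_2:=A_1\setminus S_{T_0}$.

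The sparsity and routing guarantees both reduce to the additive structure of the $S_t$'s. For any prefix union $U=\bigcup_{s\le k}S_s\subseteq V\setminus A$, Remark~\ref{remark:trimming_allows_routing} gives
\[
    |E_G(U,V\setminus U)|\;\le\;\sum_{s\le k}|E(S_s,A_{s+1})|\;\le\;c\phi\log n\cdot\mu(U),
\]
and since $\mu(U)\le\mu(V\setminus A)=O(\mu(V)/\log n)\le\mu(V)/2$, this is also an upper bound on $\Phi^\mu_G(U,V\setminus U)$. The flow routing one unit from each edge of $E(U,V\setminus U)$ into $U$ is assembled by concatenating, across $s\le k$, the $G[S_s]$-side flow of Remark~\ref{remark:trimming_allows_routing}; vertex-disjointness of the $S_s$'s keeps the combined congestion at $O(1)$ and preserves the per-vertex sink bound $O(\phi\log n\cdot\mu(v))$ edge-by-edge. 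In Case~2c the inner cut $(A_2,S_{T_0})$ inside $G[A_1]=G[A_{T_0}]$ is a single-step cut, so its sparsity and the flow from $E(A_2,S_{T_0})$ into $A_2$ are exactly the $A_{T_0}\setminus S_{T_0}$-side guarantees of Remark~\ref{remark:trimming_allows_routing} at step $T_0$; the bound $\mu(A_2)\ge\mu(V)-\mu(\bigcup_s S_s)\ge\mu(V)/4$ follows from $\mu(\bigcup_s S_s)=O(\mu(V)/\log n)$.

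The main subtlety I expect is Case~2c. One cannot keep the full union $\bigcup_{s\le T_0}S_s$ as the removed side and simultaneously route cut edges into the complement $A_{T_0+1}$, because the $A_{s+1}$-side flows of Remark~\ref{remark:trimming_allows_routing} live in overlapping subgraphs $G[A_{s+1}]\supseteq G[A_{T_0+1}]$ and would stack up to congestion $\Omega(T_0)$ on shared edges. Breaking the sequence one step early---sending the outer $T_0$ cuts toward the small-$\nu$ side $V\setminus A_1$ via the disjoint $G[S_s]$-flows, and the single remaining cut $S_{T_0}$ toward $A_2$ via the single $(A_{T_0}\setminus S_{T_0})$-side flow of the $T_0$-th step---is precisely the device that lets the two-level statement of Case~2c satisfy simultaneously the $\nu$-balance, the $\mu$-sparsity of both cuts, and the $O(1)$-congestion routings on both flows.
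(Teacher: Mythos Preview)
Your proposal is correct and follows essentially the same approach as the paper: invoke Theorem~\ref{theorem:balanced_cut_mu}, pass through Cases~(1) and~(3) with a simple $\nu$-threshold, and in Case~(2) truncate the sequence $S_0,\ldots,S_{T'-1}$ at the first prefix whose $\nu$-mass crosses $\nu(V)/4$, splitting into (2a)/(2b)/(2c) accordingly with $A_1=A_{T_0}$, $A_2=A_{T_0+1}$ in the last sub-case. Your write-up is in fact more explicit than the paper's---you spell out the prefix-sum sparsity bound $|E(U,V\setminus U)|\le c\phi\log n\cdot\mu(U)$ and the vertex-disjointness argument for combining the $G[S_s]$-side flows, whereas the paper simply defers to Remark~\ref{remark:trimming_allows_routing}---and your closing paragraph correctly identifies the congestion-stacking obstruction that forces the two-level structure of Case~(2c).
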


    \begin{proof}
        We apply Theorem~\ref{theorem:balanced_cut_mu} on $G$. If it terminated with Case~(1), we are done, since this exactly matches Case~(1) of the statement. 
        Otherwise, first assume it terminated with Case (2). Recall from Theorem~\ref{theorem:trimming_allows_routing} that $\bar{A}$ is given as $\bar{A}=\bigcup_{t=0}^{T-1}{S_t}$, where if we look at the sequence of vertex sets $A_0=V, A_{t+1}=A_t\setminus S_t$, each $(A_{t+1}, S_t)$ is a sparse cut. We split into cases according to $\nu(\bar{A})$. If $\nu(\bar{A})\le\frac{1}{4}\nu(V)$, we terminate with case (2a) (see Figure~\ref{fig:cor-flow-routings}). 
        
        So assume that $\nu(\bar{A})>\frac{1}{4}\nu(V)$. We consider the first $0\le T_0 < T$ such that $\nu\left(\bigcup_{t=0}^{T_0}{S_t}\right)>\frac{1}{4}\nu(V)$ (\ie, $\nu\left(\bigcup_{t=0}^{T_0-1}{S_t}\right)\le\frac{1}{4}\nu(V)$). If $\nu\left(\bigcup_{t=0}^{T_0}{S_t}\right)\le\frac{1}{2}\nu(V)$ we terminate with case (2b), returning the cut $\left(V\setminus \bigcup_{t=0}^{T_0}{S_t}, \bigcup_{t=0}^{T_0}{S_t}\right)$. 

        Next, assume that $\nu\left(\bigcup_{t=0}^{T_0}{S_t}\right)>\frac{1}{2}\nu(V)$. This means that $\nu\left(S_{T_0}\right)>\frac{1}{4}\nu(V)$. We set $A_1 = V\setminus(\bigcup_{t=0}^{T_0-1}{S_t})$ and $A_2 = A_1\setminus S_{T_0}$ and terminate in case (2c). 
        The guarantee $\mu(A_2)\ge\frac{1}{4}\mu(V)$ is given at the beginning of Theorem~\ref{theorem:trimming_allows_routing}. 

        This concludes case~(2) of Theorem~\ref{theorem:balanced_cut_mu}. 
        
        Now assume that Theorem~\ref{theorem:balanced_cut_mu} terminated with Case~(3). This means that with high probability $G[A]$ is an $\Omega(\phi)$-expander with respect to $\mu$. Additionally, the cut $(A, V\setminus A)$ is a result of the trimming step (see Theorem~\ref{theorem:trimming_allows_routing}) and satisfies the properties mentioned in Theorem~\ref{theorem:trimming_allows_routing}. In particular, we have $0 < \mu(V\setminus A) \le \frac{\mu(V)}{2}$.
        We split into cases according to $\nu(V\setminus A)$. If $\nu(V\setminus A)\ge\frac{1}{2}\nu(V)$, we terminate with case (3a). 
        Otherwise, we terminate with case (3b). 

        In all cases, the required flows are given directly by Theorem~\ref{theorem:trimming_allows_routing}.
    \end{proof}

    We now proceed to prove Theorem~\ref{theorem:refinement_phase}. The proof follows the overview presented in Section \ref{section:refinement_phase_overview}.

    \begin{proof}[Proof of Theorem \ref{theorem:refinement_phase}]
        We are given a cluster $S \subseteq V$ and a parameter $\sigma \ge \frac{3}{2}|S|$. 
        In the following, we apply Corollary~\ref{cor:balanced-cut-refined} with respect to a weighting $\mu$ supported only on the boundary split nodes of the current cluster. To this end, rather than applying Corollary~\ref{cor:balanced-cut-refined} on the subdivision graph $G'[S']$, we define an auxiliary graph $G_{\tilde{S}}$ that contains split nodes only for boundary edges, without the internal split nodes. This avoid technical edge-cases concerning the placement of split nodes on the ``correct" side of the cut.
        Formally, we denote the boundary edges of $S$ by $B_S \defeq E(S, V\setminus S)$, and the edges in $G[S]$ by $E_S \defeq \{e\in E\mid e\subseteq S\}$. We define $X_{B_S} \defeq \{x_b\mid b\in B_S\}$, and $\tilde{S} = S\cup X_{B_S}$. Then, the graph $G_{\tilde{S}}$ is given by $G_{\tilde{S}} \defeq (\tilde{S}, E_{\tilde{S}})$, where $E_{\tilde{S}} \defeq E_S\cup \{(v, x_b) \mid v\in S, b\in B_S, v\in b\}$.
        We define the weighting $\mu_S$ on $G_{\tilde{S}}$ as $\mu_S = \1_{X_{B_S}}$. 
        We always use $\nu = \1_{V}$, \ie, $\nu$ counts the number of non-split vertices.
        
        We then apply Corollary~\ref{cor:balanced-cut-refined} on $G_{\tilde{S}}$ with $\phi_S = \Theta\left(\frac{1}{\log n\cdot f(S)}\right)$.\footnote{The reason we prefer to apply Corollary \ref{cor:balanced-cut-refined} on $G_{\tilde{S}}$ instead of $G'[S']$ is that applying it on the latter may create some \emph{edge-cases} which complicate the argument. For example, it may turn out that an edge $e=(v, u)\in E_S$ has both $v$ and $u$ on the same side of the cut while $x_e$ is on the other.}
        The exact constant $c_\phi$ in $\phi_S = \frac{c_\phi}{\log n\cdot f(S)}$ will be  $c_\phi = \min \{ 1, \frac{1}{16c_\star}\}$, where $c_\star$ is the constant governing the sparsity bounds in Corollary~\ref{cor:balanced-cut-refined} (see below for details).
        We split into cases according to the result of the corollary (see Figure~\ref{fig:cor-flow-routings}). If Corollary~\ref{cor:balanced-cut-refined} terminates with Case (1), we terminate.
    
        Otherwise, Corollary~\ref{cor:balanced-cut-refined} returns a cut $(\tilde{A}, \tilde{S}\setminus \tilde{A})$ which satisfies $\Phi_{G_{\tilde{S}}}^{\mu_S}(\tilde{A}, \tilde{S}\setminus \tilde{A}) \le c_\star \cdot \phi_S\log n$ (for $c_\star \ge 0$), or (in case (2c)) two cuts $(\tilde{A}_1, \tilde{S}\setminus \tilde{A}_1), (\tilde{A}_2, \tilde{A}_1\setminus \tilde{A}_2)$ which satisfy this sparsity condition. Denote $A \defeq \tilde{A}\cap S$ (or $A_i \defeq \tilde{A}_i \cap S$). Unless Corollary~\ref{cor:balanced-cut-refined} terminated with case (3b), the refinement phase recurs on all sub-clusters: $A$ and $S \setminus A$ (in case (2c), $S\setminus A_1, A_1\setminus A_2, A_2$). In case (3b), we only recur on $S\setminus A$.\footnote{In the following, we assume that the cuts are \emph{well-behaved}, in the sense that if $(v, x_b)$ is an edge in $G_{\tilde{S}}$ with $x_b\in X_{B_S}$, then $v$ and $x_b$ are in the same side of the cut that Corollary \ref{cor:balanced-cut-refined} returns (\ie, $x_b$ is not an isolated vertex in the other side of the cut).
        If this does not hold, we may simply move $x_b$ to the side where $v$ resides. The presented arguments still hold, sometimes with slightly different constants.
        }
        Note that in the recursive step on a sub-cluster $R\subseteq  S$, we define $\mu_R$ on $G_{\tilde{R}}$ according to the boundary edges of this cluster, \ie, $E(R, V\setminus R)$, which includes some of the previous boundary edges and newly cut edges. Additionally, we proceed with $\phi_R = \Theta\left(\frac{1}{\log n \cdot f(R)}\right)$.

        Note that we do not recur into sub-clusters $S$ with $\mu(S) = 0$. For such sub-clusters, we simply return the partition $\{S\}$, as it trivially satisfies both requirement of Theorem \ref{theorem:refinement_phase}.
    
        The algorithm is summarized in Algorithm \ref{algo:refinement_phase}.

        \begin{algorithm}[hbt!]
        \caption{Refinement Phase}
        \label{algo:refinement_phase}
        \begin{algorithmic}[1]
            \Function{Refine}{$G$, $S$, $\sigma$}
                \State $f(S) \leftarrow c_f \log \left( \frac{\sigma}{|S|} \right)\cdot \log \log n$.
                \State Call Corollary~\ref{cor:balanced-cut-refined} on $G_{\tilde{S}}$ with $\phi_S = \Theta\left(\frac{1}{\log n\cdot f(S)}\right)$ and $\mu_S = \1_{X_{B_S}}$.
                \If {we get Case (1)}
                    \Return $\{S\}$.
                \ElsIf {we get Cases $(2a),(2b)$ or $(3a)$}  
                    \State Let $(\tilde{A}, \tilde{S}\setminus \tilde{A})$ be the returned cut.
                    \State $A \leftarrow \tilde{A} \cap S$.
                    \State \Return $\mathrm{Refine}(G,A,\sigma) \cup \mathrm{Refine}(G,S\setminus A,\sigma)$.
                \ElsIf {we get Case $(2c)$}  
                    \State Let $(\tilde{A}_1, \tilde{S}\setminus \tilde{A}_1)$ and $(\tilde{A}_2, \tilde{A}_1\setminus \tilde{A}_2)$ be the returned cuts.
                    \State $A_1 \leftarrow \tilde{A}_1 \cap S$, $A_2 \leftarrow \tilde{A}_2 \cap S$.
                    \State \Return $\mathrm{Refine}(G,S\setminus A_1,\sigma) \cup \mathrm{Refine}(G,A_1\setminus A_2,\sigma) \cup \mathrm{Refine}(G,A_2,\sigma)$.
                \ElsIf {we get Case $(3b)$}  
                    \State Let $(\tilde{A}, \tilde{S}\setminus \tilde{A})$ be the returned cut.
                    \State $A \leftarrow \tilde{A} \cap S$.
                    \State \Return $ \{A\}\cup \mathrm{Refine}(G,S\setminus A,\sigma)$.
                \EndIf
            \EndFunction
        \end{algorithmic}
    \end{algorithm}
    
        First, to show that the running time is $\tilde{O}(m)$, we note that whenever we recur on a sub-cluster $R\subseteq S$, it holds that either:
        \begin{itemize}
            \item $|R|\le \frac{3|S|}{4}$ (when we have $\nu(\tilde{R})\le\frac{3\nu(\tilde{S})}{4}$), or 
            \item $\mu_S(\tilde{S}\setminus \tilde{R}) = \Omega\left(\frac{\mu_S(\tilde{S})}{\log n}\right)$ and $\Phi^{\mu_S}_{G_{\tilde{S}}}(\tilde{R}, \tilde{S}\setminus \tilde{R}) = O(\phi_S\log n)$. We show that in this case $\mu_R(\tilde{R})=O(\mu_S(\tilde{S})\cdot(1-\frac{1}{\log n}))$. It holds that the new weighting $\mu_R$ satisfies $\mu_R(\tilde{R}) \le \mu_S(\tilde{R}) + \capacity(\tilde{R}, \tilde{S}\setminus \tilde{R})$ (see Figure~\ref{fig:refinment_change_in_boundary}).
            We have that $\Phi_{G_{\tilde{S}}}^{\mu_S}(\tilde{R}, \tilde{S}\setminus \tilde{R}) \le c_\star \cdot \phi_S \log n$. By the choice of the constant~$c_\phi\le\frac{1}{16c_\star}$ in $\phi_S = \frac{c_\phi}{\log n\cdot f(S)}$, we get that $\Phi_{G_{\tilde{S}}}^{\mu_S}(\tilde{R}, \tilde{S}\setminus \tilde{R}) \le \frac{1}{2 f(S)} \le \frac{1}{2}$, which in particular means that $\capacity_{G_{\tilde{S}}}(\tilde{R}, \tilde{S}\setminus \tilde{R}) \le \frac{1}{2}\mu_S(\tilde{S}\setminus \tilde{R})$.
            Thus, we have $\mu_R(\tilde{R})\le \mu_S(\tilde{R})+\frac{1}{2}\mu_S(\tilde{S}\setminus \tilde{R}) = \mu_S(\tilde{S})-\frac{1}{2}\mu_S(\tilde{S}\setminus \tilde{R}) = O(\mu_S(\tilde{S})\cdot(1-\frac{1}{\log n}))$. 
            \item We have case (2c), for the sub-cluster $R = A_1\setminus A_2$. In this case it may be that the cut $(\tilde{R}, \tilde{S}\setminus \tilde{R})$ is not sparse, if there are many edges in $E(\tilde{A_1}\setminus \tilde{A_2}, \tilde{S}\setminus \tilde{A_1})$. However, note that it does hold that $\mu_S(\tilde{R}) \le \frac{3}{4}\mu_S(\tilde{S})$ and additionally $\Phi^{\mu_S}_{G_{\tilde{S}}}(\tilde{S}\setminus \tilde{A_1}, \tilde{A_1}), \Phi^{\mu_S}_{G_{\tilde{S}}}(\tilde{A_1}\setminus \tilde{A_2}, \tilde{A_2})$ are both at most $c_\star\cdot\phi_S\log n$. By the choice of the constant $c_\phi\le\frac{1}{16c_\star}$ in $\phi_S = \frac{c_\phi}{\log n\cdot f(S)}$, both sparsities are at most $ \frac{1}{16 f(S)} \le \frac{1}{16}$, giving that $\capacity(\tilde{R}, \tilde{S}\setminus \tilde{R}) \le \capacity(\tilde{S}\setminus \tilde{A_1}, \tilde{A_1}) + \capacity(\tilde{A_1}\setminus \tilde{A_2}, \tilde{A_2}) \le 2\cdot \frac{1}{16}\mu_S(\tilde{S}) = \frac{1}{8}\mu_S(\tilde{S})$. Therefore, $\mu_R(\tilde{R})\le \mu_S(\tilde{R})+\capacity(\tilde{R}, \tilde{S}\setminus \tilde{R}) \le \frac{3}{4}\mu_S(\tilde{S}) + \frac{1}{8}\mu_S(\tilde{S})=\frac{7}{8}\mu_S(\tilde{S})$.
        \end{itemize}
        
        Therefore, the recursion depth is poly-logarithmic, so the running time is $\tilde{O}(m)$.

        \begin{figure}
            \centering
            \includegraphics[width=0.7\linewidth]{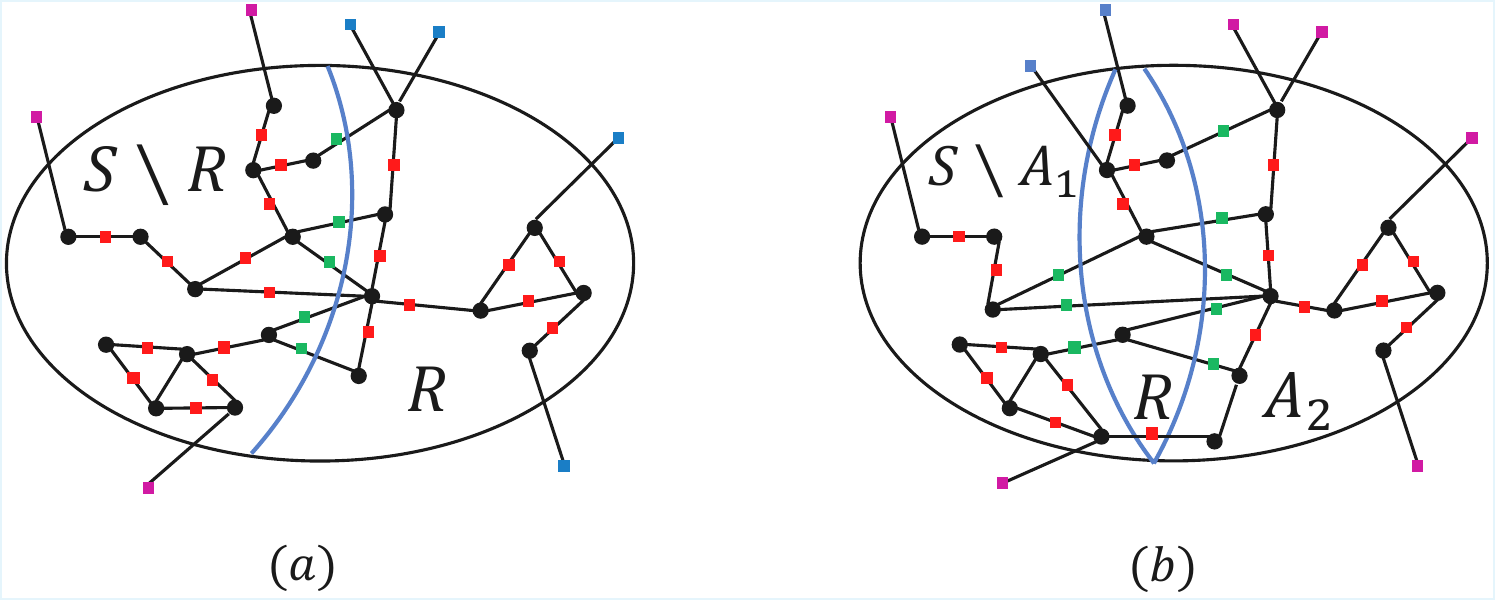}
            \caption{Visualizing the quantities $\mu_R(R'), \mu_S(R'),$ and $ \capacity(R', S'\setminus R')$. The quantity $\mu_R(R')$ corresponds to the blue and the green split nodes, while $\mu_S(R')$ corresponds only to the blue split nodes. Finally $\capacity_{G'[S']}(R', S'\setminus R') = \capacity_{G[S]}(R, S\setminus R)$ corresponds to the green nodes. Figure $(b)$ corresponds to case $(2c)$ of Corollary~\ref{cor:balanced-cut-refined}, and Figure $(a)$ corresponds to all other cases.
            }
            \label{fig:refinment_change_in_boundary}
        \end{figure}

        \begin{figure}
            \centering
            \includegraphics[width=0.9\linewidth]{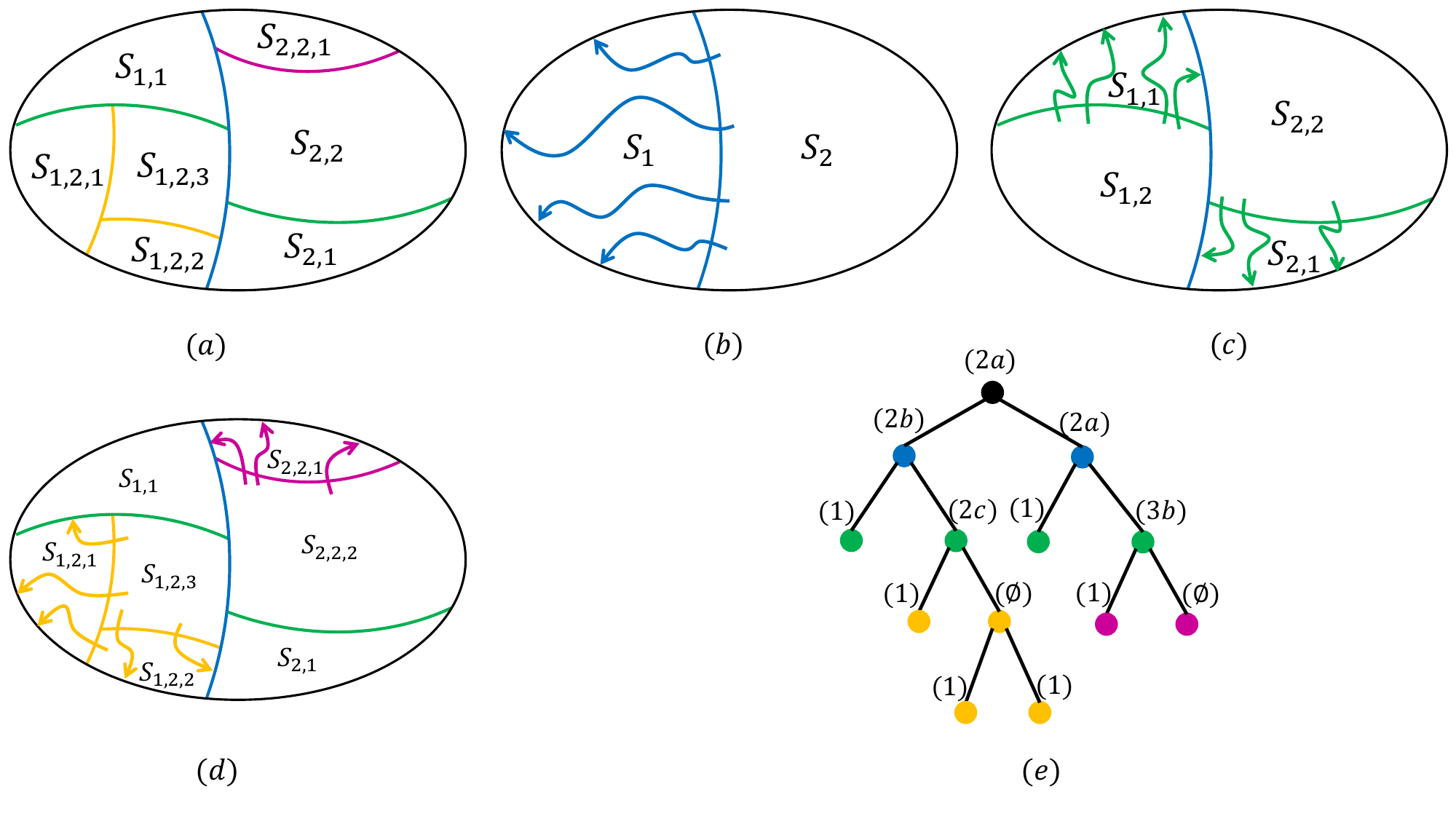}
            \caption{Figure~$(a)$ depicts several cuts computed by the refinement phase of a cluster $S$. Figures $(b)-(d)$ illustrate the flow routings from the cuts to the boundary of the smaller side, as guaranteed by Corollary~\ref{cor:balanced-cut-refined}. Figure~$(e)$ shows the binary tree that corresponds to these cuts: each node represents a sub-cluster obtained during the cutting process. The label on each node indicates the case of Corollary~\ref{cor:balanced-cut-refined} that was applied. A label of $(\emptyset)$ means that Corollary~\ref{cor:balanced-cut-refined} was not applied, and the node's state is instead inferred from its parent. The left child always corresponds to a side of the cut which is smaller (in terms of number of vertices) than the parent, by a constant factor.}
            \label{fig:binary-tree-refinement-phase}
        \end{figure}

    \smallskip
        We now show that the first claim of Theorem~\ref{theorem:refinement_phase} holds when we stop recurring on a sub-cluster. Assume that when running in some sub-cluster $R\subseteq S$, Corollary~\ref{cor:balanced-cut-refined} terminated with Case (1). In this case, $G_{\tilde{R}}$ is $\Omega(\phi_R)$-expander with respect to $\mu_R$, which means that $X_{B_R}$  $\Omega(\phi_R)$-expands in $G_{\tilde{R}}$. Therefore, by Remark~\ref{remark:mu_expansion_as_problem},  $G_{\tilde{R}}$ $\Omega(\phi_R)$-respects the all-to-all flow problem on the vertices of $X_{B_R}$, which establishes the first result.
        
        Now, assume that Corollary~\ref{cor:balanced-cut-refined} terminated with case (3b). This means that we have $A\subseteq R$ on which we do not recur, which satisfies that $G_{\tilde{A}}$ is an $\Omega(\phi_R)$-expander with respect to $\mu_R$.
        Equivalently, $\tilde{A}\cap X_{B_R}$ $\Omega(\phi_R)$-expands in $G_{\tilde{A}}$.
        Denote by $C = E(A, R\setminus A)$ the set of cut edges that were cut in this step. We have that we can route a flow in $G_{\tilde{A}}$ with congestion $O(1)$, such that each edge $e\in C$ sends one unit of flow,
        and each $v\in \tilde{A}$ gets at most $O(\phi_R \cdot \mu_R(v)) = O(\mu_R(v))$ units.\footnote{Recall the definition of routing a flow from the cut edges, introduced in the discussion following Theorem~\ref{theorem:balanced_cut_mu}.} 
        This means that only boundary nodes in $\tilde{A}\cap X_{B_R}$ can receive flow, and each receives at most $O(1)$ units. Let $X_C$ denote the new boundary edges in $G_{\tilde{A}}$, that correspond to the cut edges of $C$.
        Using Lemma~\ref{lemma:separator-general}, we get that $(\tilde{A}\cap X_{B_R})\cup X_C = X_{B_A}$ $\Omega(\phi_R)$-expands in $G_{\tilde{A}}$. 
        This means (Remark~\ref{remark:mu_expansion_as_problem}) that $G_{\tilde{A}}$ $\Omega(\phi_R)$-respects the all-to-all flow problem on the vertices of $X_{B_A}$.
        Finally, since $|A|\le |R|$, we have $\phi_R = \Omega(\phi_A)$. That is, $G_{\tilde{A}}$, $\Omega(\phi_A)$-respects the all-to-all flow problem on the vertices of $X_{B_A}$, as desired.
        
\smallskip
        Next, we show the second claim of Theorem~\ref{theorem:refinement_phase}. The argument follows the ideas of~\cite{harrelson2003polynomial,RackeS14}. Note that inter-cluster edges are only created as cut edges in cases (2a), (2b), (2c), (3a) and (3b) of Corollary~\ref{cor:balanced-cut-refined}.

        We view the cuts formed by cutting $S$ as a binary tree.\footnote{This tree is for analysis purposes and should not be confused with the outer hierarchical decomposition.} We call it the refinement tree, $T_S^{\mathrm{ref}}$, of $S$ (see Figure~\ref{fig:binary-tree-refinement-phase}). The root of the binary tree corresponds to $S$ itself and every vertex corresponds to a subset $D\subseteq S$. The children of $D$ in the binary tree are the subsets we recur on during the application of Corollary~\ref{cor:balanced-cut-refined} on $G_{\tilde{D}}$. Note that in case (2c) we have that $D$ splits into $A_1, D\setminus A_1$, and $A_1$ further splits into $A_2, A_1\setminus A_1$. These correspond to two levels of $T_S^{\mathrm{ref}}$. 
        We choose to set the left child of each node to be the side in which we can route flow from the cut edges: 
        \begin{enumerate}
            \item In cases (2a), (2b) and (3b), $D\setminus A$ is the left child of $D$. 
            \item In case (2c), $D\setminus A_1$ is the left child of $D$ and $A_2$ is the left child of $A_1$. See Figure~\ref{fig:binary-tree-refinement-phase}(e).
            \item In case (3a), $A$ is the left child of $D$.
        \end{enumerate}
        
        Note that in all cases the left child $L\subseteq D$ satisfies $|L|\le \frac{3}{4}|D|$, and additionally, we can route in $G_{\tilde{L}}$ a flow with congestion $O(1)$ such that each cut edge of $E(L, D\setminus L)$ sends one unit of flow and each vertex $v\in \tilde{L}$ receives at most $O(\phi_D\log n\cdot \mu_D(v))$ units.\footnote{In case (2c), this is true for $D$ directly, while for $A_1$ it's true with $\phi_D$ and $\mu_D$ instead of $\phi_{A_1}$ and $\mu_{A_1}$. However, because $|A_1| = \Theta(|D|)$, we have $\phi_{A_1} = \Theta(\phi_D)$. Additionally, we have $\mu_D \le \mu_{A_1}$.} That is, each boundary split node of $X_{B_D}\cap \tilde{L}$ gets at most $O(\phi_D\log n)$ units. Let $c_0\ge 1$ be such that in all cases each boundary split node in $X_{B_D}\cap \tilde{L}$ gets at most $c_0\phi_D\log n$ units. In the definition of $f$ in Theorem~\ref{theorem:refinement_phase}, we  choose $c_f \defeq \frac{4c_0}{\log(\frac{4}{3})}\ge1$.
    
        To describe the routing in $G'[S']$, where each inter-cluster split node sends one unit of flow to the boundary of $S$, we proceed as follows.\footnote{Observe that, unlike in other parts of the paper where we reason about demand states, here we explicitly construct and route a flow.}
        We shift our focus from $G_{\tilde{S}}$ back to $G'[S']$, effectively adding a split node in the middle of each inner edge. We identify flows from cut edges (in $G_{\tilde{S}}$) with flows that route mass from the split nodes (in $G'[S']$) corresponding to the cuts' edges.
        We start with one unit of mass placed on each inter-cluster split node.
        The mass is then routed according to the binary tree: beginning from the leaves and proceeding upward iteratively. At each binary cut $(L, D\setminus L)$, we use the flows described in Corollary~\ref{cor:balanced-cut-refined} to route the mass currently residing on the cut edges (\ie, split nodes of cut edges) to the boundary split nodes of $L$, \ie, to $X_{B_D}\cap \tilde{L}$. We scale the flow according to the current load on the cut edges. 
        Next, we analyze the load on each boundary split node as a result of these routings.
    
        We define the \emph{left depth} of a node $D\subseteq S$ (we identify the node with the corresponding subset of $S$) in the binary tree to be the number of left branches on the path from the root of the tree (corresponding to $S$) to the node $D$, plus one. That is, the left depth of $S$ is $1$. We denote by $\Phi(j)$ the maximum load on a boundary edge of a cluster of left depth $j$, after we process this sub-cluster (\ie, after we route from the cut edges $E(L,D\setminus L)$ induced at this cluster to its boundary edges $E(L, V\setminus D)\subseteq E(D, V\setminus D)$).
    
        Note that the left depth of a node is at most $\log_{4/3}{n}\le3\log n$.
        Therefore, any node of depth $3\log n$ must be a leaf node, so $\Phi(3\log n) = 1$ (no routings are done at the leaf level). 
        Following the analysis of~\cite{harrelson2003polynomial}, we now prove by induction on $j=3\log n, \ldots, 1$ the following claim:
        \begin{claim} [See \texorpdfstring{\cite[Lemma 4]{harrelson2003polynomial}}{[HHR03, Lemma 4]}]
        \label{claim:hhr-binary-tree}
            \[
                \Phi(j)\le \prod_{l=j}^{3\log n}{\left(1+\frac{1}{l\log \log n}\right)}
            \]    
        \end{claim}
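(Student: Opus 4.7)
I plan to prove the claim by induction on $j=3\log n, 3\log n-1,\ldots, 1$, complemented by a secondary induction on binary-tree size to handle same-depth right children. The base case $j=3\log n$ is immediate: any cluster at left depth $3\log n$ must be a leaf of the binary tree, because each left branch shrinks the cluster by a factor of at most $3/4$, so $|D|\le(3/4)^{3\log n-1}|S|<1$; at a leaf no routing has yet occurred, so every boundary split carries at most its initial mass, which is $1$ for inter-cluster splits and $0$ for original boundary splits of $S$. Since $\prod_{l=3\log n}^{3\log n}(1+\tfrac{1}{l\log\log n})\ge 1$, the bound holds.

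For the inductive step, fix a cluster $D$ at left depth $j$ with binary children $L$ (left, depth $j+1$) and $R$ (right, depth $j$). By the primary induction hypothesis, after processing $L$'s subtree the maximum load on $X_{B_L}$ is at most $\Psi(j+1):=\prod_{l=j+1}^{3\log n}(1+\tfrac{1}{l\log\log n})$; by the secondary induction on subtree size, after processing $R$'s subtree the maximum load on $X_{B_R}$ is at most $\Psi(j)$. Partition $X_{B_D}=X_{B_D,L}\cup X_{B_D,R}$ by which side of $D$'s cut each split lies on. Splits in $X_{B_D,R}\subseteq X_{B_R}$ are untouched by $D$'s routing and so retain load at most $\Psi(j)$. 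Splits in $X_{B_D,L}\subseteq X_{B_L}$ have load at most $\Psi(j+1)$ going into $D$'s routing, which adds at most $c_0\phi_D\log n\cdot M_D$ more, where $M_D$ is the maximum load on any cut-edge split in $X_{E(L,R)}$ at the moment $D$ is processed.

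The main computation bounds $M_D$ and converts it into the telescoping factor. Using linearity of the per-edge scaled routing in the initial mass profile, I decompose the initial state according to the source of each unit of mass (inter-cluster splits strictly inside $L$, strictly inside $R$, on $X_{E(L,R)}$ itself, and from higher cuts) and apply the two induction hypotheses separately: for $y\in X_{E(L,R)}$, the $L$-contribution is at most $\Psi(j+1)$, the $R$-contribution is at most $\Psi(j)$, and the remaining terms are $O(1)$, giving $M_D=O(\Psi(j))$. To bound $\phi_D$, the geometric shrinkage $|D|\le(3/4)^{j-1}|S|$ combined with $\sigma\ge\tfrac{3}{2}|S|$ yields $\log(\sigma/|D|)\ge j\log(4/3)$, hence $f(D)\ge 4c_0\,j\log\log n$ by the choice $c_f=4c_0/\log(4/3)$, so $c_0\phi_D\log n\le c_\phi/(4j\log\log n)$. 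Taking $c_\phi$ small enough makes the added load at most $\Psi(j+1)/(j\log\log n)=\Psi(j)-\Psi(j+1)$, and the final load on any split in $X_{B_D,L}$ is at most $\Psi(j+1)+(\Psi(j)-\Psi(j+1))=\Psi(j)$, closing the induction.

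The hard part is the ``same-depth right child'': since $R$ sits at left depth $j$ just like $D$, plain induction on $j$ cannot handle it, which forces the secondary induction on tree size. Coupled with this is a subtle bookkeeping issue, namely that the cut-edge splits $X_{E(L,R)}$ receive contributions both from $L$'s subtree processing and from $R$'s, and one must exploit linearity while avoiding double-counting the initial unit mass on $X_{E(L,R)}$ when combining these contributions. The geometric shrinkage of $|D|$ under left branches is precisely what makes $c_0\phi_D\log n$ fit inside $1/(j\log\log n)$ with room to spare, which is exactly the slack needed for the telescoping product $\Psi(j)$ to close up.
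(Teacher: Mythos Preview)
Your approach is correct, but the paper's proof is considerably simpler and avoids both the secondary induction and the mass-source decomposition. The key observation you are missing is that $\Phi(j)$ is defined as a \emph{global} maximum over all clusters at left depth~$j$, so the right child $R=D\setminus L$, which also sits at left depth~$j$, automatically has boundary load at most $\Phi(j)$ by definition. This lets the paper write a \emph{self-referential} inequality: a cut split $y\in X_{E(L,D\setminus L)}$ carries at most $\Phi(j+1)+\Phi(j)\le 2\Phi(j)$ before the routing at $D$, so after routing, the boundary load satisfies
\[
\Phi(j)\ \le\ \Phi(j+1)\ +\ 2c_0\phi_D\log n\cdot\Phi(j).
\]
Since $2c_0\phi_D\log n\le 2c_0/f(D)\le\tfrac12$, one rearranges and uses $1/(1-x)\le 1+2x$ to obtain $\Phi(j)\le\Phi(j+1)\bigl(1+4c_0/f(D)\bigr)$, and the choice $c_f=4c_0/\log(4/3)$ is calibrated so that $4c_0/f(D)\le 1/(j\log\log n)$ exactly. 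No secondary induction, no linearity decomposition, and no further shrinking of $c_\phi$ is needed.

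Your route works, but it pays in two places: the subtree-size induction to handle $R$ is unnecessary once you exploit that $\Phi(j)$ is already a bound on $R$'s boundary load, and your final step requires ``taking $c_\phi$ small enough'' to absorb the $O(1)$ constant hidden in $M_D=O(\Psi(j))$, whereas the paper's solve-for-$\Phi(j)$ trick gets the precise factor $1+4c_0/f(D)$ with the constants already fixed. Both approaches yield the same telescoping product; the paper's is just a cleaner bookkeeping device.
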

        \begin{proof}
            Assume the claim is true for all $l\ge j+1$. Consider a node $D$ at left depth $j$ in the binary tree. Let $L\subseteq D$ be its left child. After processing $D$, the load on each boundary edge of $D\setminus L$ didn't change, while the load on each boundary edge of $L$ may have increased. Indeed, a boundary edge of $L$ has load of at most $\Phi(j+1)$ immediately after processing the sub-cluster $L$. This load then increases when processing the sub-cluster $D$, since we additionally route the load on the cut edges $E(L, D\setminus L)$ to it.
            
            During the processing of $D$, each cut edge in $E(L,D\setminus L)$ (\ie, split node of a cut edge) has load at most $\Phi(j)+\Phi(j+1)\le 2\Phi(j)$, because $\Phi(j+1)$ can come from the processing of $L$ and $\Phi(j)$ from $D\setminus L$ (the inequality is because $\Phi(j)\ge \Phi(j+1)$). 
            
            Therefore, the flow from $X_{E(L, D\setminus L)}$ to the boundary split nodes of $L$ is scaled by at most $2\Phi(j)$. Therefore, each boundary split node of $X_{E(L, V\setminus D)}$ gets at most $2\Phi(j)\cdot c_0\phi_D\log n$ units of flow. Therefore, we have:
            \begin{align*}
                \Phi(j)\le\Phi(j+1)+2c_0\phi_D\log n\cdot\Phi(j)\le\Phi(j+1)+\frac{2c_0\log n}{f(D)\cdot\log n}\Phi(j) = \Phi(j+1)+\frac{2c_0}{f(D)}\Phi(j) .
            \end{align*}
            Rearranging, we get:
            \begin{align*}
                \Phi(j)\le\Phi(j+1)\cdot\left(\frac{1}{1-\frac{2c_0}{f(D)}}\right)\le\Phi(j+1)\cdot\left(1+\frac{4c_0}{f(D)}\right) ,
            \end{align*}
            where the last inequality holds due to the inequality $1/(1-x)\le 1+2x$ for $x\in [0,\frac{1}{2}]$, and indeed $0<\frac{2c_0}{f(D)}\le\frac{1}{2}$ for large enough $n$. Because $D$ has left depth $j$, we must have $|D|\le\left(\frac{3}{4}\right)^{j-1}\cdot|S|$. Because $|S|\le\frac{2}{3}\sigma\le\frac{3}{4}\sigma$, we have $|D|\le\left(\frac{3}{4}\right)^j\cdot\sigma$. Therefore, 
            \begin{align*}
                f(D) &= c_f \log\left(\frac{\sigma}{|D|}\right)\log\log n \ge c_f \cdot j\log\left(\frac{4}{3}\right)\log\log n = \frac{4c_0}{\log\left(\frac{4}{3}\right)} \cdot j\log\left(\frac{4}{3}\right)\log\log n
                \\
                &= 4c_0\cdot j\log\log n .
            \end{align*}
            Plugging in, we have
            \begin{align*}
                \Phi(j) &\le \Phi(j+1)\cdot\left(1+\frac{4c_0}{4c_0\cdot j\log\log n}\right) = \Phi(j+1)\cdot\left(1+\frac{1}{j\log\log n}\right) 
                \\
                &\le \prod_{l=j+1}^{3\log n}{\left(1+\frac{1}{l\log \log n}\right)}\cdot\left(1+\frac{1}{j\log\log n}\right)=\prod_{l=j}^{3\log n}{\left(1+\frac{1}{l\log \log n}\right)} ,
            \end{align*}
            which proves the claim.
        \end{proof}
        We therefore have that the maximum load on an edge (\ie, a split node) in the routing scheme is $\Phi(1)\le\prod_{l=1}^{3\log n}{\left(1+\frac{1}{l\log \log n}\right)}$.
        Using Claim~\ref{claim:product-converges} we have that $\Phi(1)=O(1)$. 
        This means that the load on each edge is always constant, and the congestions of the scaled flows are also constant. An edge is congested by the flows when it is in the left side of the cut (\ie, $G_{\tilde{L}}$), which can happen at most $O(\log n)$ times. Therefore, the total congestion is $O(\log n)$, which completes the proof of Theorem~\ref{theorem:refinement_phase}.
    \end{proof}

	\newpage

	\bibliographystyle{alpha}
	\bibliography{refs}
\appendix

\section{Omitted Proofs}\label{appendix:omitted-proofs}

\subsection{Omitted Proofs from Section \texorpdfstring{\ref{section:techniques}}{2}}

    To prove Claim~\ref{claim:cut-sparsifier-implies-flow}, we use the following flow-cut gap theorems~\cite{leighton1999multicommodity, aumann1998log,okamura1981multicommodity}:
    \begin{theorem}[Flow-Cut Gap~\cite{aumann1998log}]
    \label{theorem:flow-cut-gap}
        Let $Q$ be a demand matrix on $G=(V,E)$. Define $\theta_G(Q)\ :=\ \sup\{\theta  : \text{$G$ $\theta$-respects $Q$}\}.$
        Then, for a universal constant $c>0$,
        \[ \frac{c}{\log n} \congestion_G(Q) \le \frac{1}{\theta_G(Q)} \le  \congestion_G(Q).\]
    \end{theorem}

    \begin{theorem}[Flow-Cut Gap of Trees~\cite{okamura1981multicommodity}]
    \label{theorem:flow-cut-gap-tree}
        Let $Q$ be a demand matrix on a tree $T=(V,E)$. Define $\theta_T(Q)\ :=\ \sup\{\theta  : \text{$T$ $\theta$-respects $Q$}\}.$
        Then,  $\frac{1}{\theta_T(Q)} =  \congestion_T(Q)$.
    \end{theorem}

    \begin{proof} [Proof of Claim~\ref{claim:cut-sparsifier-implies-flow}]  
        We begin by proving the first item. Let $T$ be a tree cut-sparsifier of $G$ of quality $\alpha$. Using Theorem \ref{theorem:flow-cut-gap}, we get that  a tree cut-sparsifier $T$ of $G$ of quality~$\alpha$ satisfies, for all flow problems $Q$,
        \[
            \congestion_T(Q) \numle{1} 
            \congestion_G(Q) \le
            \frac{\log n}{c \cdot \theta_G(Q)} \numle{2}
            \frac{\alpha\log n}{c \cdot \theta_T(Q)} \numeq{3} 
            \frac{\alpha\log n}{c}\congestion_T(Q),
        \]
        where Inequality (2) holds since $T$ is a cut-sparsifier of quality $\alpha$, and therefore 
        \[\frac{\mincut_T(S, \bar{S})}{\dem_Q(S, \bar{S})} \le  \frac{\alpha\cdot \capacity_G(S, \bar{S})}{\dem_Q(S, \bar{S})},\] 
        for every cut $S\subseteq V$ with $\dem_Q(S, \bar{S}) > 0$.
        Equality~(3) holds by Theorem~\ref{theorem:flow-cut-gap-tree}. 
        Finally, Inequality~(1) holds since, for every $e=\{S, P_T(S)\}\in T$, the congestion on $e$ is exactly  $\frac{\capacity_G(S, V\setminus S)}{\dem_Q(S,V\setminus S)}$, which trivially lower bounds $\congestion_G(Q)$.
        Therefore, $T$ is a flow-sparsifier of $G$ of quality $O(\alpha \log n)$.

        We proceed to prove the second item. Let $T$ be a tree flow-sparsifier of $G$ of quality $\alpha$ and fix a cut $(S,\bar{S})$ of $V$.
        We prove that $\capacity_G(S,\bar{S}) \le \mincut_T(S,\bar{S}) \le \alpha \cdot \capacity_G(S,\bar{S})$.
        The first inequality was already established in the proof of Claim~\ref{claim:equiv-def-cut} and relies only on the definition of the capacities in $T$. 
        For the second inequality, we reuse the same demand matrix $R$ constructed in the proof of Claim~\ref{claim:equiv-def-cut}.
        Namely, $R$ is the demand matrix induced by a maximum $(s,t)$-flow in the auxiliary network $T' = T \cup \{s,t\}$, where $s$ is connected to all vertices in $\bar{S}$ and $t$ to all vertices in $S$ with infinite-capacity edges. By construction, $\dem_R(S,\bar S) = \mincut_T(S,\bar S)$, and $\congestion_T(R) = 1$. Since $T$ is a flow-sparsifier of $G$ of quality $\alpha$, we get $\congestion_G(R) \le \alpha$. Therefore 

        \[
        \frac{\mincut_T(S,\bar{S})}{\capacity_G(S,\bar{S})}
        = \frac{\dem_R(S,\bar{S})}{\capacity_G(S,\bar{S})} 
        \le \congestion_G(R) 
         = \alpha.  \]
    \end{proof}

\subsection{Omitted Proofs from Section \texorpdfstring{\ref{section:preliminaries}}{3}}

\begin{proof}[Proof of Fact~\ref{fact:basic_flow_state}]
    Denote $P' = P-P^{\uparrow Q}$. We begin by proving that $P'$ is a valid demand state. Let $k \in \K$. Observe that: 
    \[
        \sum_{u\in V} P'(u,k) = - \sum_{u\in V}\sum_{v\in V}{\frac{P(u,k)}{\norm{P(u)}_1}D_Q(u,v)} + \sum_{u\in V}\sum_{v\in V}{\frac{P(v,k)}{\norm{P(v)}_1}D_Q(v,u)} = 0.
    \]
    We now prove that $\dem_{P'}(U,W)\le\dem_Q(U,W)$, for every cut $(U,W)$. Fix a cut $(U,W)$, we get that:
    \begin{align*}
        \dem_{P'}(U,W) &=
        \sum_{k \in \K}\left|\sum_{u\in U} P'(u,k)\right| \\&= 
        \sum_{k \in \K}\left| \sum_{u\in U}\sum_{v\in V}{\frac{P(v,k)}{\norm{P(v)}_1}D_Q(v,u)}  - \sum_{u\in U}\sum_{v\in V}{\frac{P(u,k)}{\norm{P(u)}_1}D_Q(u,v)}  \right| \\&=
        \sum_{k \in \K}
        \left|
        \sum_{u\in U}\sum_{v\in U}{\frac{P(v,k)}{\norm{P(v)}_1}D_Q(v,u)}
        + \sum_{u\in U}\sum_{v\in W}{\frac{P(v,k)}{\norm{P(v)}_1}D_Q(v,u)}
        \right.\\
        &\;\;\;\;\;\;\;\left.- \sum_{u\in U}\sum_{v\in U}{\frac{P(u,k)}{\norm{P(u)}_1}D_Q(u,v)} 
        - \sum_{u\in U}\sum_{v\in W}{\frac{P(u,k)}{\norm{P(u)}_1}D_Q(u,v)}
        \right| \\&=
        \sum_{k \in \K}
        \left|
        \sum_{u\in U}\sum_{v\in W}{\frac{P(v,k)}{\norm{P(u)}_1}D_Q(v,u)}
        - \sum_{u\in U}\sum_{v\in W}{\frac{P(u,k)}{\norm{P(u)}_1}D_Q(u,v)}
        \right| \\
        &\numle{1}
        \sum_{u\in U}\sum_{v\in W}\sum_{k \in \K}{\frac{|P(v,k)|}{\norm{P(v)}_1}D_Q(v,u)
        +\frac{|P(u,k)|}{\norm{P(u)}_1}D_Q(u,v)}
        \\&=
        \sum_{u\in U}\sum_{v\in W}{D_Q(v,u)
        +D_Q(u,v)} = \dem_{Q}(U,W),
    \end{align*}
    where Inequality~(1) is due to triangle inequality and change of order of summation.
\end{proof}

    \subsection{Omitted Proofs from Section \texorpdfstring{\ref{section:charging-scheme}}{5}}

    \begin{proof}[Proof of Claim~\ref{claim:initial_flow_state}]
        Recall Figure \ref{fig:U-W-cut}. Intuitively, when we spread the demands of $P(v)$ on the neighboring split nodes of $v$, we do not change the demand across the $(U',W')$ cut, except in the case that this spreading moved mass from one side of the cut to the other. In other words, this happens only if the edge $(x_e, v)$ crosses the cut. This may occur only if $e = (u,v)\in E(U,W)$, and $v\in W$. In this case, the amount of mass that crosses from $v$ to $x_e$ is at most $1$. Summing across all edges in $E(U,W)$, we only change the demand across the cut by at most $\capacity(U,W)$. 
    
        The formal proof is as follows. Recall that $\norm{P_{\{v\}'}(x_e)}_1 \le 1$ for every $v\in V$ and $e\in E$ adjacent to $v$. We get that
        \begin{align*}
            \dem_{P_{\Q_0}}(U',W') &= 
            \sum_{k\in \K} \left| \sum_{u\in U'} P_{\Q_0}(u,k) \right| =
            \sum_{k\in \K} \left| \sum_{u\in U'} \sum_{v\in V} P_{\{v\}'}(u,k) \right| \\
            &\numge{1} \sum_{k\in \K} \left|  \sum_{v\in U} \sum_{u\in U'} P_{\{v\}'}(u,k) \right|  - 
            \sum_{k\in \K} \left| \sum_{v\in W} \sum_{u\in U'} P_{\{v\}'}(u,k) \right| \\
            &\numge{2} \sum_{k\in \K} \left|  \sum_{v\in U}  P(v,k) \right|  - 
            \sum_{k\in \K}  \sum_{v\in W} \sum_{u\in U'} \left| P_{\{v\}'}(u,k) \right| \\
            &= 
            \dem_{P}(U,W) - 
               \sum_{v\in W} \sum_{k\in \K} \sum_{u\in U' \cap \{ v\}'} \left| P_{\{v\}'}(u,k) \right| \\
            &=
            \dem_{P}(U,W) - 
               \sum_{v\in W}  \sum_{u\in U' \cap \{ v\}'} \sum_{k\in \K}\left| P_{\{v\}'}(u,k) \right| \\
            &\numge{3} 
             \dem_{P}(U,W) - 
               \sum_{v\in W}  \sum_{u\in U' \cap \{ v\}'}  1 \\
            &\numeq{4} \dem_{P}(U,W) - \capacity_G(U,W),
        \end{align*}
        where Inequality~$(1)$ holds by swapping the order of summation and by triangle inequality; Inequality~$(2)$ holds by the way we defined $P_{\{v\}'}$ (spreading out the demands of $P(v)$ on the adjacent split nodes), as well as the way we defined $U'$ (that is, $U'$ contains all the split nodes defined by the cut $(U,W)$ in $G$) and triangle inequality; Inequality~$(3)$ holds by the mentioned observation that $\norm{P_{\{v\}'}(x_e)}_1 \le 1$, and Equality~$(4)$ holds by the way we defined $U'$.
        
    \end{proof}

    \subsection{Algebraic Tools}

    \begin{claim} [See \texorpdfstring{\cite[Lemma 10]{harrelson2003polynomial}}{[HHR03, Lemma 10]}]
    \label{claim:product-converges}
    For $k\ge 3$ and $c > 0$,
        \begin{align*}
            \prod_{\ell=1}^{k}{\left(1+\frac{c}{\ell}\right)}\le k^{2c}.
        \end{align*}
    \end{claim}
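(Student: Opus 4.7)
The plan is to take logarithms and reduce the inequality to a simple harmonic-sum estimate. Concretely, taking $\ln$ of both sides, it suffices to prove
\[
\sum_{\ell=1}^{k}\ln\!\left(1+\frac{c}{\ell}\right)\ \le\ 2c\ln k,
\]
since $\ln(k^{2c})=2c\ln k$, and exponentiating back yields the claim.

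For the left-hand side, I would apply the elementary inequality $\ln(1+x)\le x$, valid for all $x\ge 0$, to each factor with $x=c/\ell$. This gives
\[
\sum_{\ell=1}^{k}\ln\!\left(1+\frac{c}{\ell}\right)\ \le\ c\sum_{\ell=1}^{k}\frac{1}{\ell}\ =\ c\,H_k,
\]
where $H_k$ is the $k$-th harmonic number. Using the standard bound $H_k\le 1+\ln k$ (obtained, e.g., by comparing the sum to $\int_1^k \frac{dt}{t}$), I obtain $c\,H_k\le c(1+\ln k)$. The final step is to observe that for $k\ge 3$ we have $\ln k\ge \ln 3>1$, so $1+\ln k\le 2\ln k$, and thus $c\,H_k\le 2c\ln k$, as required.

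There is no real obstacle here: the argument is a chain of standard inequalities ($\ln(1+x)\le x$, harmonic-sum bound, and the elementary observation $\ln k\ge 1$ for $k\ge 3$). The only place where the hypothesis $k\ge 3$ is actually used is the last step; for $k\in\{1,2\}$ the bound $1+\ln k\le 2\ln k$ fails (indeed $\ln 1=0$), which is why the statement restricts to $k\ge 3$.
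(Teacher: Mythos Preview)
Your proof is correct and follows essentially the same approach as the paper: take logarithms, apply $\ln(1+x)\le x$, and bound the harmonic number $H_k$ by $2\ln k$. The only difference is that you spell out the intermediate step $H_k\le 1+\ln k\le 2\ln k$ (making explicit where $k\ge 3$ is needed), whereas the paper simply asserts $H_k\le 2\ln k$.
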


    \begin{proof}
        We have
        \[
        \prod_{\ell=1}^{k}\left(1+\frac{c}{\ell}\right)
        = \exp\!\left(\sum_{\ell=1}^{k}\ln\!\left(1+\frac{c}{\ell}\right)\right)
        \le \exp\!\left(\sum_{\ell=1}^{k}\frac{c}{\ell}\right)
        = \exp\!\bigl(c H_k\bigr),
        \]
        where we used $\ln(1+x)\le x$ for $x>-1$, and $H_k=\sum_{\ell=1}^{k}\frac{1}{\ell}$ denotes the $k$-th harmonic number.
        Since $H_k \le 2\ln k$, 
        \[
        \exp(c H_k)
        \le \exp\!\bigl(c\cdot2\ln k\bigr)
        = k^{2c}.
        \]
    \end{proof}
 
\end{document}